\documentclass[11pt,a4paper]{article}
\usepackage{fullpage}
\usepackage[sort]{cite}

\usepackage{amssymb,amsmath,amsthm}
\usepackage{amsfonts}
\usepackage{mathtools}
\usepackage{graphicx}
\usepackage{enumerate}
\usepackage{tikz}
\usetikzlibrary{shapes}
\usetikzlibrary{calc}
\usetikzlibrary{patterns.meta}

\usepackage[T1]{fontenc}

\usepackage{enumitem}
\usepackage{todonotes}
\usepackage{amsmath}

\definecolor{blue}{rgb}{0.1,0.2,0.5}
\definecolor{brown}{rgb}{0.6,0.6,0.2}
\usepackage[ocgcolorlinks, linkcolor={blue}, citecolor={brown}]{hyperref}
\usepackage{cleveref}

\usepackage{enumerate}
\usepackage{latexsym}

\usepackage{todonotes}

\newtheorem{lemma}{Lemma}[section]
\newtheorem{theorem}[lemma]{Theorem}
\newtheorem{corollary}[lemma]{Corollary}
\newtheorem{claim}[lemma]{Claim}
\theoremstyle{definition}
\newtheorem{definition}[lemma]{Definition}
\crefname{section}{Section}{Sections}
\crefname{corollary}{Corollary}{Corollaries}
\crefname{theorem}{Theorem}{Theorems}
\crefname{lemma}{Lemma}{Lemmas}
\crefname{claim}{Claim}{Claims}
\crefname{figure}{Figure}{Figures}

\renewcommand{\leq}{\leqslant}
\renewcommand{\geq}{\geqslant}
\renewcommand{\le}{\leqslant}
\renewcommand{\ge}{\geqslant}
\renewcommand{\setminus}{-}

\newcommand{\Oh}{\mathcal{O}}
\newcommand{\Otilde}{\widetilde{\mathcal{O}}}

\newcommand{\dist}{\textup{dist}}
\newcommand{\Pub}{\textup{Pub}}
\newcommand{\Priv}{\textup{Priv}}
\newcommand{\floor}[1]{\left\lfloor #1 \right\rfloor}
\newcommand{\ceil}[1]{\left\lceil #1 \right\rceil}

\newcommand{\zZ}{\mathcal{Z}}

\newcommand{\wZ}{\widehat{\zZ}}

\newenvironment{claimproof}[1][Proof of the claim.]{%
	\begin{proof}[#1]%
	}{%
	\end{proof}%
}

\newcommand{\executeiffilenewer}[3]{%
\ifnum\pdfstrcmp{\pdffilemoddate{#1}}%
{\pdffilemoddate{#2}}>0%
{\immediate\write18{#3}}\fi%
}

\newcommand{\svg}[2]{\def\svgwidth{#1}%
\executeiffilenewer{#2.svg}{#2.pdf}%
{inkscape -D #2.svg %
--export-filename=#2.pdf --export-latex}%
\everymath{\color{black}}%
\textcolor{black}{\input{#2.pdf_tex}}\everymath{\color{darkred}}%
}

\newcommand{\svgc}[2]{\begin{center}\svg{#1}{#2}\end{center}}

\begin{document}
\begin{titlepage}
\iffalse
\author{Anonymous authors}
\title{Pattern-Sparse Tree Decompositions in $H$-Minor-Free Graphs}
\fi

\iftrue
\author{
  D\'{a}niel Marx%
  \thanks{CISPA Helmholtz Center for Information Security, Germany. \texttt{marx@cispa.de}}
  \and
  Marcin Pilipczuk%
  \thanks{Institute of Informatics, University of Warsaw, Poland. \texttt{m.pilipczuk@uw.edu.pl}}
  \and
  Micha\l{} Pilipczuk%
  \thanks{Institute of Informatics, University of Warsaw, Poland. \texttt{michal.pilipczuk@mimuw.edu.pl}}}
\title{Pattern-Sparse Tree Decompositions in $H$-Minor-Free Graphs%
\thanks{
This work is 
  a part of project BUKA (Ma. P.) and BOBR (Mi. P.)
  that have received funding from the European Research Council (ERC) 
under the European Union's Horizon 2020 research and innovation programme (grant agreement No.~948057, and~101126229, respectively).}}
\fi

\date{}

\maketitle

\begin{abstract}
  Given an $H$-minor-free graph $G$ and an integer $k$, our main technical contribution is sampling 
  in randomized polynomial time an induced subgraph $G'$ of $G$ and a tree decomposition of $G'$ of width $\Otilde(k)$ such that for every $Z\subseteq V(G)$ of size $k$, with probability at least $\left(2^{\Otilde(\sqrt{k})}|V(G)|^{\Oh(1)}\right)^{-1}$, we have $Z \subseteq V(G')$ and every bag of the tree decomposition contains at most $\Otilde(\sqrt{k})$ vertices of $Z$. Having such a tree decomposition allows us to solve a wide range of problems in (randomized) time $2^{\Otilde(\sqrt{k})}n^{\Oh(1)}$ where the solution is a pattern $Z$ of size $k$, e.g., \textsc{Directed $k$-Path}, \textsc{$H$-Packing}, etc. In particular, our result recovers all the algorithmic applications of the pattern-covering result of   
    Fomin et al.~[\emph{SIAM J.~Computing} 2022] (which requires the pattern to be connected)
    and the planar subgraph-finding algorithms of Nederlof~[STOC~2020].

    Furthermore, for $K_{h,3}$-free graphs (which include bounded-genus graphs) and for a fixed constant $d$, we signficantly strengthen the result by ensuring that not only $Z$ has intersection $\Otilde(\sqrt{k})$ with each bag, but even the distance-$d$ neighborhood $N^d_{G}[Z]$ as well. This extension makes it possible to handle a wider range of problems where the neighborhood of the pattern also plays a role in the solution, such as partial domination problems and problems involving distance constraints.
    \end{abstract}
\def\thepage{}
\thispagestyle{empty}
\end{titlepage}

\thispagestyle{empty}
\setcounter{page}{0}
\tableofcontents
\clearpage

\section{Introduction}

Significant research efforts have been dedicated to designing parameterized algorithms for planar graphs, aiming for running times that depend subexponentially on the size of the solution or on some other parameter of the input instance \cite{DBLP:conf/focs/MarxPP18,DBLP:journals/siamcomp/ChitnisFHM20,DBLP:conf/focs/FominLMPPS16,DBLP:journals/talg/MarxP22,DBLP:conf/soda/KleinM14,DBLP:conf/icalp/KleinM12,DBLP:conf/icalp/Marx12,DBLP:conf/fsttcs/LokshtanovSW12,DBLP:journals/algorithmica/Verdiere17,DBLP:conf/stoc/Nederlof20a,platypus,DBLP:journals/siamdm/DemaineFHT04,DBLP:conf/soda/DemaineH04,DBLP:journals/jacm/DemaineFHT05,DBLP:journals/cj/DemaineH08,DBLP:journals/ipl/FominLRS11,DBLP:conf/soda/MarxMNT22}. While many of these algorithms employ problem-specific arguments, several useful general techniques and design patterns have also been discovered. In some cases, these techniques generalize also to bounded-genus or $H$-minor-free graphs.

The theory of bidimensionality gives a very clean win/win approach for certain problems, where the answer for an instance with parameter $k$ is trivial if treewidth is $\Omega(\sqrt{k})$. However, there is only a limited set of such problems and often natural generalizations of the problem definition ruins this property; in particular, this is often the case for problems involving terminals and directed graphs. Marx et al.~\cite{DBLP:conf/soda/MarxMNT22} and Bandyapadhyay et al.~\cite{DBLP:conf/soda/BandyapadhyayLL22} presented general techniques for handling cycle hitting problems, where the task is to delete at most $k$ edges/vertices such that some property related to components or 2-connected components of the resulting graph is satisfied (e.g., there is no cycle going through a terminal). However, this framework is inherently about deletion-type problems, and can handle a problem only if the sought-after property of the remaining graph can be described using constraints in a certain way.

Fomin et al.~\cite{DBLP:journals/siamcomp/FominLMPPS22} presented a powerful technique that can be applied to many problems where the solution is a connected pattern of size $k$. For such problems, using the following result as a black box and an algorithm for solving bounded-treewidth instances delivers a subexponential parameterized algorithm. An {\em apex graph} is a graph that can be made planar by deleting a vertex. Note that $K_{3,h}$ is an apex graph and for every $g\ge 0$ there is an $h$ such that graphs of Euler genus at most $g$ are $K_{3,h}$-minor-free; thus the statement applies in particular to bounded-genus~graphs.

\begin{theorem}[Fomin et al.~\cite{DBLP:journals/siamcomp/FominLMPPS22}]\label{thm:patterncoverold}
Let $\mathcal{C}$ be a class of graphs that exclude a fixed apex graph as a minor. Then there exists
a randomized polynomial-time algorithm that, given an $n$-vertex graph $G$ from $\mathcal{C}$ and an integer $k$,
samples a vertex subset $A \subseteq V(G)$ with the following properties:
\begin{itemize}
    \item[(P1)] The induced subgraph $G[A]$ has treewidth $\Oh(\sqrt{k} \log k)$.
    \item[(P2)] For every vertex subset $Z \subseteq V(G)$ with $|Z| \leq k$ that induces a connected subgraph of $G$,
    the probability that $Z$ is covered by $A$, that is, $Z \subseteq A$, is at least 
    \[
    \left(2^{\Oh(\sqrt{k} \log^2 k)} \cdot n^{\Oh(1)}\right)^{-1}.
    \]
\end{itemize}
\end{theorem}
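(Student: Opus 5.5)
The plan is to reconstruct the argument of Fomin et al.\ for \cref{thm:patterncoverold}, which combines three ingredients: a random partition of the graph into clusters of small radius, contraction of those clusters, and the local treewidth bound available in apex-minor-free classes. Throughout, fix an unknown connected set $Z$ with $|Z|\le k$. The strategy is to make one random choice that, with the required probability, places $Z$ inside a region of small diameter, and then argue that such regions have treewidth $\Oh(\sqrt{k}\log k)$.

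\textbf{Step 1 (clustering).} Use a low-diameter decomposition tuned to $\mathcal{C}$. Since $\mathcal{C}$ excludes a minor, the Klein--Plotkin--Rao or Abraham et al.\ style decompositions apply: for any $\Delta$ we can randomly partition $V(G)$ into clusters of weak diameter $\Oh(\Delta)$ so that each edge is cut with probability $\Oh(1/\Delta)$. Choose $\Delta=\sqrt{k}$, so that the expected number of edges of a spanning tree of $G[Z]$ that are cut is $\Oh(\sqrt{k})$. We also need this count to be $\Oh(\sqrt{k}\log k)$ with probability $2^{-\Otilde(\sqrt{k})}$, which we get from Markov's inequality.

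\textbf{Step 2 (contraction).} Contract every cluster into a single vertex, obtaining a minor $G'$ of $G$, which therefore still lies in $\mathcal{C}$. The image of $Z$ is a connected set of $\Oh(\sqrt{k}\log k)$ cluster-vertices. Its diameter in $G'$ is therefore at most $\Oh(\sqrt{k}\log k)$.

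\textbf{Step 3 (choosing $A$).} Guess, with probability $1/n$, one vertex $v$ of $Z$. Let $A$ be the union of the clusters within distance $r=\Oh(\sqrt{k}\log k)$ of $v$'s cluster in $G'$. This ball in $G'$ has diameter $\Oh(\sqrt{k}\log k)$. Apex-minor-free graphs have linear local treewidth (Demaine--Hajiaghayi), so the ball has treewidth $\Oh(\sqrt{k}\log k)$.

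\textbf{Main obstacle: lifting the treewidth bound.} Step 3 bounds the treewidth of the contracted graph, but (P1) is about $G[A]$ itself. Uncontracting clusters can blow up treewidth, because a cluster of diameter $\sqrt{k}$ may be large. This is the hardest step, and it is why Fomin et al.\ work differently. They apply a recursive or iterated scheme in which clusters are themselves small-treewidth pieces, uncovering $Z$ level by level and paying $2^{\Otilde(\sqrt{k})}$ in total over $\log k$ levels. That cost is the source of the $\log^2 k$ in the exponent.

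Concretely, I would first try the following. Pick $\Delta=\sqrt{k}$ in $G$ itself, with no contraction. Sample a random radius and a random center to carve a ball, and recurse inside the ball, halving the target each time. At each level, use the linear local treewidth bound to split the ball along $\Oh(\sqrt{k})$-size separators (grid-minor duality). Repeat for $\log k$ levels so that the pieces containing $Z$ stay within treewidth $\Oh(\sqrt{k}\log k)$, and multiply the success probabilities of the levels. Controlling this recursion, and in particular ensuring that $Z$'s connectivity survives each level with probability $2^{-\Otilde(\sqrt{k})}$, is the technical core I expect to spend most effort on.
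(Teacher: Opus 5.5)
\textbf{Genuine gap: (P1) is never established, and the proposal is missing the idea that would establish it.}

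The paper does not prove this statement; it is quoted from Fomin et al. So your outline has to stand on its own.

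Steps~1--3 bound the treewidth of a ball in the contracted graph $G'$. But contraction can lower treewidth arbitrarily, and $A$ consists of the \emph{uncontracted} clusters. A concrete failure:
\begin{itemize}
\item Let $G$ be a large grid, and let the clusters be the $\sqrt{k}\times\sqrt{k}$ blocks of a randomly shifted partition. This is a valid low-diameter decomposition with $\Delta=\sqrt{k}$.
\item Let $Z$ be a straight path on $k$ vertices. Then $Z$ meets $\Theta(\sqrt{k})$ clusters, so $r=\Omega(\sqrt{k})$ is forced.
\item $G[A]$ then contains a subgrid of side $\Theta(r\sqrt{k})$. Its treewidth is $\Theta(k\log k)$ for your $r$, and still $\Theta(k)$ for the smallest admissible $r$.
\end{itemize}
This is no better than what Baker layering plus Theorem~\ref{thm:local-tw} gives directly.

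The fallback recursion in your last paragraph does not repair this. A ball of radius about $\sqrt{k}$ need not contain a connected $k$-vertex pattern, whose diameter can be $k-1$. A region that does contain it has balanced separators of size $\Theta(k)$. Local-treewidth and grid-minor arguments give no control over $|S\cap Z|$ for the separator $S$ you pick; for example, $Z$ may run along it. Such an $S$ can therefore neither be deleted (that may destroy $Z$) nor kept in a bag (width $\Theta(k)$). Note also that the real constraint is that deleted vertices avoid $Z$, not that $Z$ stays connected.

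What is missing is a way to delete non-pattern vertices without knowing $Z$. Concretely, you need separators of size $\mathrm{poly}(k)$ that are \emph{pattern-sparse}. The flow/cut duality of Theorem~\ref{thm:dual1}, which the paper credits to Fomin et al., supplies exactly this. Take $p\approx\sqrt{k}$ and $q=\mathrm{poly}(k)$. Then one of two things holds:
\begin{itemize}
\item There is a chain of $p$ disjoint $(s,t)$-separators of size at most $2q$. One of them contains at most $2k/p=\Oh(\sqrt{k})$ vertices of $Z$. That intersection can be guessed from $k^{\Oh(\sqrt{k})}$ options, and the rest of the separator deleted.
\item There are $q$ paths, each with at most $4p$ public vertices. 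One of them meets $Z$ only in public vertices, so its private part can be deleted or contracted (Corollary~\ref{cor:dual1}).
\end{itemize}
Feeding such separators into a recursive balanced-separator scheme is what lowers the width from $\Theta(k)$ to $\Otilde(\sqrt{k})$ at subexponential guessing cost. The paper's own Theorem~\ref{thm:main} rests on the same ingredient, via Corollary~\ref{cor:dual1} inside Theorem~\ref{thm:simplifiedimprove}. However, it yields a pattern-sparse tree decomposition of width $\Oh(k\log k)$ rather than (P1) literally. Without some such tool, your plan cannot get below treewidth $\Theta(k)$.
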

Suppose we investigate a problem for which a solution is a connected set $Z$ of size $k$, that is, $G[Z]$ is connected. As a running example, let us consider the problem of finding a directed cycle of length exactly $k$ in a directed graph. Let us invoke the algorithm of Theorem~\ref{thm:patterncoverold} (on the underlying undirected graph) and let $A$ be the set returned by it.
As $G[A]$ has treewidth $\Oh(\sqrt{k}\log k)$, it is possible to test the existence of a directed $k$-cycle in $G[A]$ in time $2^{\Otilde(\sqrt{k})}n^{\Oh(1)}$ using standard dynamic programming techniques. If we find a $k$-cycle, then we can stop, as this is a $k$-cycle in $G$ as well. Otherwise, we invoke again Theorem~\ref{thm:patterncoverold} to sample a new set $A$, and repeat the process. If there is a directed $k$-cycle $Z$, then (P2) ensures that after $\Oh(  2^{\Oh(\sqrt{k} \log^2 k)} \cdot n^{\Oh(1)})$ repetitions, we encounter a set $A$ with $Z\subseteq A$ with high probability, and hence find a directed $k$-cycle in $G$.

\clearpage
In order to be able to use this approach for other algorithmic problems, the problem needs to satisfy three main properties:
\begin{enumerate}[noitemsep]
\item The problem can be solved in time $2^{\Otilde(t)}n^{\Oh(1)}$ on graphs of treewidth $t$.
\item The solution $Z$ needs to be connected.
\item Validity of the solution $Z$ depends only on $G[Z]$ and is not affected by vertices outside $Z$.
\end{enumerate}
The first point is not very restrictive: designing algorithms on bounded-treewidth graphs is quite well understood, and the standard technique of dynamic programming on a tree decomposition delivers such alogrithms. It is worth pointing out that these dynamic programming techniques are very robust in the sense that different problem variations and extensions (colored graphs, directed graphs, weights, etc.) can be usually achieved with little additional effort. On the other hand, the second and the third points severely limits the range of problems that can be handled by this approach. Requiring connectivity excludes problems such as finding many disjoint small patterns, and defining the problem in terms of some property of $G[Z]$ excludes in particular domination-type problems and problems involving distances.

Another technique for subexponential algorithms for detecting subgraphs in planar graphs was presented by Nederlof~\cite{DBLP:conf/stoc/Nederlof20a}. This work builds upon a key combinatorial result of Fomin et al.~\cite{DBLP:journals/siamcomp/FominLMPPS22}, a certain type of duality between a chain of disjoint separators and almost-disjoint paths (see Theorem~\ref{thm:dual1} below). A crucial new insight of Nederlof~\cite{DBLP:conf/stoc/Nederlof20a} (see~\cite[Lemma~3.7]{DBLP:conf/stoc/Nederlof20a}) is that instead of focusing on finding separators of size $\Otilde(\sqrt{k})$ (which is needed to construct a tree decomposition of width $\Otilde(\sqrt{k})$), it is sufficient for our purposes to find separators of size polynomial in $k$ that intersect the solution in $\Otilde(\sqrt{k})$ vertices. 
(This is what we henceforth call ``pattern-sparse'' in this work; similar ideas appeared e.g. in~\cite{DBLP:journals/siamcomp/BergBKMZ20}.)
Indeed, if a dynamic programming algorithm considers how the solution behaves with respect to such a separator, then considering all $k^{\Otilde(\sqrt{k})}=2^{\Otilde(\sqrt{k})}$ possible intersections of the solution with the separator and all $2^{\Otilde(\sqrt{k})}$ possible states of this intersection is within the allowed running time. The algorithm of Nederlof \cite{DBLP:conf/stoc/Nederlof20a} can find disconnected patterns, it is deterministic, and considerable technical efforts have been made to make it work also for the counting versions of the problems within the same running time. On the negative side, the algorithm is not presented in a way that allows easy reuse for other type of problems. Moreover, the algorithm works only on planar graphs: cycle separators and arguments about the inside and outside of the cycle are inherent in the algorithm. This makes potential generalizations to bounded-genus or $H$-minor-free graphs difficult.

\paragraph{Our results.} Building upon the work of Nederlof~\cite{DBLP:conf/stoc/Nederlof20a}, we present a new general technique that is closer in spirit to Theorem~\ref{thm:patterncoverold}, but has a significantly greater level of applicability. In particular, our result has the following advantages:
\begin{itemize}[noitemsep]
\item It works for $H$-minor-free graphs.
\item It can handle disconnected solutions.
\item It returns a tree decomposition of width polynomial in $k$, and (with some probability) the solution intersects each bag in $\Otilde(\sqrt{k})$ vertices.
  \item For $K_{3,h}$-minor free graphs and a fixed $d$, we can even lower bound the probability that the \emph{distance-$d$ neighborhood} of the solution has $\Otilde(\sqrt{k})$ intersection with each bag. 
\end{itemize}
Therefore, this technique can be used, for example, to find (for some well-structured classes of patterns) a pattern graph $P$ as a subgraph in a given $H$-minor-free graph $G$ in time $2^{\Otilde(\sqrt{|V(P)|})}|V(G)|^{\Oh(1)}$, even if $P$ is disconnected. For $K_{3,h}$-minor-free graphs, our extension to distance-$d$ neighborhoods opens up the possibility to solve a wide range of new problems. For example, we can solve problems involving partial domination, where we have to find a vertex set/pattern maximizing the number of vertices in its neighborhood. Moreover, we can handle problems involving distance constraints, such as, say, finding $k$ triangles that are at distance at least $d$ from each other. In Section~\ref{sec:app} below, we elaborate on some of the potential problems we can solve. 

Formally, our main result for $H$-minor-free graphs is the following.

\begin{theorem}\label{thm:main}
  For every graph $H$, there exists a randomized polynomial-time algorithm that, 
  given an $H$-minor-free graph $G$ and an integer $k \geq 2$
  outputs an induced subgraph $G'$ of $G$ and 
  a tree decomposition $(T,\beta)$ of $G'$ of width $\Oh_H(k\log k)$
  such that for every $Z \subseteq V(G)$ of size at most $k$,
  with probability at least 
  \[ \left(k^{\Oh_H\left(\sqrt{k} \log k\right)} |V(G)|^{\Oh_H(1)}\right)^{-1} \]
  we have $Z \subseteq V(G')$ and every bag of $(T,\beta)$ contains $\Oh_H(\sqrt{k}\log^2k)$ vertices
  of $Z$. 
\end{theorem}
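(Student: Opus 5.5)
We build the tree decomposition by a randomized recursive separator procedure in the spirit of Nederlof's pattern-sparse separators, but use only tools that work in $H$-minor-free graphs. The main ingredients are the separator--path duality of Fomin et al.\ (Theorem~\ref{thm:dual1}, assumed available for $H$-minor-free graphs) and balanced separators of size $\Oh_H(\sqrt{m})$ in $m$-vertex $H$-minor-free graphs.

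The first step is a preprocessing that brings the instance down to something of size polynomial in $k$ locally. We randomly delete vertices to obtain $G'$: for instance, we sample a random layering or random clustering (low-diameter decomposition) and delete a sparse random set of vertices. The goal is that every remaining component is small in a measure controlled by $k$. We then aim for the following claim: for a fixed $Z$ of size $k$, with probability $k^{-\Oh(\sqrt{k}\log k)}$ no vertex of $Z$ is deleted, while the deletions still cut $G$ into pieces where we can recurse.

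In the recursive step, for a current subgraph $G_0$ with interface bag $B$, we look for a separator $S$ of size $\Oh_H(k\log k)$, polynomial in $k$, that splits $G_0$ in a balanced way. The separator must be sampled so that $|S\cap Z|=\Oh_H(\sqrt{k}\log k)$ with good probability.

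To get such an $S$, we use the duality. Either there is a chain of $\Omega(\sqrt{k})$ pairwise disjoint separators of size $\Oh(k)$ between the two sides, or there are many almost-disjoint paths. In the first case, at least one separator in the chain meets $Z$ in at most $k/\sqrt{k}=\sqrt{k}$ vertices. We guess it uniformly at random, which succeeds with probability $\Omega(1/\sqrt{k})$. In the second case, the paths certify a large grid-like structure. There we would use the excluded minor $H$ to argue that the pattern can be separated using few vertices, or alternatively delete random vertices, guessing which path pieces are disjoint from $Z$.

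Each level of the recursion reduces the measure by a constant factor, so there are $\Oh(\log k)$ levels along any root-to-leaf path. To make this work we bound the recursion by the size $|Z\cap V(G_0)|$ rather than by $|V(G_0)|$, and we use random sampling to guess approximate balance with respect to $Z$. A bag then consists of the union of $\Oh(\log k)$ separators along a path, which gives width $\Oh_H(k\log k)$ after standard bag-size control by re-balancing the interface $B$. The intersection of a bag with $Z$ is at most $\Oh(\log k)$ times $\Oh_H(\sqrt{k}\log k)$, which is $\Oh_H(\sqrt{k}\log^2 k)$.

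For the probability, only guesses at nodes where $Z$ is nontrivial matter. Each such guess succeeds with probability $k^{-\Oh(\sqrt{k})}$ if we guess $Z\cap S$ patterns, or $1/\mathrm{poly}(k)$ if we guess only a choice of chain element. Along the recursion these multiply to $k^{-\Oh_H(\sqrt{k}\log k)}$. The key subtlety is handling the disjoint subproblems: the probability of success must not multiply over all branches. This requires either the product of the $k_i$ over branches being controlled by a concavity argument, with $\sum k_i\le k$ giving $\prod k^{-c\sqrt{k_i}}$, or a polynomial-in-$n$ factor absorbing the branches where $Z$ is empty.

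The main obstacle is the second case of the duality in the $H$-minor-free setting. Nederlof exploits planarity and cycle separators to show that many disjoint paths force a sparse separator. Here we would have to replace this with a grid-minor or flat-wall argument, or with a random contraction trick, to show that a separator of size $\mathrm{poly}(k)$ meeting $Z$ in $\Otilde(\sqrt{k})$ vertices still exists and can be sampled with good probability.
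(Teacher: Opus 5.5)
There is a genuine gap, and it is the one you flag yourself: the path outcome of the duality is left unresolved. Your passing remark about deleting path pieces that avoid $Z$ is the right instinct, and no grid or flat-wall argument is needed, but two ideas are missing.

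First, the separator need not be small. It may consist of a part of size $\mathrm{poly}(k)$ meeting $Z$ in $\Otilde(\sqrt{k})$ vertices, plus an \emph{irrelevant} part of unbounded size that avoids $Z$ and is deleted. This is why the output is an induced subgraph $G'$. In the path outcome of Corollary~\ref{cor:dual1}, some $P_i$ meets $Z$ only inside $Q_i$ with $|Q_i|=\Oh(\sqrt{k})$. One guesses $i$ among $q=\Oh(k)$ indices, keeps $Q_i$, deletes $P_i\setminus Q_i$, and continues in the remaining graph.

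Second, you need a reason why the chain outcome must eventually occur, and a way to turn it into a \emph{balanced} separator of the current graph; your proposal never addresses either. In Theorem~\ref{thm:improve} this is done as follows.
\begin{enumerate}
\item Start from a balanced separator $A^\circ\cap B^\circ$ carrying a spanning forest of bounded degree with fewer than $h$ components (from product structure, Corollary~\ref{cor:XF}).
\item Split it into connected pieces $J_1,\dots,J_h$, each holding a constant fraction of its pattern vertices.
\item Run the duality for the pairs $(J_\alpha,J_\beta)$ in turn, with all previously chosen paths removed. If every pair produced a path, the $J_i$ and these paths would form a $K_h$-minor model, so some pair yields a chain.
\item Guess a sparse $J_\alpha$--$J_\beta$ separator from that chain. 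Uncross it with the current separation, after guessing which side keeps enough of the balance set $W$. This drops $J_\beta$ from the separator while preserving balance, and cuts the pattern intersection by a constant factor.
\end{enumerate}
$\Oh(\log k)$ such rounds, each succeeding with probability polynomial in $1/(k\ell)$, give the improved separator.

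The surrounding framework does not go through as written either.
\begin{itemize}
\item Random layering or low-diameter clustering cannot yield pieces of treewidth $\mathrm{poly}(k)$ in general $H$-minor-free graphs. An apex over a large grid is $K_6$-minor-free, has diameter $2$, and has huge treewidth. Likewise, $\Oh(\sqrt{m})$ separators are useless because $m$ is unbounded in $k$. The paper applies Theorem~\ref{thm:excl-minor}, puts the at most $p_H$ apices of each torso into every bag, and only then uses Baker's layering on the apex-free torsos. There diameter $\Oh(k)$ gives treewidth $\Oh(k)$ and the structured separators of Corollary~\ref{cor:XF}.
\item The algorithm cannot afford to balance with respect to the unknown pattern inside the huge set $M\setminus R$. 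So the recursion has depth $\Theta(\log n)$, not $\Oh(\log k)$, and your bound ``a bag is a union of $\Oh(\log k)$ separators'' fails. The paper instead cycles through modes balanced with respect to $M\setminus R$ and with respect to the interface $R$, plus a pattern mode. The pattern mode balances the unknown clusters meeting $R$ by sampling among only $2|R|-1$ candidate separators.
\item The success probability is not a product of per-node costs. Calls with no pattern vertex outside $R$ succeed automatically. The $\Oh(k\log n)$ remaining calls take the cheap option with probability $1-1/k$, costing $n^{-\Oh(1)}$ in total. Only $\Oh(\sqrt{k})$ calls invoke the expensive improvement, since each of them pins $\Omega(\sqrt{k})$ pattern clusters or substantially splits the border pattern.
\end{itemize}
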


The distance version (stated only for $K_{3,h}$-minor-free graphs) is essentially identical: the difference is that we bound the intersection with $N^d_G[Z]$ and not with $Z$.
\begin{theorem}\label{thm:main-d}
  For every integers $d,h \geq 1$, there exists a randomized
  polynomial-time algorithm that,
  given an $K_{3,h}$-minor-free graph $G$ and an integer $k \geq 2$,
  outputs an induced subgraph $G'$ of $G$ and a tree decomposition $(T,\beta)$ of $G'$
  of width $\Oh_{d,h}(k \log k)$ such that for every $Z \subseteq V(G)$ of size at most $k$,
  with probability at least 
  \[ \left(k^{\Oh_{d,h}\left(\sqrt{k} \log k\right)} |V(G)|^{\Oh_{d,h}(1)}\right)^{-1} \]
  we have $N_G^d[Z] \subseteq V(G')$ and every bag of $(T,\beta)$ contains
  $\Oh_{d,h}(\sqrt{k} \log^6 k)$ vertices of $N_G^d[Z]$.
\end{theorem}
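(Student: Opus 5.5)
We prove Theorem~\ref{thm:main-d} by reducing to the pattern-sparse decomposition machinery behind Theorem~\ref{thm:main}, with the pattern $Z$ replaced by a "weighted" version of $N^d_G[Z]$. The key point is that we cannot simply feed $N^d_G[Z]$ as the pattern: it may have size far larger than $k$ (a single high-degree vertex has huge neighborhood). We therefore first shrink large neighborhoods.

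\textbf{Step 1: contract neighborhoods.} We sample a random partition of $V(G)$ into connected clusters of radius $\Oh(d)$, in the spirit of low-diameter decompositions, and contract each cluster to obtain a minor $G^\circ$. Since $G^\circ$ is a minor of $G$, it is $K_{3,h}$-minor-free. With probability $k^{-\Oh(\sqrt{k})}$ (or better), the set $N^d_G[Z]$ maps to a set $Z^\circ$ of $\Oh_{d,h}(k\,\mathrm{polylog}\,k)$ clusters. This is where $K_{3,h}$-freeness is used, and it is the main obstacle. In a $K_{3,h}$-minor-free graph, any set $X$ satisfies a neighborhood-complexity bound: the number of vertices outside $X$ with at least three neighbours in $X$ is $\Oh_h(|X|)$. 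By contrast, vertices with at most two neighbours in $X$ can be grouped by their neighbour pairs, and there are only $\Oh_h(|X|)$ such pairs by planarity-like sparsity. Iterating this bound $d$ times, and contracting each class of "twin-like" vertices attached to the same $\le 2$ vertices of the previous layer into a bounded number of clusters, gives a description of $N^d_G[Z]$ by $\Oh_{d,h}(k)$ clusters. The hard part is making the contraction independent of the unknown $Z$. I would try to guess, with probability $k^{-\Oh(\sqrt{k})}$ after the later sparsification, a layered structure, or contract degree-based greedily. This is the step I am least sure of, and it likely accounts for the extra polylog factors ($\log^6 k$ versus $\log^2 k$).

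\textbf{Step 2: apply the main theorem.} Run the algorithm of Theorem~\ref{thm:main} on $G^\circ$ with parameter $k' = \Oh_{d,h}(k\log^{c} k)$. This yields $G'^\circ$ and a decomposition of width $\Oh(k'\log k')$ whose bags meet $Z^\circ$ in $\Oh(\sqrt{k'}\log^2 k')$ vertices, with the required probability, since $\sqrt{k'}\log k'$ is still $\Otilde(\sqrt k)$.

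\textbf{Step 3: uncontract.} Replace each cluster-vertex in every bag by the vertices of the cluster, and take $G'$ to be the union of the corresponding clusters. Bags then blow up in size. To keep width $\Oh(k\log k)$, we must also restrict $G'$: we remove clusters with more than $\mathrm{poly}(k)$ vertices, since in the good event these lie outside $N^d[Z]$, or have a bounded contribution. The intersection of a bag with $N^d_G[Z]$ is at most the number of $Z^\circ$-clusters in the contracted bag times the number of $N^d[Z]$-vertices per cluster. Step 1 must guarantee that this per-cluster count is $\mathrm{polylog}\,k$ on average, or we refine the argument to count with weights. Multiplying the success probabilities of Steps 1 and 2 gives the bound stated in the theorem.
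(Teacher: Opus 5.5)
There is a genuine gap, and it lies in Step~3 at least as much as in Step~1. Grant Step~1 for the sake of argument. Your plan then needs two things that are incompatible. First, $N^d_G[Z]$ must be covered by $\Otilde(k)$ clusters. Second, after uncontraction, each cluster must contribute only polylogarithmically many vertices of $N^d_G[Z]$ to a bag.

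Uncontraction puts a whole cluster into every bag that contained its cluster-vertex, and $G'$ must contain all of $N^d_G[Z]$. Hence some bag meets $N^d_G[Z]$ in at least $|N^d_G[Z]|/\Otilde(k)$ vertices, which is unbounded.

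Here is a concrete example (for $h\geq 3$). Let $G$ be a planar fan, that is, a path plus a vertex $z$ adjacent to all of it. Take $d=1$ and $z\in Z$. Then $N_G[Z]=V(G)$, and the required decomposition does exist, since $G$ has treewidth $2$. But your clusters partition $V(G)$, and $\Otilde(k)$ of them cover it, so one cluster has at least $|V(G)|/\Otilde(k)$ vertices, all of which land in a single bag.

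Your repair does not work. Deleting clusters with more than $\mathrm{poly}(k)$ vertices is not allowed, because such clusters can lie entirely inside $N^d_G[Z]$, as the star around $z$ does, and the theorem demands $N^d_G[Z]\subseteq V(G')$. Even with $\mathrm{poly}(k)$-size clusters, the per-bag count would be $\Otilde(\sqrt{k})\cdot\mathrm{poly}(k)$. So Theorem~\ref{thm:main} cannot be used as a black box on a contracted graph. The obstacle is not the size of the pattern but how densely a single $d$-ball can sit inside one bag.

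The missing idea is never to put whole balls into bags. Instead, every bag, outside an exceptional set of $\Otilde(\sqrt{k})$ vertices, should meet every ball $N^d_G[v]$ in only polylogarithmically many vertices. One then counts clusters rather than vertices. The balls around $Z$ become at most $k$ pairwise disjoint $4d$-clusters (Lemma~\ref{lem:clusters-disjoint}), so a bag touching $\Otilde(\sqrt{k})$ of them meets $N^d_G[Z]$ in $\Otilde(\sqrt{k})$ vertices.

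The paper gets this per-ball sparsity from two sources:
\begin{itemize}
\item the BFS layering of the product structure (property~4 of Corollary~\ref{cor:XF});
\item the distance duality, Theorem~\ref{thm:dualdistance}. There, the chain separators are replaced by minimum-weight separators under $w_j(v)=2^{\dist_{G-X}(v,C_j)}$, and the almost-disjoint paths are shortcut near their public vertices. $K_{3,h}$-minor-freeness is used to show that one vertex cannot be close to many vertices of such a separator, or to the private parts of many paths. Otherwise $s$, $t$ and that vertex would form the $3$-side of a $K_{3,h}$ minor.
\end{itemize}

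This sparsity then has to be built into the separator improvement (Theorem~\ref{thm:improve}, with the extra unstructured set $\widetilde{C}$). It also has to be built into the recursion of Theorem~\ref{thm:main2}, which tracks how many clusters lie within distance $d$ of the adhesion. Finally it is lifted through the decomposition of Theorem~\ref{thm:excl-minor} with $Y_0=\sigma(t)$ (Lemma~\ref{lem:main-d-diam}) and through Baker layering (Lemma~\ref{lem:main-d}). None of this follows from Theorem~\ref{thm:main} applied to a minor.
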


It is conceivable that with some technical effort, Theorem~\ref{thm:main-d} can be generalized to apex-minor-free graphs. However, the statement is certainly not true for $H$-minor-free graphs in general. Consider a planar graph $G_0$ with large treewidth and let $G$ be obtained by introducing a new universal vertex $z$, adjacent to every vertex of $G_0$. Observe that $G$ is $K_6$-minor-free. For $d=1$ and $k=1$, if we set $Z=\{z\}$, then $N^d_{G}[Z]$ contains the whole graph $G$. Thus the induced subgraph $G'$ should be $G$ itself, and it cannot have a tree decomposition whose width is polynomial in $k=1$.

In this work we focus on the decision version, and leave any counting extensions
(which were the main topic of~\cite{DBLP:conf/stoc/Nederlof20a}) for future work.

\subsection{Algorithmic applications}
\label{sec:app}
In this section, we give an overview of some of the subexponential algorithms that follow from our main results. All of these algorithms require solving the given problem on bounded-treewidth graphs; more precisely, given a tree decomposition of width $k^{\Oh(1)}$ with the promise that the $k$-vertex solution interects each bag in $\Otilde(\sqrt{k})$ vertices, we need to be able to solve the problem in time $2^{\Otilde(\sqrt{k})}\cdot n^{\Oh(1)}$. The algorithms for bounded-treewidth algorithms can be designed using the standard technique of dynamic programming on tree decompositions, but this exact setting and running time is typically not considered in the literature. In this overview, we do not go into the details of these dynamic programming algorithms: describing such dynamic programing algorithms is notoriously tedious and these details are completely independent of our main techical contributions.

For those applications where we invoke the $d\ge 1$ case of Theorem~\ref{thm:main-d}, we get algorithms only for $K_{3,h}$-minor-free graphs for fixed $h$, but not for $H$-minor-free graphs for any fixed $H$. It remains an interesting open question if these results can be generalized to $H$-minor-free graphs using some problem-specific approach or, preferably, by some appropriate generalization of the framework. However, one can easily find (artificial) problems where Theorem~\ref{thm:main-d} can be used on $K_{3,h}$-minor-free graphs for any $h$, but the problem is NP-hard for $k=1$ on $K_6$-minor-free graphs. For example, an algorithm that solves the problem  ``find a set $S$ of $k$ vertices, each having degree exactly $d$, such that the closed neighborhood of $S$ induces a 4-colorable graph'' in a $K_6$-minor-free graph with $k=1$ can be used to test if a planar graphs is 3-colorable.\footnote{Given a planar graph $G$, construct the $K_6$-minor-free graph $G'$ by introducing a universal vertex. Invoke the algorithm with $k=1$ and $d=|V(G)|$ on $G'$.}

\paragraph{Subgraph isomorphism}
Theorem~\ref{thm:patterncoverold} was used by Fomin et al.~\cite{DBLP:conf/focs/FominLMPPS16} to decide whether a connected bounded-degree graph $P$ is a subgraph of an apex-minor-free graph $G$ in time $2^{\Otilde(\sqrt{|V(P)|})}|V(G)|^{\Oh(1)}$. The proof uses the fact that if a tree decomposition of $G$ of width $t$ is given, then the problem can be solved for connected, bounded-degree $P$ in time $|V(P)|^{\Otilde(t)}|V(G)|^{\Oh(1)}$. If we are given a tree decomposition of width $t_1$ with the promise is that there is a solution where every bag contains at most $t_2$ vertices of the pattern $P$, then the algorithm can be easily modified to run in time $(|V(P)|+t_1)^{\Otilde(t_2)}\cdot |V(G)|^{\Oh(1)}$. Thus, Theorem~\ref{thm:main} extends this result (and its extensions to colored, directed, weighted graphs) to $H$-minor-free graphs. 

The requirement of being connected and bounded-degree cannot be removed: Bodlaender et al.~\cite{DBLP:conf/icalp/BodlaenderNZ16} showed that a  $2^{o(|V(P)| / \log |V(P)|)}|V(G)|^{\Oh(1)}$-time algorithm for connected, unbounded-degree~$P$, or for disconnected, bounded-degree $P$ would violate the Exponential-Time Hypothesis~\cite{DBLP:journals/jcss/ImpagliazzoPZ01}. However, there is an important disconnected case of interest: where every component of the pattern is the same, or put differently, the goal is find $k$ vertex-disjoint subgraphs isomorphic to the fixed pattern $P$. For a fixed $P$, this problem can be solved in general graphs in time $2^{\Oh_P(k)}|V(G)|^{\Oh_P(1)}$ using, e.g., the technique of color coding \cite{platypus} and, given a tree decomposition of $G$ of width $t$, can be solved in time $t^{\Oh_P(t)}|V(G)|^{\Oh_P(1)}$. As Theorem~\ref{thm:main} no longer requires the pattern to be connected, it can be used to solve this problem in $H$-minor-free graphs in time $2^{\widetilde{\Oh}_{P,H}(\sqrt{k})}|V(G)|^{\Oh_{P,H}(1)}$.

\paragraph{Steiner trees}
 Given a graph $G$ with a set $T\subseteq V(G)$ of terminals, the Steiner Tree problem asks for a minimum-size tree $G_T$ that contains every vertex of $T$. The problem can be naturally extended to edge-weighted or vertex-weighted graphs. If we parameterize by the number $|T|$ of terminals, then, assuming ETH, there is no subexponential-time parameterized algorithm even on planar graphs, i.e., an algorithm with running time $2^{o(|T|)}|V(G)|^{\Oh(1)}$ \cite{DBLP:conf/focs/MarxPP18}. However, if we parameterize by the size of the solution $G_T$, then Theorem~\ref{thm:patterncoverold} can be used to get a $2^{\Otilde(\sqrt{|V(G_T)|})}|V(G)|^{\Oh(1)}$ time algorithm in apex-minor free graphs. Our Theorem~\ref{thm:main} generalizes this result to $H$-minor-free graphs.

As the terminal set $T$ is always a part of the solution tree $G_T$, another natural parameterization of the Steiner Tree problem is by the number of nonterminal vertices, i.e., by $\nu \coloneqq |V(G_T)\setminus T|$. 
We can use Theorem~\ref{thm:main-d} with $d=1$ to obtain subexponential parameterized algorithms using this parameter on $K_{3,h}$-minor-free graphs. Without loss of generality, assume that $T$ is an independent set, as any edge of $G[T]$ can be contracted without changing the answer to the problem.
Then, the key observation is that (if $|T| \geq 2$) if $X$ is the set of nonterminal vertices in the solution, then the whole tree $G_T$ is contained in $N_G[X]$. Thus Theorem~\ref{thm:main-d} with $d=1$ delivers a tree decomposition of width polynomial in $\nu$ such that every bag contains at most $\Otilde(\sqrt{\nu})$ vertices of $N_G[X]$, and hence of $G_T$. This allows us to obtain a $2^{\Otilde(\sqrt{\nu})}\cdot |V(G)|^{\Oh(1)}$-time algorithm. This reproves the result of~\cite{DBLP:journals/talg/PilipczukPSL18,DBLP:journals/algorithmica/Suchy17}.

\paragraph{Disjoint paths}
In the Disjoint Paths problem, a graph $G$ is given with $k$ pairs of terminal vertices $(s_1,t_2)$, $\dots$, $(s_k,t_k)$ and the task is to find $k$ pairwise vertex-disjoint paths $P_1$, $\dots$, $P_k$ such that $P_i$ connects $s_i$ and $t_i$. A celebrated result of Robertson and Seymour \cite{DBLP:journals/jct/RobertsonS95b} shows that this problem is FPT parameterized by the number $k$ of terminals pairs (see also \cite{DBLP:journals/siamcomp/LokshtanovMPSZ25,DBLP:conf/focs/KorhonenPS24,DBLP:journals/jct/KawarabayashiKR12}). However, it is a major open problem if length-constraints versions \cite{DBLP:journals/corr/abs-2505-03353,DBLP:conf/stacs/BentertFG25,DBLP:conf/icalp/AkmalWW24,DBLP:conf/stacs/Schlotter24,DBLP:journals/siamdm/BentertNRZ23,DBLP:conf/soda/Lochet21,DBLP:conf/esa/Berczi017} of the problem are also FPT parameterized by $k$, for example, if we want to minimize the total length $\ell$ of the paths. If we consider the more modest question of parameterization by the total length $\ell$, then standard techniques such as Color Coding \cite{platypus} can be used to obtain FPT algorithms. Moreover, our Theorem~\ref{thm:main} gives $2^{\Otilde(\sqrt{\ell})}|V(G)|^{\Oh(1)}$ time algorithms on $H$-minor-free graphs: the solution consists of $\Oh(\ell)$ vertices and the Disjoint Paths problem with various length constraints can be solved on bounded-treewidth graphs using standard techniques. As the solution does not necessarily induces a connected graph, it is essential that Theorem~\ref{thm:main}, unlike Theorem~\ref{thm:patterncoverold}, does not require connectedness.

\paragraph{Densest subgraph}
Given a graph $G$ and integer $k$, the Densest Sugraph problem asks for a set $S$ of $k$ vertices such that $G[S]$ has the maximum number of edges. On general graphs, the problem is clearly at least as hard as $k$-Clique. As the problem can be solved in time $2^{\Oh(t)}|V(G)|^{\Oh(1)}$ if a tree decomposition of width $t$ is given \cite{DBLP:journals/eor/BourgeoisGLMP17}, Theorem~\ref{thm:main} can be used to obtain a subexponential parameterized algorithm for this problem on $H$-minor-free graphs. 
Note that, even though the problem asks for a ``dense'' subgraph, there is no reason to assume that $G[S]$ is connected for the optimal $S$. Hence it is essential again that Theorem~\ref{thm:main} does not require connectivity.

\paragraph{Partial cover problems}
Partial versions of optimization problems ask for a solution set $S$ of size $k$ that potentially does not satisfy all the constraints, but maximizes the number of satisfied constraints. This generalization can make the problem significantly harder: for example, Partial Vertex Cover (find $k$ vertices that cover at least $\ell$ edges) is W[1]-hard parameterized by~$k$. However, the problem becomes FPT on planar and $H$-minor-free graphs and even admits subexponential parameterized algorithms \cite{DBLP:journals/ipl/FominLRS11,DBLP:journals/jcss/AminiFS11}. We point out that Theorem~\ref{thm:main} can be also used to obtain such algorithms, as the problem can be solved on tree decompositions using standard techniques (for example, note that the number of edges covered by $S$ is the total degree of $S$ minus the number of edges in $G[S]$, thus techniques similar to Densest Subgraph are applicable). This algorithm has the advantage of being easily generalizable to other problem variants, such as vertices/edges having weights or colors etc.

One can generalize the Partial Vertex Cover problem the following way: instead of intersecting the maximum number of edges, we want to intersect the maximum number of cliques of size~$c$; clearly, $c=2$ is the same as Partial Vertex Cover. For $c>2$, the number of $K_c$-subgraphs intersected by $S$ cannot be determined from $G[S]$ only, but it can be determined from $G[N_G[S]]$. Thus on $K_{3,h}$-minor-free graphs, we can use Theorem~\ref{thm:main-d} with $d=1$ to obtain a tree decomposition where every bag contains $\Otilde(\sqrt{k})$ vertices of $N_G[S]$. On such a tree decomposition, we can use standard techniques to solve the problem in time $2^{\Otilde(\sqrt{k})}|V(G)|^{\Oh(1)}$.

Another well-studied partial optimization problem is Partial Dominating Set, where the task is to find a set $S$ of $k$ vertices such that the closed neighborhood $N_G[S]$ has size at least $\ell$ \cite{DBLP:journals/talg/DemaineFHT05,DBLP:journals/ipl/FominLRS11}. A further natural generalization is to maximize the size of the distance-$d$ beighborhood $N^d_G[S]$ for some fixed $d$. On $K_{3,h}$-minor-free graphs for fixed $d$, Theorem~\ref{thm:main-d} can be used to obtain a $2^{\Otilde_{d,h}(\sqrt{k})}|V(G)|^{\Oh(d,h)}$-time algorithm.

Let us remark that the robust methodology of these results allows combining different type of problems. For example, we can consider the problem of finding a cycle of length $k$ dominating at least $\ell$ vertices, or finding $k$-disjoint triangles covering the maximum number of egdes, etc.

\paragraph{Distance constraints}
One of our main contributions is that Theorem~\ref{thm:main-d} for $K_{3,h}$-minor-free graphs allows us to express problems that involve distance constraints. Generalizing (Partial) Dominating Set to its distance-$d$ version was one example. But it is natural to consider distance constraints in the context of other problems as well, for example:
\begin{itemize}
\item Find $k$ copies of a fixed pattern $P$ in $G$ that are pairwise at  distance at least $d$.
\item Find vertex-disjoint paths of total length $k$ connecting given terminal pairs such that the paths are at distance at least $d$ from each other.
\item Find a cycle of length exactly $k$ such that there is no ``shortcut'' of length at most $d$ connecting two vertices of the cycle with edges outside the cycle.
  \end{itemize}
  For such problems, if $S$ is the solution of size $k$, then the validity of the solution cannot be verified by knowing only $G[S]$: we need to know $G[N^d_G[S]]$ to verify the validity of the solution. Therefore, the right way to think about the problem is the following:
  \begin{enumerate}
  \item We want to find two sets: a set $S$ of size $k$, and a set $S^*$ of unbounded size.
\item We want to find an assignment $\delta\colon S^*\to \{0,1,\dots,d\}$ such that $\delta(v)$ is the distance of $v$ from $S$.
  \item We want to verify that $S^*$ is exactly $N^d_G[S]$.
    \item We want to verify that $G[S^*]$ shows that $S$ is indeed a correct solution.
    \end{enumerate}
Theorem~\ref{thm:main-d} gives us a set $Z$ with $S^*\subseteq Z$ and a tree decomposition of $G[Z]$ 
    such that every bag contains $\Otilde(\sqrt{k})$ vertices of $S^*$. If we find the sets $S$ and $S^*$, and the assignment $\delta$, then we can verify if $S^*$ is indeed $N^d_G[S]$ (in particular, we need to verify that if $\delta(v)<d$, then $v$ has no neighbor outside $S^*$). Then it is routine to verify that $S$ satisfies the requirements.
  
\paragraph{Map graphs}
A well-studied generalization of planar graphs are map graphs \cite{DBLP:journals/jacm/ChenGP02}, which are intersection graphs of a finite collection of simply connected and internally disjoint regions in the plane. One way to describe a map graph is to take a planar graph $G$ and a set $N$ of faces; the map graph $M$ has vertex set $N$ and two vertices are adjacent if the corresponding two faces of $N$ share a vertex. Despite the fact that map graphs can have arbitrary large cliques, some of the algorithmic results for planar graphs can be generalized to map graphs \cite{DBLP:conf/latin/ByrkaLMSU20,DBLP:conf/icalp/FominLP0Z19,DBLP:conf/fct/EickmeyerK17,DBLP:journals/talg/DemaineFHT05,DBLP:journals/jacm/ChenGP02,DBLP:journals/jal/Chen01a,DBLP:conf/stacs/FabianskiPST19}.

Given $G$ and $N$ as above, let $G'$ be the bipartite planar graph where one side is $N$, the other side is the set of vertices of $G$, and a face in $N$ is adjacent in $G'$ to all its vertices. Then $x,y\in N$ are adjacent in $M$ if and only if they are distance two in the planar graph $G'$. Using this equivalence, algorithmic problems on $M$ can be rephrased as problems on the planar graph~$G'$. In particular, the graph $M[S]$ can be determined from the graph $G'[N_{G'}[S]]$ (and, more generally, $M[N^d_M[S]]$ can be determined from $G'[N^{2d+1}_{G'}[S]]$). Thus, Theorem~\ref{thm:main-d} allows us to obtain all the algorithmic results described above in map graphs as well.

\section{Overview}
In this section, we present a brief overview of the main results and techniques in the paper. The main technical content of the paper consists of three parts:
\begin{itemize}
\item To handle the $d\ge 1$ case, we need a substantial improvement of the duality result of Fomin et al.~\cite{DBLP:journals/siamcomp/FominLMPPS22}, taking into acount the $d$-neighborhood of the pattern (Section~\ref{sec:duality}).
\item We formulate a somewhat technical separator improvement process, which is used repeatedly in the main algorithm (Section~\ref{sec:improve}).
\item We present the main algorithm first for ``nearly embeddable graphs with apices'' (Section~\ref{ss:nearly}), then extend it to $H$-minor-free graphs for $d=0$ (Section~\ref{sec:lift}) and to $K_{3,h}$-minor-free graphs for $d\ge 1$ (Section~\ref{ss:algo-d}).
\end{itemize}

Even though we prove Theorems~\ref{thm:main} and \ref{thm:main-d} together with the same proof, we explain first the ideas behind the proof of Theorem~\ref{thm:main} and then discuss some of the challenges in extending the proof to the $d\ge 1$ case of Theorem~\ref{thm:main-d}.

\subsection{Duality result}
One of the main tools developed by Fomin~et al.~\cite{DBLP:journals/siamcomp/FominLMPPS22} (and used also by Nederlof~\cite{DBLP:conf/stoc/Nederlof20a}) is the following duality
theorem:
\begin{theorem}[Theorem~9 in~\cite{DBLP:journals/siamcomp/FominLMPPS22}]\label{thm:dual1}
  There is a polynomial-time algorithm that given a connected graph $G$,
  a pair $s,t \in V(G)$ of different vertices, and positive integers $p,q$, outputs
  one of the following structures in~$G$:
  \begin{itemize}
  \item a chain $(C_1,\ldots,C_p)$ of $(s,t)$-separators with $|C_j| \leq 2q$ for each $j \in [p]$; or
  \item a sequence $(P_1,\ldots,P_q)$ of $(s,t)$-paths such that for every $i \in [q]$, at most $4p$ vertices of $P_i$ lie on other paths $P_j$, $j \in [q] \setminus \{i\}$. 
  \end{itemize}
\end{theorem}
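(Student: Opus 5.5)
We first fix what ``chain'' means: we want $(s,t)$-separators $C_1,\ldots,C_p$ that are pairwise disjoint and ordered, so that $C_{j}$ separates $s$ from $C_{j+1}$ (equivalently, each $C_j$ lies in the component of $G-C_{j+1}$ containing~$s$). The plan is to build the paths one at a time with a shortest-path-type argument, and to read off a chain of separators whenever this gets stuck. We argue through the distance layering from $s$, as a Menger-type argument adapted to ``almost-disjointness'' with budget $4p$.

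\textbf{Step 1: the layering.} Compute BFS layers $L_0=\{s\},L_1,\ldots$ from $s$, and let $\ell=\dist(s,t)$. Every layer $L_i$ with $1\le i<\ell$ is an $(s,t)$-separator, and these layers are disjoint and naturally ordered. If at least $p$ of them have size at most $2q$, we pick $p$ such layers and output them as the chain. Otherwise, all but fewer than $p$ layers have more than $2q$ vertices. This alone does not yield paths, so we refine the argument.

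\textbf{Step 2: a flow formulation with vertex costs.} We reformulate the second outcome as an integral flow problem. Give every vertex capacity $q$ (so that overlap is allowed), and look for $q$ $(s,t)$-paths minimizing the total ``congestion excess'', i.e.\ the sum over vertices $v$ of $\max(0,\mathrm{load}(v)-1)$. This is a min-cost flow with convex vertex costs: the cost is $0$ for the first unit through $v$ and $1$ for each additional unit. It is solvable in polynomial time, and we decompose the resulting flow into $q$ paths.

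\textbf{Step 3: LP duality.} The dual of this min-cost flow gives potentials $\pi$ with $\pi(s)=0$. Consider the sets $S_r=\{v:\pi(v)\le r\}$ for thresholds $r$. Their vertex boundaries are $(s,t)$-separators, and they are nested, hence form a chain in $r$. Complementary slackness tells us how the paths meet these boundaries. A vertex carrying at most one unit of flow contributes cost~$0$, and the dual constraint bounds the total boundary size, summed over thresholds, by the number of paths crossing it, namely~$q$. This gives a boundary of size at most $2q$, where the factor $2$ comes from vertices that are entered and left at the same threshold.

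\textbf{Step 4: the dichotomy.} If the optimal flow cost is small, i.e.\ each path shares few vertices with the others, we output the paths. If some path shares more than $4p$ vertices with the others, the cost attributable to that path is large. This forces the potential to increase across at least $p$ distinct threshold levels, and at each such level the boundary is a small separator. The resulting boundaries form the chain $(C_1,\ldots,C_p)$.

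The main obstacle is Step 4. The flow cost controls total overlap, while the statement requires a per-path bound of $4p$ shared vertices, so we need to convert one into the other. To handle this, we would iterate: whenever some path $P_i$ exceeds $4p$ shared vertices, we use the dual potentials along $P_i$ to extract separators. We charge two shared vertices per threshold crossing, and we guarantee disjointness by taking every other level. This is where the constants $2q$ and $4p$ come from. If this charging fails, the fallback is a direct augmenting-path argument: reroute $P_i$ along a shortest path in the residual graph, and extract the separator chain from the BFS layers of that residual graph.
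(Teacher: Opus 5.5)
Your Step~2 network is the right one. It is the network the paper uses in the appendix to prove Theorem~\ref{thm:dual0}: each $v\neq s,t$ is split into $v_0$ (capacity $1$, cost $0$) and $v_1$ (unbounded capacity, cost $1$). Theorem~\ref{thm:dual1} then follows even with $q$ and $p$ in place of $2q$ and $4p$, and reading the separators off integral dual potentials is also the paper's route.

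However, the proposal does not prove either quantitative claim. In Step~3 the bound $2q$ rests on a heuristic (``vertices entered and left at the same threshold'') rather than a derivation. In Step~4 you yourself note that the flow cost controls only the total overlap. You then defer the per-path bound to an unspecified iteration/charging scheme with an augmenting-path fallback. That per-path bound is the whole content of the theorem, so there is a genuine gap. Also, splitting into cases on the size of the flow cost is the wrong dichotomy.

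The missing idea is what complementary slackness says about each individual flow path.
\begin{itemize}
\item Take an integral optimal dual $(y,z)$ with $y_s=0$ and $z_v\in\{0,1\}$. Every arc carrying flow is tight, so each flow path $P_i'$ is a shortest $s$--$t$ path for the lengths $z_v$ on $v_0$ and $1$ on $v_1$. Along it, $y$ increases monotonically from $0$ to $y_t$ in steps of $0$ or $1$.
\item If $v$ lies on two paths, one of them passes through $v_1$, because $v_0$ has capacity $1$. If $z_v$ were $0$, rerouting through $v_0$ would shorten that path, so $z_v=1$. Hence every shared vertex of $P_i$ accounts for a unit increase of the potential, and $P_i$ has at most $y_t$ shared vertices.
\item For $j\in[y_t]$ let $C_j=\{v : y_{v_0}=j,\ z_v=1\}$. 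These sets are disjoint and form a chain of $(s,t)$-separators: potentials of adjacent copies differ by at most one, and a copy $v_0$ at level $j$ entered from level $j-1$ forces $z_v=1$.
\item Each $v\in C_j$ has $z_v>0$, so $v_0$ is saturated and some path uses $v$. By monotonicity each path meets each $C_j$ exactly once, which gives $|C_j|\le q$.
\end{itemize}
So the dichotomy is on $y_t$: if $y_t\ge p$, output $C_1,\ldots,C_p$; otherwise every path has fewer than $p$ shared vertices. You need no iteration, no skipping of levels, no factor $2$ or $4$, and no BFS layering from Step~1.
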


In the first case, that $(C_1,\ldots,C_p)$ is a chain of separators means that each $C_i$ separates $\{s\}\cup C_1\cup \ldots\cup C_{i-1}$ from $C_{i+1}\cup \ldots \cup C_p\cup \{t\}$.
In the second case, the vertices of $P_i\setminus \{s,t\}$ that appear on other paths $P_j$ are called 
\emph{public vertices}, whereas all other vertices are \emph{private vertices}. Vertices $s$ and $t$ are treated in a special way, and they are neither private nor public.

A typical usage of \cref{thm:dual1} in~\cite{DBLP:journals/siamcomp/FominLMPPS22,DBLP:conf/stoc/Nederlof20a}, where an unknown pattern graph
on at most $k$ vertices is hidden in the host graph, is to set $p \sim \sqrt{k}$ and $q = \mathrm{poly}(k)$.
In the first case, one of the separators $C_j$ will have small (of size at most $\sqrt{k}$) intersection
with the pattern and, with $|C_j|$ bounded polynomially in $k$, this intersection can be guessed from a
subexponential number of choices.
In the second case, a randomly chosen path $P_i$ intersects the pattern only in public vertices
with high probability;
private vertices can be reduced (e.g., deleted or contracted), decreasing the length of $P_i$
to at most $4p \sim 4\sqrt{k}$. 

Let us rephrase Theorem~\ref{thm:dual1} into the following form that will be consistent
with the distance version developed later.
First, we emphasize the existence of a separator/path with small intersection with the unknown set $Z$. Second, as a technical generalization, we also want to have a small intersection with a known set $Z_0$. The key point here is that the unknown $Z$ contributes $|Z|$ to the lower bound on $q$, while the known $Z_0$ contributes only $|Z_0|/p$.

\begin{corollary}\label{cor:dual1}
  There is a polynomial-time algorithm that given a graph $G$,
  a pair $s,t \in V(G)$ of different vertices, a set $Z_0 \subseteq V(G)$
  of size $k_0$, and positive integers $p,q\ge 2,k$ with
 $q\ge k+k_0/p+1$ outputs
  one of the following structures in~$G$:
  \begin{itemize}
  \item a chain $(C_1,\ldots,C_p)$ of $(s,t)$-separators in $G$ 
  of size at most $2q$ such that for any $Z\subseteq V(G)$ of size at most $k$,
  there is a $j \in [p]$ with $|Z_0 \cap C_j| \leq 2k_0/p$ and $|Z \cap C_j|\le  2k/p$.
  \item a sequence $(P_1,\ldots,P_q)$ of $(s,t)$-paths in $G$ and
  a sequence $(Q_1,\ldots,Q_q)$ of sets of size at most $5p+2$ with
  $Q_i\subseteq V(P_i)$ such that
  for any $Z\subseteq V(G)$ of size at most $k$,
  there is a $j \in [q]$ with 
  $(Z_0 \cup Z) \cap (V(P_j) \setminus Q_j) = \emptyset$.
  \end{itemize}
\end{corollary}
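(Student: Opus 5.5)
We derive Corollary~\ref{cor:dual1} from Theorem~\ref{thm:dual1} by counting arguments in both outcomes. If $s$ and $t$ lie in different components of $G$, the first outcome holds trivially with $C_1=\cdots=C_p=\emptyset$. Otherwise we restrict to the component containing $s,t$. There we run the algorithm of Theorem~\ref{thm:dual1} with the same $p$ and $q$, and interpret each of its two outputs. Note that $s$ and $t$ may need special treatment: they may belong to $Z$ or $Z_0$, which is why the sets $Q_i$ are allowed $5p+2$ vertices rather than $4p$. Accordingly, we put $s,t$ into every $Q_i$.

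\emph{Separator case.} We obtain a chain $(C_1,\ldots,C_p)$ of $(s,t)$-separators with $|C_j|\le 2q$. By the chain property the separators are pairwise disjoint; if Theorem~\ref{thm:dual1} only gives a chain in the weaker sense, we first make them disjoint, or replace them by minimal ones, which is a standard step. Fix $Z$ with $|Z|\le k$. Then $\sum_j |Z\cap C_j|\le k$ and $\sum_j|Z_0\cap C_j|\le k_0$. By Markov's inequality, fewer than $p/2$ indices have $|Z\cap C_j|>2k/p$, and fewer than $p/2$ indices have $|Z_0\cap C_j|>2k_0/p$. Hence some $j$ satisfies both bounds. (If equality cases cause trouble, a strict/non-strict check suffices: the number of $j$ with $|Z\cap C_j|>2k/p$ is strictly less than $p/2$.)

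\emph{Path case.} We obtain paths $P_1,\ldots,P_q$ in which each $P_i$ has at most $4p$ public vertices. Let $Q_i$ consist of the public vertices of $P_i$ together with $s$ and $t$; then $|Q_i|\le 4p+2\le 5p+2$. The remaining vertices $V(P_i)\setminus Q_i$ are private to $P_i$, so these sets are pairwise disjoint over $i$. A fixed $Z$ therefore meets at most $|Z|\le k$ of them, so $Z$ alone "spoils" at most $k$ indices.

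The main obstacle is $Z_0$: it may be spread thinly over many paths, and naive counting spoils up to $k_0$ indices, whereas we have only $q\ge k+k_0/p+1$. This is where the slack between $4p$ and $5p$ is used. Call a path $P_i$ \emph{heavy} if it contains more than $p$ private vertices of $Z_0$; by disjointness there are fewer than $k_0/p$ heavy paths. For each non-heavy path, we enlarge $Q_i$ by adding its at most $p$ private $Z_0$-vertices, so $|Q_i|\le 5p+2$ and $Z_0\cap (V(P_i)\setminus Q_i)=\emptyset$. For heavy paths we set $Q_i$ arbitrarily as before; these indices are simply lost. Now fix $Z$. The bad indices are the fewer than $k_0/p$ heavy ones plus at most $k$ indices whose private part meets $Z$. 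Since $q\ge k+k_0/p+1$, some index $j$ is neither. For that $j$, the set $V(P_j)\setminus Q_j$ avoids $Z_0$ because $j$ is non-heavy, and avoids $Z$ by the choice of $j$. Everything above is polynomial-time, since $Z_0$ is known when the sets $Q_i$ are built.
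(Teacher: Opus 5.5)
Your proposal is correct and matches the paper's proof. The paper also applies Theorem~\ref{thm:dual1}, uses disjointness of the chain plus averaging in the separator case, and in the path case takes $Q_i$ to be $\{s,t\}$, the public vertices, and up to $p$ private $Z_0$-vertices, so that $q\ge k+k_0/p+1$ leaves an index avoiding both the $Z_0$-heavy paths and the at most $k$ paths with a private vertex in $Z$. Your explicit handling of the case where $s$ and $t$ lie in different components, which matters because Theorem~\ref{thm:dual1} assumes a connected graph, is a small detail the paper leaves implicit.
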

\begin{proof}
Apply Theorem~\ref{thm:dual1} to $G$, $s$, $t$, $p$, and $q$. 
If a chain $(C_1,\ldots,C_p)$ is obtained, just return it as the first outcome:
as they are pairwise disjoint, less than $p/2$
separators have more than $2k_0/p$ elements of $Z_0$ and less
than $p/2$ separators have more than $2k/p$ elements of $Z$.
If a sequence $(P_1,\ldots,P_q)$ is returned, then return it together with
$Q_i$ being equal to $\{s,t\}$ plus the set of public vertices of $P_i$
plus at most $p$ private vertices of $P_i$ that are members of $Z_0$.
Since $q \geq k+k_0/p+1$, there exists $i \in [q]$ such that no private vertex
of $P_i$ is in $Z$ and at most $p$ private vertices of $P_i$
are in $Z_0$, and thus $V(P_i) \cap (Z_0 \cup Z) \subseteq Q_i$, as desired. 
\end{proof}

\subsection{Moving to a sparse separation}

A key step in the algorithm of Nederlof~\cite{DBLP:conf/stoc/Nederlof20a} is that if we have a balanced separator that has a large intersection with an unknown solution
$Z$ of size $k$, then with $\textup{quasipoly}(k)$ guesses we can move to a balanced separator that has only $\Otilde(\sqrt{k})$ intersection with $Z$. Essentially the same argument is used multiple times in the algorithm of Nederlof~\cite{DBLP:conf/stoc/Nederlof20a} in slightly different settings. Our main result is also using a similar kind of an argument. We formulate an abstract statement that can cover all required uses of this argument and prove it for $H$-minor-free graphs and a generalization for $d\ge 1$ on $K_{3,h}$-minor-free graphs (Theorem~\ref{thm:improve}). As the $d\ge 1$ case involves lots of additional technicalities, we present here a simplified version relevant for $d=0$.

Recall that a separation of $G$ is a pair $(A,B)$ of subsets of vertices such that $V(G)=A\cup B$, $A\setminus B\neq \emptyset$, $B\setminus A\neq \emptyset$, and $G$ has no edge between $A\setminus B$ and $B\setminus A$. The order of the separation $(A,B)$ is $|A\cap B|$. As we said earlier, a key goal is to find separators of size $\textup{poly}(k)$ that intersect the solution $Z$ only in $\Otilde(\sqrt{k})$ vertices. But since we are looking for a tree decomposition of an induced subgraph of $G$, it is also beneficial if we find a separator consisting of two sets of vertices: a ``sparse'' part of size $\textup{poly}(k)$ that intersects the solution $Z$ only in $\Otilde(\sqrt{k})$ vertices, and an ``irrelevant'' part that has unbounded size, but disjoint from $Z$. In case we have such a separator and partition, then we can remove the second part from the graph.

\begin{theorem}\label{thm:simplifiedimprove}
  For every fixed integer $h \geq 4$, there exists a randomized polynomial-time algorithm 
  that, given on input a $K_h$-minor-free graph $G$,
  an integer $k \geq 2$,
  a separation $(A^\circ,B^\circ)$ in $G$
  and a spanning forest $F^\circ$ of $G[A^\circ \cap B^\circ]$ 
  with less than $h$ connected components and maximum degree at most $h$, 
  samples a tuple
  $(A,B,C)$ of subsets of $V(G)$
  satisfying the following properties:
  \begin{enumerate}[itemsep=0px]
  \item $(A,B)$ is a separation in $G$,
  \item $C \subseteq A \cap B$,
  \item $|C| \leq |A^\circ \cap B^\circ| + \Oh_{h}(k \log k)$,
  \item For every positive integer $\theta$, set $W \subseteq V(G)$, and a set $Z \subseteq V(G)$
  such that 
   \begin{itemize}[itemsep=0px]
   \item $|W| \geq 4\theta$,
   \item $|A^\circ \cap W| \geq \theta$, $|B^\circ \cap W| \geq \theta$, and
   \item $|Z| \leq k$, 
   \end{itemize}
  with probability 
  $\left(\Oh_{h}(k|A^\circ \cap B^\circ|)\right)^{-\Oh_h(\log k)}$
  we have the following:
  \begin{itemize}[itemsep=0px]
  \item 
  $|A \cap W| \geq \theta$, $|B \cap W| \geq \theta$, 
  \item $A \cap B \cap Z \subseteq C$, 
  \item $|A \cap B \cap Z|  =|C \cap Z|$ is bounded by
  $\Oh_{h}(\sqrt{k} \log k)$,
  \item furthermore, if 
   $ |Z \cap A^\circ \cap B^\circ| > h(h+1) \sqrt{k} $, then $|Z \cap A \cap B| \geq \sqrt{k}$.
  \end{itemize}
  \end{enumerate}
\end{theorem}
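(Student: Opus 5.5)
We plan to follow Nederlof's ``moving to a sparse separator'' argument, using Corollary~\ref{cor:dual1} as the engine and the $K_h$-minor-freeness only to control how the new separator interacts with the old one through $F^\circ$. First we guess, with probability $(\Oh_h(k|A^\circ\cap B^\circ|))^{-\Oh_h(\log k)}$, a small amount of information about $Z$ relative to $S^\circ := A^\circ\cap B^\circ$. If $|Z\cap S^\circ| \le h(h+1)\sqrt{k}$ we simply output $(A,B,C)=(A^\circ,B^\circ,S^\circ)$. All required properties then hold trivially, and the ``furthermore'' clause is vacuous. The interesting case is when $Z\cap S^\circ$ is large. We use the forest $F^\circ$: since it has fewer than $h$ components and maximum degree at most $h$, we can cut each tree of $F^\circ$ into subtrees. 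The cut points are chosen by weight, balancing the guessed quantities $|W\cap\cdot|$ and $|Z\cap\cdot|$, and there are $\Oh(\log k)$ rounds of halving. Each cut point is guessed among $|S^\circ|$ vertices, which gives the stated probability. This lets us locate two ``halves'' $S_1,S_2$ of $S^\circ$, connected through $F^\circ$, each carrying a constant fraction of $Z\cap S^\circ$.

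Next, contract (or identify) $S_1$ into a vertex $s$ and $S_2$ into a vertex $t$. This is legitimate because each $S_i$ is a bounded number of connected pieces of $F^\circ$; alternatively we add a super-vertex adjacent to them. We then apply Corollary~\ref{cor:dual1} with $p\sim\sqrt{k}$ (times $\log$ factors), $Z_0$ empty or a guessed set, and $q\sim k+1$.

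In the separator outcome, some $C_j$ has $|C_j|\le 2q=\Oh(k)$ and $|Z\cap C_j|\le 2k/p=\Oh(\sqrt k)$; we guess $j$, which costs a factor $p$. We form the new separation by cutting along $C_j$ together with the part of $S^\circ$ on the appropriate side. The new sparse set $C$ consists of $C_j$ plus one of $S_1,S_2$, so $|C|\le|S^\circ|+\Oh(k)$. The other half of $S^\circ$ becomes non-separator, which drops $Z$-mass by a constant factor while $|Z\cap A\cap B|\ge\sqrt k$ is still retained. The $W$-balance ($\theta$ on each side) is maintained by guessing which side gets which part; here $|W|\ge4\theta$ guarantees a feasible choice.

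In the path outcome, we guess $j\in[q]$, costing a factor $q=\Oh(k)$. The path $P_j$ meets $Z$ only in $Q_j$, so $V(P_j)\setminus Q_j$ is irrelevant and may be placed in $A\cap B\setminus C$. Then $P_j$ together with $F^\circ$ yields a connection across $S^\circ$ and allows us to split $S^\circ$ along it, again shrinking the $Z$-relevant part of the separator. Iterating $\Oh(\log k)$ rounds until $|Z\cap C|=\Oh_h(\sqrt k\log k)$ gives the bound, with each round adding $\Oh_h(k)$ to $|C|$ (giving $\Oh_h(k\log k)$ overall) and costing a factor $\Oh_h(k|S^\circ|)$ in probability.

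The main obstacle will be the path case: showing that $s$--$t$ paths plus $F^\circ$ actually produce a separation of $G$ in a $K_h$-minor-free graph. Our first attempt is the following. Fewer than $h$ components and bounded degree mean that many disjoint such paths would create a $K_h$ minor, and this forces the paths to be ``nested''. That nesting should let a single path serve as the irrelevant part of a separator.
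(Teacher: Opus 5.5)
There is a genuine gap, and it is exactly the path outcome you flag. With only two parts $S_1,S_2$, the separator outcome of Corollary~\ref{cor:dual1} need never occur, and $K_h$-minor-freeness does not force it. Two connected sets can be joined by arbitrarily many internally disjoint paths in a $K_h$-minor-free graph; already $K_{2,m}$ excludes $K_4$. So your claim that many such paths yield a $K_h$ minor is false, and the ``nesting'' you want to derive from it is unsupported. Worse, if $S_1,S_2$ arise by cutting a tree of $F^\circ$ at an edge, they are adjacent in $G$. After contraction $st$ is then an edge, no $(s,t)$-separator exists, and Corollary~\ref{cor:dual1} can only return paths. A single path, however sparse in $Z$, is not a separator, so your iteration cannot make progress in this case.

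The missing idea is to work with $h$ pieces and all $\binom{h}{2}$ pairs. Split $F^\circ$ into exactly $h$ connected pieces $J_1,\dots,J_h$, guessing edges and $Z$-weights via Lemma~\ref{lem:cut-tree}. Then, for each pair $\alpha\beta$ in turn:
\begin{itemize}
\item contract $J_\alpha,J_\beta$ into $s,t$;
\item delete the set $X$ consisting of the other pieces and all previously chosen paths;
\item apply Corollary~\ref{cor:dual1};
\item in the path outcome, guess a good $P_i$, put $Q_i$ into the sparse set and $V(P_i)\setminus Q_i$ into an irrelevant set, add $P_i$ to $X$, and continue with the next pair.
\end{itemize}
If every pair produced a path, the pieces together with these internally disjoint paths would form a minor model of $K_h$. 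Hence some pair yields a chain. For that pair, $C_i\cup X$ is a true separator of $G$ with $J_\alpha,J_\beta$ on opposite sides. Its $Z$-vertices outside the current separator lie in $C_i$ and the $\Oh(h^2)$ sets $Q$, so there are only $\Oh_h(\sqrt k)$ of them. Uncrossing with the current separation, while keeping the remaining pieces on both sides, drops one of $J_\alpha,J_\beta$ from the separator.

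You do not control which piece is dropped. To get $\Oh_h(\log k)$ rounds you therefore also need two invariants:
\begin{itemize}
\item every piece other than the original components of $F^\circ$ carries a constant fraction of the current $Z$-weight, maintained by always re-splitting the heaviest piece;
\item at least two pieces have weight above $\sqrt k$, which is what yields the ``furthermore'' clause.
\end{itemize}
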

Let us try to parse this statement. Nederlof \cite{DBLP:conf/stoc/Nederlof20a} considers only cycle separators in planar graphs, that is, $G[A^\circ\cap B^\circ]$ has a 2-regular spanning subgraph. The existence of this cycle shows, for example, that it is possible to partition the separator into multiple connected parts such that each part contains roughly the same number of vertices of the solution. In our generalization, we replace the requirement of cycle separators with a more abstract property: a spanning forest with bounded number of components and bounded degree. The existence of such a forest shows a similar property: each tree can be partitioned into multiple parts such that the number of solution vertices in the different parts differ only by at most a constant factor (or is one of the components of $F^\circ$, but there is only a constant number of them and do not cause problems).
Such a separator with a suitable spanning tree can be obtained from product structure theory results of~\cite{DBLP:journals/jacm/DujmovicJMMUW20}.

We are given a separation $(A^\circ,B^\circ)$ of $G$ having order $\ell$, and we assume that this separation is balanced with respect to some unknown $W$ of arbitrary size (i.e., at least  $\frac{1}{4}$-fraction of $W$ is in $A^\circ$ and in $B^\circ$). Without knowing $W$ or the solution $Z$ of size at most $k$, we need to sample with probability $(k\ell)^{-\Oh(\log k)}$ a separator $(A,B)$ such that $(A,B)$ still satisfies the same balance requirement with respect to $W$, but has only intersection of size $\Oh(\sqrt{k}\log k)$ with the unknown solution $Z$. More precisely, the order of $(A,B)$ can be arbitrary large, but together with $(A,B)$, we also need to output a guess $C\subseteq A\cap B$ that covers every vertex in $Z\cap A\cap B$; this partitions the set $A\cap B$ into a sparse part $C$ and irrelevant part $(A\cap B)\setminus C$. The size $|C|$ of this guess can be at most the order $\ell$ of $(A^\circ,B^\circ)$ plus $\Oh(k\log k)$. Note that there is no bound on the order of $(A,B)$, but the vertices of $(A\cap B)\setminus C$ are not in the solution, hence they are irrelevant and can be removed from the graph. 

We present a very robust proof for Theorem~\ref{thm:simplifiedimprove} that does not use any topological arguments: our algorithm either samples $(A,B)$ as required or exhibits an $H$-minor in $G$. Suppose for simplicity that the spanning forest of $A^\circ\cap B^\circ$ can be partitioned into $h$ connected sets $J_1$, $\dots$, $J_h$ such that each of them intersects $Z$ in roughly the same number of vertices. We can guess this partition by guessing $h-1$ edges of the spanning tree (that is, $(\ell-1)^{h-1}$ possibilities). We guess which of $|A\cap W|$ and $|B\cap W|$ is larger (two possibilities). Thus in the following, without loss of generality we assume that $|A\cap W|$ is larger, hence it is at least $|W|/2$.

\begin{figure}[tb!]
  \begin{center}
    \begin{tikzpicture}
      \tikzset{
        large circle/.style={
        circle,
        draw,
        fill=black!20,
        minimum size=1cm,
        inner sep=10pt
      }
    }

\draw[fill=black!20] (-60:4)
    arc (-60:60:4)
    -- (1, 1) -- (-1, 1) -- (120:4)
    arc (120:240:4)
    -- (-1, -1) -- (1, -1) -- (-60:4) --cycle;

    \draw (0,0) circle (4);
    \node[large circle,fill=green!20] (J1) at (90:4) {$J_1$};
    \node[large circle] (J2) at (135:4) {$J_2$};
    \node[large circle] (J3) at (45:4) {$J_3$};
    \node[large circle] (J4) at (180:4) {$J_4$};
    \node[large circle] (J5) at (0:4) {$J_5$};
    \node[large circle] (J6) at (225:4) {$J_6$};
    \node[large circle] (J7) at (315:4) {$J_7$};
    \node[large circle,fill=green!20] (J8) at (270:4) {$J_8$};
    \draw (0,0) node {$S$};
    \end{tikzpicture}
  \caption{Separating two sets $J_1$ and $J_h$.}\label{fig:sepJJ}
  \end{center}
\end{figure}
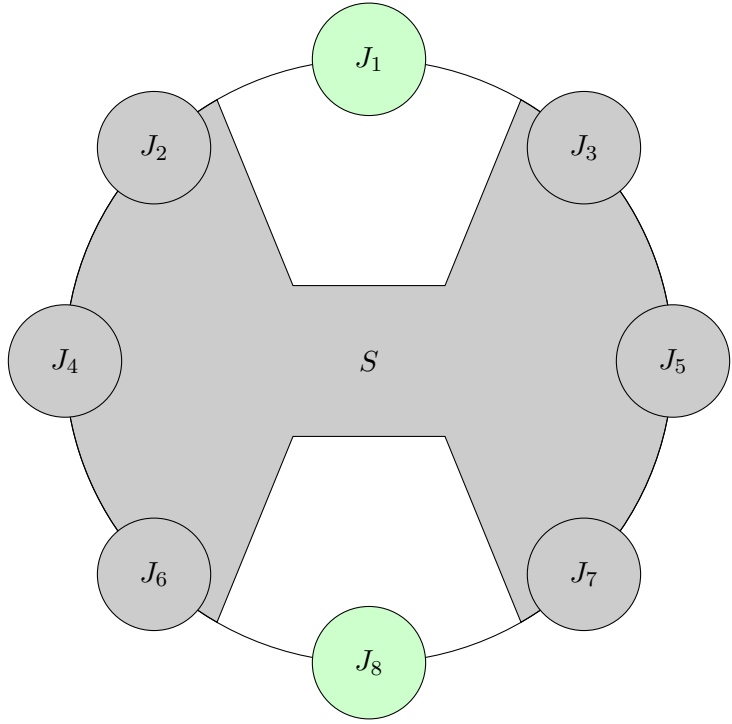

Our goal is to guess a certain separator $S$ that separates two sets $J_i$ and $J_j$ for some $i,j\in[h]$. For notational simplicity, let us try to find such a separator for $i=1$ and $j=h$: we want a set $S$ that is a $J_1-J_h$ separator in $G-(J_2\cup\dots \cup J_{h-1})$ (see Figure~\ref{fig:sepJJ}), has size $\textup{poly}(k)$, and $|S\cap Z|=\Otilde(\sqrt{k})$.  If we have such a set $S$, then this means that there is a separation $(A',B')$ of $G$ with $A'\cap B'=(J_2\cup\dots \cup J_{h-1})\cup S$, $J_1\subseteq A'$, $J_h\subseteq B'$. We guess which of the two sets $J_1$ and $J_h$ is in the part with a larger intersection with $A\cap W$ (two possibilities); without loss of generality, let us assume that $|A'\cap (A\cap W)|\ge |A\cap W|/2\ge |W|/4$. Now we uncross the two separations $(A,B)$ and $(A',B')$ and define the separation $(A'',B'')$ by $A''=A\cap A'$ and $B''=B\cup B'$. Note that we have $A''\cap B''\subseteq (J_1\cup J_2\cup\dots \cup J_{h-1})\cup S$. We observe the following: 
\begin{itemize}
\item $|A''\cap W|=|A'\cap (A\cap W)|\ge |W|/4$, since we assumed that $A'$ contains at least half of the vertices of $A\cap W$.
\item $|B''\cap W| \ge |W|/4$, since we have $B\subseteq B''$ and $|B\cap W|\ge |W/4|$ by assumption.
  \item When moving from $(A,B)$ to $(A'',B'')$, we removed $J_h$ from the separator (which is supposed to contain roughly $1/h$ fraction of the intersection with $Z$), at the cost of introducing $S$ (which is supposed to contain only $\Otilde(\sqrt{k})$ vertices of $Z$). This reduces the intersection with $Z$ by a constant factor, unless this intersetion size is already  $\Otilde(\sqrt{k})$.
  \end{itemize}
  Therefore, we can repeat the procedure for this new separation $(A'',B'')$. After $\Oh(\log k)$ iterations, the intersection size decreases to $\Otilde(\sqrt{k})$. By repeatedly introducing the set $S$, the order of the separation increases by a total of $\textup{poly}(k)$ and the increase of intersection with $Z$ is $\Otilde(\sqrt{k})$.

  But how do we find the required $J_i-J_j$ separator that has $\Otilde(\sqrt{k})$ intersection with $Z$?  Here is the point where the duality result of Corollary~\ref{cor:dual1} is used. The basic idea is that, for every $i,j\in [h]$, we invoke Corollary~\ref{cor:dual1} on the graph obtained from $G$ by contracting set $J_i$ into a single vertex $s$, contracting set $J_j$ into a single vertex $t$, and removing every other sets $J_s$. If for some $i,j\in [h]$, Corollary~\ref{cor:dual1} returns a sequence of disjoint separators, then one of them has intersection of size $\Otilde(\sqrt{k})$ with $Z$, and we guess this separator. If for every $i,j\in [h]$, we get a path $P_{i,j}$, then we try to use these paths to show that the graph has a $K_h$-minor. If the paths are pairwise disjoint, then contracting each $J_i$ into a single vertex and each path into a single edge certainly gives a $K_h$-minor. However, the paths are not necessarily disjoint, so we have to use this argument in a more careful, iterative manner.

  We go through the pairs $(i,j)\in \binom{h}{2}$ in an arbitrary order. If Corollary~\ref{cor:dual1} gives a path $P_{i,j}$, then we remove the internal vertices of the path from the graph, and continue with the next pair $(i,j)$ on the remaining graph. This way, the collected paths will be indeed pairwise disjoint, hence if this iteration goes over every pair $(i,j)$, then the resulting $h$ connected sets $J_1$, $\dots$, $J_h$, and the $\binom{h}{2}$ pairwise disjoint paths show create a $K_h$-minor in the graph. However, if the iteration stops at some pair $(i,j)$ with a $J_i-J_j$ separator $S$, then $S$ is only a separator in the remaining graph. In other words, $S$ is a separator in the original graph only if we extend it with the union $P$ of all the paths $P_{i,j}$ that were iteratively removed. Unfortunately, the size of these paths is unbounded. But recall that each such path $P_{i,j}$ in Corollary~\ref{cor:dual1} come equipped with a set $Q_{i,j}$ of size $\Otilde(\sqrt{k})$, and Corollary~\ref{cor:dual1} not only stated that for one of the paths $P_{i,j}$ the intersection with $Z$ is bounded, but also that $P_{i,j}$ intersects $Z$ only in $Q_{i,j}$. Thus if we selected such a path in each of the at most $\binom{h}{2}$ steps of the iteration, then we can assume that their union $P$ intersects $Z$ only in the union $Q$ of all the $Q_{i,j}$'s, which has size $\Otilde(h^2 \sqrt{k})$. This means that $S\cup P$ is the type of separator that we like: it has a sparse part $S\cup Q$ of size $\textup{poly}(k)$ that contains $\Otilde(\sqrt{k})$ vertices of $Z$, and an unbounded part $P\setminus Q$ that is disjoint from $Z$. Therefore, we can proceed with this $J_i-J_j$ separator $P\cup S$ as described above.

  \subsection{The main algorithm}
  \label{sec:overviewalg}
  
  In this overview, we describe the algorithm of Theorem~\ref{thm:main} for bounded-genus graphs, which already demonstrates the main technical ideas.  Moreover, we assume that the diameter of the graph is $\Oh(k)$: with a standard application of Baker's layering approach, we can make this assumption.
  It is known that if a bounded-genus graph has diameter $\Oh(k)$, then it has treewidth $\Oh(k)$, hence has balanced separations of order $\Oh(k)$. But in order to be able to use Theorem~\ref{thm:simplifiedimprove}, we need to be able to find separators that have a bounded-degree spanning trees consisting of only a bounded number of components. We deduce from (a suitable variant of) the Product Structure Theorem of Dujmovi\'c et al.~\cite{DBLP:journals/jacm/DujmovicJMMUW20} that such balanced separators exists in connected bounded-genus graphs (and, more generally, graphs nearly-embeddable in a surface without apices).

\begin{corollary}\label{cor:XF0}
For every $g\ge 0$, there is a constant $c$ such that the following holds. 
There is a polynomial-time algorithm that, given a graph of genus at most $g$
of diameter $\Delta$ and a weight function $\mu:V(G) \to \mathbb{R}_{\geq 0}$, computes
a balanced separator $X$ of $G$ with respect to $\mu$ and a spanning forest $F$ of $G[X]$
such that 
\begin{enumerate}
\item $|X| \leq c^2 \cdot (\Delta+1)$, 
\item $F$ has at most $c$ connected components, and
\item $F$ has maximum degree at most $3c^2-1$.
\end{enumerate}
\end{corollary}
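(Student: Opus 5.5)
We would derive the statement from (a suitable variant of) the Product Structure Theorem of Dujmovi\'c et al.~\cite{DBLP:journals/jacm/DujmovicJMMUW20}. For graphs of Euler genus at most $g$, it states that $G$ is a subgraph of $H \boxtimes P \boxtimes K_{c_0}$, where $H$ has treewidth bounded by a constant and $P$ is a path. In the BFS-layering form, this means the following. Fix a root $r$ and BFS layers $L_0,L_1,\ldots,L_\Delta$. The theorem gives a partition $\mathcal{P}$ of $V(G)$ into \emph{vertical paths}, each containing at most one vertex per layer (for bounded genus, a bounded number $c_0$ of such geodesics per part). The quotient $G/\mathcal{P}$ has treewidth at most some constant $t$, and this holds constructively in polynomial time. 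Since $G$ has diameter $\Delta$, each part of $\mathcal{P}$ has at most $c_0(\Delta+1)$ vertices.

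The separator comes from the quotient. Transfer $\mu$ to $G/\mathcal{P}$ by giving each part the total weight of its vertices. Compute a tree decomposition of $G/\mathcal{P}$ of width $t$ and take the standard weighted balanced bag, of at most $t+1$ parts, that separates $G/\mathcal{P}$ into pieces of weight at most half. Let $X$ be the union of these parts. Every edge of $G$ lies inside a part or between adjacent parts, so $X$ is a $\mu$-balanced separator of $G$. Moreover $|X|\le (t+1)c_0(\Delta+1)$, which gives item~1 with $c \ge \max\{t+1,c_0\}$.

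For the forest $F$, each part is a union of at most $c_0$ vertical geodesic paths, and each such path induces a path in $G[X]$. Take $F$ to be the union of these paths, over at most $(t+1)c_0$ paths in total. This gives at most $c$ components and maximum degree $2$, which settles items~2 and~3 with plenty of room.

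The main obstacle is making the invoked product-structure statement precise in exactly this form. We need the parts to be unions of a bounded number of vertical paths, which holds for the genus version, where each part is a union of at most $2g+3$ or so vertical paths. The proof of that theorem cuts the surface along $\Oh(g)$ vertical paths to reduce to the planar case, so a part can consist of several vertical paths rather than one. This is why we merge them and allow $F$ to have up to $c$ components. The generous degree bound $3c^2-1$ in the statement suggests that the authors' variant instead lets $F$ join these pieces, for example by adding short connecting edges, and bounds degrees via the constant number of paths meeting at a vertex. Either way the counting is routine once the structural theorem is fixed. Polynomial-time computability follows because both the BFS layering and the partition in~\cite{DBLP:journals/jacm/DujmovicJMMUW20} are constructive, and bounded-treewidth decompositions of $G/\mathcal{P}$ can be found in linear time.
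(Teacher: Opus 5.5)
Your argument is correct in substance. It shares the paper's skeleton: take a product-structure partition of bounded layered width whose quotient has bounded treewidth, choose a $\mu$-balanced bag of the quotient, and let $X$ be the union of its parts. Then $|X|$ is bounded by (number of parts)$\times$(width)$\times(\Delta+1)$, since a connected graph of diameter $\Delta$ has at most $\Delta+1$ nonempty layers. The paper instead lifts the quotient decomposition to a decomposition of $G$ and picks the balanced bag there via Lemma~\ref{lem:td-to-balanced-sep}; your transfer of $\mu$ to the quotient is equivalent.

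You differ in how $F$ is obtained. The paper uses only that the partition is \emph{connected} with layered width at most $c$:
\begin{itemize}
\item $G[X]$ is a union of at most $c$ connected parts, so it has at most $c$ components.
\item $X$ contains at most $c^2$ vertices of each layer, and all neighbours of a vertex lie in three consecutive layers, so $G[X]$ itself has maximum degree at most $3c^2-1$.
\end{itemize}
That is exactly where item~3 comes from. Any spanning forest of $G[X]$ works, and no connecting edges are involved, contrary to your guess.

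Your route reads $F$ off the vertical-path structure. This gives a smaller degree bound, but it needs a more specific form of the theorem. The paths must be vertical in a spanning tree of $G$ itself; if they were vertical in a BFS tree of a triangulation of $G$, consecutive vertices need not be adjacent in $G$. The route is also tied to the genus setting. The paper's argument works verbatim for nearly-embeddable graphs without apices, which it needs for Corollary~\ref{cor:XF}. It also avoids the ``main obstacle'' you flagged.

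A few small repairs are needed:
\begin{itemize}
\item Item~2 needs $c\ge (t+1)c_0$, not merely $c\ge\max\{t+1,c_0\}$. Take $c=(t+1)c_0$; item~1 still holds since $c\le c^2$.
\item If the vertical paths of one part share vertices (e.g.\ root paths meeting at the root), their union is still a forest, being a subgraph of a single tree. However, its degree is then bounded only by $2c_0$, not $2$.
\item If ``spanning forest'' is read as maximal, as the paper's proof implicitly does, your $F$ may need extending by edges of $G[X]$ between adjacent parts. This only merges components, and the layer argument above bounds the resulting degree.
\end{itemize}
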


In Theorem~\ref{thm:main}, we need to construct a tree decomposition of width $\textup{poly}(k)$ of an induced sugraph $G'$ of $G$ such that $G'$ fully contains $Z$ and every bag contains $\Otilde(\sqrt{k})$ vertices of $Z$. Conceptually, it can be easier to think of this goal the following way: we want to find a tree decomposition of the original graph $G$ itself, where there is no bound on the width of the decomposition, but we need to present a partition of each bag into ``useful'' and ``forbidden'' vertices such that no solution vertex is marked forbidden, every bag has only $\textup{poly}(k)$ vertices marked as useful, and every bag contains only $\Otilde(\sqrt{k})$ vertices of the solution. Then we can obtain $G'$ and its tree decomposition by simply removing any vertex that was marked as forbidden in any of the bags.

The basic idea of the main algorithm is to execute a recursive procedure that tries to find a balanced separation $(A,B)$ of the graph $G$, recursively finds a tree decomposition for each of $G[A]$ and $G[B]$, and then puts together these two decompositions to obtain a tree decomposition of $G$. As $(A,B)$ is balanced in terms of the number of vertices of $G$, the depth of the recursion is $\Oh(\log n)$.

In order to be able to join a decomposition of $G[A]$ and a decomposition of $G[B]$ together, algorithms of this type usually solve a more general problem, where a set $R\subseteq V(G)$ is given in the input, and the root bag of the decomposition should contain $R$. We also require that no vertex of $R$ can be made forbidden. If we include $A\cap B$ into $R$ when recursively constructing the decompositions for $G[A]$ and $G[B]$, then the two decompositions can be joined together. However, this way the size of $R$ (and therefore the sizes of the bags) increases in each recursive call: even if $A\cap B$ is only $\textup{poly}(k)$, the set $R$ can end up being of size $\Oh(\textup{poly}(k)\cdot \log n)$.
The natural idea to solve this problem is to include recursive steps where $(A,B)$ is selected to be a balanced separator with respect to $R$, not with respect to $V(G)$. If such steps are included sufficiently often, then $R$ cannot grow arbitrarily large.

As we want to construct a tree decomposition where every bag intersects the solution $Z$ in $\Otilde(\sqrt{k})$ vertices, in particular the set $R$ and therefore the separator $A\cap B$ have to satisfy this property. The main challenge of the algorithm is to ensure that, with high probability, every separation $(A,B)$ we consider has this property. We try to ensure this the following way. First, we use Corollary~\ref{cor:XF0} to obtain a balanced separation $(A,B)$ (with respect to $|V(G)\setminus R|$ or $R$). Then with probability $1-1/k$, we guess that $|(Z \setminus R)\cap A\cap B|\leq \Otilde(\sqrt{k})$, hence we can use the separation $(A,B)$ directly, as described above. Otherwise, with probability $1/k$, we use Theorem~\ref{thm:simplifiedimprove} to 
enhance $(A,B)$ with $k^{-\Oh(\log k)}$ success probability and proceed with the recursion using this enhanced separation. Recall that in Theorem~\ref{thm:simplifiedimprove}, the sampled separation $(A,B)$ comes equipped with a set $C$ of size $\textup{poly}(k)$, which is supposed to cover all the solution vertices in $A\cap B$. Thus we can use $C$ to mark some vertices of $A\cap B$ as useful and we mark the remaining vertices as forbidden.

However, there is one caveat with the above approach. Albeit using a balanced separator
with respect to $R$ once in a while controls the size of $R$ in the recursion, it does not
control sufficiently the size of $Z \cap R$. Even if in a single step
we add $\widetilde{\Oh}(\sqrt{k})$ vertices of $Z$ to $R$, 
at depth $\Theta(\log n)$ they can accumulate to $\log n \cdot \widetilde{\Oh}(\sqrt{k})$,
which is too much. 
We solve this issue as follows. Once in a while, we enter a ``pattern mode'' where,
with probability $\frac{1}{k}$ we seek for a separation $(A,B)$ that separates in a balanced
way an unknown to the algorithm set $Z \cap R$. 
To this end, we observe that, first, we can sample the initial separation $(A,B)$
using the tools of~\cite{DBLP:journals/jacm/DujmovicJMMUW20} from among only $\Oh(|R|)$ choices
and, second, we can enhance it using the fact that Theorem~\ref{thm:simplifiedimprove}
can work without the knowledge of $W$ and $\theta$.
Therefore, depending on the current recursion depth, the algorithm cycles through three modes in a round-robin fashion:
\begin{enumerate}
\item finding a separation balanced with respect the number of vertices (to keep the maximum depth $\Oh(\log n)$),
\item finding a separation balanced with respect to $R$ (to keep the size of $R$ bounded), or
  \item entering pattern mode (to keep the intersection $R\cap Z$ bounded).
  \end{enumerate}
  
Let us try to analyze the probability that this scheme results in the required tree decomposition. The algorithm can make mistakes by incorrectly guessing
whether $|(Z \setminus R)\cap A\cap B|=\Otilde(\sqrt{k})$ holds or not, and incorrectly choosing a separation from the list provided by Theorem~\ref{thm:simplifiedimprove}, i.e., choosing a separation incompatible with $Z$. In order to analyze the probability of making mistakes, we can classify the recursive calls into three types:
  \begin{description}
  \item[Futile:] $V(G)\setminus R$ is disjoint from $Z$.

  \item[Minor:] $V(G)\setminus R$ is not disjoint from $Z$, but contains at most $\sqrt{k}$
  vertices of $Z$.
  \item[Major:] $V(G)\setminus R$ is not disjoint from $Z$ and contains more than $\sqrt{k}$ vertices of $Z$. 
  \end{description}
  The important trick is that in case of a recursive call of type futile, nothing can go wrong: any tree decomposition where each bag has only $\textup{poly}(k)$ useful vertices and the root bag contains $R$ is correct. Therefore, no matter whether we decide to invoke Theorem~\ref{thm:simplifiedimprove} or not, and no matter which separation we choose, the algorithm will be successful with probability 1.

  Let us observe that if a recursive call is futile, then all the descendant recursive calls are futile as well. In other words, the minor and major recursive calls form a prefix of the recursion tree. Furthermore, when using the separation $(A,B)$ for the recursion, the sets $V(G)\setminus R$ in the two recursive call are disjoint: one is a subset of $A\setminus B$, the other is a subset of $B\setminus A$. Therefore, if we look at the leaves of the prefix of the non-futile recursive calls, then the sets $V(G)\setminus R$ are pairwise disjoint. This means that at most $|Z|=k$ of them can have nonempty intersection with $Z$, implying that there are at most $k$ leaves in this prefix. As the depth of the recursion tree is $\Oh(\log n)$, it follows that there are $\Oh(k\log n)$ minor and major recursive calls.
  
  At every minor or major call, we want to correctly guess whether
  $|(Z \setminus R) \cap A \cap B| \leq \widetilde{\Oh}(\sqrt{k})$. 
  Recall that we enter this case with probability $1-1/k$;
  as there are there are $\Oh(k\log n)$ minor and major recursive calls, with probability
  \[
(1-1/k)^{\Oh(k\log n)}=n^{-\Oh(1)}
    \]
    we correctly guess at every minor and major call 
    that $|(Z \setminus R) \cap A \cap B| \leq \widetilde{\Oh}(\sqrt{k})$
    if this is the case. We then proceed with the separation $(A,B)$ given by Corollary~\ref{cor:XF0},  without invoking Theorem~\ref{thm:simplifiedimprove}. Thus such recursive calls behave correctly.

    For the other recursive calls, if 
    $|(Z \setminus R) \cap A \cap B| > \widetilde{\Oh}(\sqrt{k})$, then the call 
    needs to be major.
    In this case, Theorem~\ref{thm:simplifiedimprove} guarantees that $A\cap B$ contains $\Omega(\sqrt{k})$ vertices of $Z \setminus R$. These vertices are moved into $R$, thus such a step ``eats'' $\Omega(\sqrt{k})$ vertices of $Z \setminus R$. That is, if such a ``rich'' recursive call for $(G,R)$ creates two instances $(G_1,R_1)$ and $(G_2,R_2)$, then $V(G_1)\setminus R_1$ and $V(G_2)\setminus R_2$ are disjoint, hence $\sum_{i\in [2]}|(V(G_i)\setminus R_i)\cap Z|\le |(V(G)\setminus R)\cap Z|-\Omega(\sqrt{k})$. It follows that there can be only $\Oh(\sqrt{k})$ such recursive calls.
    Hence, the probability that in each recursive call we select a suitable separation from the list of size $k^{\Oh(\log k)}$ returned by Theorem~\ref{thm:simplifiedimprove} is
      \[
\left(k^{-\Oh(\log k)}\right)^{\Oh(\sqrt{k)}}=2^{-\Otilde(\sqrt{k})},
        \]
        as required. 

        It remains to bound the probability that the ``pattern mode'' behaves correctly.
        Here, again futile calls do not matter. For minor calls, or major calls
        where $|Z \cap R| \leq \Otilde(\sqrt{k})$ we want never to take an
        action in the pattern mode; as there are $\Oh(k \log n)$ minor and major calls
        and we take no action with probability $1-1/k$, we are successful here
        with probability $n^{-\Oh(1)}$. 
        For a major call with large $Z \cap R$, 
        such a call nontrivially splits the pattern $Z$ between its child subcalls;
        this allows us to bound the number of such calls by $\Otilde(\sqrt{k})$,
        allowing to require success from the application of Theorem~\ref{thm:simplifiedimprove}
        in all such calls.

        Considering all types of recursive calls, we can conclude that with probability $2^{-\Otilde(\sqrt{k})}n^{-\Oh(1)}$, every recursive call works correctly, resulting in the required tree decomposition for an induced subgraph of $G$.

     \subsection{Extending to the bounded-distance version ($d\ge 1$)}
     We sketch now the main challenges in proving Theorem~\ref{thm:main-d}, the version where we want to bound the intersection with $N^d_G[Z]$ for some fixed $d\ge 1$. Compared to just proving Theorem~\ref{thm:main}, we need to resolve some fundamental issues. Moreover, additional technical complications appear that require careful handling and some new ideas.

\paragraph{Improved duality result}     As our goal now is to bound the intersection with $N^d_G[Z]$, we need a version of Corollary~\ref{cor:dual1} that not only bounds the intersection with $Z$, but also with $N^d_G[Z]$. However, the averaging argument described in the proof of Corollary~\ref{cor:dual1} breaks down: even if we have $p$ disjoint separators, in principle a single vertex of $Z$ can be adjacent to many vertices of a separator, making the intersection with $N^d_G[Z]$ very large. Similarly, in case of a collection of almost-disjoint paths, in principle a vertex of $Z$ can be a adjacent to a private vertex of each path, hence we cannot say that there is a path whose private vertices are disjoint from $N^d_G[Z]$.

     Nevertheless, in case of $K_{3,h}$-minor-free graphs, we manage to prove a strengthening of Corollary~\ref{cor:dual1} to bound the intersection with $N^d_G[Z]$ (Theorem~\ref{thm:dualdistance} below). In fact, we need an even stronger result: during our algorithm, we often invoke Corollary~\ref{cor:dual1} on an induced subgraph $G-X$ of $G$. In case of $d\ge 1$, it is possible that a vertex of $v$ of $G-X$ is in $N^d_G[Z]$ because of a vertex $z\in X\cap Z$, or because of a vertex $z\in Z\cap (V(G)-X)$ that is at distance more than $d$ from $v$ in $G-X$ (i.e., the shortest $z-v$ path uses vertices of $X$). Therefore, we cannot completely disregard the deleted set $X$, even if we want to find the separators/disjoint paths in $G-X$. We state a version of Corollary~\ref{cor:dual1} that can take into account such a set $X$, as long as it induces a bounded number $\lambda$ of components. In our application, we can bound the number of components of $G[X]$.

There is another aspect of Theorem~\ref{thm:dualdistance} in which it provides a technical strenghtening over Corollary~\ref{cor:dual1}. An easy argument shows that if $Z$ is a {\em known} set of size $k$, then we can find either an $s-t$ path that goes through at most $\sqrt{k}$ vertices of $Z$, or we can find an $s-t$ separator of size~$\sqrt{k}$. This can be considered as an improvement over Corollary~\ref{cor:dual1}, which shows the existence of path that goes through $\Oh(\sqrt{k})$ public vertices or an $s-t$ separator with the weaker property of having size $\Oh(k)$ and intersecting $Z$ at $\Oh(\sqrt{k})$ vertices. Theorem~\ref{thm:dualdistance} considers a situation where there is a {\em known} set $Z_0$ of size $k_0$ and an {\em unknown} set $Z$ of size $k$. We want to 
control both the intersection with $N^d_G[Z_0]$ and with $N^d_G[Z]$ in a way that gives better bounds with respect to $k_0$ than with respect to $k$. 

In our algorithm, the set $Z$ will be the unknown solution remaining in the
current graph $G$, while $Z_0$ will be the
adhesion $R$. Controlling $N^d_G[Z_0]$ will allow us to control 
$N^d_G[x]$ for vertices $x$ of the solution that are outside the scope of
the current recursive call, but that are within distance less than $d$ from the adhesion $R$
and thus $N^d_G[x]$ still intersects the scope of the current recursive call.

\begin{theorem}\label{thm:dualdistance}
  There is a polynomial-time algorithm that given a connected $K_{3,h}$-minor-free graph $G$,
  a pair $s,t \in V(G)$ of different vertices,  positive integers $p,q\ge 2,\lambda,d,k_0,k$ with
 $p\ge 100d\log_2 q$, $q\ge (p+k_0/p+k_1)(10d^2h)^{d+3}\lambda$, a set $Z_0$ of size $k_0$, and a set $X\subseteq V(G)\setminus\{s,t\}$ such that $G[X]$ has at most $\lambda$ components, outputs
  one of the following structures in~$G$:
  \begin{itemize}
  \item a chain $(C_1,\ldots,C_p)$ of $(s,t)$-separators in $G-X$ of size at most $2q$ such that 
  \begin{enumerate}
  \item for every $j \in [p]$ and $v \in V(G)$ it holds that 
    \[ \left|V(C_j) \cap N^d_G[v]\right| \leq (10d^d h)^{2d+1}(1+ \lambda\log_2(4q));\]
  \item
  for any $Z\subseteq V(G)$ of size at most $k$, there is a $j \in [p]$ with 
  \[ \left|N^d_G[Z]\cap C_j\right|\le  ((k_0+k)/p) \cdot \log_2 q  \cdot 50d(10d^dh)^{2d+1}\cdot \lambda. \]
  \end{enumerate}
  \item a sequence $(P_1,\ldots,P_q)$ of $(s,t)$-paths in $G-X$ and a sequence $(Q_1,\ldots,Q_q)$ of sets of size at most $10dp$ with $Q_i\subseteq V(P_i)$ such that for any $Z\subseteq V(G)$ of size at most $k$, there is a $j \in [q]$ with $N^d_G[Z]\cap (V(P_i)\setminus Q_i)=\emptyset$. 
  \end{itemize}
\end{theorem}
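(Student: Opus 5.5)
We will prove Theorem~\ref{thm:dualdistance} by running Theorem~\ref{thm:dual1} in $G-X$ (with $s,t$ kept) and then post-processing either outcome. The key structural input, where $K_{3,h}$-minor-freeness is used, is a \emph{bounded-neighbourhood-packing lemma}. Let $\mathcal{D}$ be a family of vertex-disjoint connected subgraphs of $G-X$ that are pairwise "far apart" along an $(s,t)$-structure. This means either consecutive separators of a chain, or private segments of distinct paths. Then for every vertex $v \in V(G)$, the ball $N^d_G[v]$ meets only $\Oh_{d,h}(1)$ members of $\mathcal{D}$ in a "spread out" way, up to a correction of $\Oh(\lambda\log q)$ coming from shortest paths that pass through the $\lambda$ components of $G[X]$. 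I would prove this lemma by contradiction. Suppose $N^d_G[v]$ reaches $3$ (or, for the distance-$d$ version, about $(10d^2h)^{d+3}$) far-apart members in many places. Contract the shortest paths from $v$ (BFS tree of depth $d$) into one branch set. Find two further branch sets by using the $(s,t)$-connectivity: the part containing $s$ and the part containing $t$, or the union of path prefixes and suffixes. The $h$ touched members then give $h$ distinct connectors, yielding a $K_{3,h}$ minor. Paths through $X$ are handled by charging them to the at most $\lambda$ components of $G[X]$, each contracted into a single vertex. A single such vertex can only "shortcut" to separators within a window of about $\log_2 q$ scales, which is where the $\lambda\log_2(4q)$ term arises.

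For the chain outcome, I would first apply Theorem~\ref{thm:dual1} with parameters $p' = \Theta(p)$ and $q$. This gives $(s,t)$-separators $C_1,\dots,C_{p'}$ in $G-X$ of size at most $2q$. To obtain item 1, I would thin the chain by taking every $(2d+1)$-th separator, and also replace separators by minimal ones. Then a ball $N^d_{G-X}[v]$ meets at most two of the remaining separators, and the packing lemma bounds $|C_j\cap N^d_G[v]|$. The point is that a vertex seeing many vertices of one minimal separator, each with private routes to $s$ and $t$, gives $K_{3,h}$ after the $\log$-scale grouping. This explains the condition $p\ge 100d\log_2 q$: there must still be $p$ separators after losing a factor $\Oh(d\log q)$ to thinning and scale-grouping. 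Item 2 then follows by averaging. Each $z\in Z_0\cup Z$ has $N^d_G[z]$ meeting only $\Oh(d\lambda\log q)$ of the separators nontrivially, each in at most $(10d^dh)^{2d+1}$ vertices. Summing over $Z$ and $Z_0$, some $j$ receives at most the claimed $((k_0+k)/p)\cdot\log_2 q\cdot 50d(10d^dh)^{2d+1}\lambda$.

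For the path outcome, I would take $Q_i$ to be the public vertices of $P_i$, together with $s,t$ and $\Oh(dp)$ further vertices. These further vertices are, for each public vertex, the private vertices within distance $d$ along $P_i$; this keeps $|Q_i|\le 10dp$. The main claim is that each $z\in V(G)$ has $N^d_G[z]$ meeting private vertices outside $Q_i$ on only $(10d^2h)^{d+3}\lambda$ paths. Otherwise, the ball around $z$, the starts of these paths near $s$, and their ends near $t$ form three branch sets whose connectors give a $K_{3,h}$ minor. The paths are disjoint on private segments, and the removal of the $d$-buffers from $Q_i$ ensures that distinct private segments are genuinely separate. The components of $X$ again contribute a factor $\lambda$. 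The bound $q\ge(p+k_0/p+k)(10d^2h)^{d+3}\lambda$ then lets us pigeonhole a path $P_j$ untouched by $N^d_G[Z]$. The known set $Z_0$ is absorbed by including up to $p$ of its hits in $Q_j$ as in Corollary~\ref{cor:dual1}; this is why $Z_0$ costs only $k_0/p$. I expect the main obstacle to be the packing lemma in the presence of $X$ and of public (shared) vertices. There, paths from $z$ may wander through other paths' private parts, so the three branch sets must be built carefully from disjoint prefixes and suffixes. I would first try to handle this with an iterative "first hit" argument along BFS layers up to depth $d$, which is what produces the $d^{d}$-type exponents.
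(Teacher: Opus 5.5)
The weak point of your plan is the ``packing lemma''. For the objects you actually construct it is false, and the three ideas that make the paper's version true are missing.

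\emph{Paths.} Two things go wrong.

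First, Theorem~\ref{thm:dual1} only bounds the number of public vertices per path, with no order among them. Take a spine $x_1x_2\cdots x_mt$ and join $s$ to every $x_i$. For each $i$, add a long path from $x_i$ to $x_{i+1}$ (with $x_{m+1}=t$) whose midpoint $u_i$ is adjacent to a new vertex $z$. Let $P_i$ run $s,x_i$, along that long path, then along the spine to $t$. This graph is planar, each $P_i$ has at most $m$ public vertices, and $z$ is adjacent to a private vertex of every $P_i$ that is far from all public vertices. With $m=\Theta(p)$ and $\Theta(q/p)$ copies glued at $s,t$, this is a legal output of Theorem~\ref{thm:dual1}. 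In it, $\Oh(q/p)$ vertices (at most $k$ for suitable admissible $p,q$) hit every path far from its public vertices. Your $s$- and $t$-branch sets collide here, because the prefix of one path and the suffix of another share spine vertices.

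The paper uses the $(p,q)$-structure of Theorem~\ref{thm:dual0} instead. There the public vertices lie on a chain, which gives a layering, and a $d$-ball spans only $2d+1$ layers. One can therefore pigeonhole the reached private vertices $u_i$ into a single layer $\ell$. Then prefixes up to $u_i$ contain only public vertices of layers $\le\ell$, and suffixes after $u_j$ only those of layers $>\ell$. Lemma~\ref{lem:A-B-disjoint} keeps the connectors off the at most $2d+1$ nearby public vertices of the other paths.

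Second, buffers measured \emph{along} $P_i$ fail even with layering. Suppose all paths pass a common public vertex $x$ and then reach, via a long private detour, a private vertex $u_i$ adjacent to $x$ but far from $x,s,t$ along $P_i$. This is a planar $(1,q)$-structure, and $Z=\{x\}$ hits every $P_i$ outside your $Q_i$.

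The missing step is shortcutting. Let $Z_0^i$ be the vertices of $Z_0$ reaching $\Priv_i$ within distance $d$ while avoiding $X\cup\Pub_i\cup\{s,t\}$. Replace $P_i$ by a shortest $s$--$t$ path $P'_i$ in $G[\Priv_i\cup\bigcup_{v\in\Pub_i\cup Z_0^i\cup\{s,t\}}N^d_{G-X}[v]]$. This path meets each of these balls in at most $2d+1$ vertices, and $Q_i$ consists of those vertices. This is also how $Z_0$ is paid for at rate $k_0/p$. Putting ``up to $p$ hits of $Z_0$'' into $Q_j$ cannot work, because a single $d$-ball may contain unboundedly many vertices of a path.

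\emph{Separators.} Minimal separators do not give item~1. Suppose $s$'s only neighbour is $z$, $z$ is adjacent to $u_1,\dots,u_m$, and each $u_i$ is adjacent to $t$. Then $\{u_1,\dots,u_m\}$ is a minimal separator inside $N^1[z]$, every route certifying minimality passes through $z$, and the graph is planar. Chaining such gadgets in series makes every separator of a valid chain look like this, so thinning does not help either.

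The paper replaces $C_{\iota(j)}$ by a minimum-weight separator $C'_j$ for $w_j(v)=2^{\dist_{G-X}(v,C_{\iota(j)})}$; in the example this trades $\{u_1,\dots,u_m\}$ for $z$. Weighted flow--cut duality then gives $m$ flow paths through each $u$ in a group of common weight $m$. A BFS path of length at most $d$ from $v$ consists of vertices of weight at most $2^dm$, so it meets at most $d2^dm$ flow paths. Hence it meets a $\frac{1}{h-1}$ fraction of the paths through only $\Oh(d2^dh)$ vertices $u$, and an independent-set selection yields $K_{3,h}$ (Claim~\ref{cl:handlepathssep}).

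Since $w_j(C'_j)\le 2q$, each $C'_j$ stays within distance $\log_2(2q+1)$ of $C_{\iota(j)}$. Spacing the chosen indices by $2\log_2(2q+1)+2d$ therefore ensures no vertex of $G-X$ is within distance $d$ of two of them. This, not thinning by $2d+1$, is where $p\ge100d\log_2q$ is used. The at most $\log_2(4q)$ weight values, times the $\lambda$ components of $X$, produce the $\lambda\log_2(4q)$ term.

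Finally, your averaging assumes $N^d_G[z]$ meets only $\Oh(d\lambda\log q)$ separators, but routes through $X$ can reach all of them. The paper instead bounds, for each $C'_j$ and independently of $Z$, the number of its vertices within distance $d$ of $X$. It then uses that $N^d_{G-X}[z]$ meets at most one $C'_j$.
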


For the proof of Theorem~\ref{thm:dualdistance}, we first invoke Theorem~\ref{thm:dual1} to obtain a chain of separators or a set of almost-disjoint paths, and then we modify them to satisfy the requirements. In fact, we need to use a strengthening of Theorem~\ref{thm:dual1}. We observe that, in case of returning a set of almost-disjoint paths, the original proof of Theorem~\ref{thm:dual1} by Fomin et al.~\cite{DBLP:journals/siamcomp/FominLMPPS22} actually proves a slightly stronger statement: the public vertices can be organized into levels and the paths  go through these levels in order (Theorem~\ref{thm:dual0}).

Let us observe that if we have truly disjoint $s-t$ paths and a vertex $z\in Z$ is adjacent to $h$ of these paths (but not on any of the paths), then a contraction creates a $K_{3,h}$-minor, which contradicts our assumption on $G$ (see Figure~\ref{fig:contract2}).  We would like to use a similar argument to show that the $d$-neighborhood of a single vertex cannot have too much influence, but the existence of the public vertices significantly complicates the argument.   
Intuitively, what we show is that if a vertex of $Z$ is close to many paths $P_i$ in the sense that for each path $P_i$ there is a path of length $d$ starting from $z$ that reaches a private vertex $P_i$ without going through the public vertices of $P_i$, then we can construct a $K_{3,h}$-minor. Moreover, we introduce ``shortcuts'' in $P_i$ to obtain a new path $P'_i$ that has only a bounded number of vertices in the $d$-neighborhood of the public vertices of $P_i$. This ensures that one of the modified paths intersects $N^d_G[Z]$ only in a bounded number of vertices.
\begin{figure}
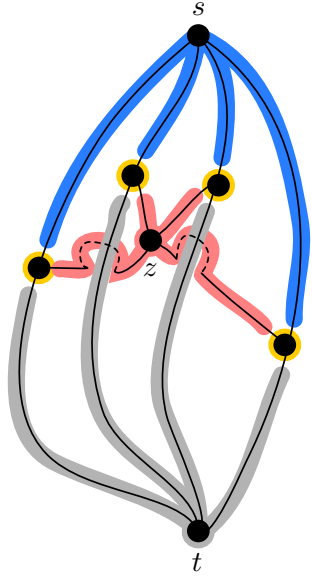

  \svgc{0.25\linewidth}{simplecontract}
  \caption{If vertex $z$ is adjacent to $h$ internally disjoint $s-t$ paths and it is not on these paths, then there is $K_{3,h}$-minor: we contract the blue subpaths into $s$ and the gray subpaths into $t$. More generally, the same is true if $z$ can reach each $s-t$ path without going through any other $s-t$ paths (paths highlighted in red), in which case we contract these paths into $z$.. }
  \label{fig:contract2}
\end{figure}  

In case of a chain of separators, as each separator $C_j$ is minimal, for every $u\in C_j$ there is an $s-t$ path $P_u$ that intersects $C_j$ only in $u$. These paths can of course intersect. Again, if there is a vertex $z$ that is adjacent to $h$ vertices of $C_j$ and is not on any of the paths $P_u$, then we can construct a $K_{3,h}$-minor by contraction. However, now in principle a vertex $z$ can be on many of the paths $P_u$, making it hard to apply this argument. Intuitively, we show that in such a case we can reduce the size of the separator $C_j$ by replacing many vertices from it and adding $z$ to it. Formally, we define a weight function $w_j(v)=2^{\dist_{G-X}(v,C_j)}$ and replace $C_j$ by a minimum-weight $s-t$ separator $C'_j$ under the weight function $w_j$. We prove that this modified separator $C'_j$ now has the property that a vertex $z\in Z$ can be close only to a bounded number of vertices of $C'_j$. We also note that many of these separators are disjoint: if $|i-j|=\Omega(\log k)$, then the exponentially increasing weights and the fact that $|C_i|,|C_j|\le \textup{poly}(k)$ ensure that $C'_i$ and $C'_j$ are disjoint.

\paragraph{Progress in presence of large neighborhoods} Another fundamental difficulty comes from the fact that (as explained in Section~\ref{sec:overviewalg}), in a major recursive call
that invokes Theorem~\ref{thm:simplifiedimprove}, we are making progress because $\Omega(\sqrt{k})$ vertices of $Z$ are put into the set $R$, hence the intersection of $V(G)\setminus R$ and $Z$ significantly decreases. However, in the $d\ge 1$ case, the size of $N^d_G[Z]$ can be unbounded, hence decreasing the intersection $N^d_G[Z]$ by $\Omega(\sqrt{k})$ is not a real progress. For example, it is possible that $R$ contains no vertex of $Z$, but $\Omega(\sqrt{k})$ neighbors of some vertex of~$Z$. This means that it is possible that repeated rich recursive calls are not making any progress besides repeatedly shaving  off $\Omega(\sqrt{k})$ neighbors of the same vertex $z\in Z$.

To resolve this problem, we need a few new techniques. 
First, we observe that Corollary~\ref{cor:XF0} can be strengthened. The argument based on the Product Structure Theorem of Dujmovi\'c et al.~\cite{DBLP:journals/jacm/DujmovicJMMUW20} actually allows us to prove that for any vertex $z$ and integer $d$, the separator contains only $\Oh(d)$ vertices at distance at most $d$ from $Z$ (see Corollary~\ref{cor:XF}). 
Hence, the initial separation $(A,B)$, before any application of
(the $d \geq 1$ variant of) Theorem~\ref{thm:simplifiedimprove}, 
behaves nicely: $|A \cap B \cap N_G^d[Z]|$ is comparable with the number of vertices of $Z$
close to $A \cap B$. 

Second, we extract from Theorem~\ref{thm:dualdistance} that a similar property
can be established for the chain of separators output. Unfortunately,
we cannot establish it for the sequence of paths output. However, here we can
rely on the fact that there are only $\Otilde(\sqrt{k})$ public vertices, as opposed
to $\Otilde(k)$ vertices on separators. This leads to a classification of the vertices
of the separator $A \cap B$ into not two as before, but three categories:
\begin{itemize}[itemsep=0px]
\item a difficult set $\widetilde{C} \subseteq A \cap B$ without any structure
of $N_G^d[Z] \cap \widetilde{C}$, but with a bound $|\widetilde{C}| \leq \Otilde(\sqrt{k})$;
\item a sparse set $C \subseteq A \cap B$ of size $\Otilde(k)$ with 
$|C \cap N_G^d[Z]| \leq \Otilde(\sqrt{k})$ and such that every vertex $v \in V(G)$
satisfies $|C \cap N_G^d[v]| \leq  \Otilde(1)$;
\item an irrelevant set $D \subseteq A \cap B$ of unbounded size
with $D \cap N_G^d[Z] = \emptyset$.
\end{itemize}
The existence of the set $\widetilde{C}$ propagates to the recursion: now we need to maintain
a set $\widetilde{R} \subseteq R$ of size $\Otilde(\sqrt{k})$ such that every
vertex $v \in V(G)$ satisfies $|N_G^d[v] \cap (R \setminus \widetilde{R})| \leq \Otilde(1)$.
To control both bounds in the previous sentence so that no unwanted $\log n$ term appears,
we need to regularly clean
$\widetilde{R}$ and old vertices in $R$, similarly as we control the size of $R$. Therefore, the algorithm has to cycle through not just three, but five different modes.

The tools above allow us to control the size of $R \cap N_G^d[Z]$
via controling the number of vertices of $Z \cap N_G^d[R]$. 
This quantity --- how many vertices of $Z$ are within distance at most $d$
from $R$ --- becomes the analog of the measure $R \cap Z$
from the $d=0$ case. This comes with two issues. First, once in a while we need to enter
the pattern mode to reduce the size of $Z \cap N_G^d[R]$.
Second, we need the following analog of the ``rich'' guess.
If the separation $(A,B)$ brings a large number of new such vertices of $Z$
in the child calls --- that is, there are more than $\Otilde(\sqrt{k})$
vertices of $Z$ that are in $N_G^d[A \cap B]$ but not in $N_G^d[Z]$,
then we can apply the $d \geq 1$ version of \cref{thm:simplifiedimprove} 
to control their number, and charge the probability of the success
of the call to the number of new vertices of $Z$ that join $N_G^d[R]$
in the child calls.

\section{Preliminaries}\label{sec:prelims}

For a positive integer $p$, we write $[p]\coloneqq \{1,\ldots,p\}$.
We also use the notation $[d \geq 1]$ that equals $1$ if $d \geq 1$ and $0$ if $d=0$.

We use standard graph notation. All graphs considered in this paper are undirected and simple, that is, without loops or parallel edges.

For a graph $G$ and a vertex $u$, by $N^G_d[u]$ we denote the {\em{$d$-neighborhood}} of $u$: the set of all vertices at distance at most $d$ from $u$. We extend the notation to sets of vertices by writing $N^G_d[X]\coloneqq \bigcup_{u\in X} N^G_d[u]$. We drop the superscript if the graph $G$ is clear from the context.

An {\em{independent set}} in a graph is a set of pairwise non-adjacent vertices. We extend this notion to directed graphs by considering independent sets in the underlying undirected graphs.

For a graph $G$ and different vertices $s,t \in V(G)$, an \emph{$(s,t)$-path} is a path with endpoint $s$ and $t$, and an
\emph{$(s,t)$-separator} is a set $C \subseteq V(G) \setminus \{s,t\}$ that intersects every $(s,t)$-path.
A sequence $(C_1,C_2,\ldots,C_p)$ of $(s,t)$-separators is a \emph{chain}
if $C_i$ are pairwise disjoint and
for every $1 \leq i < j \leq p$, $C_i$ intersects every path from $s$ to $C_j$
and $C_j$ intersects every path from $C_i$ to $t$.

A \emph{weight function} in a graph $G$ is a function $\mu \colon V(G) \to \mathbb{R}_{\geq 0}$
such that $\mu(G) \coloneqq \sum_{v \in V(G)} \mu(v) > 0$. 
A set $X \subseteq V(G)$ is a \emph{balanced separator} with regards to $\mu$
if every connected component of $G-X$ has total weight at most $\mu(G)/2$.

A tree decomposition of a graph $G$ is a pair $(T,\beta)$ such that
$T$ is a tree and $\beta \colon V(T) \to 2^{V(G)}$ satisfies (1) for every $v \in V(G)$,
the set $\{t \in V(T)~|~v \in \beta(t)\}$ induces a nonempty connected subgraph of $T$, and (2)
for every $uv \in E(G)$ there exists $t \in V(T)$ such that $u,v \in \beta(t)$. 
The set $\beta(t)$ is often called the \emph{bag} of $t$. 
A standard fact about tree decompositions is that for any weight function $\mu$
there exists $t \in V(T)$ such that $\beta(t)$ is a balanced separator for $\mu$; see e.g.~\cite[Lemma~7.19]{platypus}.

Let $(T,\beta)$ be a tree decomposition. For an edge $st \in E(T)$ 
we denote $\sigma(st) \coloneqq \beta(s) \cap \beta(t)$ and call $\sigma(st)$
the \emph{adhesion} of the edge $st$. 
If $T$ is rooted in the context, we abbreviate $\sigma(t) : =\sigma(st)$
where $s$ is the parent of $t$ in $T$, and $\sigma(r) = \emptyset$ for the root $r$ of $T$.
The \emph{torso} of $t \in V(T)$ is the subgraph $G[\beta(t)]$ with every adhesion
$\sigma(st)$ for $s \in N_T(t)$ turned into a clique. 

An \emph{apex graph} is a graph $H$ such that $H-v$ is planar for some $v \in V(H)$. 

A number of our algorithms are recursive procedures. 
The \emph{depth} of a single call to the recursive procedure is the number of 
calls on the recursion stack, including the current call. That is, the initial root call
is at depth $1$. For a parameter $p$, notation $\Oh_p(\cdot)$ hides multiplicative factors that may depend on $p$.

\subsection{Graph minors}

We will use the notion of \emph{nearly-embeddable} graphs. However, we do not need the precise
(complicated and topological) definition that can be found, e.g., in~\cite{DBLP:journals/jacm/DujmovicJMMUW20}.
Instead, we just say that for every tuple of nonnegative integers $g,p,k,a$, there is a 
class of $(g,p,q,a)$-nearly-embeddable graphs, which is closed under taking subgraphs. (For the readers familiar with the notion, $g$
is the Euler genus of the surface, $p$ is the number of vortices, $q$ is the maximum width of a vortex,
and $a$ is the number of apices.)
We abbreviate $(p,p,p,p)$-near-embeddable graphs to $p$-nearly-embeddable graphs
and $(p,p,p,0)$-nearly-embeddable graphs to $p$-nearly-embeddable graphs without apices.
We will need the fact that every $p$-nearly-embeddable graph $G$ contains a set $A$ of
at most $p$ vertices such that $G-A$ is $p$-nearly-embeddable without apices.

With this notation, 
the Structure Theorem for graph classes excluding a minor can be formulated as follows.%
\footnote{Unfortunately, we did not find a clean citation in the literature for this 
(well known in the community) statement.
It can be derived from Theorem~12 in~\cite{DBLP:journals/corr/DvorakT14} as follows.
For $H$ being an apex graph, no vertex in the apex set $A_i$ is a major apex. 
We can merge all vortices into a single vortex by adding a number of handles to the surface
and then add all apices to all bags of the resulting super-vortex.}

\begin{theorem}\label{thm:excl-minor}
  For every graph $H$ there exists a constant $p_H$ and a polynomial-time
  algorithm
  that given an $H$-minor-free graph $G$, computes a tree decomposition $(T,\beta)$ 
  of $G$ with maximum adhesion size at most $p_H$ and a set $A_t \subseteq \beta(t)$
  for every $t\in V(T)$ so that the torso of every node $t$ is $p_H$-nearly-embeddable
  and, upon removal of $A_t$, becomes $p_H$-nearly-embeddable without apices.
  Furthermore, if $H$ is an apex graph, then
  we can have $A_t = \emptyset$, that is, the entire torsos are $p_H$-nearly-embeddable without apices.
\end{theorem}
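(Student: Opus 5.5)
The plan is to derive the statement from an existing algorithmic version of the Robertson--Seymour structure theorem, namely Theorem~12 of~\cite{DBLP:journals/corr/DvorakT14}, and only normalise its output to the form required here. For a fixed $H$, that theorem provides constants and a polynomial-time algorithm that, given an $H$-minor-free graph $G$, computes a tree decomposition $(T,\beta)$ of bounded adhesion. It also provides, for every node $t$, an apex set $A_t\subseteq\beta(t)$ of bounded size such that the torso of $t$ minus $A_t$ is embeddable in a surface of bounded Euler genus up to a bounded number of vortices of bounded width. Since the class of $(g,p,q,a)$-nearly-embeddable graphs is closed under subgraphs, it suffices that each torso has such a structure. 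The final constant $p_H$ is then the maximum of the adhesion bound, the genus, the number of vortices, the vortex width and the apex-set size after the modifications below.

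The first normalisation merges vortices, so that the parameters are uniform and the description matches the $(p,p,p,p)$ convention. Given vortices $V_1,\ldots,V_m$ on disjoint discs of the surface, add $m-1$ handles that connect the boundary discs of consecutive vortices. Cut along them to obtain a single face whose boundary is the concatenation of the vortex boundaries. The path decompositions of the individual vortices concatenate into one path decomposition, with no width increase. The genus grows by at most $2(m-1)$, which is bounded in terms of $H$. Every step is explicit, so the whole construction remains polynomial-time. After this step the torso minus $A_t$ is $p_H$-nearly-embeddable without apices, and the torso itself is $p_H$-nearly-embeddable with $A_t$ as its apex set.

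The second part treats the case where $H$ is an apex graph. Here Theorem~12 of~\cite{DBLP:journals/corr/DvorakT14} guarantees that no vertex of $A_t$ is a \emph{major} apex. In other words, every apex has neighbours outside $A_t$ only inside the vortices, or more precisely only in a bounded region absorbed by the vortex structure. After merging into a single super-vortex, I would move every vertex of $A_t$ into every bag of the super-vortex's path decomposition. This raises the width by at most $|A_t|$, which is bounded, and it places all edges incident to the former apices inside the vortex. Then $A_t=\emptyset$ suffices, and the whole torso is $p_H$-nearly-embeddable without apices.

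I expect the main obstacle to be this last step: checking that, under the precise definitions of~\cite{DBLP:journals/corr/DvorakT14}, minor apices really have all their neighbours among vortex vertices. If instead they are only attached to bounded-size pieces, I would first enlarge the vortex or absorb those pieces into it, at bounded cost in width. One must also check that the virtual clique edges added for adhesions in the torso stay compatible with the embedding plus vortex structure. This holds because bounded adhesions are cliques that the structure theorem already accounts for in its torso statement.
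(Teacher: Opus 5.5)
Your proposal is correct and is essentially the paper's own argument, which the paper only sketches in a footnote: derive the statement from Theorem~12 of~\cite{DBLP:journals/corr/DvorakT14}, merge all vortices into a single super-vortex by adding handles to the surface, and, when $H$ is an apex graph, use that no apex is a major apex so that every apex can be added to every bag of the super-vortex. Your remarks on the Euler-genus growth and on the width increasing by $|A_t|$ fill in the bookkeeping of the bounded parameters that the paper leaves implicit.
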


We will need the following
partial converse of Theorem~\ref{thm:excl-minor}. The first part is from~\cite{DBLP:journals/jct/JoretW13};
the second part is folklore, and can be found explicitly for example as Lemma~B.1 in~\cite{DBLP:journals/corr/abs-2304-07268}.
\begin{theorem}\label{thm:nearly-exclude}
    For every $p$ there exists a graph $H$ such that every $p$-nearly-embeddable graph
    is $H$-minor-free.
    Further, for every $p$ there exists an apex graph $H$ such that every $p$-nearly-embeddable without apices 
    graph is $H$-minor-free.
\end{theorem}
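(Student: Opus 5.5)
The plan is to fix the standard definition of $(g,p,q,a)$-near-embeddability: apices, a graph embedded in a surface of Euler genus $g$, and at most $p$ vortices of width at most $q$ attached to face boundaries. Both parts of Theorem~\ref{thm:nearly-exclude} would then follow from one closure lemma plus a different invariant for each part.

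\textbf{Step 1 (closure under minors, with a loss).} I would show that there is a function $f$ such that every minor of a $p$-nearly-embeddable graph is $f(p)$-nearly-embeddable, and that minors of graphs without apices again have no apices. Deleting vertices and edges is immediate, since the class is subgraph-closed. For an edge contraction $uv$, I would distinguish four cases:
\begin{itemize}
\item if $u$ or $v$ is an apex, the merged vertex is declared an apex, so the number of apices does not grow;
\item if both endpoints lie in the embedded part, contracting the edge in the surface keeps the embedding;
\item if both endpoints lie in one vortex, replacing $u$ and $v$ in every bag by the merged vertex keeps the path decomposition, with width at most doubled;
\item if the edge joins a vortex-interior vertex to the embedded part or to a different vortex, more care is needed.
\end{itemize}
The last case is where I expect the main obstacle. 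The natural fix is to re-route the merged vertex into the vortex and enlarge the bags along the path between the two positions. To keep this controlled, I would first normalize the definition so that vortex-interior vertices are adjacent only to the embedded part through the society vertices, and then charge the width increase to those society vertices. If this becomes messy, the fallback is to avoid a general closure statement and instead track $K_t$-minor models directly: at most $p$ branch sets meet the apex set, and the remaining branch sets are handled in the surface-plus-vortices part.

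\textbf{Step 2 (first part: excluding $K_t$).} In a graph that is $f(p)$-nearly-embeddable, I would bound the number of edges by $c_p\,|V|$. The embedded part contributes $3|V|+3g$ edges by Euler's formula. Each vortex of width $q$ contributes $O(q)$ edges per vertex, because its bags have bounded size. Each apex contributes at most $|V|$ edges. The graph $K_t$ has $\binom{t}{2}$ edges, which exceeds $c_p t$ for $t > 2c_p+1$. By Step 1, a $K_t$ minor would itself be an $f(p)$-nearly-embeddable graph, which is impossible for such $t$. So $H = K_t$ works.

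\textbf{Step 3 (second part: an apex graph $H$).} I would take $H$ to be the $r\times r$ grid plus one universal vertex. This graph is apex, has radius $1$, and has treewidth at least $r$. The key input is bounded local treewidth: a connected graph that is $f(p)$-nearly-embeddable without apices and has radius $\rho$ has treewidth at most $c'_p(\rho+1)$. I would derive this from the product-structure results of~\cite{DBLP:journals/jacm/DujmovicJMMUW20}, namely a BFS layering in which each layer has bounded-treewidth structure, or from the classical argument for surfaces extended to bounded-width vortices. If $G$ had $H$ as a minor, Step 1 would make $H$ nearly-embeddable without apices, with radius $1$ and treewidth $r > 2c'_p$, a contradiction.
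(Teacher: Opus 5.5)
\textbf{The gap is Step~1, and the rest of your argument depends on it.} Closure under minors, even up to a loss in the parameters, is not a standard fact, and your case analysis cannot establish it. You have also put the difficulty in the wrong case.

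Under the usual definition, a vortex-interior vertex has neighbours only inside its own vortex. So your fourth case is just contracting a vortex edge $uv$ where $u$ is a society vertex. That case is harmless: the bag-intervals of $u$ and $v$ intersect, their union is an interval, and the merged vertex takes $u$'s position with no increase in width.

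The real obstacle is in your second case: an edge of the embedded part whose two endpoints are both society vertices.
\begin{enumerate}
\item If they belong to the same vortex, then after contracting the edge in the surface, the vortex disc touches the merged vertex $w$ at two boundary points, so $w$ occurs twice in the cyclic society order. To restore a vortex you must drop one occurrence, absorb the orphaned bag into a neighbouring bag, and add $w$ to every bag between the two occurrences. Every such contraction increases the width, and one minor can involve unboundedly many of them. For example, contracting a long path of the surface part running along the vortex face collapses many consecutive bags into one. Charging to society vertices cannot help, because it is precisely the society that shrinks.
\item If the endpoints are society vertices of two different vortices, then $w$ lies in both, which violates disjointness of vortices. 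Moreover, the two societies need not fit on the boundary of a single disc in an order compatible with both path decompositions.
\end{enumerate}
Such minors may still be nearly-embeddable via a globally different structure, but nothing in your proposal provides one.

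\textbf{Steps~2 and~3 cannot be rescued without Step~1.} The invariants they use are not minor-monotone. Subdivisions of $K_t$ have bounded average degree. Likewise, take the $r\times r$ grid and join every grid vertex to a new vertex $z$ by a path of length $r$. This graph has bounded average degree, and every ball of radius $\rho$ in it has treewidth $\Oh(\rho+1)$. Yet it contains the grid plus a universal vertex as a minor. So any proof must argue directly about a minor model inside a fixed near-embedding.

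\textbf{What the paper does.} The paper does not prove this statement itself. It imports the first part from Joret and Wood~\cite{DBLP:journals/jct/JoretW13}, who bound complete-graph minors in nearly-embeddable graphs by analysing the model directly. Your fallback points in that direction, but ``the remaining branch sets are handled in the surface-plus-vortices part'' is the entire difficulty, and you do not carry it out. The second part is quoted as folklore (Lemma~B.1 of~\cite{DBLP:journals/corr/abs-2304-07268}).

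For the second part you offer no fallback at all. A direct argument would, for instance, find inside a large grid model a large subgrid whose branch sets lie in a disc of the surface avoiding the vortices. It would then observe that a connected branch set disjoint from that subgrid can touch only the branch sets bounding a single face of its drawing.
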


We will also use that in graphs excluding a fixed apex graph as a minor, treewidth is bounded 
linearly in the diameter.

\begin{theorem}[\cite{DBLP:conf/soda/DemaineH04}]\label{thm:local-tw}
  For every apex graph $H$ there exists a constant $c_H$
  such that given an $H$-minor-free graph $G$ of diameter $\Delta$,
  one can in polynomial time compute a tree decomposition of $G$ of width at most~$c_H \Delta$.
\end{theorem}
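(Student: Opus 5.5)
The plan is to derive the statement from the Robertson--Seymour style structure of $H$-minor-free graphs together with the fact that bounded local treewidth is inherited by apex-minor-free classes. Throughout, let $H$ be an apex graph and $G$ an $H$-minor-free graph of diameter $\Delta$. The first, purely combinatorial, step shows that $\mathrm{tw}(G)=\Oh_H(\Delta)$.

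Fix a vertex $r$ and consider the BFS layering $L_0,L_1,\ldots,L_\Delta$ from $r$. A standard argument gives $\mathrm{tw}(G)\le \Oh(\mathrm{tw}$ of a bounded number of consecutive layers$)$ times the number of layers. However, the cleaner route is a contrapositive through grids.
\begin{enumerate}
\item By the linear grid-minor theorem for $H$-minor-free graphs (Demaine--Hajiaghayi), if $\mathrm{tw}(G)\ge c\cdot w$ then $G$ contains a $w\times w$ grid minor, in fact a partially triangulated grid as a contraction of a subgraph. The constant $c$ depends only on $H$.
\item Take $w$ a suitable constant multiple of $|V(H)|\cdot\Delta$. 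In the grid-like structure, pick the branch sets of a central subgrid of size $|V(H)|\times|V(H)|$, which is far from the grid boundary.
\item Since $G$ has diameter $\Delta$, the vertex $r$ reaches each branch set by a path of length at most $\Delta$.
\item Contract the union of these shortest paths, together with a ball around $r$, into a single apex vertex.
\item Use the grid rows and columns to route an embedding of the planar graph $H-v$, keeping the part touched by the radius-$\Delta$ ball away from the central subgrid.
\end{enumerate}
This yields an $H$-minor, a contradiction.

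The delicate point, and the main obstacle, is arguing that the shortest paths from $r$ do not destroy the grid structure used for $H-v$. The standard remedy is the following. Because the grid has side much larger than $\Delta$, all shortest paths from $r$ to the central region meet at most $\Oh(\Delta)$ rings of the grid. One chooses $r$'s side so that the contracted ball is attached to a connected set that touches every designated vertex of $H-v$ while avoiding a fat annulus. Equivalently, one may invoke Eppstein's diameter--treewidth property together with the known linear dependence for apex-minor-free graphs, and I would follow Demaine--Hajiaghayi's argument here.

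The algorithmic part is then routine. Compute the diameter $\Delta$ by BFS from every vertex. Then run a constant-factor approximation for treewidth, for instance the polynomial-time $\Oh(1)$-approximation for $H$-minor-free graphs via grid-minor duality, or Feige--Hajiaghayi--Lee's $\Oh(\sqrt{\log \mathrm{tw}})$ approximation combined with the fact that the constant-factor version exists for minor-free graphs. This gives a decomposition of width $c_H\Delta$ for a suitably larger $c_H$.

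Alternatively, one can get the decomposition directly from the layering: the separator produced by the grid-minor certificate is constructive.
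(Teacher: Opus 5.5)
\paragraph{Verdict: a genuine gap, in the step that carries the linear dependence.} The paper gives no proof of this statement; it imports it from Demaine and Hajiaghayi, who derive it from the graph-minor structure theory. Your route is different, and two of its ingredients are fine. The linear grid-minor theorem for $H$-minor-free graphs is a legitimate input. A polynomial-time $\Oh_H(1)$-approximation of treewidth on $H$-minor-free graphs (e.g.\ Feige--Hajiaghayi--Lee) does turn a bound $\mathrm{tw}(G)=\Oh_H(\Delta)$ into the required decomposition. But the whole argument rests on the claim that a graph of diameter $\Delta$ with a $w\times w$ grid minor, $w=\Theta_H(\Delta)$, contains the apex graph $H$ as a minor. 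Your steps 3--5 do not establish this. You flag it yourself as the main obstacle, and given the grid theorem it is the remaining content of the statement.

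\paragraph{Why steps 3--5 fail.}
\begin{enumerate}
\item The grid is only a minor of $G$ (your ``contraction of a subgraph''). Deleting edges destroys the diameter bound, so distances in $G$ say nothing about grid geometry.
\item Branch sets may be arbitrarily large, and $G$ may have many edges between far-apart branch sets. So a shortest path from $r$ to a central branch set can pass through arbitrary branch sets, including ones you need for the model of $H-v$, or all grid-neighbours of one. Bounding how many rings it meets does not help. What you need is a connected apex set that touches every branch set of an $(H-v)$-model while being disjoint from all of them, and nothing in the sketch produces it.
\item Since $\mathrm{diam}(G)=\Delta$, the radius-$\Delta$ ball around $r$ is all of $V(G)$. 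So contracting ``a ball around $r$'' into the apex while keeping ``the part touched by the radius-$\Delta$ ball'' away from the central subgrid is impossible.
\item The fallback of invoking Eppstein's property ``together with the known linear dependence for apex-minor-free graphs'' is circular. Eppstein's argument alone yields only some doubly exponential $f(\Delta)$; the linear dependence is exactly the hard part.
\end{enumerate}

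\paragraph{What would close the gap.} One option is the structure-theorem proof of the cited paper. The other is a stronger grid statement in which the obstruction is a contraction of $G$ itself, not of a subgraph. An example is the Fomin--Golovach--Thilikos theorem: a connected graph excluding an apex graph $H$, with treewidth at least $c_H k$, contracts to the partially triangulated $k\times k$ grid $\Gamma_k$. Contractions do not increase distances and $\mathrm{diam}(\Gamma_k)=\Omega(k)$, so this gives $\mathrm{tw}(G)=\Oh_H(\Delta)$ at once.
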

\begin{corollary}\label{cor:local-tw-nearly}
  For every integer $p$ there exists a constant $c_p$
  such that given a $p$-nearly-embeddable without apices graph $G$ of diameter $\Delta$,
  one can in polynomial time compute a tree decomposition of $G$ of width at most $c_p \Delta$.
\end{corollary}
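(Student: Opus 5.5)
The plan is to reduce Corollary~\ref{cor:local-tw-nearly} directly to Theorem~\ref{thm:local-tw}, using the second part of Theorem~\ref{thm:nearly-exclude}. Fix $p$. By Theorem~\ref{thm:nearly-exclude}, there is an apex graph $H_p$, depending only on $p$, such that every $p$-nearly-embeddable graph without apices is $H_p$-minor-free. Hence any input graph $G$ that is $p$-nearly-embeddable without apices is an $H_p$-minor-free graph. Its diameter $\Delta$ is unchanged, since we do not modify $G$ at all.

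Next I would apply the algorithm of Theorem~\ref{thm:local-tw} with $H \coloneqq H_p$. It runs in polynomial time on $G$ and returns a tree decomposition of width at most $c_{H_p}\Delta$. Setting $c_p \coloneqq c_{H_p}$ then gives the claimed bound. The constant depends only on $p$, because $H_p$ is determined by $p$ alone.

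The algorithm never needs to know or verify a near-embedding of $G$. Nor does it need to know $H_p$ explicitly as part of the input: $H_p$ is a fixed object for fixed $p$, so the algorithm of Theorem~\ref{thm:local-tw} for that $H_p$ is a fixed polynomial-time procedure. The only point requiring care is that Theorem~\ref{thm:local-tw} demands an \emph{apex} excluded minor. This is exactly what the second part of Theorem~\ref{thm:nearly-exclude} supplies for the apex-free case, so I expect no real obstacle here.

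Connectivity deserves a brief remark. If $G$ is disconnected, its diameter is infinite and the statement is vacuous. Alternatively, one can handle each component separately and join the resulting decompositions by arbitrary tree edges. Either way the bound is unaffected.
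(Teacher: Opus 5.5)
Your proposal is correct and matches the paper's proof exactly: the second part of Theorem~\ref{thm:nearly-exclude} supplies an apex graph excluded by all $p$-nearly-embeddable graphs without apices, and Theorem~\ref{thm:local-tw} applied to that fixed apex graph gives the decomposition of width $c_p\Delta$ in polynomial time. Your remarks on not needing the near-embedding and on disconnected graphs are harmless additions.
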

\begin{proof}
The claim follows from the second statement of Theorem~\ref{thm:nearly-exclude} and Theorem~\ref{thm:local-tw}.
\end{proof}

\subsection{Tools from the product structure theory}

In our proof, we will need a result of Dujmovi\'c et al.~\cite{DBLP:journals/jacm/DujmovicJMMUW20} that apex-minor-free graphs admit a so-called product structure. To formulate this result, we need a few definitions.

Let $G$ be a graph.
A partition $\mathcal{P}$ of $V(G)$ is \emph{connected} if every part $P \in \mathcal{P}$
induces a connected subgraph of $G$. 
A \emph{layering} of $G$ is a sequence $\mathcal{L} = (V_1,V_2,\ldots,V_m)$ of pairwise disjoint subsets of $V(G)$
such that $\bigcup_{i=1}^m V_i = V(G)$ and for every edge $uv \in E(G)$, say with $u \in V_i$ and $v \in V_j$, we have $|i-j| \leq 1$.
The sets $V_i$ are called \emph{layers}. 
For a partition $\mathcal{P}$ of $V(G)$ and a layering $\mathcal{L}$ of $G$,
the \emph{width} of $(\mathcal{P},\mathcal{L})$ is the maximum of $|P \cap V_i|$ where
$P$ ranges over parts of $\mathcal{P}$ and $V_i$ ranges over the layers of $\mathcal{L}$.
For a partition $\mathcal{P}$ of $V(G)$, the \emph{quotient graph} $G/\mathcal{P}$
has vertex set $\mathcal{P}$ and an edge between parts $P$ and $Q$ if and only if
there exist $u\in P$ and $v \in Q$ with $uv \in E(G)$. 

The main result of the product structure theory for graphs excluding an apex graph as a minor can
be phrased as follows. This is essentially \cite[Lemma~24]{DBLP:journals/jacm/DujmovicJMMUW20},
stated as an algorithm.
\begin{theorem}[Lemma~24 of~\cite{DBLP:journals/jacm/DujmovicJMMUW20}, algorithmic version]\label{thm:product}
  For every constant $p$ there exists a constant $c$ and a polynomial-time algorithm that,
  given a connected $p$-nearly-embeddable without apices graph $G$, computes a connected partition
  $\mathcal{P}$ of $G$, a layering $\mathcal{L}$ of $G$ such that $(\mathcal{P},\mathcal{L})$
  has width at most $c$, and a tree decomposition of $G/\mathcal{P}$ of width less than $c$. 
\end{theorem}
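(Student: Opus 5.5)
The statement is Lemma~24 of~\cite{DBLP:journals/jacm/DujmovicJMMUW20} made algorithmic, so I would retrace that proof step by step and check that each step is constructive in polynomial time. No new combinatorics should be needed.

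\textbf{Step 1: layering.} Fix any vertex $r$ of the connected graph $G$ and take the BFS layering $\mathcal{L}=(V_1,\ldots,V_m)$ from $r$, where $V_i$ is the set of vertices at distance $i-1$ from $r$. This is a layering by definition, and it is computable in linear time.

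\textbf{Step 2: reduce to the surface part.} A $p$-nearly-embeddable graph without apices is a graph $G_0$ embedded in a surface of Euler genus at most $p$, together with at most $p$ vortices of width at most $p$. Following the paper, I would make all of this constructive:
\begin{itemize}
\item assume the near-embedding is computed, or argue that the structure theorem of Theorem~\ref{thm:excl-minor} already supplies it;
\item triangulate the embedded part;
\item handle each vortex by attaching it to its boundary cycle, using the standard trick that vortex bags have size at most $p+1$, so each vortex vertex can be charged to a boundary vertex within bounded distance.
\end{itemize}

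\textbf{Step 3: tripod decomposition.} For bounded-genus triangulations, the DJMMUW argument proceeds in two stages:
\begin{itemize}
\item First, cut the surface into a disc by removing $\Oh(p)$ vertical BFS paths, i.e.\ paths that are monotone in the layering. This step is algorithmic: find a spanning BFS tree and a set of $\Oh(g)$ non-tree edges whose fundamental cycles cut the surface into a disc.
\item Second, apply the planar tripod decomposition recursively: repeatedly split a near-triangulated disc bounded by at most $6$ vertical paths using Sperner's lemma to locate a tripod. The tripod is found by an explicit search over faces, hence in polynomial time.
\end{itemize}
The parts of $\mathcal{P}$ are the vertical paths (tripod legs), and the vortex vertices are merged into suitable parts. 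Each part meets every layer in at most $c$ vertices, because vertical paths meet each layer in exactly one vertex and each part is a union of a bounded number of legs plus bounded vortex contributions. Parts are connected, since they are unions of legs sharing a common apex, together with attached vortex pieces. The recursion tree itself yields a tree decomposition of $G/\mathcal{P}$ of bounded width, with bags consisting of the $\Oh(1)$ parts bounding each region, plus the global $\Oh(p)$ cutting parts added to every bag.

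\textbf{Main obstacle.} The hardest step is the vortices. One must verify that merging vortex vertices into the parts keeps the parts connected and keeps the width bounded, and that the vortex edges, which may join vertices in non-adjacent positions, are still covered by the quotient decomposition. I would try first to follow the original proof verbatim, where vortices are absorbed by enlarging the bags along the boundary cycle. Every step there is an explicit construction (BFS, cutting along fundamental cycles, face search), so polynomial running time follows directly.
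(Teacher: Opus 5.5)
Your top-level route is the paper's. The paper gives no argument of its own and only asserts that Lemma~24 of Dujmovi\'c et al.\ is constructive. Your account of the bounded-genus core is accurate and clearly polynomial: BFS layering, a tree--cotree cut along $\Oh(g)$ vertical paths through the root, and recursive Sperner splitting of discs bounded by at most six vertical paths.

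However, vortices are exactly what Lemma~24 adds to the bounded-genus theorem, and the one concrete mechanism you offer for them is false. An internal vortex vertex need not be within bounded distance of the boundary vertices whose bags contain it. Let $G_0$ be a cycle $x_1\cdots x_n$, and put into one of its faces a vortex consisting of a hub $h$ adjacent to $x_1$ and to pendant vertices $w_1,\ldots,w_n$, with bags $B_j=\{x_j,h,w_j\}$ (width $2$).
\begin{itemize}
\item The vertex $w_j$ lies only in the bag of $x_j$, yet $\dist_G(w_j,x_j)=2+\dist_G(x_1,x_j)$, which is about $n/2$ for $j\approx n/2$.
\item In \emph{every} layering of $G$, all $w_j$ lie within two layers of $x_1$.
\item A BFS tree from $x_1$ contains the vertical path $x_1x_2\cdots x_{\lfloor n/2\rfloor}$, and nothing in the tripod construction prevents such a path from being a part. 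Merging each $w_j$ into the part of $x_j$ then gives that part about $n/2$ vertices in a single layer.
\item The merged $w_j$ is not adjacent to anything in that part. So connectivity of the parts, which the statement demands and \cref{cor:XF} uses, fails as well.
\end{itemize}
Here the right partition keeps the $w_j$ as singletons, so vortex interiors need a genuinely different treatment.

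The same issue already breaks the interface between your Steps~1 and~3. You fix a BFS layering of all of $G$, but the vertical paths must come from a spanning tree of an embedded graph.
\begin{itemize}
\item A BFS tree of $G$ may run through internal vortex vertices and vortex edges, which are not embedded.
\item A BFS tree of (a triangulation of) $G_0$ need not be monotone with respect to the layering of $G$. Vortex edges can make far-apart boundary vertices adjacent, so a single vertical path can meet one layer of $G$ unboundedly often.
\end{itemize}
What is missing is a way to reconcile a layering of the whole of $G$ with a tree on the surface, and then to distribute vortex interiors among parts at a cost depending only on the vortex width. That is the actual content of the vortex part of Lemma~24. Enlarging bags along the boundary cycle only addresses the quotient decomposition, and deferring the rest to ``the original proof verbatim'' is the citation again, not an argument.

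Finally, the input is only $G$, and neither of your options for obtaining a near-embedding works as stated. Computing one is a nontrivial task in its own right. Moreover, \cref{thm:excl-minor} returns a tree decomposition into nearly-embeddable torsos, not a near-embedding of $G$, so using it would require an extra clique-sum gluing step that is absent from your plan. In the paper the theorem is only ever applied to torsos (or subgraphs of torsos) produced by \cref{thm:excl-minor}, so the natural fix is to take the near-embedding as part of the input.
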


To extract balanced separators from a tree decomposition, we need the following standard lemma.
\begin{lemma}\label{lem:td-to-balanced-sep}
  Let $G$ be a graph, $(T,\beta)$ its tree decomposition, and $\mu : V(G) \to \mathbb{R}_{\geq 0}$ 
  a weight function with support contained in $U \subseteq V(G)$. 
  \begin{itemize}
  \item Given $G$, $(T,\beta)$, and $\mu$, one can in polynomial time compute $t \in V(T)$ such that
  the bag of $t$ is a balanced separator in the following sense:
  every connected component $S$ of $T-\{t\}$ satisfies $\mu(\bigcup_{r \in S} \beta(r) \setminus \beta(t)) \leq \mu(V(G))/2$.
  \item Given $G$, $(T,\beta)$, and $U$, 
  one can in polynomial time compute a set $Y$ of at most $2|U|-1$ nodes
  of $T$ such that for every weight function $\mu : V(G) \to \mathbb{R}_{\geq 0}$ with
  support contained in $U$,
  there exists $t \in Y$ that is a balanced separator as in the previous bullet.
  \end{itemize}
\end{lemma}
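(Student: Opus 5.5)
For the first bullet of Lemma~\ref{lem:td-to-balanced-sep} we use the standard ``walk towards the heavy side'' argument. For a node $t$ and a component $S$ of $T-\{t\}$, write $\mu_t(S)\coloneqq \mu(\bigcup_{r\in S}\beta(r)\setminus\beta(t))$. For different components $S,S'$ of $T-\{t\}$, the sets $\bigcup_{r\in S}\beta(r)\setminus\beta(t)$ and $\bigcup_{r\in S'}\beta(r)\setminus\beta(t)$ are disjoint. Indeed, a vertex in both would have its occurrence subtree meeting $S$ and $S'$, so by connectivity it would contain $t$. Hence at most one component of $T-\{t\}$ can have $\mu_t(S)>\mu(V(G))/2$.

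The algorithm starts at an arbitrary node $t$. If some component $S$ is heavy, it moves to the unique neighbor $t'\in S$ of $t$, and it stops when no component is heavy. To see that it never steps back, suppose we moved from $t$ to $t'$ across the edge $tt'$. The component of $T-\{t'\}$ containing $t$ is exactly $T$ minus the component $S$. Every vertex appearing in that part and not in $\beta(t')$ lies outside $\bigcup_{r\in S}\beta(r)$: if it appeared in $S$ and also on the $t$-side, its subtree would contain the edge $tt'$ and hence $t'$. So this set is contained in $V(G)\setminus(\bigcup_{r\in S}\beta(r)\setminus\beta(t))$, whose weight is less than $\mu(V(G))/2$. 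Thus that component is not heavy, and the walk only moves deeper into previously heavy subtrees. It therefore terminates after at most $|V(T)|$ steps, each computable in polynomial time.

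For the second bullet, root $T$ arbitrarily. Mark every node $r$ for which $U\cap\beta(r)\neq\emptyset$ and the occurrence subtree of some $u\in U$ has $r$ as its topmost node; there are at most $|U|$ such nodes. Let $Y$ be the closure of these marked nodes under pairwise lowest common ancestors. A standard counting argument gives $|Y|\le 2|U|-1$.

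Now fix $\mu$ with support in $U$ and let $t$ be the node given by the first bullet. If $t\notin Y$, consider the component of $T-Y$ containing $t$. Such a component touches at most one node of $Y$ from below and at most one parent node of $Y$ (the LCA closure ensures this), and it contains no marked node. Any $u\in U$ whose occurrence subtree meets this component therefore also contains the adjacent node of $Y$ on the appropriate side.

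The main obstacle is checking that replacing $t$ by a suitable node $y\in Y$ on the boundary of this component preserves balancedness. The key observation is that the $U$-vertices lying in subtrees hanging off the component's interior are not ``new'', since they have no top inside it. So every component of $T-\{y\}$ has $U$-content, outside $\beta(y)$, that is contained in the $U$-content outside $\beta(t)$ of some component of $T-\{t\}$, or else it is the component containing $t$. For that last case, I would choose $y$ to be the endpoint of the path towards the heaviest side of $t$, which makes the bound go through as in the monotone walk above. Computing $Y$ is clearly polynomial.
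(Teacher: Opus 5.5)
Your first bullet is correct (the non-backtracking walk is a valid substitute for the paper's edge-orientation argument). Your construction of $Y$ and the structural facts about the component $K$ of $T-Y$ containing $t$ are also correct. The gap is the final choice of $y$.

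Choosing the boundary node of $K$ on the heaviest side of $t$ does not give a balanced separator in general. The analogy with the monotone walk also fails: that argument needs the side you move into to carry weight strictly more than $\mu(V(G))/2$, and at a good node $t$ no side does. Here is a counterexample.
\begin{itemize}
\item Let $T$ be the path with nodes $a,b,c,d$ in this order, rooted at $a$, with bags $\beta(a)=\{u_1,u_2,u_4\}$, $\beta(b)=\beta(c)=\{u_1,u_2\}$, $\beta(d)=\{u_3\}$ (an edgeless graph on these vertices).
\item Take $U=V(G)$, $\mu(u_1)=\mu(u_2)=\mu(u_4)=1$ and $\mu(u_3)=3/2$.
\item Then $Y=\{a,d\}$. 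The nodes $b$ and $c$ are good: their sides carry $1$ and $3/2\le 9/4$, and your walk started at $d$ stops at $c$.
\item For both, the heavier side points to $d$. But $T-\{d\}$ has the component $\{a,b,c\}$ carrying $u_1,u_2,u_4$ of weight $3>9/4$, so $d$ is not balanced, while $a$ is.
\end{itemize}
Including the adhesion in the weight comparison does not rescue the rule: for $t=b$ it still points to $d$.

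The missing idea is that the direction is dictated by the rooting, not by weights. Take $y$ to be the node of $Y$ directly above $K$, i.e., the parent of the topmost node $k_0$ of $K$.
\begin{itemize}
\item The component of $T-\{y\}$ containing $t$ is the subtree $T_{k_0}$ rooted at $k_0$.
\item Any $u\in U$ occurring there but not in $\beta(y)$ has its occurrence subtree inside $T_{k_0}$. Hence its topmost (marked) node lies in $T_{k_0}\setminus K$. By your LCA observation, this set is exactly the subtree $T_{y'}$ of the unique lower boundary node $y'$ (empty if there is none).
\item So the occurrence subtree of $u$ lies in $T_{y'}$, avoids $t$, and sits in a single component of $T-\{t\}$. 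This weight is therefore at most $\mu(V(G))/2$ by goodness of $t$.
\item The other components of $T-\{y\}$ are handled by your containment argument.
\end{itemize}
You must also exclude that $K$ contains the root, where there is no node above. In that case every $u\in U$ has its topmost node, hence its whole occurrence subtree, inside $T_{y'}$ (and $U=\emptyset$ if $y'$ does not exist). So all the weight lies outside $\beta(t)$ in one component of $T-\{t\}$, contradicting goodness since $\mu(V(G))>0$. This is precisely the paper's proof, phrased there as ``if $t$ is good and $t\notin Y$, then $t$ is not the root and its parent is good'', iterated upwards until a node of $Y$ is reached.
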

\begin{proof}
Root $T$ in an arbitrary node. 

For every $u \in U$, let $t_u$ be the topmost node of $T$ whose bag contains $u$.
Let $X = \{t_u~|~u \in U\}$ and let $Y$ consist of $X$ and lowest common ancestors of any pair of nodes in $X$. 
Clearly, $|X| \leq |U|$ and $|Y| \leq 2|X|-1 \leq 2|U|-1$. Note that the definition of $Y$ depends on $U$
and not on the precise weight function $\mu$.

We say that a node $t \in V(T)$ is \emph{good} if for every connected component $S$ of $T-\{t\}$, 
we have $\mu(\bigcup_{r \in S} \beta(r) \setminus \beta(t)) \leq \mu(V(G))/2$. 

For every edge $st \in E(T)$ with $t$ being the parent of $s$, look at the two connected components $T_s$ and $T_t$ 
of $T-\{st\}$, compare $\mu(\bigcup_{r \in V(T_s))} \beta(r))$ with $\mu(\bigcup_{r \in V(T_t)} \beta(r))$
and orient the edge $st$ from smaller weight to the larger weight; in case of equal weights, orient the edge
upwards (from $s$ to $t$). Let $t_0$ be a node of out-degree $0$. As every edge incident with $t_0$ is oriented
towards $t_0$, $t_0$ is good. This proves that there exists a good node (and we can return such a node in the first bullet).

For the second bullet, it suffices to prove the following: if $t$ is good but $t \notin Y$, then
$t$ is not the root of $T$ and the parent of $t$ is good, too. This proves that the topmost good node
is in $Y$ and we are done with the second bullet. 

To this end, assume that $t$ is good but $t \notin Y$. This implies that for at least one component $S$
of $T-\{t\}$ that is a child subtree of $t$, we have 
$U \cap (\bigcup_{r \in S} \beta(r) \setminus \beta(t)) \neq \emptyset$.
However, as $\mu(\bigcup_{r \in S} \beta(r) \setminus \beta(t)) \leq \mu(V(G))/2$, any $t \notin X$, 
there must be a component $S'$ of $T-\{t\}$ that is not a child subtree of $t$; in particular, $t$ is not the root of $T$.
Let $s$ be the parent of $t$. 
As $t \notin X$, we have $\beta(s) \cap U \supseteq  \beta(t) \cap U$. This implies that $s$ is good, too:
every connected component of $T-\{s\}$ either is contained in $S'$ or its bag contain the same vertices of $U$
as $S$. 
\end{proof}

\begin{corollary}\label{cor:XF}
    Fix a constant $p$ and let $c$ be the constant for $p$ coming from Theorem~\ref{thm:product}.
    Then, there exists
    a polynomial-time algorithm that, given a $p$-nearly-embeddable without apices graph $G$
of diameter $\Delta$,
computes a tree decomposition $(T,\beta)$ of $G$ and, for every $t \in V(T)$,
 a spanning forest $F_t$ of $G[\beta(t)]$, such that the following holds for every $t \in V(T)$:
\begin{enumerate}
\item $|\beta(t)| \leq c^2 \cdot (\Delta+1)$, 
\item $F_t$ has at most $c$ connected components,
\item $F_t$ has maximum degree at most $3c^2-1$, and
\item for every integer $d \geq 0$ and every $v \in V(G)$, there are at most $(2d+1)c^2$ vertices
of $\beta(t)$ within distance at most $d$ from $v$.
\end{enumerate}
In particular, we have the following routines.
\begin{description}
\item[(Explicit weight function usage.)]
Given additionally a weight function $\mu:V(G) \to \mathbb{R}_{\geq 0}$, 
we can compute in polynomial time 
a separation $(A,B)$ of $G$ such that $|A \cap B| \leq c^2 (\Delta+1)$, 
$\mu(B \setminus A), \mu(A \setminus B) \leq \frac{2}{3} \mu(V(G))$, 
and a spanning forest $F$ of $G[A \cap B]$ satisfying properties above 
with $A \cap B$ playing the role of $\beta(t)$;
\item[(Explicit support usage.)]
Given additionally a set $U \subseteq V(G)$,
we can compute in polynomial time a family $\mathcal{X}$ of size at most $2|U|-1$
of tuples $((A,B),F)$ as in the previous bullet such that for every weight function $\mu:V(G) \to \mathbb{R}_{\geq 0}$
with support contained in $U$, at least one element $((A,B),F) \in \mathcal{X}$ 
satisfies the conditions for $\mu$ as in the previous bullet.
\end{description}
\end{corollary}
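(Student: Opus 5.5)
The plan is to build the decomposition directly from the product structure of Theorem~\ref{thm:product}. Since $G$ has diameter $\Delta$, we may assume it is connected. Apply Theorem~\ref{thm:product} to obtain a connected partition $\mathcal{P}$, a layering $\mathcal{L}$, and a tree decomposition $(T,\beta')$ of $G/\mathcal{P}$ of width less than $c$. We may take $\mathcal{L}$ to be BFS layering from a fixed root, so there are at most $\Delta+1$ layers. For each node $t$, define $\beta(t)$ as the union of the parts $P\in\beta'(t)$. This is a tree decomposition of $G$, because an edge of $G$ either lies inside a part or joins two adjacent parts of the quotient. Each part meets each layer in at most $c$ vertices, so $|P|\le c(\Delta+1)$ and hence $|\beta(t)|\le c^2(\Delta+1)$. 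If the layering is not BFS, I would instead note that every layering of a connected graph of diameter $\Delta$ has at most $2\Delta+1$ nonempty layers; in that case I would choose the constants so that the BFS version is the one used.

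For properties 2 and 3, take a spanning tree of $G[P]$ for each of the at most $c$ parts $P$ in $\beta'(t)$. These parts are connected, and their union is a spanning forest with at most $c$ components. To control degrees, choose each spanning tree inside $P$ as a BFS tree of $G[P]$ chosen with bounded degree. This is the main obstacle, since $G[P]$ itself may have high degree. My first attempt is to use the layering: $P$ has at most $c$ vertices per layer, and every edge of $G[P]$ joins the same or consecutive layers. So I would build the tree layer by layer. Within a layer, connect the at most $c$ vertices using edges inside the layer and consecutive layers only. Any vertex is adjacent within $P$ to at most $3c-1$ other vertices of $P$, namely those in its own layer and the two neighbouring layers. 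Therefore every subgraph of $G[P]$, and in particular any spanning tree, has maximum degree at most $3c-1\le 3c^2-1$. This bound is immediate, so the obstacle is in fact mild.

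For property 4, a vertex $v$ in layer $V_i$ has $N^d[v]$ contained in the layers $V_{i-d},\dots,V_{i+d}$. The bag $\beta(t)$ meets each layer in at most $c\cdot c$ vertices, which gives at most $(2d+1)c^2$ vertices.

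For the routines, apply Lemma~\ref{lem:td-to-balanced-sep} to $(T,\beta)$ with weight $\mu$ to get a node $t$ whose bag is a balanced separator in the componentwise sense. We then group the components of $T-\{t\}$, or equivalently the vertex sets hanging off them minus $\beta(t)$, into two sides. Each group has weight at most $\mu/2$, so the standard greedy grouping gives sides $A\setminus B$ and $B\setminus A$ each of weight at most $\frac23\mu(V(G))$, with $A\cap B=\beta(t)$ and $F=F_t$. For the explicit support version, use the second bullet of Lemma~\ref{lem:td-to-balanced-sep} to get the set $Y$ of at most $2|U|-1$ candidate nodes, and output one tuple per node, built by the same grouping. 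The grouping depends on $\mu$, so instead I would output, for each $t\in Y$, the separation obtained by the greedy grouping. To avoid this dependence, I would make the grouping $\mu$-independent by recomputing with $\mu$ supported on $U$ and counting weights of $U$. If necessary, I would allow the sides in the tuple to be chosen as $A=\beta(t)\cup$ (the components assigned), which requires at most a constant blow-up in the family size that can be absorbed.
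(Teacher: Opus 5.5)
Your construction of $(T,\beta)$ from Theorem~\ref{thm:product} and your checks of properties 1--4 follow the paper's argument, but the layer count is not justified as written. The theorem does not say that $\mathcal{L}$ is a BFS layering. Your fallback bound of $2\Delta+1$ nonempty layers only gives $|\beta(t)|\le c^2(2\Delta+1)$, and $c$ is fixed by Theorem~\ref{thm:product}, so it cannot be ``chosen''.

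The missing observation is that \emph{every} layering of a connected graph of diameter $\Delta$ has at most $\Delta+1$ nonempty layers. Layer indices change by at most one along an edge, so vertices in layers $i<j$ are at distance at least $j-i$. Hence the smallest and largest nonempty indices differ by at most $\Delta$. With this, properties 1--4 go through as you describe; your degree bound $3c-1$ for a forest using only edges inside parts is fine. Your greedy grouping for the explicit-weight routine is also correct, and it needs no assumption on the degree of $T$.

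The genuine gap is the explicit-support routine. At a node $t\in Y$, the right grouping of the components of $T-\{t\}$ depends on $\mu$, and neither of your remedies handles this. A grouping computed from the counting measure on $U$ can fail for other $\mu$. For example, let $T$ be a star with center $t$ whose $m\ge 7$ leaves each contain one private vertex of $U$.
\begin{itemize}
\item Some side of your grouping at $t$ contains four leaves. Putting weights $0.3,0.3,0.3,0.1$ on their $U$-vertices gives that side weight $1>\frac{2}{3}$.
\item A leaf node of $Y$ would be balanced only if its own $U$-vertex had weight at least $\frac{1}{3}$, which none does.
\end{itemize}
So no tuple in your family works. Moreover, $t$ may have unboundedly many incident components carrying $U$-vertices, so enumerating groupings is not a constant blow-up.

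The idea you are missing is the paper's preprocessing step. First make $T$ subcubic by replacing every high-degree node with a tree of copies carrying the same bag (and the same forest), and only then apply Lemma~\ref{lem:td-to-balanced-sep}. A good node then has at most three incident components, each of weight at most $\mu(V(G))/2$. Among the three groupings ``one component versus the other two'', the one isolating the heaviest component always leaves both sides of weight at most $\frac{2}{3}\mu(V(G))$. Listing these three per node of $Y$ gives $\Oh(|U|)$ tuples, a constant factor above the stated $2|U|-1$. The paper's own enumeration has the same slack, and it is harmless in the application.
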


\begin{proof}
  Apply Theorem~\ref{thm:product} for the constant $p$ to the graph $G$, obtaining
  $\mathcal{P}$, $\mathcal{L}$, and a tree decomposition $(T,\beta')$ of $G' \coloneqq G/\mathcal{P}$.
  For every $t \in V(T)$, let $\beta(t) \coloneqq \bigcup_{P \in \beta'(t)} P$. 
  Clearly, $(T,\beta)$ is a tree decomposition of $G$. 

  Fix $t \in V(T)$.
  As $\beta(t)$ consists of $|\beta'(t)| \leq c$ parts of $\mathcal{P}$, and each part $P \in \mathcal{P}$
  contains
  at most $c$ vertices of each layer of $\mathcal{L}$, 
  $\beta(t)$ contains at most $c^2$ vertices of a single layer of $\mathcal{L}$.
  Since $G$ has diameter at most $\Delta$, at most $\Delta+1$ layers of $\mathcal{L}$ are nonempty. 
  This implies $|\beta(t)| \leq c^2 (\Delta+1)$. 
  
  Since $\beta(t)$ is the union of a collection of at most $c$ connected parts,
  $G[\beta(t)]$ has at most $c$ connected components.
  Furthermore, since every vertex of $G$ has neighbors only in at most three layers of $\mathcal{L}$,
  the maximum degree in $G[\beta(t)]$ is bounded by $3 \cdot |\beta'(t)| \cdot c - 1 \leq 3c^2-1$. 
  Hence, any spanning forest $F$ of $G[\beta(t)]$ has at most $c$ connected components and maximum degree
  at most $3c^2-1$. 

  To see the last property, observe that if a vertex $v \in V(G)$ lies in the $i$-th layer of $\mathcal{L}$
  and a vertex $u \in \beta(t)$ is within distance at most $d$ from $v$, then $u$ lies between layers $(i-d)$ and $(i+d)$. 
  Since $\beta(t)$ contains at most $c^2$ vertices of each layer of $\mathcal{L}$, there are at most $(2d+1)c^2$ vertices of $\beta(t)$
  within distance at most $d$ from $v$, as desired. 

  As for the claims in the two bullets, first assume without loss
  of generality that $T$ has maximum degree at most $3$ (it can be easily obtained by simple modifications
  that duplicate some nodes). 
  Then, use the appropriate bullet of Lemma~\ref{lem:td-to-balanced-sep}
  on the tree decomposition $(T,\beta)$ and $\mu$ or $U$.
  Finally, observe that for a node $t$ that is a balanced separator for $\mu$ in the sense of Lemma~\ref{lem:td-to-balanced-sep}, there is a 
  partition $(\mathcal{A},\mathcal{B})$ of the connected components of $T-\{t\}$
  such that the separation $(A\coloneqq \beta(t) \cup \bigcup_{r \in \mathcal{A}} \beta(r), B\coloneqq \beta(t) \cup \bigcup_{r \in \mathcal{B}}\beta(r))$
  satisfies the requirements with the forest $F_t$ as $A \cap B = \beta(t)$. 
  As $T$ is of maximum degree at most $3$, there is only a constant number of partitions $(\mathcal{A},\mathcal{B})$
  to check (in case $\mu$ is given) or enumerate (in case $U$ is given). 
\end{proof}

\subsection{Clusters and cluster families}

In the proof of Theorem~\ref{thm:main-d}, it will be convenient to switch at some point
the narrative from balls to just subgraphs of bounded (strong) diameter. 
Here we introduce some notation and basic facts. 

Fix an integer $d$ and let $G$ be a graph.
A \emph{$d$-cluster} in $G$ is a nonempty set $A \subseteq V(G)$ such that $G[A]$ is connected and has diameter
at most $d$. Note that $0$-clusters are singleton vertex~sets.
We observe the following.
\begin{lemma}\label{lem:clusters-disjoint}
  Let $d \geq 0$ be an integer, $G$ be a graph, and $\mathcal{Z}$ a family of $d$-clusters in $G$.
  Then, there exists a family $\mathcal{Z}'$ of pairwise disjoint $(2d)$-clusters in $G$
  such that $|\mathcal{Z}'| \leq |\mathcal{Z}|$ and $\bigcup \mathcal{Z} = \bigcup \mathcal{Z}'$. 
\end{lemma}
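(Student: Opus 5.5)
We prove this by a merging procedure that strictly decreases the number of clusters, together with the observation that the relevant intersections only involve the \emph{original} $d$-clusters. The key step is the following: if $A_1, A_2$ are two intersecting $d$-clusters, then $A_1 \cup A_2$ is connected (both parts are connected and share a vertex). Moreover, any two vertices of $A_1 \cup A_2$ are at distance at most $2d$ inside $G[A_1\cup A_2]$: route from the first vertex to a common vertex $w \in A_1 \cap A_2$ and then to the second vertex, which costs at most $d+d$ whenever the two vertices lie in different clusters, and at most $d$ otherwise.

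The difficulty is that naive iteration breaks this bound: the union of a $(2d)$-cluster with another $d$-cluster could have diameter $3d$, and so on. To avoid this, I do not merge arbitrarily. Instead, I form the intersection graph $\Gamma$ on $\mathcal{Z}$, where two clusters are adjacent if they intersect, and I try to cover each connected component of $\Gamma$ by a family of unions of ``stars''. Concretely, greedily pick a maximal subfamily $\mathcal{M}\subseteq\mathcal{Z}$ of pairwise disjoint clusters. By maximality, every cluster $B \in \mathcal{Z}$ intersects some $A \in \mathcal{M}$; assign $B$ to one such $A$. For each $A \in \mathcal{M}$, let $U_A$ be the union of $A$ with all clusters assigned to it. In $U_A$, any vertex is within distance $d$ of $A$ via its own cluster, and $A$ has diameter $d$, so the diameter is at most $3d$, not $2d$. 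This is still too weak.

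To fix this, I refine the argument: a vertex $x \in B$ reaches a vertex $w \in A\cap B$ within distance $d$ inside $B$. So for $x \in B$ and $y \in B'$, with $w \in A\cap B$ and $w' \in A\cap B'$, we get $\dist(x,y)\le d+\dist_A(w,w')+d \le 3d$. This suggests the statement needs a cleverer choice, or else the paper intends $\bigcup\mathcal{Z}=\bigcup\mathcal{Z}'$ with a different mechanism. The fallback I would try is to iterate pairwise merging of intersecting clusters, but to keep each merged set as $\bigcup$ of original $d$-clusters that all share a common vertex. A family of $d$-clusters with a common vertex $w$ has union of diameter at most $2d$, since every vertex is within $d$ of $w$.

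The plan therefore becomes the following. Process greedily: pick a vertex $w$ lying in some remaining cluster, take $S_w$ to be the union of all remaining clusters containing $w$, output $S_w$ (a $(2d)$-cluster by the common-vertex argument), and remove those clusters. Each step removes at least one cluster, so $|\mathcal{Z}'|\le|\mathcal{Z}|$, and the unions are preserved. The main obstacle is disjointness: later stars may intersect earlier ones. To handle this, I would instead output $S_w$ minus previously covered vertices and argue that connectivity and diameter survive. The difficulty is that removing vertices can destroy connectivity. I expect this to be the crux, and if it fails I would conclude that the intended proof bounds merged pieces via a finer charging argument, merging only pairs whose union stays within diameter $2d$.
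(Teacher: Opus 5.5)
Your proposal does not reach a proof. Each of the three constructions is abandoned, and the last one stops exactly at the step that needs an idea.

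As stated, outputting $S_w$ minus the previously covered vertices fails. For $d=1$, take the graph with edges $ww'$, $w'a$, $w'b$ and the clusters $\{w,w'\}$, $\{w',a\}$, $\{w',b\}$. If you pick $w$ first and then $w'$, the second piece is $\{a,b\}$, which is disconnected. Even when a leftover piece stays connected, deleting vertices can lengthen distances inside it, so the $2d$ bound is lost as well. The ``finer charging argument'' you fall back on is never specified. So disjointness, connectivity and the diameter bound are never obtained together.

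The missing idea is to stop assembling pieces from clusters and instead partition $U=\bigcup\mathcal{Z}$ into Voronoi cells in $G[U]$.
\begin{enumerate}
\item Pick centers $A\subseteq U$ with $|A|\le|\mathcal{Z}|$ such that every vertex of $U$ is within distance $d$ of $A$ in $G[U]$. The paper takes an inclusion-wise maximal set of vertices pairwise at distance more than $d$ in $G[U]$. Each cluster contains at most one of them, because its vertices are pairwise at distance at most $d$ in $G[U]$.
\item The vertices $w$ chosen by your own greedy procedure would also serve as centers. Each step removes at least one cluster, so there are at most $|\mathcal{Z}|$ of them. Every cluster contains the center that removed it, so every vertex of $U$ is within distance $d$ of some center.
\item Assign each $v\in U$ to a nearest center in $G[U]$, breaking ties by a fixed order on $A$. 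If $v$ is assigned to $u$, then the next vertex on a shortest $v$--$u$ path in $G[U]$ is again assigned to $u$. Hence this whole path lies in the cell $K_u$.
\item Therefore $G[K_u]$ is connected, and each of its vertices is within distance $d$ of $u$ inside $G[K_u]$. This gives diameter at most $2d$. The cells are disjoint, cover $U$, and number $|A|\le|\mathcal{Z}|$.
\end{enumerate}
The point you were missing is that a vertex should go to its nearest center, not to the first star that happens to cover it.
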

\begin{proof}
  Let $U = \bigcup \mathcal{Z}$.
  Let $A \subseteq U$ be an inclusion-wide maximal set of vertices of $U$ pairwise at  distance more than $d$
  in the graph $G[U]$. Since every $K \in \mathcal{Z}$ can contain at most one vertex of~$A$, we have
  $|A| \leq |\mathcal{Z}|$.
  Construct a partition $(K_u)_{u \in A}$ of $U$ as follows: every $v \in U$ assign to $K_u$
  where $u$ is the closest to $v$ vertex of $A$ in the graph $G[U]$, breaking ties arbitrarily
  (i.e., $K_u$s are Voronoi cells of the centers $A$ in the graph $G[U]$). 

  By construction, for every $u \in A$ and $v \in K_u$, a shortest path in $G[U]$ from $v$ to $u$ is contained
  in $K_u$. This implies that $G[K_u]$ is connected and, by maximality of $A$, this path is of length at most 
  $d$. Hence, $G[K_u]$ has diameter at most $2d$. 
  We infer that $\mathcal{Z}' = \{K_u~|~u \in A\}$ satisfies the desired properties.
\end{proof}
Lemma~\ref{lem:clusters-disjoint} allows us to switch from a family of radius-$d$ balls around vertices
in $Z$ to a family of at most $k$ pairwise disjoint $(4d)$-clusters.

\section{Flow/cut duality}\label{sec:duality}

In this section we will develop a variant
of Theorem~\ref{thm:dual1} that handles distance-$d$ neighborhoods of patterns, but works only in $K_{3,t}$-minor-free graphs. 
For the proof, we will need to look under the hood of the proof
of Theorem~\ref{thm:dual1} to derive some additional properties. The following statement
can be extracted from the proof in~\cite{DBLP:journals/siamcomp/FominLMPPS22}
and, for completeness, we reproduce the proof in Appendix~\ref{app:dual0}

\begin{definition}[$(p,q)$-structure]
  Let $G$ be an undirected graph, $s,t \in V(G)$ be two
  distinct vertices, and $p,q \geq 0$ be integers.
  A \emph{$(p,q)$-structure} in $(G,s,t)$ consists of
  a chain $(C_1,\ldots,C_p)$ of $(s,t)$-separators
  and a sequence $(P_1,\ldots,P_q)$ of $(s,t)$-paths
  such that the following holds:
  \begin{enumerate}
  \item For every $1 \leq i \leq q$ and $1 \leq j \leq p$, we have $|V(P_i) \cap C_j| = 1$.
  Note that this implies that for a path $P_i$, the unique vertices
  lying in the intersection of $V(P_i)$ and separators $C_1,C_2,\ldots,C_p$
  lie exactly in this order if we traverse $P_i$ from $s$ to $t$. 
  \item For every $1 \leq i \neq i' \leq q$, every vertex
  of $V(P_i) \cap V(P_{i'})\setminus \{s,t\}$ lies on one of the separators~$C_j$.
  Note that this implies that $|V(P_i) \cap \bigcup_{i' \in [q] \setminus \{i\}} V(P_{i'})| \leq p$.
  \item For every $1 \leq j \leq p$, every vertex of $C_j$ lies on some path $P_i$. 
  Note that this implies $|C_j| \leq q$ for every $1 \leq j \leq p$. 
  \end{enumerate}
\end{definition}

\begin{theorem}\label{thm:dual0}
  There exists a polynomial-time algorithm that,
  given a graph $G$, distinct vertices $s,t \in V(G)$,
  and an integer $q \geq 0$, outputs an integer $p \geq 0$
  and a $(p,q)$-structure in $(G,s,t)$. 
\end{theorem}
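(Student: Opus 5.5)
Since the statement has no lower bound on $p$, we can simply construct $q$ paths directly, iteratively, exactly as in the proof of Theorem~\ref{thm:dual1} by Fomin et al. The plan is an augmenting-path style procedure that maintains a current $(p,q)$-structure and refines it. If $s$ and $t$ lie in different components, output $p=0$ and $q$ copies of... (paths do not exist). So we assume, as in applications, that $G$ is connected; in the degenerate case $s,t$ adjacent or disconnected we handle trivially (a single edge path repeated $q$ times with $p=0$ satisfies all conditions vacuously, since property~2 only concerns internal vertices, which do not exist).

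The main approach is the standard one: take $P$ a shortest $(s,t)$-path. Start with $p=0$ and $P_1=\dots=P_q=P$? That violates property~2 since internal vertices are shared but lie on no separator. Fix this by setting $p$ to be the number of internal vertices of $P$ and $C_j=\{v_j\}$ for the $j$-th internal vertex $v_j$ of $P$. Each $\{v_j\}$ is an $(s,t)$-separator only if it is a cut vertex, which fails in general. Hence the real construction is the iterative one. Maintain $(C_1,\dots,C_p)$ and paths $P_1,\dots,P_r$ with $r\le q$ satisfying properties 1--3 for $r$ in place of $q$, where each $C_j$ is a genuine chain of minimal-type separators; to add a path, consider the graph where we look for an $(s,t)$-path meeting each $C_j$ exactly once and avoiding other paths' vertices outside $\bigcup C_j$. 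If such a path exists, add it (and add its intersections to the $C_j$'s, which keep the structure). If it does not, Menger-type reasoning in an auxiliary layered graph yields a new separator between some consecutive $C_j,C_{j+1}$ (or before $C_1$/after $C_p$) consisting of vertices of the current paths, each path hit exactly once; insert it into the chain, increasing $p$, and retry.

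Key steps in order: (i) define the auxiliary graph $G_j$ for each region between consecutive separators with current-path private vertices deleted except endpoints; (ii) show that if no augmenting path exists through some region, a minimal vertex cut there can be chosen on current paths, one vertex per path, using that each path crosses the region as a single subpath; (iii) show inserting it preserves properties 1--3 and the chain property; (iv) termination: each insertion increases $p$, and $p\le |V(G)|$ since separators are disjoint, and each path addition increases $r$, giving polynomially many rounds.

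The main obstacle is step (ii): guaranteeing the new cut consists of exactly one vertex of each existing path (so property~1 holds) and that all its vertices lie on paths (property~3). I would attempt this by choosing, within the region, the cut of vertices closest to the earlier separator in a reachability sense: let $R$ be the set of vertices reachable from $C_j$ inside the region without using path vertices as intermediate steps in a forward direction, and take on each path the last vertex of its subpath that is reachable; failure of augmentation means $t$-side is unreachable, so these last vertices separate, and they are one per path by construction.
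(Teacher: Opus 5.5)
Your step (ii) fails, and with it the whole scheme. Your procedure never modifies a path once it has been added. It also relies on the dichotomy ``either a new path avoiding the private vertices of the current paths exists, or some region contains an $(s,t)$-separator made of exactly one vertex of each current path''. That dichotomy is false.

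Take $V(G)=\{s,a,b,c,d,t\}$ with edges $sa,ab,bt,sc,cd,dt,ad$, and $q=2$. Nothing prevents your first path from being $P_1=s\,a\,d\,t$ (it is even a shortest path). A second path avoiding $\{a,d\}$ must leave $s$ via $c$, whose only other neighbour is $d$, so no such path exists. But neither $\{a\}$ nor $\{d\}$ separates $s$ from $t$: the paths $s\,c\,d\,t$ and $s\,a\,b\,t$ avoid them. So the separator you want does not exist either.

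In fact no $(p,2)$-structure at all has $s\,a\,d\,t$ among its paths. Every $(s,t)$-path ($s\,a\,b\,t$, $s\,c\,d\,t$, $s\,c\,d\,a\,b\,t$, or $P_1$ itself) shares an interior vertex $x\in\{a,d\}$ with $P_1$. The separator containing $x$ may only use vertices of $P_1\cup P_2$ and must meet each path exactly once, which forces it to be $\{a\}$ or $\{d\}$. So an algorithm that only appends paths and separators cannot work.

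Your ``last reachable vertex on each path'' cut is the Menger cut of a residual network. It is an $(s,t)$-separator only when no augmenting path exists there. The case you omit is an augmenting path that reroutes existing paths. Here that is $s\to c\to d$, then backwards along $P_1$ to $a$, then $a\to b\to t$; this turns $P_1$ into $s\,a\,b\,t$ and adds $s\,c\,d\,t$.

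The missing idea is to choose all $q$ paths globally, which is what the paper does.
\begin{itemize}
\item It computes a minimum-cost flow of value $q$ in a split graph: each $v\neq s,t$ becomes a copy $v_0$ of cost $0$ and capacity $1$ and a copy $v_1$ of cost $1$ and unbounded capacity.
\item It decomposes this flow into $P_1,\dots,P_q$.
\item It takes an integral optimal dual solution, with potentials $y$ normalized by $y_s=0$ and with $z_v\in\{0,1\}$.
\item It sets $p=y_t$ and $C_j=\{v: y_{v_0}=j,\ z_v=1\}$.
\end{itemize}
The chain property comes from $|y_a-y_b|\le 1$ across edges. Properties 1--3 come from complementary slackness. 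Flow paths are shortest paths for the dual lengths, so each meets each $C_j$ at most once, hence exactly once. A vertex used by two paths carries flow through $v_1$ and thus has $z_v=1$. Conversely, $z_v=1$ forces $v_0$ to be saturated, so $v$ lies on some path.

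If you want to keep an augmenting-path flavour, you would have to run successive shortest augmenting paths in the residual network of this min-cost flow, which reroutes earlier paths. You would then read the separators off the final potentials rather than inserting them one at a time.

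Your remark on disconnected $s,t$ is fair: for $q\geq 1$ the statement tacitly assumes $s$ and $t$ are connected, and so does the paper's flow argument.
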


Note that Theorem~\ref{thm:dual0} can output $p=0$;
in this case, the chain is empty,
but the second property implies that 
the paths $P_i$ are vertex-disjoint, except for the endpoints.

\subsection{Distance version}

The proof of Theorem~\ref{thm:dualdistance} starts with first invoking Theorem~\ref{thm:dual0} to find a $(p_0,q_0)$-structure with $q=q_0$. Then we consider two cases. If $p_0$ is small, then we modify the paths in the $(p_0,q_0)$-structure to obtain the required set of paths.
If $p_0$ is large, then we modify the separators in the $(p_0,q_0)$-structure to obtain the required set of separators. Lemmas~\ref{lem:almostdisjoint-distance} and \ref{lem:sepchain-distance}, proposed in this section, handle these two cases, respectively.

In both cases, we need the following folklore observation that a directed graph with small maximum outdegree contains a large independent set.

\begin{lemma}\label{lem:outdeg}
If directed graph $H$ has outdegree at most $d$, then it contains an independent set of size at least $\ceil{|V(H)|/(2d+1)}$.
\end{lemma}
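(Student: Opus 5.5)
The plan is the standard degeneracy argument on the underlying undirected graph. Let $n=|V(H)|$ and let $U$ be the underlying undirected graph of $H$. Since every vertex has out-degree at most $d$, every arc is counted once at its tail, so the number of arcs is at most $dn$. Antiparallel arcs $uv$ and $vu$ collapse to a single undirected edge, so $U$ has at most $dn$ edges as well. The same count applies to every induced subgraph $H[S]$, because out-degrees can only drop when vertices are removed. Hence every induced subgraph $U[S]$ has at most $d|S|$ edges, and therefore average degree at most $2d$.

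Consequently every nonempty induced subgraph of $U$ contains a vertex of degree at most $2d$, so $U$ is $2d$-degenerate. We now build the independent set greedily. Repeatedly pick a vertex $v$ of minimum degree in the current graph, add it to the independent set, and delete $v$ together with all its neighbours. At each step the current graph is an induced subgraph of $U$, so $v$ has at most $2d$ neighbours, and each step removes at most $2d+1$ vertices. The process runs until the graph is empty, so it performs at least $\ceil{n/(2d+1)}$ steps.

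The chosen vertices are pairwise non-adjacent in $U$: each later choice is made from a graph from which all neighbours of earlier choices have already been deleted. They therefore form an independent set of $H$ in the sense of the preliminaries (independence in the underlying undirected graph), of size at least $\ceil{n/(2d+1)}$.

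As an alternative route, one can properly colour $U$ with $2d+1$ colours by processing vertices in reverse degeneracy order and take the largest colour class. There is no real obstacle here. The only point needing care is that the degeneracy bound must hold for every subgraph and not just globally, which is why the edge count is applied to induced subgraphs rather than to $U$ as a whole.
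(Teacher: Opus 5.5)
Your proof is correct and follows the paper's argument. Both pass to the underlying undirected graph, observe that every induced subgraph has average degree at most $2d$, and greedily pick a minimum-degree vertex, so each step discards at most $2d+1$ vertices. You also spell out two details the paper leaves implicit: the arc count at tails for induced subgraphs, and the integer rounding that yields the ceiling.
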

\begin{proof}
  Let $H'$ be the underlying undirected graph of $H$. 
  Let us try to find an independent set of $H'$ by greedily selecting a vertex $v$ with lowest degree, putting $V$ into the independent set, and removing its neighborhood. 
  Every induced subgraph of $H$ has outdegree at most $d$, thus every induced subgraph of $H'$ has average degree at most $2d$. This means that the greedy process can always find a vertex of degree at most $2d$ and hence discards at most $2d$ additional vertices when adding a new vertex $v$ into the independent set. Thus the algorithm finds an independent set of size $\ceil{|V(H)/(2d+1)}$.
\end{proof}

\paragraph{Handling almost-disjoint paths.}

We first handle the case when we have a $(p,q)$-structure with small $p$, that is, a large set of almost-disjoint paths that have only $p$ public vertices each. The main argument is showing that if the $d$-neighborhood of a vertex of $Z$ can influence too many vertices on some path of the $(p,q)$-structure, then we can locate a $K_{h,3}$-minor in the graph. However, this is true only after modifying the paths (using shortcuts): there can be lots of such vertices on a path if they are close to a public vertex of the path. 

Before presenting the proof, we establish a basic combinatorial result. First, we need the Sunflower Lemma of Erd\H os and Rado.

\begin{lemma}[Sunflower Lemma \cite{erdos1960intersection}]\label{lem:sunflower}
Let $\mathcal{F}$ be a family of subsets of size $r$ of some universe. If  $|\mathcal{F}| > r! \cdot (k - 1)^r$, then  there exist $k$ distinct sets $S_1, S_2, \dots, S_k \in \mathcal{F}$ and a (possible empty) subset $C$ of the universe such that for all $1 \le i < j \le k$, we have $S_i \cap S_j = C$.
\end{lemma}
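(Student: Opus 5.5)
We prove the Sunflower Lemma by induction on $r$, following the classical Erd\H{o}s--Rado argument; the bound $r!(k-1)^r$ is exactly what this induction produces. For $r=1$ the sets are distinct singletons, so any $k$ of them are pairwise disjoint and form a sunflower with empty core $C=\emptyset$. The hypothesis $|\mathcal{F}| > 1!\cdot(k-1)$ gives at least $k$ of them. (If $k\le 1$ the statement is trivial, so assume $k\ge 2$.)

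For the inductive step with $r\ge 2$, take a maximal subfamily $\mathcal{M}\subseteq\mathcal{F}$ of pairwise disjoint sets. If $|\mathcal{M}|\ge k$, any $k$ sets of $\mathcal{M}$ form a sunflower with empty core, and we are done.

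Otherwise $|\mathcal{M}|\le k-1$. Let $U=\bigcup\mathcal{M}$; then $|U|\le r(k-1)$. By maximality of $\mathcal{M}$, every set of $\mathcal{F}$ meets $U$. By pigeonhole, some element $x\in U$ lies in at least
\[
\frac{|\mathcal{F}|}{r(k-1)} > (r-1)!\,(k-1)^{r-1}
\]
sets of $\mathcal{F}$. Form the family $\mathcal{F}_x=\{S\setminus\{x\} : x\in S\in\mathcal{F}\}$. These are distinct sets of size $r-1$, because distinct sets containing $x$ remain distinct after removing $x$. Also $|\mathcal{F}_x|> (r-1)!(k-1)^{r-1}$, so the induction hypothesis yields $k$ sets $S_1',\dots,S_k'\in\mathcal{F}_x$ with pairwise intersections equal to some core $C'$.

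Set $S_i=S_i'\cup\{x\}$ and $C=C'\cup\{x\}$. The $S_i$ are distinct members of $\mathcal{F}$. Since $x\notin S_i'$, we get $S_i\cap S_j=(S_i'\cap S_j')\cup\{x\}=C$ for all $i\ne j$, which completes the induction.

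There is no real obstacle here. The only points needing care are the strict inequalities in the counting step and checking that the sets in $\mathcal{F}_x$ stay distinct and all have size exactly $r-1$, which is what allows the induction hypothesis to be applied.
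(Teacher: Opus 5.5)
Your proof is correct: it is the classical Erd\H{o}s--Rado induction (take a maximal pairwise disjoint subfamily, apply pigeonhole on its union of at most $r(k-1)$ elements, and recurse on the link of a popular element), and your strict-inequality counting and the distinctness of the sets $S\setminus\{x\}$ are handled properly. The paper gives no proof of this lemma; it simply cites Erd\H{o}s and Rado, whose argument is the one you reproduced.
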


We can use the Sunflower Lemma to bound the size of cross-intersecting pairs. 
\begin{lemma}\label{lem:A-B-disjoint}
Let $A_1$, $\dots$, $A_n$, $B_1$, $\dots$, $B_n$ be subsets of a universe such that $|A_i|\le a$, $|B_i|\le b$, $A_i\cap B_i=\emptyset$ for every $i\in [n]$, and $B_i\neq B_j$ for $i,j\in [n]$, $i\neq j$. If $n\ge b!(t(1+2a)-1)^b$, then there is a subset $I\subseteq [n]$ of size $t$ such that $A_i\cap B_j=\emptyset$ for every $i,j\in I$.
\end{lemma}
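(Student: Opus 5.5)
We apply the Sunflower Lemma (Lemma~\ref{lem:sunflower}) to the family $\{B_1,\dots,B_n\}$ and then select greedily inside the sunflower. First, pad: all $B_i$ have size at most $b$. To get sets of size exactly $b$, extend each $B_i$ by fresh dummy elements unique to $i$. This keeps the $B_i$ distinct, keeps $A_i\cap B_i=\emptyset$, and does not affect the $A_i\cap B_j$ intersections, since the dummies are not in any $A_{i'}$. (Alternatively, pigeonhole over sizes, but padding is cleaner.) The padded family has $n\ge b!\,(t(1+2a)-1)^b$ distinct sets of size $b$. If equality is an issue with the strict inequality in Lemma~\ref{lem:sunflower}, we use a slightly adjusted count; the only real requirement is that we obtain a sunflower with $m=t(1+2a)$ petals. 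So we get indices $i_1,\dots,i_m$ and a core $C$ with $B_{i_r}\cap B_{i_s}=C$ for $r\ne s$.

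Key observation: for each chosen $i=i_r$, the core $C$ is contained in $B_i$, and $A_i\cap B_i=\emptyset$, so $A_i\cap C=\emptyset$. Hence $A_i$ can meet $B_j$ ($j$ in the sunflower, $j\neq i$) only in the petal $B_j\setminus C$. These petals are pairwise disjoint, so each element of $A_i$ lies in at most one petal. Therefore $A_i$ intersects at most $a$ of the other sets $B_j$.

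Now we form a conflict digraph on the $m$ sunflower indices, with an arc $i\to j$ whenever $A_i\cap B_j\neq\emptyset$. Its out-degree is at most $a$, so Lemma~\ref{lem:outdeg} yields an independent set of size at least $\lceil m/(2a+1)\rceil = t$. Independence in the underlying undirected graph means that $A_i\cap B_j=\emptyset$ and $A_j\cap B_i=\emptyset$ for all distinct $i,j$ in the set. The diagonal case $i=j$ holds by hypothesis. Taking this set as $I$ finishes the proof.

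The only delicate step is the bookkeeping around the Sunflower Lemma's hypotheses: the exact-size requirement and the strict-versus-nonstrict inequality. It is handled by the padding and, if necessary, by noting that the threshold gives at least $m$ petals. The argument itself, that the core avoids every $A_i$, is straightforward.
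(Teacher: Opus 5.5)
Your proposal is correct and is essentially the paper's proof: a sunflower on the $B_i$ with $m=t(1+2a)$ petals, the observation that the core lies in $B_i$ and hence avoids $A_i$, the conflict digraph of out-degree at most $a$, and Lemma~\ref{lem:outdeg}. Your padding step is a harmless detail the paper skips.

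One correction concerns the strict inequality. Your remark that the threshold gives at least $m$ petals is not right: at $n=b!\,(m-1)^b$, the strict-inequality Sunflower Lemma only guarantees $m-1$ petals. This still suffices, since $\lceil (m-1)/(2a+1)\rceil=t$ whenever $a\ge 1$. The paper's proof glosses over the same off-by-one.
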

\begin{proof}
Let us apply Lemma~\ref{lem:sunflower} on the distinct sets $B_i$ and $k=t(1+2a)$; let $I_1\subseteq [n]$ correspond to the resulting $k$ sets $B_i$, $i\in I_1$. Let us observe that, for every $i\in I_1$, there are most $a$ values $j\in I_1$ such that $A_i\cap B_j\neq\emptyset$: a vertex of the common intersection $C=\bigcap_{j\in I_1}B_j$ cannot be in $A_i$, and every other vertex is in at most one $B_j$ for $j\in I_1$. Let us construct a directed graph $H$ on $I_1$ where there is an edge $\overrightarrow{ij}$ if $A_i\cap B_j\neq \emptyset$. As every vertex of $H$ has outdegree at most $a$, Lemma~\ref{lem:outdeg} implies that there is an independent set $I_2$ of size at least $|I_1|/(1+2a)=t$.
\end{proof}

The following lemma takes a $(p,q)$-structure with large $q$ (compared to $p$ and $k$), and modifies it in a way such that, for some unknown set $Z$ of size at most $k$, one of the paths has small intersection not only with $Z$, but also with its $d$-neighborhood.  
\begin{lemma}\label{lem:almostdisjoint-distance}
There is a polynomial-time algorithm that given a connected $K_{3,h}$-minor-free graph $G$, positive integers $p,q,r,d,\lambda, k_0, k_1$ satisfying $r\ge p$ and $q\ge (p+k_0/r+k_1)(10d^2h)^{d+3}\lambda$, a pair $s,t \in V(G)$ of different vertices, a set $Z_0\subseteq V(G)$ of size at most $k_0$, a set $X\subseteq V(G)\setminus \{s,t\}$ such that $G[X]$ has at most $\lambda$ components, and a $(p,q)$-structure of $(G-X,s,t)$,
outputs, for some $q'\le q$,  a sequence $(P'_1,\ldots,P'_{q'})$ of $(s,t)$-paths in $G-X$ and a set $Q_i\subseteq V(P'_i)$ for every $i\in [q']$ satisfying the following conditions:
\begin{itemize}
  \item $|Q_i|\le 10dr$ for every $i\in [q']$.
\item For every $Z_1\subseteq V(G)$ of size at most $k_1$, there is an $i\in [q']$ such that $V(P'_i)\setminus Q_i$ is disjoint from $N^d_G[Z_0\cup Z_1]$.
\end{itemize}
\end{lemma}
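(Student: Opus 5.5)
We would first make the paths shortcut-free around their public vertices. For each path $P_i$ of the $(p,q)$-structure in $G-X$, let $\Pub_i$ be its public vertices (at most $p$, one per separator $C_j$) and call the remaining internal vertices private. Walking along $P_i$ from $s$ to $t$, for every public vertex $u$ we locate the first and last vertices of $P_i$ within distance $d$ from $u$ in $G-X$. We replace the subpath between them by a shortest path of length at most $2d$ through $u$. To keep the modified paths internally disjoint on private parts, we restrict the shortcut to vertices of $P_i$ together with the public vertices. Concretely, we only keep the property that the $P_i$-subpath far from public vertices is untouched. After this, each $P'_i$ has at most $(4d+1)p$ vertices in $N^d_{G-X}[\Pub_i]$.

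We then put into $Q_i$ these $\Oh(dp)$ vertices together with $s,t$. We also add the at most $\Oh(d r)$ vertices lying near the $\le r$ vertices of $Z_0$ that are ``attached'' to $P'_i$ in the bounded sense established below. This gives $|Q_i|\le 10dr$, using $r\ge p$.

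The key step is a \emph{bounded attachment} claim. For a vertex $z\in V(G)$, say $z$ \emph{reaches} path $P'_i$ if some path of length at most $d$ from $z$ hits a vertex of $V(P'_i)\setminus Q_i$ without passing through a public vertex or a vertex of other paths' private parts first. We claim that every $z$ reaches at most $(10d^2h)^{d+3}\lambda$ of the paths. To prove it, we exploit that the private parts of distinct paths are vertex-disjoint and avoid the separators. Suppose $z$ reaches many paths. Each reaching path $R_i$ has length $\le d$, so the set $B_i$ of vertices it uses on other paths, together with public vertices, has size $\le d$. Its internal vertices form the set $A_i$. We apply \cref{lem:A-B-disjoint}, or its sunflower-based variant, to extract $h'$ paths whose reaching routes are pairwise disjoint apart from $z$. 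Routes passing through $X$ are handled by noting that $X$ has only $\lambda$ components: we contract each component, which costs a factor $\lambda$ and allows a route to jump through $X$.

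With $h$ such routes in hand, we aim for a $K_{3,h}$-minor with branch sets built as follows. The first branch set is $z$ plus its routes. The second is $s$ plus initial segments of the chosen paths up to the route endpoints. The third is $t$ plus the final segments. Each chosen path then contributes one vertex adjacent to all three. Public vertices break disjointness, so we pick the chosen paths to pairwise meet only at public vertices. We then use the fact that the shortcuts make route endpoints far (distance $>d$) from public vertices. We also use the chain order (property 1) to choose cut points between consecutive separators. This gives the contradiction, and the depth-$d$ iteration over route lengths yields the exponent $d+3$. We expect this minor construction to be the main obstacle, especially handling routes that cross other paths' private vertices; the natural approach is induction on $d$, rerouting through the first path hit.

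Finally, we count. Fix $Z_1$ of size $\le k_1$. By the claim, the vertices of $Z_1$ reach at most $k_1(10d^2h)^{d+3}\lambda$ paths in total. Vertices of $Z_0$ reach at most $k_0(10d^2h)^{d+3}\lambda$ reach-incidences, so fewer than $(k_0/r)(10d^2h)^{d+3}\lambda$ paths receive more than $r$ of them; the remaining paths have their $Z_0$-attachments absorbed into $Q_i$. Paths whose bad vertices all lie near public vertices are already covered. Since $q$ exceeds $(p+k_0/r+k_1)(10d^2h)^{d+3}\lambda$, some index $i$ remains with $V(P'_i)\setminus Q_i$ disjoint from $N^d_G[Z_0\cup Z_1]$. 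We take $q'=q$, possibly after discarding degenerate paths.
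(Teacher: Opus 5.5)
Your overall architecture (shortcut the paths, absorb nearby vertices into $Q_i$, use a $K_{3,h}$-minor argument to bound how many paths one vertex can approach, and average over $Z_0$ with threshold $r$) matches the paper's. However, the central ``bounded attachment'' claim is false in the form you need, and the problem is structural, not a matter of constants.

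The minor argument needs all route endpoints $u_i$ to lie in a common layer; otherwise the $s$-side segment of one path and the $t$-side segment of another can share a public vertex. For routes inside $G-X$ this follows by pigeonholing over the $2d+1$ layers within distance $d$ of $z$. A contracted component of $X$, however, can be adjacent to every layer, so a factor $\lambda$ does not suffice.

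Here is a concrete counterexample.
\begin{itemize}
\item Join consecutive vertices of $s,c_1,\dots,c_p,t$ by $q$ long internally disjoint paths. Let $P_i$ use the $i$-th path of every bundle, and let $C_j=\{c_j\}$.
\item Let $X$ be a star whose $j$-th leaf is adjacent to the middle vertex of $P_j$ in the bundle after $c_j$.
\end{itemize}
Drawing $P_j$ topmost in that bundle shows $G$ is planar, and $\lambda=1$. The centre of the star is within distance $2$ of private vertices of $p$ different paths, all far from $\Pub_i\cup\{s,t\}$. Since $p$ is not bounded by any function of $d,h$, no per-vertex bound $f(d,h)\lambda$ exists. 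Charging such paths to each vertex of $Z_1$ would cost about $k_1p\lambda$, which the hypothesis on $q$ does not cover.

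The missing idea is to charge $X$ once, independently of $Z_0\cup Z_1$. Pigeonholing over the at most $\lambda$ contracted vertices and the at most $p+1$ layers shows that only about $p\lambda(10d^2h)^{d+2}$ paths have a private vertex within distance $d$ of $X$ via a route avoiding $\Pub_i\cup\{s,t\}$. This is exactly what the summand $p$ in the lower bound on $q$ pays for; your final count never uses it. Per-vertex bounds are then needed only for routes in $G-(X\cup\Pub_i\cup\{s,t\})$.

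Two omissions in your shortcutting break the count even for $X=\emptyset$.
\begin{itemize}
\item You put only $s$ and $t$ themselves into $Q_i$ and do not forbid routes through them. For $z=s$, $N^d_G[z]$ contains the second vertex of every path, which is private and in general far from $\Pub_i$, so every path is bad.
\item A vertex $z_0\in Z_0$ can be adjacent to an arbitrarily long stretch of $P_i$ (a planar fan). So ``$O(dr)$ vertices near the attached $Z_0$-vertices'' holds only if the path is also shortcut through those $z_0$.
\end{itemize}
The paper handles both at once. Let $Z_0^i$ be the vertices of $Z_0\setminus X$ within distance $d$ of $\Priv_i$ in $G-(X\cup\Pub_i\cup\{s,t\})$. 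Take $P'_i$ to be a shortest $(s,t)$-path in the graph induced by $\Priv_i\cup\bigcup_{v\in\Pub_i\cup Z_0^i\cup\{s,t\}}N^d_{G-X}[v]$. Then $P'_i$ meets each of these balls in at most $2d+1$ vertices, and $V(P'_i)\setminus Q_i\subseteq\Priv_i$. The latter is what lets the minor argument run on the original paths $P_i$, whose private parts are disjoint.

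Finally, routes that cross other paths' private parts need no induction on $d$.
\begin{enumerate}
\item Put an arc $i_1\to i_2$ whenever the route $W_{i_1}$ meets $\Priv_{i_2}$; every outdegree is at most $d+1$.
\item Pass to an independent set via Lemma~\ref{lem:outdeg}.
\item Apply Lemma~\ref{lem:A-B-disjoint} with $A_i$ the at most $2d+1$ public vertices of $P_i$ in layers $\ell-d,\dots,\ell+d$ and $B_i=W_i$.
\end{enumerate}
Your choice of $A_i,B_i$ does not encode the property actually needed, namely that route $j$ avoids path $i$.
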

\begin{proof}
  Let $(C_1,\ldots,C_p),(P_1,\ldots,P_q)$ be the given $(p,q)$-structure of $(G,s,t)$.
  Let $\Pub_i$ be the public vertices of $P_i$, i.e., those vertices of $P_i\setminus \{s,t\}$ that are on some other path $P_j$, $i\neq j$, as well. Let $\Priv_i=V(P_i)\setminus \Pub_i\setminus \{s,t\}$. We define the layering numbers for the vertices in $G-X$ in a natural way, as follows. For $1\le i \le p$, every vertex of the separator $C_i$ in the $(p,q)$-structure is in layer $i$, as well as the vertices between separators $C_i$ and $C_{i+1}$. Vertex $s$ and all vertices between $s$ and $C_1$ are in layer 0, while vertex $t$ and all vertices between $C_p$ and $t$ are in layer $p$. Observe that the layer numbers of adjacent vertices can differ only by at most 1 and every path $P_i$ has exactly one public vertex in each layer $\ell\in [p]$.

  The following construction will be useful for handling the set $X$.
   Let $G'$ be the graph obtained from $G$ by contracting each vertex of $G[X]$ into a single vertex; let $X'$ be the set of these contracted vertices. Note that if $v\not\in X'$, then $G'-(X'\setminus \{v\})$ is simply $G'-X'$, and if $v\in X'$, then $v$ is the only vertex of $G'-(X'\setminus \{v\})$ outside of $G'-X'$.

For any vertex $v$ in $G-X$, we want to bound the number paths $P_i$ such that a private vertex of $P_i$ can be reached from $v$ by path of length at most $d$ in $G-X$ that does not go through the public vertices of $P_i$. Similarly, we want to bound the number of such paths $P_i$ where a private vertex can be reached from $X$ by a path of length at most $d$ in $G$ that does not go through the public vertices of $P_i$. The following claim presents a unified bound for both settings, under the assumption that the reached private vertices are on the same layer.
   
  \begin{claim}\label{cl:handlepaths}
    Let $v$ be a vertex of $G'\setminus\{s,t\}$ and $I\subseteq [q]$ be a subset of indices. For every $i\in I$, let $u_i\in \Priv_i$ be a vertex and $W_i$ be a $v-u_i$ path of length at most $d$ in $G'-(X\setminus \{v\})$. Assume furthermore that\medskip

    \begin{itemize}[nosep]
    \item for every $i\in I$, vertex $u_i$ is in the same layer $\ell$, and
    \item for every $i\in I$, we have $P_i\cap W_i=\{u_i\}$.
      \end{itemize}
\medskip
      
     \noindent Then $|I|< (10d^2h)^{d+2}$.
     \end{claim}
     \begin{claimproof}
       Assume that $|I|\ge (10d^2h)^{d+2}$. Let us construct a directed graph $H$ on $I$ where there is an edge $\overrightarrow{i_1i_2}$ for $i_1,i_2\in I$ if $W_{i_1}\cap \Priv_{i_2}\neq\emptyset$. As $W_{i_1}$ has at most $d+1$ vertices and the sets $\Priv_{i_2}$ are disjoint for different values of $i_2$, we have that the every vertex of $H$ has outdegree at most $d+1$. Thus by Lemma~\ref{lem:outdeg}, in $H$ there is an independent set $I'$ of size at least $|I|/(2(d+1)+1)\ge (10d^2h)^{d+1}$. That $I'$ is an independent set in particular implies that $v\not\in \Priv_i$ for every $i\in I'$, as this would imply that $W_j$ intersects $\Priv_i$ for any $j\in I'$, $j\neq i$ (here we use that $|I'|\ge 2$). Therefore, we can assume $v\neq u_i$ for every $i\in I'$. Moreover, $s,t\not\in W_i$, as otherwise this vertex would be a vertex of $P_i\cap W_i$ different from $u_i$.
       
Consider some $i,j\in I'$, $i\neq j$ such that $P_i$ and $W_j$ intersect.
 Note that $W_j$ can have only a single vertex not in $G'-X'$: its endpoint $v$.
 As $u_j$ is in layer $\ell$, this means that every vertex of $W_j\setminus X'$ is between layers $\ell-d$ and $\ell+d$. Thus if $P_i$ and $W_j$ intersect in a vertex $w$, then by the definition of $I'$, we have $w\not \in \Priv_i$ and hence $w$ has to be one of the at most $2d+1$ public vertices of $P_i$ between layers $\ell-d$ and $\ell+d$. Let $A_i$ be this subset of $\Pub_i$ and let $B_i=W_i$; observe that $A_i\cap B_i=\emptyset$, as $W_i$ is disjoint from $\Pub_i$. Let us apply  Lemma~\ref{lem:A-B-disjoint} with $a=2d+1$, $b=d+1$, and $k=h$: note that
$|I'|\ge (10d^2h)^{d+1}\ge (d+1)!(h(2d+2)-1)^{d+1}=b!(h(1+2a)-1)^b$ holds. Therefore, there is a subset $I''\subseteq I'$ of size $h$ such that $A_i\cap B_j=\emptyset$ for all $i,j\in I''$.

  Using that the paths $P_i$ and $W_j$ do not intersect each other for any $i,j\in I''$, $i\neq j$ and that every $u_i$ is in the same layer, performing the following contractions for every $i\in I'$ creates a $K_{3,h}$-subgraph with $\{s,t,v\}$ on one side and $\{u_i \mid i\in I''\}$ on the other side:
\medskip

\begin{itemize}[nosep]
  \item Contract $W_i\setminus\{u_i\}$ into the vertex $v$.
  \item Contract the subpath of $P_i$ from $s$ up to (but not including) $u_i$ into the vertex $s$.
  \item Contract the subpath of $P_i$ from the vertex immediately after $u_i$ to $t$ into the vertex $t$.
\end{itemize}   \medskip

We can verify that vertices contracted to different vertices ($s$, $t$, or $v$) are disjoint. For this, we observe the following.
\begin{itemize}
\item For any $i,j\in I'$, path $W_i\setminus \{u_i\}$ is disjoint from every $P_j$.
\item Vertices of $P_i$ contracted to $s$ can coincide with vertices of $P_j$ contracted to $t$ only in public vertices of $P_i$ and $P_j$. However, for any $i,j\in I'$, the public vertices on the subpath  of $P_i$ from $s$ up to $u_i$ have layer number at most $\ell$, while the public vertices on the subpath of $P_j$ from the vertex immediately after $u_j$ to $t$ have layer number greater than $\ell$.
\end{itemize}
Therefore, this sequence of contractions create a $K_{3,h}$-minor in $G'-(X\setminus \{v\})$. This contradicts the assumption that $G$ (and hence its minor $G'-(X\setminus \{v\})$) is $K_{3,h}$-minor free.
\end{claimproof}  
  The following two claims use a simple Pigeonhole Principle argument to remove the layer requirement from Claim~\ref{cl:handlepaths}. 
  \begin{claim}\label{cl:distance-d-inside}
For every vertex $v\in V(G)\setminus X$, there are at most $(10d^2h)^{d+3}$ paths $P_i$ with $\dist_{G-(X\cup \Pub_i\cup \{s,t\})}(\Priv_i,v)\le d$.
\end{claim}
\begin{claimproof}
  Suppose that there are more than $(10d^2h)^{d+3}$ such paths; let $I$ contain the indices of these paths. For every $i\in I$, let us select a shortest path $W_i$ from $v$ to $\Priv_i$ in $G-(X\cup \Pub_i\cup \{s,t\})$; let $u_i$ be its endpoint in $\Priv_i$. Clearly, $W_i$ intersects $\Priv_i$ only in $u_i$.

  As every $u_i$ is at distance at most $d$ from $v$ in $G-X$, the layer of $u_i$ differs from the layer of $v$ by at most $d$. By the Pigeonhole Principle, there is a subset $I'\subseteq I$ of size more than  $|I|/(2d+1)\ge (10d^2h)^{d+2}$ such that $u_i$ is in the same layer for every $i\in I'$. However, Claim~\ref{cl:handlepaths} shows that $|I'|< (10d^2h)^{d+2}$, a contradiction.
\end{claimproof}  

\begin{claim}\label{cl:distance-d-outside}
There are at most $p\cdot (10d^2h)^{d+2}\lambda$ paths $P_i$ such that $\dist_{G-\Pub_i\cup\{s,t\}}(\Priv_i,X)\le d$.
\end{claim}
\begin{claimproof}
  Assume that there are more than $p\cdot (10d^2h)^{d+3}\lambda$ such paths;  let $I$ contain the indices of these paths.
Observe that $\dist_{G-\Pub_i\cup\{s,t\}}(\Priv_i,X)\le d$ can be also written as $\dist_{G'-\Pub_i\cup\{s,t\}}(\Priv_i,X')\le d$.
  For every $i\in I$, let us select a shortest path $W_i$ from $X$ to $\Priv_i$ in $G-(X\cup \Pub_i\cup \{s,t\})$; let $u_i$ be its endpoint in $\Priv_i$. Clearly, $W_i$ intersects $\Priv_i$ only in $u_i$.
  By the Pigeonhole Principle, there is a subset $I'\subseteq I$ of size more than $|I|/p\ge (10d^2h)^{d+2}\lambda$ such that every $u_i$ is in the same layer. Again by the Pigeonhole Principle, there is a subset $I''\subseteq I'$ of size more than $|I'|/\lambda\ge (10d^2h)^{d+1}$ such that $W_i$ starts at the same vertex $x'\in X'$  for every $i\in I'$.  However, Claim~\ref{cl:handlepaths}, shows that $|I'|<(10d^2h)^{d+2}$, a contradiction.
 \end{claimproof}

For $i\in [q]$, let $Z^i_0\subseteq Z_0\setminus X$ be the set comprised of all those vertices $z\in Z_0\setminus X$ for which $\dist_{G-(X\cup \Pub_i\cup \{s,t\})}(z,\Priv_i)\le d$ (as $Z_0$ is given in the input, we can compute this set). Let us consider the graph $G_i$ induced by the vertex set
\[
\Priv_i\cup\bigcup_{v\in \Pub_i\cup Z^i_0\cup\{s,t\}}N^{d}_{G-X}[v].
\]
Let $P'_i$ be a shortest $s-t$ path in $G_i$.
\begin{claim}\label{cl:newvertices}
$|V(P'_i)\cap N^{d}_{G-X}[v]|\le 2d+1$ for every $v\in \Pub_i\cup Z^i_0\cup \{s,t\}$.
\end{claim}
\begin{claimproof}
  For some $v\in \Pub_i\cup Z^i_0\cup \{s,t\}$, let $u_1$ (resp., $u_2$) be the first (resp., last) vertex of $P'_i$ in $N^{d}_{G-X}[v]$. If $P'_i$ has at least $2d+2$ vertices in $N^{d}_{G-X}[v]$, then the $u_1-u_2$ subpath of $P'_i$ contains all such vertices and hence has length at least $2d+1$. However, in graph $G_i$, there is a $u_1-u_2$ path of length at most $2d$ via $v$. This contradicts the minimality of the path $P'_i$.
\end{claimproof}
Let us define $Q_i=V(P'_i)\cap \bigcup_{v\in \Pub_i\cup Z^i_0\cup\{s,t\}}N^{d}_{G-X}[v]$.
From Claim~\ref{cl:newvertices}, it is clear that $|Q_i|\le (2d+1)(p+|Z^i_0|+2)$. Moreover, we the following claim shows that $|Z^i_0|$ (and hence $Q_i$) cannot be too large for too many of the paths $P'_i$.

\begin{claim}\label{claim:Qbound}
There are at most $(10d^2h)^{d+3} k_0/r$ values of $i\in [q]$ such that $|Q_i|\ge 10dr$.
\end{claim}
\begin{claimproof}
By Claim~\ref{cl:distance-d-inside}, each vertex $Z\in Z_0\setminus X$ can be in $Z^i_0$ for at most $(10d^2h)^{d+3}$ values of $i$. Therefore, by an averaging argument, there are at most $(10d^2h)^{d+3} k_0/r$ values of $i$ such that $|Z^i_0|> r$. Suppose that $|Z^i_0|\le r$ for some $i\in [q]$. For such an $i$, as $|\Pub_i\cup Z^i_0\cup \{s,t\}|\le p+r+2$, Claim~\ref{cl:newvertices} implies $Q_i=|V(P'_i)\cap \bigcup_{v\in \Pub_i\cup Z^i_0\cup\{s,t\}}N^{d}_{G-X}[v]|\le (p+r+2)(2d+1)\le 10dr$. 
\end{claimproof}  

Suppose now that $Z_1\subseteq V(G)$ is of size at most $k_1$ and let $v\in V(P'_i)\setminus Q_i\subseteq \Priv_i$ be a vertex in $N^d_G[Z_0\cup Z_1]$. Let $W$ be a $v-(Z_0\cup Z_1)$ path of length at most $d$ in $G$. Following $W$ from $v$, let $w$ be the first vertex of $W$ in $(\Pub_i\cup \{s,t\})\cup X\cup Z_0\cup Z_1$. 
    We consider four possible cases and show that each of them can happen only for a bounded number of different paths $P'_i$, hence there is an $i\in [q]$ such that $V(P'_i)\setminus Q_i$ is disjoint from $N^d_G[Z_0\cup Z_1]$.

\begin{enumerate}
  \item $w\in \Pub_i\cup \{s,t\}$. As no earlier vertex of $W$ was in $X$, we have that $v\in N^d_{G-X}[w]\subseteq Q_i$, a contradiction.
    \item $w\in Z_0\setminus X$. The case $w\in \Pub_i\cup \{s,t\}$ was handled above and no earlier vertex of $W$ was in $X$ or in $\Pub_i\cup \{s,t\}$, we have that  $\dist_{G-(X\cup \Pub_i\cup \{s,t\})}(v,X)\le d$. This implies $w\in Z^i_0\subseteq Q_i$, a contradiction.
  \item $w\in X$. As no earlier vertex of $W$ was in $X$ or in $\Pub_i\cup \{s,t\}$, we conclude that   $\dist_{G-(X\cup \Pub_i\cup \{s,t\})}(v,X)\le d$. By Claim~\ref{cl:distance-d-outside}, this can happen for at most $p(10d^2h)^{d+2}\lambda$ values of $i$.
  \item $w\in Z_1$. The case $w\in X\cup \Pub_i\cup \{s,t\}$ was handled above and no earlier vertex of $W$ was in $X\cup \Pub_i\cup \{s,t\}$, thus we have that $\dist_{G-(X\cup \Pub_i\cup \{s,t\})}(v,w)\le d$. For any $w\in Z_1$, Claim~\ref{cl:distance-d-inside} shows that this can happen only for at most $(10d^2h)^{d+3}$ different paths $P_i$, thus in total this case can happen for at most $(10d^2h)^{d+1}k_1$ different $i$.
  \end{enumerate}
  We say that a path $P'_i$ is {\em bad} if either $|Q_i|>10dr$ or $V(P'_i)\setminus Q_i$ is not disjoint from $N^d_G[Z_0\cup Z_1]$.
The number of bad paths can be bounded by using the bound in Claim~\ref{claim:Qbound} and the bounds in the last two cases above.  
    As $q$ is larger than the sum $(10d^2h)^{d+3} k_0/r+p(10d^2h)^{d+2}\lambda+(10d^2h)^{d+3}k_1$ of these three bounds, there is an $i$ such that $V(P'_i)\setminus Q_i$ is disjoint from $N^d_G[Z_0\cup Z_1]$ and $|Q_i|\le 10dr$. In other words, if we let $I\subseteq [q]$ contain $i$ if $|Q_i|\le 10dr$, then returning the $q'=|I|$ paths $P'_i$, $i\in I$ and the sets $Q_i$ satisfies the requirements of the lemma.
  \end{proof}

  \paragraph{Handling a sequence of separators.}
Next we show how a chain of separators can be modified so that the distance-$d$ neighborhood of each vertex of $Z$ can influence only a bounded number of vertices. The assumption that the graph is $K_{3,h}$-minor-free will be again used to limit the influence of every vertex on the separators in the chain.

\begin{lemma}\label{lem:sepchain-distance}
There is a polynomial-time algorithm that given a connected $K_{3,h}$-minor-free graph $G$,
positive integers $p,q,d,\lambda$ satisfying $q\ge 2$ and $p\ge 100d\log_2q$, 
a pair $s,t \in V(G)$ of different vertices, a set $X\subseteq V(G)\setminus \{s,t\}$
such that $G[X]$ has at most $\lambda$ components,
and a chain $(C_1,\ldots,C_p)$ of $(s,t)$-separators in $G-X$ of size at most $2q$,
outputs for some $p'\le p$ a sequence $(C'_1,\ldots, C'_{p'})$ of $(s,t)$-separators
in $G-X$ of size at most $2q$ so that the following holds: 
\begin{enumerate}
\item for every $j \in [p']$ and $v \in V(G)$, 
\[ \left|V(C'_j) \cap N^d_G[v]\right| \leq (10d^d h)^{2d+1}(1+ \lambda\log_2(4q));\textrm{and}\]
\item 
for every set $Z\subseteq V(G)$ of size at most $k$, there is an index $j\in [p']$ such that \[\left|V(C'_j)\cap N^d_G[Z]\right|\le      (k/p) \cdot \log_2 q  \cdot 50d(10d^dh)^{2d+1}\cdot \lambda.\]
\end{enumerate}
\end{lemma}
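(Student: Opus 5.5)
We follow the plan sketched in the overview: replace each separator $C_j$ by a minimum-weight separator under a weight function that grows exponentially with the distance from $C_j$. Then we take a subsequence with large index gaps, which makes the new separators disjoint.

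\textbf{Construction.} For every $j\in[p]$ define $w_j(v)=2^{\dist_{G-X}(v,C_j)}$ for $v\in V(G-X)\setminus\{s,t\}$. Let $C'_j$ be a minimum-$w_j$-weight $(s,t)$-separator in $G-X$, computed by a min-cut. We break ties consistently, e.g.\ choosing the separator closest to $s$, so that standard uncrossing applies. Since $C_j$ has weight $|C_j|\le 2q$ and every vertex at distance $\ge \log_2(2q)+1$ from $C_j$ has weight $>2q$, we get two facts:
\begin{itemize}
\item $|C'_j|\le w_j(C'_j)\le 2q$;
\item $C'_j\subseteq N^{L}_{G-X}[C_j]$, where $L=\lfloor\log_2(2q)\rfloor$.
\end{itemize}
As $C_1,\ldots,C_p$ form a chain, vertices within distance $L$ of $C_j$ lie in the "layers" between $C_{j-L}$ and $C_{j+L}$. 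Hence if $|i-j|>2L+1$, the sets $C'_i$ and $C'_j$ are disjoint. We output every $(2L+2)$-th separator, so $p'\ge p/(2L+2)\ge p/(4\log_2 q)$ roughly. The hypothesis $p\ge 100d\log_2 q$ guarantees that $p'$ is large. If the separators of the subsequence are also chosen at index gap $\gg d$, then the $d$-neighborhoods (in $G-X$) of distinct output separators are also almost disjoint.

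\textbf{Property 1 (main obstacle).} Fix $v$ and suppose $|C'_j\cap N^d_G[v]|$ is large. Minimality of $C'_j$ gives, for each $u\in C'_j$, an $(s,t)$-path $P_u$ in $G-X$ meeting $C'_j$ only in $u$. First we handle the paths from $v$ to $u$ of length $\le d$ that pass through $X$. These come from at most $\lambda$ components of $X$, each contracted to a single vertex as in Lemma~\ref{lem:almostdisjoint-distance}. The extra factor $\log_2(4q)$ is meant to absorb vertices whose shortest route passes through $X$, using distance classes relative to $C_j$. Then pigeonhole reduces to a single "center" $c$: either $v$ itself or a contracted component.

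The key exchange argument goes as follows. Suppose many $u\in C'_j$ are reachable from $c$ by short paths $W_u$, and $c$ (or vertices of the $W_u$) lie on many $P_u$. Then replacing those $u$ by the vertices of the $W_u$ near $c$ yields a separator of smaller $w_j$-weight. The weights differ by at most a factor $2^{d}$ across distance $d$, and this is where the $(10d^dh)^{2d+1}$ comes from: one iterates over the at most $d$ positions along $W_u$. This contradicts minimality. Otherwise the $W_u$ avoid the $P_{u'}$ for many pairs. We then find an independent subfamily via Lemma~\ref{lem:A-B-disjoint} and Lemma~\ref{lem:outdeg}, exactly as in Claim~\ref{cl:handlepaths}. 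Contracting the $s$-side of each $P_u$ into $s$, the $t$-side into $t$, and $W_u\setminus\{u\}$ into $c$ yields a $K_{3,h}$ minor, a contradiction. I expect the delicate part to be making the exchange argument rigorous: showing that the new separator really cuts all $(s,t)$-paths, and that its weight is strictly smaller.

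\textbf{Property 2.} This follows by averaging. Property 1 bounds $|C'_j\cap N^d_G[z]|$ by $B:=(10d^dh)^{2d+1}(1+\lambda\log_2(4q))$ for each $z\in Z$, but we need $z$ to contribute to only $O(d)$ of the output separators. Since any vertex of $N^d_G[z]$ outside the reach of $X$ lies within $d$ layers of $z$, and the outputs are spaced $\ge 2L+2> 2d$ apart, each $z$ hits $O(1)$ separators directly. Hits through $X$ are charged to the $\lambda$ components, each of which is itself handled by Property 1. Summing over $z\in Z$ gives total incidence $O(kB)$. Dividing by $p'\ge p/(4\log_2 q)$ and absorbing constants yields an index $j$ with
\[|C'_j\cap N^d_G[Z]|\le (k/p)\log_2 q\cdot 50d(10d^dh)^{2d+1}\lambda.\]
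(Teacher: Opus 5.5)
\textbf{Verdict: genuine gap.} Your construction is the same as the paper's: weights $2^{\dist_{G-X}(v,C_j)}$, minimum-weight separators, and a spaced subsequence. But the exchange argument at the heart of Property~1 fails as stated, and this is a missing idea, not only missing rigor.

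\textbf{Why the exchange argument fails.} The witness paths $P_u$ you get from inclusion-minimality of $C'_j$ are arbitrary. A vertex $x$ (or a few vertices of the $W_u$ near $c$) lying on many of them does not give a cheaper separator, because other $(s,t)$-paths through those $u$ may avoid $x$. For example, let $s$ be adjacent to $x$ and to every $u_i\in C'_j$, and let $x$ be adjacent to every $u_i$. You may have chosen every $P_{u_i}$ through $x$, yet $\{x\}$ replaces none of the $u_i$ and $C'_j$ remains optimal. So the dichotomy ``either a cheaper separator, or the $W_u$ avoid the $P_{u'}$'' is false for a fixed choice of the $P_u$; the paths must be chosen with the $W_u$ in mind. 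For the same reason, Lemma~\ref{lem:A-B-disjoint} cannot be applied ``exactly as in Claim~\ref{cl:handlepaths}''. There, $W_j$ can meet $P_i$ only in the at most $2d+1$ public vertices of $P_i$ near layer $\ell$. For witness paths of a separator there is no bounded set of possible intersection points.

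\textbf{The missing idea: a capacitated flow.}
\begin{enumerate}
\item First restrict to a set $U'\subseteq C'_j$ of vertices sharing one weight $m$. This pigeonhole over weight classes is where the factors $2d+1$ (for $v\in V(G-X)$) and $\log_2(4q)$ (for $v$ a contracted component of $X$) come from.
\item By minimality of $C'_j$, the set $U'$ is a minimum $w_j$-weight $(s,t)$-separator in $G^*=G-(X\cup(C'_j\setminus U'))$.
\item Max-flow/min-cut with vertex capacities $w_j$ then gives a path family in which each $u\in U'$ lies on exactly $m$ paths and each vertex $x$ lies on at most $w_j(x)$ paths.
\item Every vertex of $W_{u_1}$ outside $X$ is within distance $d$ of $u_1$, so its weight is at most $2^d m$. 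Hence $W_{u_1}$ meets at most $d2^dm$ flow paths. It therefore meets at least $m/(h-1)$ of the $m$ paths through $u_2$ for at most $d2^d(h-1)$ vertices $u_2$.
\item Lemma~\ref{lem:outdeg} now yields $h$ vertices $u$, each having a path through $u$ that avoids all other $W_{u'}$.
\end{enumerate}
The capacity constraint is the rigorous form of your exchange intuition.

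\textbf{Further details the contraction needs.}
\begin{itemize}
\item The $W_u$ should share only one vertex. Take a depth-$d$ shortest-path tree from $c$ and a node with many children.
\item $W_u$ may meet its own $P_u$ before reaching $u$. Contract that final segment into $u$.
\item $s$ and $t$ must be at distance more than $d$ from every output separator, so that no $W_u$ passes through them. Keep the chosen indices more than $\log_2(2q)+d$ away from both ends of the chain.
\end{itemize}

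\textbf{Property~2.} The hits on $C'_j$ that go through $X$ cannot be summed into $O(kB)$. A single component of $X$ may be within distance $d$ of every output separator, so this contribution is per separator (at most $\lambda(10d^dh)^{2d}\log_2(4q)$ each), not per element of $Z$. It must be carried as an additive term in the averaging, as the paper does. Also, spacing $2L+2$ need not exceed $2d$ when $d>\log_2 q$. You need spacing about $2\log_2(2q+1)+2d$ so that each vertex of $G-X$ is within distance $d$ of at most one output separator. That is what yields $p'\ge p/(50d\log_2 q)$ and the factor $50d$ in the bound.
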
  

\begin{proof}
  Let \[z\coloneqq 2\log_2(2q+1)+2d.\] Note that $z\le 10d\log_2q\le p/10$, because $d\ge 1$ and $q\ge 2$.
We define the separators $C'_1$, $C'_2$, $C'_3$, $\dots$ based on the separators $C_{z+d+1}$, $C_{2z+d+1}$, $C_{3z+d+1}$, $\dots$ the following way. 
Let \[J=[\floor{(p-2(d+1))/z}-2]\qquad \textrm{and}\qquad \iota(j)=jz+d+1.\] Observe that for $j\in J$, we have $d+1+z\le \iota(j) \le p-(z+d+1)$. Also, from the bounds on $p$, $q$, $d$, and $z$, we have that \begin{eqnarray*}|J| & \ge &  (p-2(d+1))/z-3\ge (96/100)(p/z)-3  \ge (96/100)(3/10)(p/z)\\
	& \ge & (96/100)(3/10)(p/(10d\log_2 q))>p/(50d\log_2 q).\end{eqnarray*}

We define the separator $C'_j$ based on $C_{\iota(j)}$ as follows.  Consider the weight function on $G-X$ defined as \[w_j(v)\coloneqq 2^{\dist_{G-X}(v,C_{\iota(j)})}.\] Let $C'_{j}$ be a minimum weight $(s,t)$-separator under this weight function.  Observe that $w(C_{\iota(j)})=|C_{\iota(j)}|=2q$, hence $w(C'_{j})\le 2q$ as well; in particular, $|C'_{j}|\le 2q$. We remark that, by definition, $s,t\not\in C'_j$, hence their weight is not relevant when  choosing $C'_j$. However, shortest paths may go through $\{s,t\}$, hence the vertices $s$ and $t$ can influence the weight of other vertices.

First we show that, the definition of $C'_j$ ensures that these sets are pairwise disjoint. In fact, no vertex of $G-X$ can be close to two such sets. 

  \begin{claim}\label{cl:sepdisjoint}
    For every vertex $v$ of $G-X$, there is at most one index $j\in J$ such that $\dist_{G-X}(v,C'_{j})\le d$. Furthermore, for $v\in \{s,t\}$, there is no such $j\in J$.
\end{claim}    
\begin{claimproof}
  Suppose that there are two such values $j_1\neq j_2$, and that $u_1\in C'_{j_1}$ and $u_2\in C'_{j_2}$ are at distance at most $d$ from $v$ in $G-X$.
Since $(C_1,\ldots,C_p)$ form a chain of $(s,t)$-separators in $G-X$, sets $C_{\iota(j_1)}$ and $C_{\iota(j_2)}$ are at distance at least $z$ from each other in $G-X$. Therefore, without loss of generality, we can assume that $\dist_{G-X}(v,C_{\iota(j_1)})\ge z/2$. This implies  $\dist_{G-X}(u_1,C_{\iota(j_1)})\ge z/2-d \ge \log_2(2q+1)$. But then $w(u_1)\ge 2q+1$, contradicting the fact that $w(C'_{j_1})\le 2q$.

For the second statement, suppose that $\dist_{G-X}(s,u)\le d$ for some $u\in C'_j$. As $\iota(j)\ge z+d+1$, we have $\dist_{G-X}(s,C_{\iota(j)})\ge z+d+1$, hence $\dist_{G-X}(u,C_{\iota(j)})\ge z+1$. Therefore, we have that $w_j(v)\ge 2^{z+1}>2q$, a contradicting the fact that $w(C'_j)\le 2q$.
\end{claimproof}  

 Let $G'$ be the graph obtained from $G$ by contracting each component of $G[X]$ into a single vertex; let $X'$ be the set of these contracted vertices.
If a vertex $u\in C'_j$ is in $N^d_G[Z]$, then either $v$ is at distance at most $d$ from some vertex of $Z\setminus X$ in $G-X$, or $u$ is at distance at most $d$ from one of the at most $\lambda$ components of $G[X]$. In other words, either $u$ is at distance at most $d$ from some $v\in Z\setminus X$ in $G-X$, or $u$ is at distance at most $d$ from some $x'\in X'$ in $G-(X\setminus \{x'\})$ (i.e., without using any other contracted vertex). The following claim will be used to give a unified argument bounding the number of both types of vertices in $C'_j$; however, the claim in this form only bounds the number of vertices with the same weight under the weight function~$w_j$. Note that if $v\not\in X'$, then $G'-(X'\setminus \{v\})$ is simply $G'-X'$, and if $v\in X'$, then $v$ is the only vertex of $G'-(X'\setminus \{v\})$ outside $G'-X'$.

\begin{claim}\label{cl:handlepathssep}
    Consider any $j\in J$. Let $v$ be a vertex in $V(G')$ and let $U\subseteq C'_j$ be a subset such that $w_j(u)$ is the same for every $u\in U$ and $\dist_{G'-(X'\setminus \{v\})}(v,u)\le d$ for every $u\in U$. Then we have $|U|\le (10d^dh)^{2d}$.
 \end{claim}
 \begin{claimproof}
   Suppose that $|U|>(10d^dh)^{2d}$. If $v\in U$, then let us remove this one vertex from $U$; in the following, we assume $v\not\in  U$ and $|U|\ge (10d^dh)^{2d}$.

   Consider a shortest path tree from $v$ to $U$ in $G'-(X'\setminus \{v\})$; this tree has depth at most~$d$. Thus, if every node had less than $10dh^22^d$ children, then the tree would have less than $d(10dh^22^d)^d<|U|$ nodes; contradiction. This means that we can assume that there is a node $v'$ with at least $10dh^22^d$ children. Let $U'\subseteq U$ be obtained by selecting, for each child $v''$ of $v'$, a vertex of $U$ that can be reached from $v'$ via $v''$ in the shortest path tree. For every $u'\in U$, let $W'_u$ be the $v'-u'$ path in the shortest path tree. Note that $v'$ is the only vertex shared by these paths. Because $v\not\in C'_j$, path $W'_u$ is of length at least 1.

  Let us observe two technicalities. First, we claim that only one vertex of $W'_u$ can be possibly in $X'$: its endpoint $v'$, and only if $v=v'$. Indeed, if $v\not\in X'$, then  $W'_u$ is fully contained in $G'-(X'\setminus \{v\})=G'-X'$. If $v\in X'$, then either $v'=v$ and hence $v\in X'$ is the endpoint of $W'_u$, or $v\neq v'$, but then $v$ is not contained in any of the paths $W'_u$. In particular, this means that $W'_u\setminus X'$ is connected. Second, $W'_u$ cannot contain $s$ or $t$: by Claim~\ref{cl:sepdisjoint}, these vertices are at distance more than $d$ from $C'_j$ in $G'-X'$ (and only the endpoint of $W'_u$ can be outside $G'-X'$). In particular, $v'$ is not in $\{s,t\}$.

  Let us consider the graph $G^*=G-(X\cup (C'_{j}\setminus U'))$. Note that $U'$ is an $(s,t)$-separator in~$G^*$.  Let us give infinite capacity to $s$ and $t$, and capacity $w_j(v)$ to every other vertex. If there is an $(s,t)$-separator $S$ of weight less than $w_j(U')$ in $G^*$, then $(C'_{j}\setminus U')\cup S$ is an $(s,t)$-separator in $G-X$ having weight strictly less than $w_j(C'_{j})$, contradicting the choice of $C'_{j}$. Thus by flow-cut duality, we can assume that in $G^*$ there is a collection $\mathcal{P}$ of $s-t$ paths such that  $|\mathcal{P}|=w_j(U')$ and every vertex $v$ of $G^*$ is contained in at most $w_j(v)$ of the paths of $\mathcal{P}$. As $U'$ is an $(s,t)$-separator of weight $w_j(U')$ in $G^*$, it also follows that every path in $\mathcal{P}$ contains exactly one vertex of $U'$ and every vertex $u\in U'$ is contained in exactly $w_j(u)$ such paths; let $\mathcal{P}_u\subseteq \mathcal{P}$ be the set of these paths. Note that $\{\mathcal{P}_u\mid u\in U'\}$ is a partition of $\mathcal{P}$. 

We build a directed graph $H$ on the vertex set $U'$ the following way. Let $m$ be the common value of $w_j(u)$ for every $u\in U'$. We introduce an edge $\overrightarrow{u_1u_2}$ if $W'_{u_1}$ intersects at least $m/(h-1)$ of the paths in $\mathcal{P}_{u_2}$. We have observed that $W'_{u_1}\setminus X'$ is connected, hence every vertex of  $W'_{u_1}\setminus X'$ is at distance at most $d$ from $u_1$ in $G'-X'=G-X$. This means that $w_j(v)\le 2^d\cdot m$ for every $v\in W'_{u_1}\setminus X'$. It follows that $W'_{u_1}$ can intersect at most $2^d\cdot dm$ paths. In particular, there are at most $2^d\cdot d(h-1)$ vertices $u_2\in U$ such that $W'_{u_1}$ intersects at least $m/(h-1)$ paths in $\mathcal{P}_{u_2}$. Thus the outdegree of $u_1$ is at most  $2^d\cdot d(h-1)$.

By Lemma~\ref{lem:outdeg}, the directed graph $H$ has an independent set $U''$ of size bounded as follows: \[|U''|\geq |U'|/(1+2^d\cdot 2d(h-1))\ge h.\]
For every $u\in U''$, each of the $t-1$ sets $W'_{u'}$ for $u'\in U''\setminus \{u\}$ intersects less than $m/(h-1)$ of the $m$ paths in $\mathcal{P}_u$, hence we can select a path $P_u\in\mathcal{P}_u$ that is disjoint from $W'_{u'}$ for every $u'\in U''\setminus \{u\}$.
Note that $v'\not\in P_u$ for any $u\in U''$: otherwise, $W'_{u'}\cap P_u$ would contain $v'$ for any $u'\in U''\setminus \{u\}$ (here we use $|U''|\ge 2$).

Intuitively, we would like to obtain a $K_{3,h}$-minor by the following sequence of contractions: (1) contracting the $s-u$ part of $P_u$ (without $u$) into $s$, (2) contacting the $u-t$ part of $P_u$ (without $u$) into $t$, and (3) contracting every $W'_u$ (without $u$) into $v'$. However, because $W'_u$ can intersect $P_u$, these contractions have to be performed in a careful way (see Figure~\ref{fig:contract}).
We claim that performing the following contractions in $G'-(X'\setminus v)$ for every $u\in U''$ creates a $K_{3,h}$-subgraph with $\{s,t,v\}$ on one side and the vertices of $U''$ on the other side:
 \begin{itemize}
 \item First, starting from $v'$, let vertex $u'$ be the first vertex of $W'_u$ that is in $P_u$ (note that $W'_u$ contains $u\in P_u$). Let us contract the $u'-u$ subpath of $W'_u$ into $u$. Note that $v'\not\in P_{u}$ implies $u'\neq v'$. Let us contract the subpath of $W'_u$ from $v'$ to the vertex preceding $u'$ into~$v'$. 
   \item After this contraction, $P_u-u$ consists of at least two components (possibly more if the contracted subpath of $W'_u$ intersects $P_u$ multiple times). Let us contract the component of the subgraph $P_u-u$ containing $s$ into vertex $s$, and the subpath of $P_u-u$ containing $t$ into vertex $t$.
   \end{itemize}
   Let us verify that the sets contracted to different vertices (i.e., to some $u\in U''$, $s$, $t$, or $v'$) are disjoint. The contractions to different vertices $u\in U''$ are disjoint because the paths $W'_u-v'$ are disjoint for different $u\in U''$. As $P_{u_1}$ and $W'_{u_2}$ are disjoint if $u_1\neq u_2$, no vertex of $W'_{u_2}$ is contracted into $s$ or $t$.  Since $U'$ is an $(s,t)$-separator in $G^*$ and each path $P_u$ intersects $U'$ in a single vertex, the $s-u_1$ subpath of $P_{u_1}$ cannot intersect the $t-u_2$ subpath of $P_{u_2}$ for any $u_1,u_2\in U''$. Thus indeed the contractions result in a $K_{3,h}$ subgraph in $G'-(X'\setminus v)$. As  $G'-(X'\setminus v)$ is a minor of $G$, this contradicts the assumption that $G$ is $K_{3,h}$-minor free.
   \begin{figure}
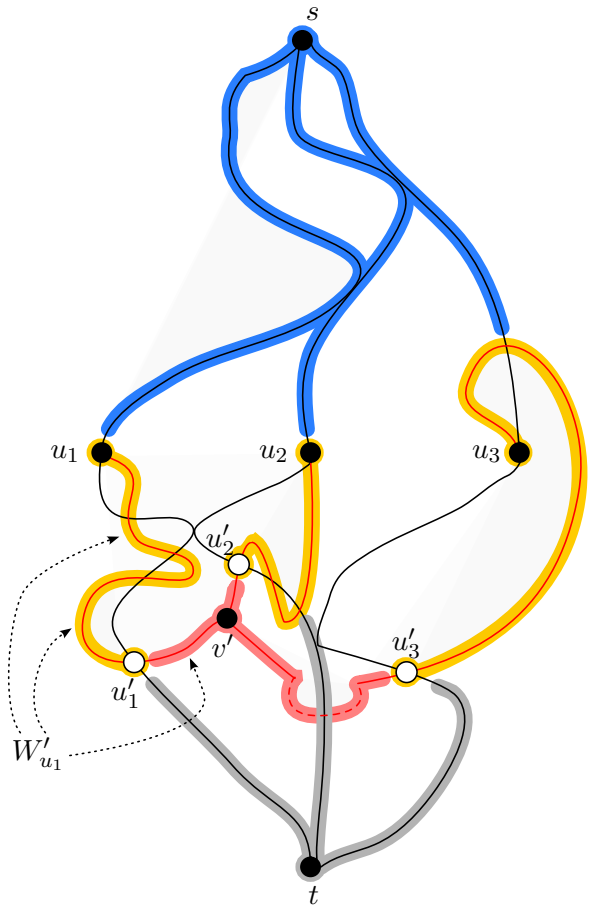

     \svgc{0.5\linewidth}{contract}
     \caption{Contractions to obtain a $K_{3,3}$-minor. The paths highlighted in yellow are contracted into $u_1$, $u_2$, and $u_3$. The pahs highlighted in blue, gray, and red are contracted into $s$, $t$, and $v'$, respectively.}
     \label{fig:contract}
   \end{figure}
 \end{claimproof}

Using Claim~\ref{cl:handlepathssep}, it is easy to bound the number of vertices contained in a separator $C'_j$ that can be close to some fixed vertex $v$ in $G-X$. 
\begin{claim}\label{cl:sepinside}
 For every vertex $v$ of $G-X$ and every $j\in J$, there are at most $(10d^dh)^{2d+1}$ vertices $u\in C'_{j}$ with $\dist_{G-X}(v,u)\le d$.
\end{claim}
\begin{claimproof}
  Let $U_1\subseteq C'_{j}$ be the set of these vertices and assume that there are more than $(10d^dh)^{2d+1}$ such vertices. As every vertex of $U_1$ is at distance at most $d$ from $v$ in $G-X$, they are at distance at most $2d$ from each other in $G-X$. This implies that
   there are at most
$2d+1$ different values of $w_j$ on $U_1$ and hence there is a subset $U_2\subseteq U_1$ of size larger than $|U_1|/(2d+1)\ge (10d^dh)^{2d}$ such that $w_j(u)$ is the same for every $u\in U_2$.
However, Claim~\ref{cl:handlepathssep} bounds the size of $U_2$ by $(10d^dh)^{2d}$, a contradiction.
\end{claimproof}

Similarly, Claim~\ref{cl:handlepathssep} allows us to bound the number of vertices of $C_j'$ close to $X$ in~$G$.

 \begin{claim}\label{cl:sepoutside}
 For every $j\in J$, there are at most $\lambda (10d^dh)^{2d} \log_2(4q)$ vertices $u\in C'_{j}$ satisfying $\dist_{G}(u,X)\le d$.
\end{claim}
\begin{claimproof}
  Let $U_1\subseteq C'_{j}$ be the set of these vertices and suppose that $|U_1|>\lambda (10d^dh)^{2d} \log_2(4q)$.
  Recall that $w_j(u)\le 2q$ for every $u\in C'_{j}$, hence there are at most
  $\log_2(4q)$ different values of $w_j(u)$ for $u\in U_1$. Therefore, there is a subset $U_2\subseteq U_1$ of size at least $|U_1|/\log_2(4q)\ge \lambda (10d^dh)^{2d}$  such that $w_j(u)$ is the same for every $u\in U_1$.

  For any $u\in C'_j$, the condition $\dist_{G}(u,X)\le d$ can be also written as $\dist_{G'}(u,X')\le d$: closeness to $X$ in $G$ means closeness to a contracted vertex of $X'$ in $G$.  As there are at most $\lambda$ vertices in $X'$, this means that there is an $x'\in X'$ and a subset $U_3\subseteq U_2$ of size more than $|U_2|/\lambda=(10d^dh)^{2d}$ such that $\dist_{G'-(X\setminus \{x'\})}(u,x')\le d$ for every $u\in U_3$. However, Lemma~\ref{cl:handlepathssep} bounds the size of $U_3$ by $(10d^dh)^{2d}$, a contradiction.
\end{claimproof}

Claims~\ref{cl:sepinside} and~\ref{cl:sepoutside} allow us to bound $|C_j' \cap N_G^d[v]$ for any
$v \in V(G)$. 
\begin{claim}\label{cl:septotal}
  For every $j \in J$ and $v \in V(G)$ there are at most $(10d^d h)^{2d+1}(1+ \lambda\log_2(4q))$ vertices
  $u \in C'_j$ satisfying $\dist_G(v,u) \leq d$. 
\end{claim}
\begin{claimproof}
  Let $u \in C'_j$ be such a vertex and let $Q$ be a shortest path from $u$ to $v$.
  If $V(Q) \cap X = \emptyset$, then $\dist_{G-X}(u,v) \leq d$ and
  there are at most $(10d^d h)^{2d+1}$ such vertices $u$ by Claim~\ref{cl:sepinside}.
  Otherwise, $\dist_G(u,X) \leq d$ and there are at most 
  $\lambda (10d^d h)^{2d} \log_2(4q)$ such vertices $u$ by Claim~\ref{cl:sepoutside}. 
  The claim follows.
\end{claimproof}

Claim~\ref{cl:septotal} establishes the first point of the lemma statement. For the second point,
let $C^*\coloneqq \bigcup_{j\in J} C'_{j}$ (note that this is a disjoint union by Claim~\ref{cl:sepdisjoint}) and let 
 $Z\subseteq V(G)$ be a set of at most $k$ vertices. We want to prove an upper bound on $|C^*\cap N^d_G[Z]|$. Let $u\in C^*\cap N^d_G[Z]$ and assume that $u\in C'_{j}$. At least one of the two cases holds for $u$:
 \begin{enumerate}
 \item $\dist_{G-X}(u,z)\le d$ for some $z\in Z$. By Claim~\ref{cl:sepdisjoint}, vertex $z$ can be at distance at most $d$ from $C'_j$ in $G-X$ only for at most one $j\in J$. For this value of $j$, by Claim~\ref{cl:sepinside}, vertex $z$ can be at distance at most $d$ from at most $(10d^dh)^{2d+1}$ vertices of $C'_j$. Thus in total there are at most $k \cdot (10d^dh)^{2d+1}$ vertices $u\in C^*$ satisfying this case.
   \item $\dist_{G}(u,X)\le d$. By Claim~\ref{cl:sepoutside}, there are at most $\lambda (10d^dh)^{2d} \log_2(4q)$ such vertices for each $j\in J$, hence there are at most $|J|\cdot \lambda (10d^dh)^{2d} \log_2(4q)$ such vertices in total.
   \end{enumerate}
   Thus we can bound  $|C^*\cap N^d_G[Z]|$ by $k \cdot (10d^dh)^{2d+1}+|J|\cdot \lambda (10d^dh)^{2d} \log_2(4q)$. By the disjointness of the sets $C'_{j}$ and the bound $|J|\ge p/(50d\log_2 q)$ established earlier, there is a $j\in J$ such that
   \begin{multline*}
|C'_{j}\cap N^d_G[Z]|\le k(10d^dh)^{2d+1} /|J|+ \lambda (10d^dh)^{2d} \log_2(4q)\\
     \le (k/p) \cdot 50d(10d^dh)^{2d+1}\log_2 q+\lambda (10d^dh)^{2d} \log_2 {4q}\\\le
     (k/p) \cdot \log_2 q \cdot 50d(10d^dh)^{2d+1}\cdot \lambda.
   \end{multline*}
 This finishes the proof.
\end{proof}  

\paragraph{Proof of Theorem~\ref{thm:dualdistance}.} We are now in position to complete the proof of Theorem~\ref{thm:dualdistance}. In a nutshell, if we have a $(p,q)$-structure with a large $p$, then Lemma~\ref{lem:sepchain-distance} gives us the required separators, while for a small $p$, Lemma~\ref{lem:almostdisjoint-distance} gives the required set of almost disjoint paths.

\begin{proof}[Proof of Theorem~\ref{thm:dualdistance}]
  Let $p$ and $q$ be given. Let us use the algorithm of Theorem~\ref{thm:dual0} to find a $(p_0,q_0)$-structure in $(G-X,s,t)$ for $q_0=q$. If $p_0 \ge p$, then we obtain a chain $(C_1,\dots,C_p)$ of $(s,t)$-separators in $G-X$ of size at most $q$ each. Invoking Lemma~\ref{lem:sepchain-distance} on this chain gives a sequence of separators such that one of them has the required bound on the intersection with~$N^d_G[Z]$.

  If $p_0<p$, then we invoke Lemma~\ref{lem:almostdisjoint-distance} on the $(p_0,q_0)$-structure with $r=p>p_0$. Note that  $q_0=q\ge (p+k_0/p+k_1)(10d^2h)^{d+3}\lambda>(p_0+k_0/r+k_1)(10d^2h)^{d+3}\lambda$ holds. Therefore, we get the required set of paths $P_i$ and sets $Q_i$ with $|Q_i|\le 10dr=10dp$ for every $i$.
\end{proof}

\newcommand{\sC}{\widetilde{C}}
\section{Separator improvement process}\label{sec:improve}

In this section we provide a generalization of the ``separator improvement'' process of Nederlof~\cite{DBLP:conf/stoc/Nederlof20a}
to graphs excluding a fixed minor. 
(Recall the notation $[d \geq 1]$ that equals $1$ if $d \geq 1$ and $0$ otherwise.)

\begin{theorem}\label{thm:improve}
  For every fixed integers $c,d \geq 0$ and $h \geq 4$,
  there exists an algorithm with the following specification.
  
  Denote $H=K_h$ if $d=0$ and $H=K_{3,h-3}$ if $d \geq 1$. 
  On input, the algorithm receives an $H$-minor-free graph $G$,
  an integer $k \geq 2$,
  a set $Z_0 \subseteq V(G)$,
  a separation $(A^\circ,B^\circ)$ in $G$
  such that $|N_G^d[v] \cap A^\circ \cap B^\circ| \leq c$ for every $v \in V(G)$,
  and a spanning forest $F^\circ$ of $G[A^\circ \cap B^\circ]$ 
  with less than $h$ connected components and maximum degree at most $h$. 
  We denote 
  $k_0 \coloneqq |Z_0|$ and 
  $\ell \coloneqq |A^\circ \cap B^\circ|$.

  The algorithm computes a family $\mathcal{S}$ of tuples $(A,B,C,\sC)$ of subsets of $V(G)$
  satisfying the following properties:
  \begin{enumerate}
  \item For every $(A,B,C,\sC) \in \mathcal{S}$, we have that
  \begin{itemize}
  \item $(A,B)$ is a separation in $G$,
  \item $\sC \subseteq C \subseteq A \cap B$,
  \item $|C| \leq \ell + \Oh_{d,h}((k+k_0/\sqrt{k}) \log k)$,
  \item $A \cap B \cap N^{d}[Z_0] \subseteq C$,
  \item $|\sC| \leq \Oh_{d,h}((\sqrt{k}+ \log^{[d \geq 1]}(k+k_0)) \log k)$, and
  \item for every $v \in V(G)$, $|(C \setminus \sC) \cap N_G^d[v]| = \Oh_{d,h}(c \log(k+k_0) \log k)$.%
\footnote{Note that this condition is trivially satisfied for $d=0$.}
  \end{itemize}
  \item For every positive integer $\theta$, set $W \subseteq V(G)$, and a family $\mathcal{Z}$
  of pairwise disjoint $d$-clusters in $G$
  such that 
   \begin{itemize}
   \item $|W| \geq 4\theta$,
   \item $|A^\circ \cap W| \geq \theta$, $|B^\circ \cap W| \geq \theta$, and
   \item $|\mathcal{Z}| \leq k$, 
   \end{itemize}
  there exists a tuple $(A,B,C,\sC) \in \mathcal{S}$ such that
  \begin{itemize}
  \item 
  $|A \cap W| \geq \theta$, $|B \cap W| \geq \theta$, 
  \item $A \cap B \cap \bigcup \mathcal{Z} \subseteq C$, 
  \item $|A \cap B \cap \bigcup \zZ|  =|C \cap \bigcup \mathcal{Z}|$ is bounded by
  \[  \Oh_{d,h}\left(c \left(\sqrt{k}+k_0/\sqrt{k}\right) \log^{[d \geq 1]} (k+k_0) \log k\right)  \]
  \item furthermore, if 
   \[ |\{K \in \zZ~|~K \cap A^\circ \cap B^\circ \neq \emptyset\}| > h(h+1) \sqrt{k} \]
  then \[ |\{K \in \zZ~|~K \cap C \neq \emptyset\}| \geq \sqrt{k}/c.\]
  \end{itemize}
  \item The size of the family $\mathcal{S}$ is bounded by 
   \[ \left(\Oh_{d,h}((k+k_0)\ell)\right)^{\Oh_h(\log k)}. \]
  \item The running time bound is bounded 
  polynomially in the size of $G$ and the upper bound on the size of $\mathcal{S}$
  from the previous point.
  \end{enumerate}

  Furthermore, the algorithm can be modified to a polynomial-time randomized algorithm
  that, given the same input, samples a tuple  $(A,B,C,\sC) \in \mathcal{S}$
  such that every tuple is sampled with probability at least
  \[ \left(\Oh_{d,h}((k+k_0)\ell)\right)^{-\Oh_h(\log k)}. \]
  If the algorithm is additionally given on input a set $W$ and $\theta$ as above, 
  it samples only among tuples $(A,B,C,\sC)$ satisfying $|A \cap W| \geq \theta$, $|B \cap W| \geq \theta$
  (that is, these inequalities hold with probability $1$).
\end{theorem}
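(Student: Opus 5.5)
The plan is to follow the outline given after \cref{thm:simplifiedimprove}: an iterative process that lasts $\Oh_h(\log k)$ rounds. Each round makes a bounded number of guesses, and the family $\mathcal{S}$ consists of all outcomes over all guess sequences. For the randomized version, each guess is made uniformly at random. We maintain a current separation $(A,B)$, a current sparse set $C$, and a current difficult set $\sC$. We start from $(A,B)=(A^\circ,B^\circ)$ and $C=A^\circ\cap B^\circ\cup$ (vertices of $Z_0$-neighbourhood, handled below), with $\sC=\emptyset$. We also keep the current ``active'' part $J\subseteq A\cap B$ of the original separator, a subset of $A^\circ\cap B^\circ$. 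Every vertex of $A\cap B$ outside $J\cup C$ is guaranteed to be disjoint from $N^d[Z_0]\cup\bigcup\zZ$; it is the irrelevant part $D$.

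In one round, we first guess a partition of the active part of $F^\circ$ into at most $h$ connected pieces $J_1,\dots,J_h$. We do this by guessing at most $h-1$ edges of $F^\circ$ to cut, which gives $\ell^{\Oh(h)}$ choices. We want the pieces to be balanced with respect to the number of clusters of $\zZ$ they meet. This is possible up to a factor depending on $h$ because $F^\circ$ has bounded degree and fewer than $h$ components. Since $|N^d_G[v]\cap A^\circ\cap B^\circ|\le c$, each cluster meets at most $\Oh(c)$ vertices of the separator, so counting vertices versus counting clusters loses only a factor $c$.

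We then iterate over the pairs $(i,j)\in\binom{[h]}{2}$. For each pair we contract $J_i$ to $s$ and $J_j$ to $t$, delete the other pieces together with the paths found so far, and call them $X$. Then $X$ has $\lambda=\Oh(h^2)$ components, because each path is attached to the pieces. On this graph we invoke \cref{thm:dualdistance} (or \cref{cor:dual1} if $d=0$) with $p\sim\sqrt{k}$ and $q\sim\mathrm{poly}(k+k_0)$. In the path outcome, we guess one of the $q$ paths $P_{i,j}$ together with its $Q_{i,j}$, and then continue with the next pair. If the path outcome occurs for every pair, the pieces plus the paths form a $K_h$-minor; when $d\ge1$, contracting suitably gives a $K_{3,h-3}$-minor, using three pieces as the hubs. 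This contradicts the assumption, so some pair yields a chain. We guess the good separator $C_j$ from the chain, which gives $\Oh(\sqrt{k})$ choices. The resulting $J_i$–$J_j$ separator in $G$ is $S=C_j\cup\bigcup P_{i',j'}\cup\bigcup_{i''\ne i,j}J_{i''}$. Its vertices in $\bigcup P\setminus\bigcup Q$ are irrelevant, so they go to $D$. The sets $Q$ have total size $\Oh(h^2 d\sqrt{k})$ and carry no neighbourhood structure, so they go into $\sC$. This is the source of the $\sqrt{k}\log k$ bound on $|\sC|$. The set $C_j$ goes into $C\setminus\sC$, and its per-vertex $N^d$ bound of $\Oh(\log(k+k_0))$ comes from item~1 of \cref{thm:dualdistance}.

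Next we guess which side contains more of $W$, and which of $J_i,J_j$ lies on the side carrying more of $A\cap W$. We uncross exactly as in the overview, setting $A''=A\cap A'$ and $B''=B\cup B'$. This preserves $|A''\cap W|,|B''\cap W|\ge\theta$ by the quarter/half counting. It also removes one full piece $J_j$ from the active set, which cuts the number of clusters meeting the active separator by a constant factor depending on $h$. The round adds $\Oh((k+k_0/\sqrt k))$ vertices to $C$, and $\Oh((\sqrt k+k_0/\sqrt k)\log^{[d\ge1]}(k+k_0)\cdot c)$ cluster hits via item~2 (with $k_0$ contributing only $k_0/p$). After $\Oh_h(\log k)$ rounds, the active part meets at most $h(h+1)\sqrt{k}$ clusters, and we stop; adding the remaining active part to $C$ is then affordable. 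The vertices of $N^d[Z_0]$ in any newly added separator are known, so we put them into $C$ explicitly. Theorem~\ref{thm:dualdistance} bounds their number per round by $\Oh(k_0/\sqrt k\cdot\mathrm{polylog})$, which gives the stated $|C|$ bound. When $W$ and $\theta$ are known, the side guesses can be checked directly, so we sample only among valid tuples. The ``furthermore'' clause follows because we only perform a round while more than $h(h+1)\sqrt k$ clusters meet the active part, and $C$ always retains a $1/(h(h+1))$ fraction of those clusters' hits divided by $c$. To arrange this, we stop once the count falls below threshold, before discarding a piece.

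I expect the main obstacle to be the minor-finding step for $d\ge1$. There, the paths found earlier are deleted, but distances in $G$ still pass through them, which is why $X$ must be kept with boundedly many components. It must also be arranged that exhausting all pairs really produces $K_{3,h-3}$ rather than merely $K_h$. The balancing bookkeeping between vertex counts and cluster counts across rounds, using the parameter $c$, is the second delicate point.
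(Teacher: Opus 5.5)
Your overall architecture matches the paper's: $h$ connected pieces of the forest, the duality result applied over all pairs with previously found paths deleted, a guessed good separator from the chain, the uncrossing $(A\cap A',B\cup B')$ that drops one piece, and $\Oh_h(\log k)$ rounds. For $d\ge 1$ exhausting all pairs still gives a $K_h$-minor, and $K_{3,h-3}$ is a subgraph of $K_h$, so nothing extra needs arranging there.

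The step that fails as written is the balance claim. A partition of the active forest into exactly $h$ connected pieces, each carrying a constant fraction of the weight, need not exist. The active forest can have up to $h-1$ components, and a light (even weight-zero) component is forced to be a whole piece. Bounded degree only lets you split a single tree evenly, as in \cref{lem:cut-tree}. You do not choose which piece gets discarded; that is dictated by which pair yields a chain and by where $W$ lies. So a round can make no progress, and ``each round cuts the weight by a constant factor'' is false. Moreover, the potential must be the vertex weight $|J\cap\bigcup\mathcal{Z}|$, not the number of clusters met. A cluster may have up to $c$ vertices in $A^\circ\cap B^\circ$, spread over several pieces, so discarding a piece need not reduce the cluster count at all.

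The repair is the paper's bookkeeping. Keep exactly $h$ pieces with guessed weights, and always split the currently heaviest piece. Then every piece is either an original component of $F^1$ or has at least a $\frac{1}{h(\delta+1)}$ fraction of the current weight (\cref{cl:Fcomp}). Discarding an original component can happen at most $h$ times. Every other discard removes a constant fraction of the vertex weight, which is initially at most $ck$, and this yields the depth bound.

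The same invariant gives the ``furthermore'' clause: at least two pieces are $\mathcal{Z}$-significant, because at least one split is performed whenever something is heavy. Hence $\zeta(Y)>\sqrt{k}$ after dropping one piece. Your ``retains a $1/(h(h+1))$ fraction'' again presupposes balance.

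Two smaller points. First, property~1 must hold for every tuple in $\mathcal{S}$. So restrict the path guesses to indices with $(V(P_i)\setminus Q_i)\cap N^d_G[Z_0]=\emptyset$; the duality statements guarantee a good index among them. Do not instead move such vertices into $C$, since their number is uncontrolled for a wrong guess. Second, the $k_0/\sqrt{k}$ term in $|C|$ comes from $|C_i|\le 2q$ with $q=\Theta(k+k_0/\sqrt{k})$, not from counting $N^d[Z_0]$-vertices. Here $p=\lceil\sqrt{k}\rceil+\Theta(\log(k+k_0))$ is needed to satisfy $p\ge 100d\log_2 q$. Finally, when to stop must itself be guessed, since the cluster count is unknown to the algorithm.
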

Observe that for the case $d=0$ we can always take $c=1$ and use $\sC = \emptyset$. 
In other words, the constant $c$ and the element $\sC$ is only relevant for the distance $d \geq 1$ version of the theorem.

We will need the following toolbox to manipulate the spanning forest of $G[A^\circ \cap B^\circ]$.

Fix an integer $\delta \geq 1$. Assume we are given a graph $G$
and there is some unknown family $\zZ$ of at most $k$ pairwise disjoint $d$-clusters in $G$. 
An \emph{annotated forest} of a pair $(F,\zeta)$ where $F$ is a forest of maximum degree
at most $\delta$ and $\zeta$ assigns to every connected component $J$ of $F$ a guess
$\zeta(J) \in \{0,1,\ldots,ck\}$
of the actual value of $|J \cap \bigcup \zZ|$.
An annotated forest is \emph{consistent with $\zZ$} if indeed
$\zeta(J) = |J \cap \bigcup \zZ|$ for every
connected component $J$ of $F$.
We will sometimes refer to $\zeta(J)$ as to the \emph{weight} of the component $J$, 
and the \emph{weight} of $(F,\zeta)$ is the sum of the weights of all the connected components of $F$.

We need the following observation.
\begin{lemma}\label{lem:cut-tree}
  If $T$ is a tree of maximum degree at most $\delta$, and $Z \subseteq V(T)$ is of size at least $2$,
  then there exists an edge $e \in E(T)$ such that both connected components of $T\setminus\{e\}$ contain
  at least $\frac{|Z|}{\delta+1}$ vertices of $Z$.
\end{lemma}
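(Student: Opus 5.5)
We prove the statement by the standard centroid-edge walk on $T$. Orient every edge of $T$ and then move along the orientation until we reach a vertex with no outgoing edge. The degree bound at that vertex will yield the desired edge.

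For an edge $e=uv$ of $T$, let $T_u^e$ and $T_v^e$ denote the components of $T\setminus\{e\}$ containing $u$ and $v$, respectively. Orient $e$ from $u$ to $v$ if $|Z\cap V(T_v^e)| > |Z\cap V(T_u^e)|$, breaking ties arbitrarily. If some edge $e$ has both sides containing at least $\frac{|Z|}{\delta+1}$ vertices of $Z$, we are done. So assume, for contradiction, that every edge has a ``light'' side with fewer than $\frac{|Z|}{\delta+1}$ vertices of $Z$. Since $\delta\ge 1$, we have $\frac{|Z|}{\delta+1}\le |Z|/2$. Hence the light side is strictly the smaller one, and every edge points toward its heavy side.

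Every finite oriented tree has a sink. Let $x$ be a vertex with no outgoing edge. Each neighbour $y$ of $x$ gives an edge $xy$ oriented toward $x$. Thus the component of $T\setminus\{xy\}$ containing $y$ is light, that is, it contains fewer than $\frac{|Z|}{\delta+1}$ vertices of $Z$.

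These components, over all neighbours $y$ of $x$, partition $V(T)\setminus\{x\}$. There are at most $\delta$ of them. Therefore
\[ |Z| \le 1 + \sum_{y\in N_T(x)} |Z\cap V(T^{xy}_y)| < 1 + \delta\cdot\frac{|Z|}{\delta+1}, \]
which rearranges to $|Z|<\delta+1$.

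This inequality alone is not a contradiction, so we handle small $|Z|$ separately. Suppose $2\le |Z|\le \delta+1$. Then $\frac{|Z|}{\delta+1}\le 1$, so it suffices to find an edge with at least one vertex of $Z$ on each side. Take $z_1\neq z_2$ in $Z$, which exist because $|Z|\ge 2$. Any edge on the $z_1$--$z_2$ path in $T$ separates $z_1$ from $z_2$, and so works.

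Combining the two cases gives the lemma. The only delicate point is this small-$|Z|$ boundary, because the averaging bound becomes vacuous there. It is dispatched by the direct path argument, so we expect no real obstacle.
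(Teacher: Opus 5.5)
Your proof is correct and follows essentially the same route as the paper: orient each edge toward the side containing more vertices of $Z$, take a sink $x$, and use the degree bound at $x$, with the boundary case $|Z|<\delta+1$ settled because $|Z|\ge 2$ guarantees a vertex of $Z$ on each side. The paper argues directly, picking the edge from the sink to the component of $T-x$ with the most vertices of $Z$ (its remark that $|Z|\ge 2$ makes this component nonempty in $Z$ covers your small case), rather than by contradiction plus a separate path argument, but the content is the same.
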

\begin{proof}
  First orient every edge $e$ of $T$ from the component
  of $T\setminus \{e\}$ that contains fewer vertices of $Z$ to the one that contains
  more (breaking ties arbitrarily).
  Since $|E(T)| = |V(T)| - 1$, there is a vertex $v$ of outdegree $0$. We then have that
  every connected component of $T-\{z\}$ contains at most $|Z|/2$ vertices of $Z$. 
  Then, the edge $e$ between $z$ and the component of $T-\{z\}$ that contains the most vertices
  of $Z$ satisfies the requirements of the claim. 
  Here, we use the assumption $|Z| \geq 2$ to ensure that this component contains at least one vertex
  of $Z$, i.e., $Z \not\subseteq \{v\}$. 
\end{proof}

With an annotated forest $(F,\zeta)$, we define the (branching) operation of a $\zZ$-\emph{split}  as follows:
we pick $J$ to be the connected component of $F$ with maximum $\zeta(J)$,
randomly choose an edge $e$ of $J$ and a guess of $|J_1 \cap \bigcup \zZ|$ and $|J_2 \cap \bigcup \zZ|$ 
for the two connected components $J_1,J_2$ of $J \setminus \{e\}$, 
considering only values not smaller than $\frac{\zeta(J)}{\delta+1}$.
More precisely, we randomly sample a spanning forest $(F',\zeta')$ where $F' = F \setminus \{e\}$
and $e$ is a randomly chosen edge of $J$, 
$\zeta'(J') = \zeta(J')$ for connected components $J'$ of $F$ distinct than $J$, 
and for the two connected components $J_1,J_2$ of $J \setminus \{e\}$, the value
$\zeta'(J_1)$ is chosen uniformly at random from 
$\{ \lceil \frac{\zeta(J)}{\delta+1} \rceil, \ldots, \zeta(J) - \lceil \frac{\zeta(J)}{\delta+1} \rceil\}$
and $\zeta'(J_2)$ is set to $\zeta(J) - \zeta'(J_1)$. 
Te operation above is defined only when $\zeta(J) > 1$, as otherwise 
there may be no choices for $\zeta'(J_1)$ satisfying the conditions above.
Lemma~\ref{lem:cut-tree} for $Z \coloneqq J \cap \bigcup \zZ$ ensures that if $(F,\zeta)$ is consistent
with $\zZ$, then there is always a choice of $e$ and $\zeta'(J_1)$ that leads to $(F',\zeta')$
consistent with $\zZ$.

Note that a $\zZ$-split has 
  at most $|E(J)| \cdot (1+\zeta(J)) \leq (ck+1)|E(F)|$ choices for $(F',\zeta')$.
  We will only use forests $F$ that are subgraphs of the input forest $F^\circ$, and 
  hence $|E(F)| < \ell$ in this case, giving at most $(ck+1)(\ell-1)$ choices.

  We summarize the split operation with the following immediate lemma.
  \begin{lemma}\label{lem:split}
  Assume $(F,\zeta)$ is consistent with $\zZ$.

  If there exists a component $J$ of $F$ with $\zeta(J) > 1$, then the output $(F',\zeta')$
  of the $\zZ$-split operation on $(F,\zeta)$ is also consistent with $\zZ$ with probability at least 
  \[ \left((ck+1)(\ell-1)\right)^{-1}. \]
  \end{lemma}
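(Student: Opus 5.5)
The statement to prove is Lemma~\ref{lem:split}: if $(F,\zeta)$ is consistent with $\zZ$ and some component $J$ of $F$ has $\zeta(J)>1$, then a single $\zZ$-split of $(F,\zeta)$ outputs a pair $(F',\zeta')$ consistent with $\zZ$ with probability at least $((ck+1)(\ell-1))^{-1}$.

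The plan is to show two things. First, at least one of the possible random outcomes of the split is consistent with $\zZ$. Second, every outcome is drawn with probability at least $((ck+1)(\ell-1))^{-1}$. Together these give the claim.

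\emph{Existence of a consistent outcome.} Let $J$ be the component of $F$ chosen by the split, i.e., the one maximizing $\zeta(J)$. By the hypothesis, $\zeta(J)>1$, so $\zeta(J)\ge 2$. Set $Z\coloneqq J\cap\bigcup\zZ$. Consistency of $(F,\zeta)$ gives $|Z|=\zeta(J)\ge 2$. Since $J$ is a tree of maximum degree at most $\delta$, Lemma~\ref{lem:cut-tree} yields an edge $e\in E(J)$ whose removal leaves components $J_1,J_2$ with
\[ |J_i\cap Z|\ge \frac{|Z|}{\delta+1} \quad \text{for } i=1,2. \]
Because these sizes are integers, $|J_1\cap Z|\ge \lceil \zeta(J)/(\delta+1)\rceil$. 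Because $|J_1\cap Z|+|J_2\cap Z|=\zeta(J)$, we also get $|J_1\cap Z|\le \zeta(J)-\lceil \zeta(J)/(\delta+1)\rceil$. So $|J_1\cap Z|$ lies in the range from which $\zeta'(J_1)$ is sampled.

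Now consider the outcome that picks this edge $e$ and $\zeta'(J_1)=|J_1\cap Z|$. The rule $\zeta'(J_2)=\zeta(J)-\zeta'(J_1)$ then forces $\zeta'(J_2)=|J_2\cap Z|$. All other components keep their values, which were already correct. Since $J_1,J_2\subseteq J$, we have $J_i\cap\bigcup\zZ=J_i\cap Z$. Hence this $(F',\zeta')$ is consistent with $\zZ$.

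\emph{Probability bound.} The edge $e$ is chosen uniformly from $E(J)$. Since $F\subseteq F^\circ$, we have $|E(J)|\le|E(F^\circ)|\le\ell-1$. The value $\zeta'(J_1)$ is chosen uniformly from a range of at most $\zeta(J)+1\le ck+1$ integers, because $\zeta$ takes values in $\{0,\ldots,ck\}$. So the particular consistent outcome above occurs with probability at least $((\ell-1)(ck+1))^{-1}$, as required.

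The only point needing care is that Lemma~\ref{lem:cut-tree} is stated for $|Z|\ge 2$; this is exactly what $\zeta(J)>1$ supplies. It also makes the sampling range nonempty, since $\lceil \zeta(J)/(\delta+1)\rceil\le \zeta(J)/2$ when $\delta\ge1$ and $\zeta(J)\ge2$.
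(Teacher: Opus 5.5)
Your proof is correct and is the argument the paper intends: the paper calls the lemma immediate, after noting that Lemma~\ref{lem:cut-tree} applied to $J \cap \bigcup \zZ$ guarantees a consistent choice of $e$ and $\zeta'(J_1)$, and that there are at most $|E(J)|(1+\zeta(J)) \le (ck+1)(\ell-1)$ equally likely choices. Your closing remark on nonemptiness of the sampling range is redundant, since the existence argument already places $|J_1 \cap Z|$ in that range; as stated it is also slightly off for $\delta=1$ (e.g., $\zeta(J)=3$ gives $\lceil 3/2 \rceil = 2 > 3/2$), although such a $J$ cannot be consistent with $\zZ$ anyway.
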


We now proceed to the proof of \cref{thm:improve}.

\begin{proof}[Proof of \cref{thm:improve}.]
  We describe below the randomized polynomial-time algorithm. 
  The enumeration algorithm is a straightforward modification that replaces every random guess
  with branching. 

  Let $\delta \leq h$ be the maximum degree of $F^\circ$.
  Recall that $\ell = |A^\circ \cap B^\circ| = |V(F^\circ)|$.

  The algorithm will be described as a recursive algorithm that at every step
  takes polynomial time and 
  randomly guesses one out of polynomial-in-$(k+\ell)$ number of choices. We will ensure that the 
  depth of the recursion is bounded by $3h(\delta+1)\log k$. 
  This will give the promised
  bounds on the running time and the success probability.
  For the sake of the description, we fix $\theta$, $W$, and $\zZ$ as in the theorem statement.
  All guessing steps will be in fact guesses about some properties of $W$ and $\zZ$ and we ensure
  that the guess is correct in at least one of the choices.
  If $W$ and $\theta$ are known to the algorithm, in the guessing
  steps concerning the properties of $W$ and $\theta$ we know which case to choose
  and we choose it deterministically.

  \paragraph{Initial guesswork.}
  Recall that $F^\circ$ is the spanning forest of $G[A^\circ \cap B^\circ]$ given on input
  and $F^\circ$ has $\lambda^\circ < h$ connected components. 
  Before we start recursion, for every connected component of $F^\circ$,
  we guess how many vertices of $\bigcup \zZ$ are contained in the component.
  By our assumption, every $K \in \zZ$ has at most $c$ vertices in $A^\circ \cap B^\circ$;
  hence, $|A^\circ \cap B^\circ \cap \bigcup \zZ| \leq ck$.
  This step turns $F^\circ$ into an annotated forest $(F^\circ, \zeta^\circ)$
  that is consistent with $\zZ$ with probability at least $(ck+1)^{-\lambda^{\circ}}$.
  In what follows, we assume that this is the case.

  We say that a component $J$ of an annotated forest $(F,\zeta)$ is \emph{$\zZ$-heavy}
  if $\zeta(J) > (\delta+1)\sqrt{k}$
  and \emph{$\zZ$-significant}
  if $\zeta(J) > \sqrt{k}$.
  An annotated forest is \emph{$\zZ$-heavy}
  if it contains a $\zZ$-heavy
  component.

  We now perform the following loop: While $(F^\circ, \zeta^\circ)$ contains less than $h$ connected components
  and at least one $\zZ$-heavy component, perform the split operation. 
  Note that this operation strictly increases the number of components of $F^\circ$ and the two
  new components resulting from the split operation are $\zZ$-significant. 
  Hence, the loop runs for at most $h-\lambda^\circ$ iterations.
  Furthermore, as $\lambda^\circ < h$, at least one iteration is run unless
  $(F^\circ, \zeta^\circ)$ is not $\zZ$-heavy to start with.
  
  Let $(F^1, \zeta^1)$ be the final output. Note that $(F^1, \zeta^1)$ has at most $h$ connected components and
  if it is $\zZ$-heavy, then it contains exactly $h$ connected components
  and at least two $\zZ$-significant components (as at least one iteration of the loop was run). 
  Finally, observe that the probability that $(F^1,\zeta^1)$ is consistent with $\zZ$ is at least
  \[ (ck+1)^{-\lambda^\circ} \cdot \left((ck+1)(\ell-1)\right)^{-(h-\lambda^\circ)} \geq \left((ck+1)(\ell-1)\right)^{-h}. \]
  
  If $(F^1, \zeta^1)$ is not $\zZ$-heavy, then we output $(A^\circ, B^\circ, C \coloneqq A^\circ \cap B^\circ, \sC \coloneqq \emptyset)$ and terminate.
  That is,
  we output the initial separation with $C$ being the entire cutset $A^\circ \cap B^\circ$ and $\sC = \emptyset$. 
  Note that, if $(F^1,\zeta^1)$ is consistent with $\zZ$, then 
  \[ \left|A^\circ \cap B^\circ \cap \bigcup \zZ\right| = \left|V(F^1) \cap \bigcup \zZ\right| \leq h(\delta+1)\sqrt{k} = \Oh_h(\sqrt{k}). \]
  Hence, then the output satisfies all the required properties. 

  We observe that the loop does not terminate prematurely if the additional condition on $A^\circ \cap B^\circ \cap \bigcup \zZ$
  is satisfied.
  \begin{claim}\label{cl:large2large}
    If 
    \[ |\{K \in \zZ~|~K \cap A^\circ \cap B^\circ \neq \emptyset\}| > h(h+1) \sqrt{k}, \]
    then $(F^1,\zeta^1)$ is $\zZ$-heavy.
  \end{claim}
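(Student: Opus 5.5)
We argue under the standing assumption that $(F^1,\zeta^1)$, and hence every intermediate annotated forest produced by the loop, is consistent with $\zZ$; consistency is preserved by each successful split. The proof is a counting argument on the final forest. Assume for contradiction that $(F^1,\zeta^1)$ is not $\zZ$-heavy. By consistency, every component $J$ of $F^1$ then satisfies $|J \cap \bigcup \zZ| = \zeta^1(J) \le (\delta+1)\sqrt{k}$.

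Next we bound the number of components of $F^1$. Each split raises the number of components by exactly one, and the loop stops either because no heavy component remains or because the forest has reached $h$ components. In either case $F^1$ has at most $h$ connected components, since $\lambda^\circ<h$ and the count grows by one per iteration.

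The vertex set of $F^1$ equals $V(F^\circ) = A^\circ\cap B^\circ$, because splits only delete edges. Therefore
\[ \left|A^\circ \cap B^\circ \cap \bigcup \zZ\right| = \sum_{J} \zeta^1(J) \le h(\delta+1)\sqrt{k} \le h(h+1)\sqrt{k}, \]
where the last step uses $\delta \le h$.

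Every cluster $K\in\zZ$ with $K\cap A^\circ\cap B^\circ\neq\emptyset$ contributes at least one vertex to $A^\circ\cap B^\circ\cap\bigcup\zZ$, and these vertices are distinct because the clusters are pairwise disjoint. Hence the number of such clusters is at most $h(h+1)\sqrt{k}$, contradicting the hypothesis, so $(F^1,\zeta^1)$ is $\zZ$-heavy.

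The only delicate point is that the claim is implicitly conditional on consistency, which is the event whose probability was bounded earlier; without it $\zeta^1$ could be arbitrary. Beyond that, the argument is a direct count.
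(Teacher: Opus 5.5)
Your argument is correct and essentially the paper's own: both use disjointness of the clusters and $\delta\le h$ to get $|A^\circ\cap B^\circ\cap\bigcup\zZ| > h(\delta+1)\sqrt{k}$, then pigeonhole over the at most $h$ components. The only difference is presentation: you argue by contrapositive on the final forest, while the paper argues that a heavy component persists at every loop step. Your remark that the claim implicitly assumes consistency of $(F^1,\zeta^1)$ with $\zZ$ is also exactly how the paper uses it.
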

  \begin{claimproof}
  The above conditions imply $|A^\circ \cap B^\circ \cap \bigcup \zZ| > h(\delta+1) \sqrt{k}$, 
  which in turn implies that at every step of the above loop, one of the (at most $h$) components $J$ of the current forest
  has weight greater than $(\delta+1) \sqrt{k}$.
  \end{claimproof}   
  
  We proceed with the assumption that $(F^1, \zeta^1)$ is $\zZ$-heavy.
  In particular, it contains exactly $h$ connected components and at least two $\zZ$-significant components.

  \paragraph{Main procedure.}
  We are now ready to proceed to the recursion; see Figure~\ref{fig:improve-diagram} for a diagram
  of the control flow. 
  One recursive call will receive on input a separation $(A,B)$ of $G$,
  a tuple $(C,D,\sC)$ of subsets of $A \cap B$ with $\sC \subseteq C$, $C \cap D = \emptyset$, 
  $D \cap N_G^d[Z_0] = \emptyset$,
  and an annotated forest $(F,\zeta)$ where $F$ is a spanning
  forest of $(A \cap B) \setminus (C \cup D)$ satisfying:
  \begin{itemize}
    \item $F$ has exactly $h$ connected components;
    \item at least two components of $F$ are $\zZ$-significant;
    \item $F$ is a subgraph of $F^1$;
    \item every component of $F$ is either a connected component of $F^1$ or is $\zZ$-significant.
  \end{itemize} 

  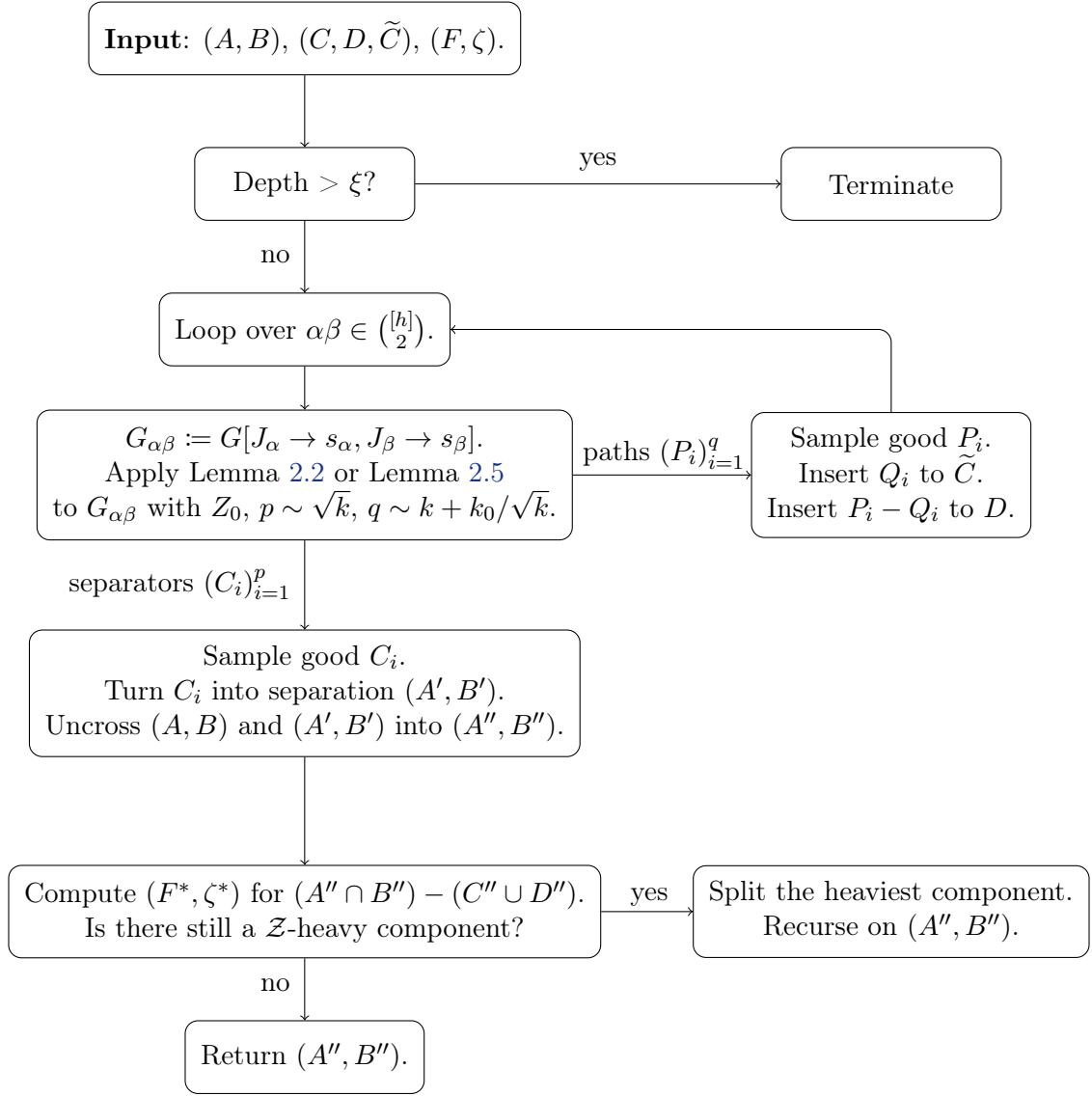
\begin{figure}[tb!]
    \begin{center}
    \begin{tikzpicture}
      \tikzset{
    rounded box/.style={
        rectangle,
        draw=black,
        rounded corners=5pt,
        minimum width=3cm,
        minimum height=1cm,
        text centered, align=center,
        inner sep=6pt,
    }
}
\node[rounded box] (start) at (0, 0) {\textbf{Input}: $(A,B)$, $(C,D,\sC)$, $(F,\zeta)$.};
\node[rounded box] (depthcheck) at (0, -2) {Depth > $\xi$?};
\node[rounded box] (toodeep) at (8, -2) {Terminate};
\draw[->] (start) -- (depthcheck);
\draw[->] (depthcheck) -- (toodeep);
\draw (4, -1.7) node {yes};
\node[rounded box] (loop) at (0, -4) {Loop over $\alpha\beta \in \binom{[h]}{2}$.};
\node[rounded box] (loopstep) at (0, -6) {$G_{\alpha\beta} \coloneqq G[J_\alpha \to s_\alpha, J_\beta \to s_\beta]$.\\
Apply \cref{cor:dual1} or \cref{thm:dualdistance}\\to $G_{\alpha\beta}$ with $Z_0$, $p \sim \sqrt{k}$, $q \sim k + k_0/\sqrt{k}$.};
\draw[->] (loop) -- (loopstep);
\draw[->] (depthcheck) -- (loop);
\draw (-0.4, -3) node {no};
\node[rounded box] (paths) at (8, -6) {Sample good $P_i$.\\
Insert $Q_i$ to $\sC$.\\
Insert $P_i \setminus Q_i$ to $D$.};
\draw[->] (loopstep) -- (paths);
\draw (4.9, -5.7) node {paths $(P_i)_{i=1}^q$};
\draw[rounded corners=5pt,->] (paths) -- (8, -4) -- (loop);

\node[rounded box] (cycles) at (0, -9) {Sample good $C_i$.\\Turn $C_i$ into separation $(A', B')$.\\
Uncross $(A,B)$ and $(A',B')$ into $(A'',B'')$.};
\draw[->](loopstep)  -- (cycles);
\draw(-1.7, -7.5) node {separators $(C_i)_{i=1}^p$};
\node[rounded box] (finish) at (0, -12) {Compute $(F^\ast, \zeta^\ast)$ for $(A'' \cap B'') \setminus (C'' \cup D'')$.\\
 Is there still a $\mathcal{Z}$-heavy component?};
 \draw[->] (cycles) -- (finish);
 \node[rounded box] (recurse) at (8, -12) {Split the heaviest component.\\Recurse on $(A'', B'')$.};
 \draw[->] (finish) -- (recurse);
 \draw (4.7, -11.8) node {yes};
 \node[rounded box] (stop) at (0, -14) {Return $(A'', B'')$.};
 \draw[->] (finish) -- (stop);
 \draw (-0.4, -13) node {no};
    \end{tikzpicture}
    
    \caption{Diagram of the control flow of the main procedure.}\label{fig:improve-diagram}
    \end{center}
  \end{figure}

  In the root call, we use $(A,B) = (A^\circ, B^\circ)$, $C=D=\sC=\emptyset$, 
  and $(F,\zeta)$ being $(F^1,\zeta^1)$.

  A recursive call terminates without any work if its depth is larger
  than $\xi \coloneqq 3h(\delta+1) \log k$. 

  We will ensure that a call at depth $j$ satisfies
  \begin{itemize}
  \item $|C| \leq \ell + j \cdot \Oh_{h,d}(k+k_0/\sqrt{k})$;
  \item $|\sC| \leq j \cdot \Oh_{h,d}(\sqrt{k} \log k)$;
  \item for every $v \in V(G)$, we have $|N_G^d[v] \cap (C \setminus \sC)| \leq j \cdot \Oh_{h,d}(\log (k+k_0))$.
  \end{itemize}
This is clearly satisfied for the root call.

  We say that a call at depth $j$ is \emph{consistent} if
  \begin{itemize}
  \item $|A \cap W|, |B \cap W| \geq \theta$;
  \item $|C \cap \bigcup \zZ|$ is bounded by 
   \[ \begin{cases} 
      j \cdot \Oh_{h,d}(\sqrt{k} + k_0/\sqrt{k}), & \mathrm{if\ } d = 0, \\
      j \cdot \Oh_{h,d}((\sqrt{k} + k_0/\sqrt{k})\log(k+k_0)), & \mathrm{if\ } d \geq 1, 
  \end{cases} \]
  \item $D \cap \bigcup \zZ = \emptyset$; and
  \item $(F,\zeta)$ is consistent with $\zZ$.
  \end{itemize}
  Note that from the assumptions that the guesses before the recursion are correct
  it follows that the root call is consistent.

  Consider now a recursive call with input $((A,B),(C,D,\sC),(F,\zeta))$
  and assume that this call is consistent.
  First, we guess whether $|A \cap W|$ of $|B \cap W|$ is larger.
  (This is a step that is deterministic if we know $\theta$ and $W$.)
  In the following description, we assume $|A \cap W| \geq |B \cap W|$, so in particular 
  $|A \cap W| \geq |W|/2\geq 2\theta$; the other case is symmetric. 

  \paragraph{Internal loop.}
  Let $J_1,\ldots,J_h$ be the connected components of $F$.
  We initialize $\sC' \coloneqq \emptyset$ and $D' \coloneqq \emptyset$;
  We iterate over all pairs $\alpha\beta \in \binom{[h]}{2}$ in an arbitrarily chosen order.
  We will maintain an invariant that $\sC'$ and $D'$ are disjoint subsets of $V(G) \setminus V(F)$
  and $G[\sC' \cup D']$ has at most $\iota$ connected components, where $\iota$ is the number of iterations
  performed so far. 
  For a fixed pair $\alpha\beta$, proceed as follows.

  Let $G_{\alpha\beta}$ be the graph $G$ with 
  $J_\alpha$ and $J_\beta$ contracted into single vertices, which we denote by $s_\alpha$
  and $s_\beta$. Note that $G_{\alpha\beta}$ is a minor of $G$. 
  Let $X = \sC' \cup D' \cup (V(F) \setminus (J_\alpha \cup J_\beta))$. 
  For $d=0$, we apply~\cref{cor:dual1}
  to $G_{\alpha\beta}-X$ with $(s,t) = (s_\alpha,s_\beta)$, $Z_0$,
  $p = \lceil \sqrt{k} \rceil$ and $q = k + \lceil k_0/\sqrt{k} \rceil + 1$. 
  For $d \geq 1$, we apply Theorem~\ref{thm:dualdistance}
  to $G_{\alpha\beta}$, $(s,t) = (s_\alpha,s_\beta)$, $X$, $Z_0$,
  \begin{align*}
  p &= \lceil \sqrt{k} \rceil + \Theta_{d,h}(\log(k+k_0)),\\
  q &= \Theta_{d,h}(k+k_0/\sqrt{k})
  \end{align*}
   equal to the minimum values the theorem allows
  for $Z_0$, the set $Z$ of size at most $k$ and $\lambda \coloneqq \binom{h+1}{2}$.
  (Note that $G[X]$ has at most $\iota + h < \binom{h+1}{2}$ connected components.)

  Assume first that a sequence $(P_1,\ldots,P_q)$ of $(s_\alpha,s_\beta)$ paths is returned
  with sets $(Q_1,\ldots,Q_q)$. 
  By the promise of \cref{cor:dual1} and \cref{thm:dualdistance},
  there exists $i \in [q]$ such that no vertex of $V(P_i) \setminus Q_i$ belongs to $\bigcup \zZ$
  or $N_G^d[Z_0]$.
  Randomly guess the index $i \in [q]$ among those indices $i$ for which $V(P_i) \setminus Q_i$ is disjoint with $N_G^d[Z_0]$.
  Denote $P^{\alpha\beta} \coloneqq P_i$.
  Insert $Q_i \setminus \{s_\alpha,s_\beta\}$ into $\sC'$
  and $V(P_i) \setminus (Q_i \cup \{s_\alpha,s_\beta\})$ into $D'$.
  Here, the endpoints $s_\alpha$ and $s_\beta$ are put neither to $\sC'$ nor to $D'$ --- they do not exist in $G$
  --- and therefore we maintain the invariant that $\sC'$ and $D'$ are disjoint subsets of $V(G) \setminus V(F)$.
  Furthermore, the new vertices inserted into $\sC' \cup D'$ form the path $P_i \setminus \{s_\alpha,s_\beta\}$,
  and hence the number of connected components of $G[\sC' \cup D']$ increased by at most one. 
  Note that in this step we put $\Oh_{d,h}(\sqrt{k} + \log^{[d \geq 1]}(k+k_0))$ vertices into $\sC'$.
  We maintain $D' \cap N_G^d[Z_0] = \emptyset$ and, furthermore, 
  if the guess of $i$ is correct, 
  we maintain $D' \cap \bigcup \zZ = \emptyset$
  and $\Oh_{d,h}(\sqrt{k} + \log^{[d \geq 1]}(k+k_0))$ vertices of $\bigcup \zZ$ are inserted into $\sC'$. 

  Before we proceed to the description of the second case --- when a chain of separators is returned ---
  we take note that this case must eventually happen. 
  \begin{claim}\label{cl:chain-is-found}
    For some $\alpha\beta \in \binom{[h]}{2}$,
  a chain $(C_1,\ldots,C_p)$ is found.
  \end{claim}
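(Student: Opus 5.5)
We argue by contradiction, following the idea sketched in the overview: if every pair $\alpha\beta \in \binom{[h]}{2}$ yields the path outcome, then the collected paths together with the components $J_1,\ldots,J_h$ form a model of the forbidden minor $H$. Suppose the internal loop processes all $\binom{h}{2}$ pairs and each time returns a sequence of paths. For each pair we select one path $P^{\alpha\beta}$ and insert its vertices other than $s_\alpha,s_\beta$ into $\sC'\cup D'$.

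The first step is to check that these paths are internally vertex-disjoint and avoid the other components. The path for $\alpha\beta$ is found in $G_{\alpha\beta}-X$, where $X$ contains $\sC'\cup D'$ and all of $V(F)\setminus(J_\alpha\cup J_\beta)$. Hence its internal vertices avoid all previously chosen paths and every $J_\gamma$ with $\gamma\notin\{\alpha,\beta\}$. They also avoid $J_\alpha$ and $J_\beta$, since those are contracted to the endpoints $s_\alpha,s_\beta$. Lifting back to $G$, the path $P^{\alpha\beta}$ becomes a path from a vertex adjacent to $J_\alpha$ to a vertex adjacent to $J_\beta$, with its interior disjoint from $V(F)$ and from every other chosen path.

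For $d=0$ (so $H=K_h$), we take the branch sets to be $J_1,\ldots,J_h$. Each is connected, being a component of the spanning forest $F$. We then contract the interior of each $P^{\alpha\beta}$ into one of its endpoint branch sets, say into $J_\alpha$. Since the interiors are pairwise disjoint and disjoint from $V(F)$, the enlarged branch sets stay disjoint and connected, and $J_\alpha,J_\beta$ become adjacent for every pair. This gives a $K_h$ minor in $G$, a contradiction.

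For $d\geq 1$ (so $H=K_{3,h-3}$), the same construction gives a $K_h$ minor, since $K_h$ contains $K_{3,h-3}$ as a subgraph. Thus $G$ would contain $K_{3,h-3}$ as a minor, again a contradiction.

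I expect the main obstacle to be bookkeeping rather than a conceptual difficulty. We must make sure that applying \cref{thm:dualdistance} is legitimate, which requires $G[X]$ to have at most $\lambda$ components; the invariant that $G[\sC'\cup D']$ has at most $\iota$ components handles this. We must also note that the path outcome of \cref{thm:dualdistance} and \cref{cor:dual1} returns paths in $G_{\alpha\beta}-X$ that genuinely connect $s_\alpha$ to $s_\beta$. Finally, the case $h\ge 4$ ensures $K_{3,h-3}$ is a genuine subgraph of $K_h$.
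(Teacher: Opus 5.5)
Your proposal is correct and follows the same approach as the paper. The paper also argues by contradiction that the components $J_1,\ldots,J_h$ together with the internally disjoint paths $P^{\alpha\beta}$ (disjoint because each is found in $G_{\alpha\beta}-X$ and its interior is then added to $\sC'\cup D'$) form a $K_h$ minor model in $G$. You additionally spell out the branch sets and the observation that $K_{3,h-3}\subseteq K_h$ handles the case $d\geq 1$; the paper leaves both implicit.
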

  \begin{claimproof}
  By contradiction, assume that the iteration continued for all $\alpha\beta \in \binom{[h]}{2}$
  and in each step a sequence of paths is found and the path $P^{\alpha\beta}$ is defined.
  Since in the $G^{\alpha\beta}$ we find paths excluding $X = (\sC' \cup D') \cup (V(F) \setminus (J_\alpha \cup J_\beta))$
  and $J_\alpha$, $J_\beta$ are contracted onto $s_\alpha$ and $s_\beta$, respectively, 
  and the internal vertices of $P^{\alpha\beta}$
  are inserted into $\sC' \cup D'$, the sets $K_i$ for $1 \leq i \leq h$ and paths $P^{\alpha\beta}$
  for $\alpha\beta \in \binom{[h]}{2}$ form a minor model of $K_h$ in $G$. This
  is a contradiction. 
  \end{claimproof}

  Let us perform an analysis of the loop until a chain of separators is found. 
  As for every $\alpha\beta$ we add $\Oh_{d,h}(\sqrt{k}+ \log^{[d \geq 1]}(k+k_0))$ vertices to $\sC'$ whenever a sequence
  of paths $(P_1,\ldots,P_q)$ is found, we have
  \begin{equation}\label{eq:Csize0}
   |\sC'| \leq \left(\binom{h}{2}-1\right) \Oh_{d,h}\left(\sqrt{k}+ \log^{[d \geq 1]}(k+k_0)\right) \leq \Oh_{d,h}\left(\sqrt{k}+ \log^{[d \geq 1]}(k+k_0)\right). 
  \end{equation}
  Furthermore, $D' \cap N_G^d[Z_0] = \emptyset$ and, if the guesses are correct, $D' \cap \bigcup \zZ = \emptyset$.
  
  Aggregating over all guesses so far in this recursive call, the guesses are correct
  with probability at least
  \begin{equation}\label{eq:prob0}
    \frac{1}{2} \cdot \left(\Oh_{d,h}(k+k_0)\right)^{-\left(\binom{h}{2}-1\right)} \geq 
   \left(\Oh_{d,h}(k+k_0)\right)^{-\binom{h}{2}+1}.
  \end{equation}
  
  \paragraph{Extracting a good separation from a chain of separators.}
  We now proceed to the description of the algorithm in the case when 
  a chain $(C_1,\ldots,C_p)$ of $(s_\alpha,s_\beta)$-separators is returned.
  We terminate the iteration over the pairs $\alpha\beta\in \binom{[h]}{2}$ and continue as follows.

  By the promise of \cref{cor:dual1} and \cref{thm:dualdistance},
  there exists $i \in [p]$ with $|C_i \cap \bigcup \zZ|$ bounded by
  \[   
    \Oh_{d,h}\left(\frac{k+k_0}{p} \log^{[d \geq 1]} (k+k_0)\right) \leq \Oh_{d,h}\left((\sqrt{k}+k_0/\sqrt{k}) \log^{[d \geq 1]} (k+k_0)\right).
   \]
  We randomly guess the index $i \in [p]$
  and henceforth assume that this guess is correct.
  
  Recall $X = \sC' \cup D' \cup (V(F) \setminus (J_\alpha \cup J_\beta))$.
  Construct a separation $(A',B')$ of $G$
  corresponding to $C_i$ as follows:
  set $A' \cap B' = C_i \cup X$, $A' \setminus B'$ to be $J_\alpha$
  and all vertices of $G_{\alpha\beta} \setminus (C_i \cup X)$ reachable from $s_\alpha$,
  and $B' \setminus A'$ to be $J_\beta$ and the remaining vertices of $G_{\alpha\beta} \setminus (C_i \cup X)$. 
   Since $|A \cap W| \geq 2\theta$, we have
  \begin{equation}\label{eq:theta-cut}
    |W \cap (A' \cap A)| \geq \theta \quad\mathrm{or}\quad |W \cap (B' \cap A)| \geq \theta.
  \end{equation}
  We randomly guess which inequality of~\eqref{eq:theta-cut} is satisfied.
  (This is deterministic if $W$ and $\theta$ are known to the algorithm.)
  In the following description, without loss of generality we assume that
  $|W \cap (A' \cap A)| \geq \theta$; the other case is symmetric.

  Let $Y = V(F) \setminus J_\beta$. 
  We uncross $(A',B')$ and $(A,B)$ to a separation $(A'',B'')$ of $G$ as follows:
  \begin{align*}
   A'' &\coloneqq Y \cup (A' \cap A),\ \mathrm{and}\\
   B'' &\coloneqq Y \cup (B' \cup B). 
  \end{align*}
  That is, we uncross $(A',B')$ and $(A,B)$ in a standard way to $(A' \cap A, B' \cup B)$,
  but artificially retain $Y$ in both sides of the separation. 
  
  Clearly, $(A'',B'')$ is a separation of $G$. 
  Furthermore, as $J_\beta \subseteq B' \setminus A'$, we have 
  $J_\beta \subseteq B'' \setminus A''$ and 
  \[ A'' \cap B'' \subseteq ((A \cap B) \cup (A' \cap B')) \setminus J_\beta. \]
  
  We have $|A'' \cap W| \geq \theta$ by \eqref{eq:theta-cut}
  and $|B'' \cap W| \geq \theta$ as $|B \cap W| \geq \theta$ and $B \subseteq B''$.

  Set 
  \begin{align*}
  \sC'' &\coloneqq (A'' \cap B'') \cap (\sC \cup \sC'),\\
  C'' &\coloneqq (A'' \cap B'') \cap (C \cup \sC' \cup C_i),\ \mathrm{and}\\
  D'' &\coloneqq (A'' \cap B'') \cap (D \cup D'). 
  \end{align*}
  It is important to note now that 
  \begin{equation}\label{eq:ABCDxx}
     (A'' \cap B'') \setminus (C'' \cup D'') = Y = V(F) \setminus J_\beta. 
  \end{equation}

  We now analyse the tuple $((A'',B''),(C'',D'',\sC''))$. 
  As $\sC \subseteq C$, we have $\sC'' \subseteq C''$.

  By~\eqref{eq:Csize0} and as $|C_i| = \Oh_{d,h}(k+k_0/\sqrt{k})$, we have
  \begin{equation}\label{eq:Csize1}
     |C''| \leq |C| +  \Oh_{d,h}(k+k_0/\sqrt{k})
  \end{equation}
  and
  \begin{equation}\label{eq:sCsize1}
    |\sC''| \leq |\sC| + \Oh_{d,h}(\sqrt{k}+ \log^{[d \geq 1]}(k+k_0)).
  \end{equation}
  Since $C'' \setminus (C \cup \sC'') \subseteq C_i$, for every $v \in V(G)$ it holds that
  \begin{equation}\label{eq:Cvsize1}
    \left|N_G^d[v] \cap (C'' \setminus (C \cup \sC''))\right| \leq \left|N_G^d[v] \cap C_i\right| = \Oh_{h,d}(\log (k+k_0)).
  \end{equation}
  We have $D'' \cap N_G^d[Z_0] = \emptyset$ and, 
  if every guess is correct,
  we have $D'' \cap \bigcup \zZ = \emptyset$ and
  \begin{equation}\label{eq:CZ-1}
    |(C'' \setminus C) \cap \bigcup \zZ| = 
    \Oh_{d,h}\left((\sqrt{k} + k_0/\sqrt{k}) \log^{[d \geq 1]} (k+k_0)\right)
  \end{equation}

  Aggregating over all guesses so far in this recursive call, from~\eqref{eq:prob0}
  it follows that the guesses are correct with probability at least
  \begin{equation}\label{eq:prob1}
   \left(\Oh_{d,h}(k+k_0)\right)^{-\binom{h}{2}+1} \cdot \left(\Oh_{d,h}\left(\sqrt{k}+ \log^{[d \geq 1]}(k+k_0)\right)\right)^{-1} \cdot \frac{1}{2} \geq 
   \left(\Oh_{d,h}(k+k_0)\right)^{-\binom{h}{2}}.
  \end{equation}

  \paragraph{Using the separation $(A'',B'')$: the recursion case.}
  Let $(F^\ast,\zeta^\ast)$ be the annotated forest $(F,\zeta)$ with the component $J_\beta$ removed.
  Note that $F^\ast$ is a spanning forest of $(A'' \cap B'') \setminus (C'' \cup D'')$
  and it a subgraph of $F^1$.

  We first consider a case where some component $J^\ast$ of $F^\ast$ is $\zZ$-heavy, that is, 
  it has weight more than $(\delta+1)\sqrt{k}$. 
  We apply the splitting operation to $(F^\ast,\zeta^\ast)$, obtaining $(F'',\zeta'')$,
  and recurse on the tuple $((A'',B''), (C'', D'',\sC''), (F'', \zeta''))$. 
  If all guesses so far are correct, $(F^\ast,\zeta^\ast)$ is consistent with $\zZ$ and 
  the splitting operation with probability at least $((ck+1)(\ell-1))^{-1}$
  outputs $(F'',\zeta'')$ consistent with $\zZ$.
  Furthermore, $F''$ has exactly $h$ connected components and 
  the two new connected components of $F''$ are $\zZ$-significant
  as $(F^\ast, \zeta^\ast)$ is $\zZ$-heavy.

  Let us check the assumptions on the child call. 
  The bound on $|C|$ follows from~\eqref{eq:Csize1}.
  The bound on $|N_G^d[v] \cap (C \setminus \sC)|$ follows from~\eqref{eq:Cvsize1}.
  If all the guesses are correct, the child call is consistent
  thanks to~\eqref{eq:CZ-1}.

  Finally, from~\eqref{eq:prob1} and the analysis in the previous paragraph it follows
  that, aggregating over all guesses in this recursive call, the child recursive call is consistent again  
  with probability at least 
  \[ \left(\Oh_{d,h}(k+k_0)\right)^{-\binom{h}{2}} \cdot \frac{1}{(ck+1)(\ell-1)} \geq 
   \left(\Oh_{d,h}(k+k_0)\right)^{-h^2} \cdot \ell^{-1}. \]

  \paragraph{Using the separation $(A'',B'')$: the leaf case.}
  We are left with the case when $(F^\ast, \zeta^\ast)$ is not $\zZ$-heavy, that is, 
  every connected component $J^\ast$ of $F^\ast$ satisfies
  $\zeta^\ast(J^\ast) \leq (\delta+1)\sqrt{k}$.
  This is a leaf of the recursion: we output $(A'',B'',C'' \cup Y,\sC'')$ and terminate.
  (Recall $Y = V(F) \setminus J_\beta$.)

  Let us analyze this output. We claim that it satisfies the requirements.
  Recall that the depth of the recursion is artificially cut at $\xi = 3h(\delta+1)\log k = \Oh_{h}(\log k)$.
  Hence, from~\eqref{eq:Csize1} it follows that
  \[ |C'' \cup Y| \leq \ell + \Oh_{d,h}((k+k_0/\sqrt{k}) \log k). \]
  Similarly, from~\eqref{eq:sCsize1} it follows that
  \[ |\sC''| \leq \Oh_{d,h}\left(\left(\sqrt{k}+ \log^{[d \geq 1]}(k+k_0)\right) \log k\right). \]
  From~\eqref{eq:Cvsize1} we have that for every $v \in V(G)$ it holds that
  \[ \left|N_G^d[v] \cap (C'' \setminus \sC'')\right| = \Oh_{h,d}(\log(k+k_0) \cdot \log k). \]
  Recall here that $|N_G^d[v] \cap Y| \leq c$ as $Y \subseteq V(F^\circ)$. 

  Since at least two components of $(F,\zeta)$ are $\zZ$-significant, we have $\zeta(Y) > \sqrt{k}$.
  Since $Y \subseteq V(F^\circ)$ and $|K \cap V(F^\circ)| \leq c$ for every $K \in \zZ$, we obtain
  \[ |\{K \in \zZ~|~K \cap Y \neq \emptyset\}| \geq \sqrt{k}/c. \]
  Also, as $(F^\ast, \zeta^\ast)$ is not $\zZ$-heavy, if $(F^\ast, \zeta^\ast)$ is consistent with $\zZ$, then
  \[ \left|Y \cap \bigcup \zZ\right| \leq h(\delta+1)\sqrt{k} = \Oh_h(\sqrt{k}). \]
  If every guess so far is correct, 
  from~\eqref{eq:CZ-1} it follows that $A'' \cap B'' \cap \bigcup \zZ \subseteq C'' \cup Y$ and
  \[ \left|C'' \cap \bigcup \zZ\right| \leq 
  \Oh_{d,h}\left((\sqrt{k}+k_0/\sqrt{k}) \log^{[d \geq 1]}(k+k_0) \log k\right).
  \]
  This completes the proof that the output satisfies the promised properties and,
  if the call is consistent and all guesses are correct, it also satisfies the promised
  properties with regards to $\zZ$.

  We also observe that, thanks to~\eqref{eq:prob1},
  the probability that all guesses are correct until depth $\xi = \Oh_{h}(\log k)$ is at least
  
  \[ \left(\Oh_{d,h}((k+k_0) \ell)\right)^{-\Oh_h(\log k)}. \]

  \paragraph{Bound on the recursion depth.}
  It remains to argue that, if all guesses are correct, 
  the recursion reaches a leaf no later than at depth $\xi = 3h(\delta+1)\log{k}$.
  That is, all recursive calls beyond this depth have made some incorrect guess in their history
  and can be terminated.
  
  We start with the following observation.
  \begin{claim}\label{cl:Fcomp}
    In a recursive call $((A,B),(C,D,\sC),(F,\zeta))$, every connected component $J$
    of $F$ is either one of the connected components of $F^1$
    or its weight is at least
    a $\frac{1}{h(\delta+1)}$ fraction of the total weight of $F$. 
  \end{claim}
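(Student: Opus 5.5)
We argue by induction on the recursion depth, tracking how the components of the annotated forest are created and destroyed. The claim concerns a consistent call. Let $w$ denote the total weight of the current forest $F$, that is, $\zeta(V(F))$. The components of $F$ come from only three sources: they are components of $F^1$ inherited unchanged, they are inherited from the parent call after the parent removed $J_\beta$, or they are the two new components produced by the split performed just before the recursive call.

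The invariant we propose to maintain is slightly stronger than the claim: every component of $F$ that is not a component of $F^1$ has weight at least $w/(h(\delta+1))$. For the root call, $F=F^1$, so the statement is vacuous. For the inductive step, take a parent call with forest $F$ of total weight $w$. The child forest is obtained in two steps:
\begin{itemize}
\item remove $J_\beta$, giving $F^\ast$ of weight $w^\ast \le w$;
\item split the heaviest component $J^\ast$ of $F^\ast$ into $J_1,J_2$, giving $F''$ whose total weight is still $w^\ast$.
\end{itemize}

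First consider the components of $F''$ other than $J_1,J_2$ that are not components of $F^1$. By the induction hypothesis each such component has weight at least $w/(h(\delta+1)) \ge w^\ast/(h(\delta+1))$, since removing $J_\beta$ only decreases the total. This monotonicity is the key observation: deleting a component never harms the relative bound for the survivors.

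Next consider the new components $J_1,J_2$. By the definition of the split, each has weight at least $\zeta(J^\ast)/(\delta+1)$. Since $J^\ast$ is the heaviest among at most $h$ components of $F^\ast$, we have $\zeta(J^\ast)\ge w^\ast/h$. Hence each of $J_1,J_2$ has weight at least $w^\ast/(h(\delta+1))$, which is exactly the required bound.

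We also check that $F^\ast$ has at most $h$ components: $F$ has exactly $h$ components and $J_\beta$ is removed, so $F^\ast$ has $h-1\le h$ of them. This makes the heaviest-component averaging valid.

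The only delicate point is the phrase "total weight": we must use the $\zeta$-annotated weight rather than the true intersection with $\bigcup\zZ$. Since the splits preserve total $\zeta$-weight and the only other operation is deleting a component, the induction goes through purely combinatorially. No consistency assumption is even needed, so we expect no real obstacle here.
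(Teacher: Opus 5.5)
Your induction is correct and is essentially the paper's argument. Both rest on three facts: a component not in $F^1$ was split off the maximum-weight component $J^\ast$ of some ancestor's $F^\ast$, so it has weight at least $\zeta^\ast(J^\ast)/(\delta+1)$; weights can only decrease afterwards; and one averages over at most $h$ components. The only difference is bookkeeping: the paper compares $J$ directly with every current component (each lies inside a component of that ancestor's $F^\ast$, so has weight at most $\zeta^\ast(J^\ast)$) and averages over the $h$ current components, whereas you average over the components of $F^\ast$ at creation time and then use that the total $\zeta$-weight never increases (also, your invariant is exactly the claim, not a strengthening).
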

  \begin{claimproof}
    If $J$ is not a connected component of the annotated forest at the root of the recursion
    tree, then it is created by a split of a component $J^\ast$ of an annotated forest $(F^\ast,\zeta^\ast)$
    at some ancestor of the considered call.
    Let $(\bar{F},\bar{\zeta})$ be the annotated forest in this ancestor call;
    recall that $F^\ast$ equals $\bar{F}$ with one component removed.
    Since in the splitting operation we take the component of maximum value of $\zeta^\ast$, 
    and all components of $F$ are subsets of components of $F^\ast$, $\zeta^\ast(J^\ast) \geq \zeta(J'')$
    for every connected component $J''$ of $F$. 
    Hence, $\zeta(J) \geq \frac{1}{\delta+1} \zeta(J'')$ for every connected component $J''$ of $F$.
    As $F$ has $h$ connected components, the weight of $J$ is at least a $\frac{1}{h(\delta+1)}$ fraction
    of the weight of $F$.
  \end{claimproof}

  Consider a recursive call $((A,B),(C,D,\sC),(F,\zeta))$ and its child call
  $((A'',B''), (C'',D'',\sC''), (F'',\zeta''))$. 
  $F''$ is created from $F$ by first discarding one connected component (denoted $J_\beta$ in the description
  above) and then a split operation on a remaining component of maximum value of $\zeta$.
  We say that a child call is \emph{heavy} if the weight of the discarded component
  is at least a $\frac{1}{h(\delta+1)}$ of the weight of $F$ and \emph{light} otherwise.
  \cref{cl:Fcomp} ensures that in an event
  that all guesses are correct and all calls are consistent,
  there are at most $h$ light calls and $\log k / \log \left(1-\frac{1}{h(\delta+1)}\right)$
  heavy calls. Consequently, the depth of such call is bounded by 
  \[ 1+ h + \left\lceil \left(\log \left(1-\frac{1}{h(\delta+1)}\right)\right)^{-1}\log k  \right \rceil
  \leq 3h(\delta+1)\log k. \]
  This finishes the proof of \cref{thm:improve}.
\end{proof}

\newcommand{\RetSet}{\ensuremath{\overline{M}}}
\newcommand{\cR}{\ensuremath{\widetilde{R}}}
\newcommand{\oldR}{\ensuremath{R_\mathrm{old}}}
\newcommand{\kconst}{\kappa}
\newcommand{\hconst}{h}

\section{The algorithms}\label{sec:algo}

We are now ready to prove \cref{thm:main,thm:main-d}.
The main part of this section is devoted to the proof of the statement when $G$ is nearly-embeddable without
apices and of diameter bounded linearly in $k$; this part is presented in \cref{ss:nearly}.
Lifting this case to the full statement of \cref{thm:main}
is done in \cref{sec:lift}
and to the full statement of \cref{thm:main-d} in \cref{ss:algo-d}.

\subsection{Nearly embeddable graphs without apices}\label{ss:nearly}

In this section we focus on proving the following statement,
which deals with the case when the graph is nearly-embeddable without apices
and we may use the diameter in the bounds.

We remark that for Theorem~\ref{thm:main}, the theorem below is only needed with $d=0$; 
a reader not interested in Theorem~\ref{thm:main-d} can focus only on this case.
Also, it is then safe to ignore the set $Y_0$, as for $d=0$
the properties concerning the set $Y_0$ are easy to satisfy via post-processing:
just add $Y_0$ to $V(G')$ and to every bag of the constructed tree decomposition.
We also recall the notation $[d \geq 1]$ which equals $1$ for $d \geq 1$ and $0$ for $d=0$.

\begin{theorem}\label{thm:main2}
  For every nonnegative integers $p,d$, there exists a constant $c_{d,p} > 0$
  and randomized polynomial-time algorithm that, 
  given an integer $k \geq 2$, a graph $G$
  that is 
  \begin{enumerate}[itemsep=0pt]
  \item of diameter $\Delta \geq 1$,
  \item $p$-nearly-embeddable without apices, and
  \item additionally, if $d \geq 1$, $K_{3,p}$-minor-free,
  \end{enumerate}
  and a set $Y_0 \subseteq V(G)$ of size at most $p$,
  outputs an induced subgraph $G'$ of $G$ and 
  a tree decomposition $(T,\beta)$ of $G'$ with maximum bag size bounded by
  \[ c_{d,p} (k \log k+\Delta)\]
  such that 
  $N^d[Y_0] \subseteq V(G')$, every bag of $(T,\beta)$
  contains at most 
  \[ c_{d,p} \left(\sqrt{k}+ \log (k+\Delta)\right) \log(k+\Delta) \]
  vertices of $N^d[Y_0]$ and, 
  for every 
  family $\zZ$ of pairwise disjoint $d$-clusters
  with $\sqrt{k} \leq |\zZ| \leq k$,
  with probability at least 
  \[ \left(\left(k\Delta\right)^{\frac{c_{d,p} |\zZ| \log k}{\sqrt{k}} } |V(G)|^{\frac{c_{d,p}|\zZ|}{k}}\right)^{-1} \]
  we have $\bigcup \zZ \subseteq V(G')$ and every bag of $(T,\beta)$ contains at most
  \[ d_{d,p}\left(\sqrt{k} \log k + \frac{\Delta}{\sqrt{k}}\right) \log^{1+3[d \geq 1]} k \cdot \log^{[d \geq 1]}(k + \Delta).\]
   vertices of $\bigcup \zZ$. 
\end{theorem}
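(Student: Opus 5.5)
We prove Theorem~\ref{thm:main2} with a recursive procedure $\mathrm{Decompose}(G_c,R,\cR,\oldR,\mathrm{depth})$, following the outline of Section~\ref{sec:overviewalg}. Here $G_c$ is an induced subgraph of $G$ (the current scope) and $R\subseteq V(G_c)$ is the adhesion that has to be placed in the root bag. The set $\cR\subseteq R$ is the ``difficult'' part, of size $\Otilde(\sqrt{k})$ with no structure. The remaining vertices of $R\setminus\cR$ satisfy $|N^d_G[v]\cap(R\setminus\cR)|=\Otilde(1)$ for every $v$. Finally, $\oldR$ records which vertices of $R$ were inserted long ago and must be cleaned.

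The output of a call is an induced subgraph $G_c'\supseteq R$ of $G_c$ together with a tree decomposition rooted at a bag containing $R$. At the top level we call $\mathrm{Decompose}(G,\ \cdot\ )$, putting $N^d[Y_0]$ into every bag and into $R$ where needed. Since $Y_0$ has size at most $p$, Corollary~\ref{cor:XF}(4) lets us treat $N^d[Y_0]$ as $Z_0$-type data of size $\Oh(\Delta)$ inside every separator.

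In each call we compute a separation $(A,B)$ of $G_c$ with Corollary~\ref{cor:XF}, applied to $G$ (diameter $\Delta$) and restricted to $G_c$. The separator has size $\Oh(\Delta)$, a spanning forest with $\Oh(1)$ components and bounded degree, and $\Oh(d)$ vertices in every $d$-ball, so $c=\Oh(d)$. The weight function is chosen round-robin according to the depth, with five modes:
\begin{enumerate}
\item balanced with respect to $|V(G_c)\setminus R|$;
\item balanced with respect to $R$;
\item balanced with respect to $\cR$;
\item balanced with respect to $\oldR$;
\item \emph{pattern mode}: we use the explicit-support routine of Corollary~\ref{cor:XF} with $U=R$, pick one of the $\Oh(|R|)$ candidates at random, and aim to split the unknown set $\bigcup\zZ\cap N^d[R]$ in a balanced way.
\end{enumerate}

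With probability $1-1/k$ we use $(A,B)$ as it is, with $C=A\cap B$ and $\sC=\emptyset$. With probability $1/k$ we call the sampling version of Theorem~\ref{thm:improve}. In that call, $W$ and $\theta$ are known in modes 1--4 and unknown in mode 5, and we take $Z_0\coloneqq R$, so $k_0=\Oh(k\log k+\Delta)$. Vertices of $(A\cap B)\setminus C$ are deleted from $G'$. We then recurse on $G_c[A]$ and $G_c[B]$ with new adhesions $(R\cap A)\cup C$ and $(R\cap B)\cup C$, and glue the two decompositions under a root bag $R\cup C$.

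The size bounds are deterministic. Mode~1 gives recursion depth $\Oh(\log|V(G)|)$. Modes~2--4 halve $R$, $\cR$ and $\oldR$ once every five levels, while each level adds only $\Oh(\Delta+k\log k)$ vertices to $R$ (from Theorem~\ref{thm:improve}(1), since $k_0/\sqrt{k}$ is at most $\sqrt{k}\log k+\Delta/\sqrt{k}$). This keeps $|R|=\Oh(k\log k+\Delta)$ and $|\cR|=\Oh((\sqrt{k}+\log(k+\Delta))\log k)$. The per-ball bound on $R\setminus\cR$ is maintained because vertices are expelled from $\oldR$ before they accumulate. The same balancing argument, applied to the deterministically known set $N^d[Y_0]$, gives the unconditional bound on $N^d[Y_0]$ per bag.

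For the probabilistic bound we classify calls as futile, minor or major with respect to the measure $\mu(G_c,R)\coloneqq|\{K\in\zZ: K\cap V(G_c)\neq\emptyset,\ K\not\subseteq R\}|$ (futile: $\mu=0$; minor: $1\le\mu\le\sqrt{k}$; major: $\mu>\sqrt{k}$).
\begin{itemize}
\item Futile calls cannot fail. The non-futile calls form a prefix of the recursion tree with at most $|\zZ|$ leaves, hence $\Oh(|\zZ|\log|V(G)|)$ nodes in total.
\item In each of these calls we must make the right ``no improvement'' coin flip. This succeeds with probability $(1-1/k)^{\Oh(|\zZ|\log n)}=|V(G)|^{-\Oh(|\zZ|/k)}$, which matches the required factor.
\item The calls that actually need Theorem~\ref{thm:improve} fall into two groups. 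In \emph{rich} calls, $A\cap B$ meets more than $\Otilde(\sqrt{k})$ clusters; the ``furthermore'' clause of Theorem~\ref{thm:improve} guarantees that at least $\sqrt{k}/c$ clusters are absorbed into $C\subseteq R$. In \emph{pattern} calls, $\bigcup\zZ\cap N^d[R]$ is large and the call splits $\zZ$ nontrivially between its children. A potential argument on $\sum\mu$ bounds each group by $\Oh(|\zZ|/\sqrt{k})$ calls, up to polylogarithmic factors.
\item Each such call succeeds with probability $((k+k_0)\Delta)^{-\Oh(\log k)}$, giving in total $(k\Delta)^{-\Oh(|\zZ|\log k/\sqrt{k})}$, which matches the stated bound.
\end{itemize}
In every successful run, each bag receives $\Oh((\sqrt{k}\log k+\Delta/\sqrt{k})\log^{[d\ge1]}(k+\Delta))$ cluster vertices per improvement step, over $\Oh(\log k)$ steps of Theorem~\ref{thm:improve}. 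The controlled number of clusters within distance $d$ of $R$, combined with the per-ball bounds, converts this into the claimed $\log^{1+3[d\ge1]}k$ bound.

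I expect the main obstacle to be the $d\ge1$ bookkeeping. One must show that the number of clusters within distance $d$ of $R$, which plays the role of $|Z\cap R|$ from the $d=0$ case, stays at $\Otilde(\sqrt{k})$ despite the unbounded neighbourhoods. One must also verify that the pattern mode, using only $\Oh(|R|)$ candidate separations and no knowledge of $W$, really splits $\zZ$ in a balanced way, so that the number of expensive calls remains $\Oh(|\zZ|/\sqrt{k})$. I would first work through $d=0$ completely, and then, for $d\geq 1$, replace each occurrence of $Z\cap R$ by this cluster count, checking each mode's invariant separately.
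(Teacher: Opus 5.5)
Your architecture is the paper's: five round-robin modes, a $\frac1k$-biased rich/poor coin, Theorem~\ref{thm:improve} with $Z_0:=R$, the explicit-support family $\mathcal{X}$ for $U=R$ in pattern mode, and futile/minor/major counting. For $d=0$ it is essentially the paper's proof. For $d\ge 1$, however, the bookkeeping you defer is where the proof actually lives, and two of your concrete choices fail there.

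\textbf{The progress measure.} Your $\mu$ counts every cluster that meets the scope and is not contained in $R$. A $d$-cluster can have unbounded size. For example, for $d=2$ the closed neighbourhood of the hub of a large wheel is a single $2$-cluster. It keeps meeting $V(G_c)\setminus R$ in a number of calls that grows with $|V(G)|$, not with $|\zZ|$. So the non-futile calls do not form a prefix with at most $|\zZ|$ leaves, and the factor $(1-\frac1k)^{\Oh(|\zZ|\log n)}$ is lost. Pinning a cluster into $C$ does not decrease $\sum\mu$ either, since the cluster is still not contained in the children's $R$.

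The paper classifies calls only by $\zZ_{\mathrm{inside}}=\{K\in\zZ : K\subseteq M\setminus R\}$, which has three useful properties:
\begin{itemize}
\item these families are disjoint between siblings;
\item a cluster leaves them for good once $A\cap B$ touches it, which is what lets the ``furthermore'' clause bound the number of rich calls by $\Oh(|\zZ|/\sqrt{k})$;
\item a call with $\zZ_{\mathrm{inside}}=\emptyset$ is genuinely futile. Every cluster meeting it touches $R$ and hence lies in $N^d_G[R]$. That set survives with probability~$1$, because Theorem~\ref{thm:improve} always returns $C\supseteq A\cap B\cap N^d[Z_0]$ and nothing is deleted in the poor case.
\end{itemize}
You have the ingredient $Z_0:=R$, but you never use it in this way.

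\textbf{The pattern-mode target.} You balance the vertex set $\bigcup\zZ\cap N^d[R]$. This set is not contained in $R$, so $\mathcal{X}$, which is built for support $U=R$, carries no guarantee for it. Moreover, a vertex-balanced split gives no control on the number of clusters on each side.

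The paper fixes one representative $w_K\in K\cap R$ for each $K\in\zZ_{\mathrm{border}}$. Then $W\subseteq R$ and $|W|=|\zZ_{\mathrm{border}}|$, a good candidate is sampled with probability at least $1/(2|R|)$, and each child keeps at least $\theta-\Lambda_{\mathrm{pinned}}$ border clusters that its sibling does not see. This is combined with a threshold rule: act iff $|\zZ_{\mathrm{border}}|>30\Lambda_{\mathrm{pinned}}$, and otherwise do nothing at all. Your version instead uses a random, unguarded candidate, which may be rich. The threshold rule yields the invariant $|\zZ_{\mathrm{border}}|\le 37\Lambda_{\mathrm{pinned}}$, and that invariant is what bounds the number of clusters meeting a bag.

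Bounding the acting calls by exactly $\Oh(|\zZ|/\sqrt{k})$ then needs this exclusive-loss property plus a heavy-path/level argument. Your ``up to polylogarithmic factors'' would spoil the stated probability bound.

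\textbf{Handling $Y_0$.} For $d\ge 1$ the set $N^d[Y_0]$ can be huge, so it cannot be put into every bag. The paper starts with $R=\cR=\oldR=Y_0$. It then gets $N^d[Y_0]\subseteq V(G')$ from the preservation of $N^d[R]$. The per-bag bound comes from $|\widetilde{\beta}(t)|$ plus $p$ times the per-ball bound on $\beta(t)\setminus\widetilde{\beta}(t)$, not from a balancing argument.
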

\begin{proof}
Let $G$, $Y_0$, and $k$ be given on input and let $\Delta$ be the diameter of $G$. 
For the sake of analysis, fix $\zZ$ as in the statement.

We assume that $k > \kconst_{d,p}$ for some constant $\kconst_{d,p}$ depending on $d$ and $p$
to be fixed later. 
Indeed, if $k \leq \kconst_{d,p}$, 
it suffices to return $G'=G$ and a tree decomposition of $G$
of \cref{cor:XF}.

\paragraph{Definition of a recursive procedure.}
We give a recursive procedure with the following specification.
Each recursive call $\mathfrak{C}$ is given sets $M, R, \cR, \oldR$ with
$\cR,\oldR \subseteq R \subseteq M \subseteq V(G)$ and $Y_0 \cap M \subseteq R$,
where
\begin{align*}
|R| &= \Oh_{d,p}\left(k \log k+\Delta\right),\\
|\cR| &= \Oh_{d,p}\left(\left(\sqrt{k}+ \log^{[d \geq 1]}(k+\Delta)\right) \log k\right).
\end{align*}
We will ensure the following property: for every
$v \in V(G)$ we have
\begin{equation}\label{eq:Nv-property}
 \left| N_G^d[v] \cap (R \setminus \cR) \right| = \Oh_{d,p}(\log^2k \log(k+\Delta)). 
\end{equation}
We remark that for $d=0$ the properties regarding $\cR$
(including the above) will be trivially satisfied with $\cR=\emptyset$;
the set $\cR$ only plays an important role for the case $d \geq 1$.
Also, the set $\oldR$ is only used as a tool to maintain~\eqref{eq:Nv-property},
hence is only important for $d \geq 1$.

A recursive call returns a subset $\RetSet \subseteq M$ 
and a rooted tree decomposition $(T,\beta)$
of $G[\RetSet]$ with the following properties:
\begin{enumerate}[itemsep=0pt]
\item $M \cap N_G^d[R] \subseteq \RetSet \subseteq M$;
(Note that the first inclusion in particular implies $R \subseteq \RetSet$ and
for $d=0$ it is actually equivalent to $R \subseteq \RetSet$.)
\item $R$ is contained in the root bag of $(T,\beta)$;
\item every bag of $(T,\beta)$ is of size
\[ \Oh_{d,p}(k \log k + \Delta); \]
\item for every $t \in V(T)$, there exists a set 
 $\tilde{\beta}(t) \subseteq \beta(t)$ of size $\Oh_{d,p}((\sqrt{k}+ \log^{[d \geq 1]}(k+\Delta)) \log k)$ 
 such that for every $v \in V(G)$, 
 \[ \left|N_G^d[v] \cap (\beta(t) \setminus \tilde{\beta}(t)) \right| = \Oh_{d,p}(\log^2 k \log(k + \Delta)). \]
(Similarly as with $\cR$, this is trivially satisfied for $d=0$ with $\tilde{\beta}(t) = \emptyset$.)
\end{enumerate}
We say that the output $(\RetSet,(T,\beta))$ of a recursive call with input $M$, $R$, $\cR$, $\oldR$
is \emph{consistent} with $\zZ$ if the following holds:
\begin{enumerate}
  \item $M \cap \bigcup \zZ \subseteq \RetSet$, and
  \item every bag of $(T,\beta)$ contains 
    \[ \Oh_{d,p}\left(\left(\sqrt{k} \log k + \frac{\Delta}{\sqrt{k}}\right) \log^{1+3[d \geq 1]} k \cdot \log^{[d \geq 1]}(k + \Delta) \right).\]
     vertices of $M \cap \bigcup \zZ$.
\end{enumerate}

We will use simple notation $M, R, \cR, \oldR$ when we discuss a single recursive call.
In reasonings where we consider multiple recursive calls, 
we will use notation $M(\mathfrak{C})$, $R(\mathfrak{C})$, etc., for the corresponding set in the call $\mathfrak{C}$.

\paragraph{Initial call and final output.}
The initial call is made to $\oldR = \cR = R = Y_0$,
and $M= V(G)$.
The output $\RetSet$ and $(T,\beta)$ becomes the output of the whole
algorithm with $G' = G[\RetSet]$ and its tree decomposition $(T,\beta)$.
It is straightforward to verify that the promised properties of the output
$(\RetSet,(T,\beta))$ imply the promised properties of the final output of the algorithm.
In particular, the promise with regards to $N_G^d[Y_0]$ is guaranteed by the promised property regarding $\tilde{\beta}(t)$:
for every $t \in V(T)$ we have 
\begin{align*}
  \left|N_G^d[Y_0] \cap \beta(t)\right| &\leq |\tilde{\beta}(t)| + \bigcup_{v \in Y_0} \left|N_G^d[v] \cap (\beta(t) \setminus \tilde{\beta}(t)) \right|\\
&= \Oh_{d,p}\left(\left(\sqrt{k}+ \log^{[d \geq 1]}(k+\Delta)\right) \log k\right) + p \cdot \Oh_{d,p}(\log^2 k \log(k+\Delta)) \\&= \Oh_{d,p}\left(\left(\sqrt{k}+ \log^{[d \geq 1]}(k+\Delta)\right) \log(k+\Delta)\right). 
\end{align*}
Furthermore, observe that if the output $(\RetSet,(T,\beta))$ is consistent with $\zZ$, 
then the final output of the algorithm satisfies the promised properties concerning the set $\zZ$.
We will ensure that the output of the root call to the recursion is consistent with $\zZ$
with sufficiently high probability.

\paragraph{Leaves of the recursion.}
At a leaf of the recursion, if $|M \setminus R| \leq 4\sqrt{k}$,
we just return $\RetSet = M$ and a trivial tree decomposition with a single bag $M$.
The promised properties are satisfied trivially.
(We can use $\tilde{\beta}(t) = \cR \cup (M \setminus R)$ for the single bag created
in this call.)

\paragraph{Notation for a recursive call.}
Before we describe a non-leaf recursive call, let us set up some notation.

For a call $\mathfrak{C}$ and fixed $\zZ$, we denote
\begin{align*}
 \zZ_\mathrm{border}(\mathfrak{C}) &= \{K \in \zZ~|~K \cap R(\mathfrak{C}) \neq \emptyset\},\\
 \zZ_\mathrm{inside}(\mathfrak{C}) &= \{K \in \zZ~|~K \subseteq M(\mathfrak{C}) \setminus R(\mathfrak{C})\},\\
 \zZ_\mathrm{all}(\mathfrak{C}) &= \{K \in \zZ~|~K \cap M(\mathfrak{C}) \neq \emptyset\}.
\end{align*} 
Note that 
$\zZ_\mathrm{all}(\mathfrak{C}) = \zZ_\mathrm{border}(\mathfrak{C}) \uplus \zZ_\mathrm{inside}(\mathfrak{C})$ is a partition.
We will drop the argument $\mathfrak{C}$ if the call is clear from the context.

Let $a_F$ and $\delta_F$ be the bounds on the number of connected components
and the maximum degree of the forest $F$ given by \cref{cor:XF},
and $c_F$ be such that $|A \cap B| \leq c_F(\Delta + 1)$
and every vertex $v \in V(G)$ satisfies $|N_G^{d'}[v] \cap A \cap B| \leq c_F (2d'+1)$
for every $d' \geq 0$ for a separation $(A,B)$ given by~\cref{cor:XF}.
Note that $a_F$, $c_F$, and $\delta_F$ are constants depending on $p$ only. 

Apply \cref{thm:nearly-exclude} to the class
of $p$-nearly-embeddable without apices graphs,
obtaining a graph $H$ such that every $p$-nearly-embeddable without apices graph is $H$-minor-free.
Set $\hconst \coloneqq \max(p+4,a_F+1,\delta_F,|V(H)|)$. Note that $\hconst$ is a constant depending on $p$ only.

\paragraph{Non-leaf recursive call.}
The recursion alternates two types of recursive steps:
the regular mode or the pattern mode in a round-robin fashion.
Furthermore, the regular mode comes in four flavors, depending on how we initialize a variable $W$.
The mode of the recursive
call depends on the remainder mod 5 of the depth of the recursive call:
if the depth equals 0, 1, 2, or 3 mod 5, then we enter 
the regular mode with one of the four flavors, while for depth equal to 4 mod 5
we enter the pattern mode.
The root call is at depth $0$.

In some cases, we do nothing and just pass the same arguments to a recursive call
of depth one larger (so that it has a different mode). In all other cases, we will 
identify a set $W \subseteq M$, set $\theta := \lfloor |W|/4 \rfloor$,
and find a separation $(A^\circ, B^\circ)$ of $G$ that is balanced with respect to $W$
in the following sense: $|A^\circ \cap W| \geq \theta$ and $|B^\circ \cap W| \geq \theta$.
We will refine $(A^\circ, B^\circ)$ into a separation $(A,B)$ (often using Theorem~\ref{thm:improve})
in order to control the intersection of $\zZ$ with $A \cap B$. 
Based on $(A,B)$, we will invoke two recursive child subcalls and compute
the output $(\RetSet, (T,\beta))$ based on the output of the recursive subcalls. 

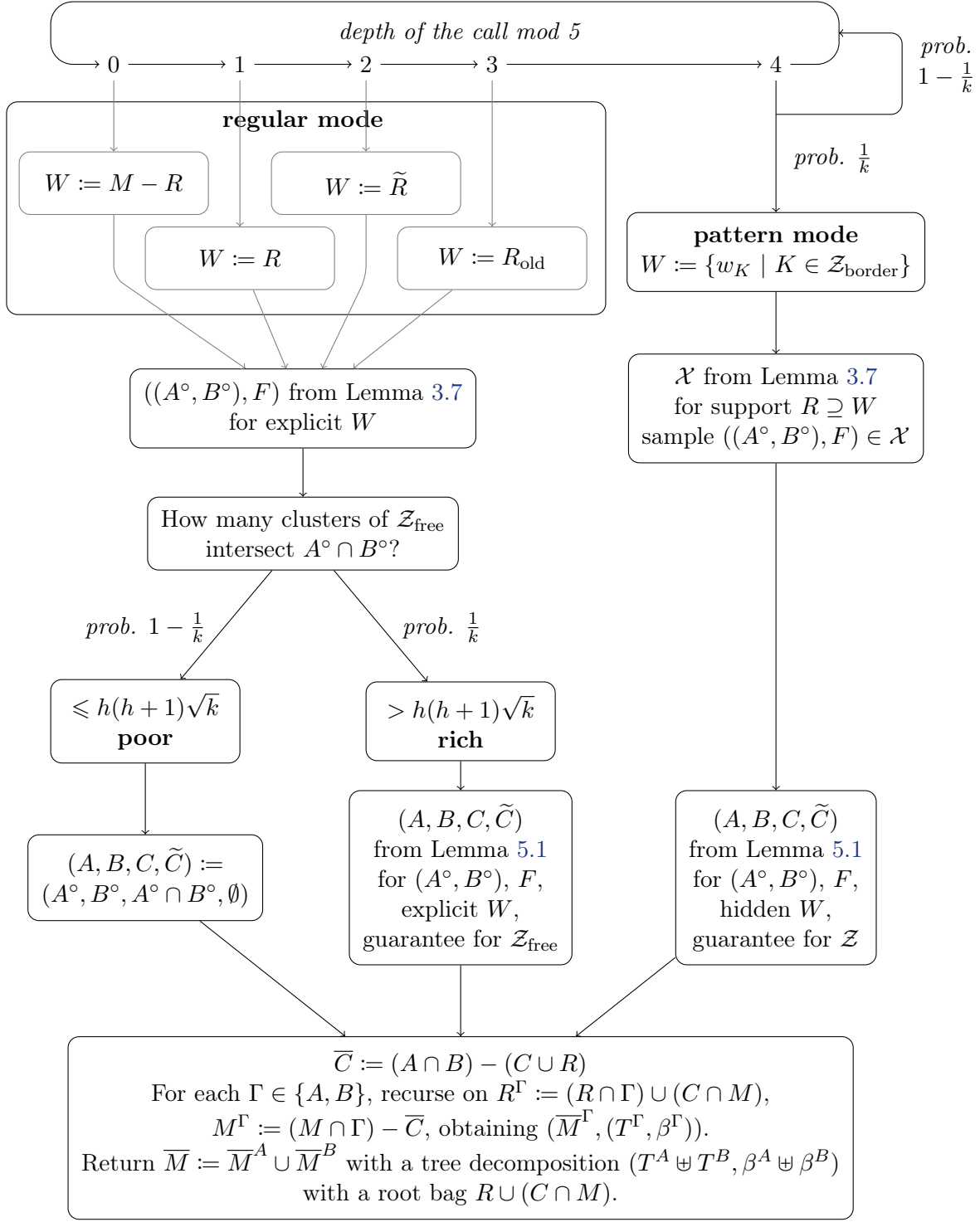
\begin{figure}[tb!]
\begin{center}
\begin{tikzpicture}
\tikzset{
    rounded box/.style={
        rectangle,
        draw=black,
        rounded corners=5pt,
        minimum width=3cm,
        minimum height=1cm,
        text centered, align=center,
        inner sep=6pt,
    }
}
  \node (d0) at (-0.5, 0) {0};
  \node (d1) at (1.5,0) {1};
  \node (d2) at (3.5, 0) {2};
  \node (d3) at (5.5, 0) {3};
  \node (d4) at (10, 0) {4};
  \draw[->] (d0) -- (d1);
  \draw[->] (d1) -- (d2);
  \draw[->] (d2) -- (d3);
  \draw[->] (d3) -- (d4);
  \draw[rounded corners=10pt,->] (d4) -- ($ (d4) + (1, 0) $) -- ($ (d4) + (1, 1) $) -- ($ (d0) + (-1, 1) $) -- ($ (d0) + (-1, 0) $) -- (d0);
  \draw (5, 0.5) node {\textit{depth of the call mod 5}};

  \draw[rounded corners=5pt, black] (-2.2, -4) rectangle (7.3, -0.6);
  \draw (2.5, -0.9) node {\textbf{regular mode}};
  \node[rounded box,draw=gray] (W0) at (-0.5, -1.9) {$W \coloneqq M \setminus R$};
  \node[rounded box,draw=gray] (W1) at (1.5, -3.1) {$W \coloneqq R$};
  \node[rounded box,draw=gray] (W2) at (3.5, -1.9) {$W \coloneqq \cR$};
  \node[rounded box,draw=gray] (W3) at (5.5, -3.1) {$W \coloneqq \oldR$};
  \node[rounded box] (W4) at (10, -3) {\textbf{pattern mode}\\$W \coloneqq \{w_K~|~K \in \zZ_\mathrm{border}\}$};
  \draw[gray,->] (d0) -- (W0);
  \draw[gray,->] (d1) -- (W1);
  \draw[gray,->] (d2) -- (W2);
  \draw[gray,->] (d3) -- (W3);
  \draw[->] (d4) -- (W4);
  \draw[rounded corners=5pt,->] (10, -0.8) -- (12, -0.8) -- (12, 0.5) -- (11, 0.5);
  \draw (10.9, -1.5) node {\it prob. $\frac{1}{k}$};
  \draw (12.7, 0) node[align=center] {\it prob.\\$1-\frac{1}{k}$};

  \node[rounded box] (Ao01) at (2.5, -5.5) {$((A^\circ, B^\circ), F)$ from \cref{cor:XF}\\for explicit $W$};
  \node[rounded box] (Ao2) at (10, -5.5) {$\mathcal{X}$ from \cref{cor:XF}\\for support $R \supseteq W$\\sample $((A^\circ, B^\circ), F) \in \mathcal{X}$};
  \draw[gray,rounded corners=5pt,->] (W0) -- ($ (W0) + (0, -1.5) $) -- (Ao01);
  \draw[gray,->] (W1) -- (Ao01);
  \draw[gray,rounded corners=5pt,->] (W2) -- ($ (W2) + (0, -1.5) $) -- (Ao01);
  \draw[gray,->] (W3) -- (Ao01);
  \draw[->] (W4) -- (Ao2);

  \node[rounded box] (q01) at (2.5, -7.5) {How many clusters of $\zZ_\mathrm{free}$\\intersect $A^\circ \cap B^\circ$?};
  \draw[->] (Ao01) -- (q01);
  \node[rounded box] (rich) at (5, -10.5) {$> \hconst(\hconst+1)\sqrt{k}$\\\textbf{rich}};
  \node[rounded box] (poor) at (0, -10.5) {$\leq \hconst(\hconst+1)\sqrt{k}$\\\textbf{poor}};
  \draw[->] (q01) -- (rich);
  \draw[->] (q01) -- (poor);
  \draw (0, -9) node {\it prob. $1-\frac{1}{k}$};
  \draw (4.7, -9) node {\it prob. $\frac{1}{k}$};
  \node[rounded box] (poor2) at (0, -13) {$(A,B,C,\sC) \coloneqq$\\$(A^\circ, B^\circ, A^\circ \cap B^\circ, \emptyset)$};
  \node[rounded box] (rich2) at (5, -13) {$(A,B,C,\sC)$\\from \cref{thm:improve}\\for $(A^\circ, B^\circ)$, $F$,\\explicit $W$,\\guarantee for $\zZ_\mathrm{free}$};
  \node[rounded box] (abc2) at (10, -13) {$(A,B,C,\sC)$\\from \cref{thm:improve}\\for $(A^\circ, B^\circ)$, $F$,\\hidden $W$,\\guarantee for $\zZ$};
  \draw[->] (rich) -- (rich2);
  \draw[->] (poor) -- (poor2);
  \draw[->] (Ao2) -- (abc2);
  \node[rounded box] (rec) at (5, -17) {$\overline{C} \coloneqq (A \cap B) \setminus (C \cup R)$\\
  For each $\Gamma \in \{A, B\}$, recurse on 
 $R^\Gamma \coloneqq (R \cap \Gamma) \cup (C \cap M)$,\\
 $M^\Gamma \coloneqq (M \cap \Gamma) \setminus \overline{C}$, obtaining $(\RetSet^\Gamma, (T^\Gamma, \beta^\Gamma))$.\\
 Return $\RetSet \coloneqq \RetSet^A \cup \RetSet^B$ with a tree decomposition $(T^A \uplus T^B, \beta^A \uplus \beta^B)$\\with a root bag $R \cup (C \cap M)$.};
   \draw[->] (rich2) -- (rec);
   \draw[->] (poor2) -- (rec);
   \draw[->] (abc2) -- (rec);
\end{tikzpicture}
\caption{Diagram of the control flow of the non-leaf recursive call.}\label{fig:algo-flow}
\end{center}
\end{figure}

We now proceed to the precise description of the modes. See Figure~\ref{fig:algo-flow}
for a diagram.
\begin{description}
\item[Regular mode.] 
First, we define $W \subseteq M$
depending on the remainder of the depth mod 5: for remainders 0, 1, 2, 3, we
set $W := M \setminus R$, $W := R$, $W := \cR$, and $W := \oldR$, respectively.
Furthermore, in the last case (i.e., remainder 3), if $|\oldR| \leq 3$, then we
reset $\oldR \coloneqq R$ before setting $W := \oldR$. 

Given $W$, apply \cref{cor:XF} with the explicit weight function usage to $G$ and a uniform measure on $W$, obtaining $((A^\circ, B^\circ),F)$.
(It is important that we do this in the whole graph $G$;
we cannot apply \cref{cor:XF} to just $G[M]$ here, as it may have a huge diameter.)
Set $\theta = \lfloor |W|/4 \rfloor \geq 1$. We thus have $|W| \geq 4\theta$,
$|A^\circ \cap W| \geq \theta$, and $|B^\circ \cap W| \geq \theta$.

We say that $A^\circ \cap B^\circ$ is \emph{rich} if $A^\circ \cap B^\circ$ intersects more than
$\hconst(\hconst+1) \sqrt{k}$ sets of $\zZ_\mathrm{inside}$, and \emph{poor} otherwise.
We randomly guess whether $A^\circ \cap B^\circ$ is rich or poor with a biased probability.
More precisely, we guess that $A^\circ \cap B^\circ$ is rich with probability
$\frac{1}{k}$
and with the remaining probability $1-\frac{1}{k}$ we assume $A^\circ \cap B^\circ$ is poor.

\begin{itemize}
\item 
If we have guessed ``rich'', then 
we invoke the randomized polynomial-time version of
\cref{thm:improve} for the constants $c \coloneqq c_F(2d+1)$, $d$, and $\hconst$ to $G$, $Z_0 := R$, $k$, $W$, $\theta$,
$(A^\circ,B^\circ)$, and the forest $F$.
We denote the obtained tuple by $(A,B,C,\sC)$.
\item
If we have guessed ``poor'', then we set
$(A,B,C,\sC) \coloneqq (A^\circ,B^\circ, A^\circ \cap B^\circ,\emptyset)$. 
\end{itemize}

\item[Pattern mode.] This mode is somewhat different than the previous as here
the set $W$ is defined based on $\zZ$, which is unknown to the algorithm.

We start by flipping a biased coin: with probability $1-\frac{1}{k}$ we do nothing
(i.e., we pass the same input to a single child recursive call with depth increased by one,
so it will enter the regular mode) and with probability $\frac{1}{k}$ we proceed further.

Let us define a set $W$ (unknown to the algorithm) in the following way.
For every $K \in \zZ_\mathrm{border}$, pick one $w_K \in K \cap R$
and let $W = \{w_K~|~K \in \zZ_\mathrm{border}\}$ and $\theta = \lfloor |W|/4 \rfloor$. 
Note that $W$ and $\theta$ are unknown to the algorithm, albeit we have $W \subseteq R$. 
We apply \cref{cor:XF} with the explicit support usage
to $G$ and $U := R$, obtaining a family $\mathcal{X}$, and select $((A^\circ,B^\circ),F) \in \mathcal{X}$ uniformly at random. 
With probability at least $(2|R|)^{-1}$, the selected $((A^\circ, B^\circ), F)$ satisfies 
$|A^\circ \cap W| \geq \theta$, and $|B^\circ \cap W| \geq \theta$.

Then, 
we invoke the randomized polynomial-time version of
\cref{thm:improve} for the constants $c \coloneqq c_F(2d+1)$, $d$, and $\hconst$ to $G$, $Z_0 := R$, $k$, 
$(A^\circ,B^\circ)$, and the forest $F$.
The obtained tuple is again denoted by $(A,B,C,\sC)$.
\end{description}

\begin{figure}[tb!]
  \begin{center}
    \begin{tikzpicture}
    \draw[rounded corners=5pt] (-5, -3) rectangle (5,1); 
    \draw[red,very thick,fill=white!80!red,rounded corners=5pt] (-0.5, -2) rectangle (0.5, 2);
    \draw[rounded corners=5pt, fill=white!80!black] (-3, 0) rectangle (3, 1);
    \draw[dashed,blue,rounded corners=5pt] (-5, -3) rectangle (0.5, 3);
    \draw[dashed,magenta,rounded corners=5pt] (5, -3) rectangle (-0.5, 3);
    \draw[dashed, rounded corners=5pt, blue, pattern={Lines[angle=-45,distance={3pt}]},pattern color=white!50!blue] 
      (-5, -1) -- (-5, 1) -- (-3, 1) -- (0.5, 1) -- (0.5, -2) -- (-0.5, -2) -- (-0.5, -3) -- (-5, -3) -- (-5, -2);
    \draw[dashed, rounded corners=5pt, blue, pattern={Lines[angle=45,distance={3pt}]},pattern color=blue]
      (-1, 1) -- (0.5, 1) -- (0.5, -2) -- (-0.5, -2) -- (-0.5, 0) -- (-3, 0) -- (-3, 1) -- (-1, 1);
    \draw (4, 0.5) node {$M$};    
    \draw (2, 0.5) node {$R$};
    \draw[blue] (-4, 2) node {$A$};
    \draw[magenta] (4, 2) node {$B$};
    \draw[red] (0, 1.5) node {$C$};
    \draw[blue] (-4, -2.5) node {$M^A$};
    \draw[blue] (0, 0.5) node {$R^A$};
    \end{tikzpicture}
  \caption{Sets used for the child recursive calls}\label{fig:sets-recursion}.
  \end{center}
\end{figure}
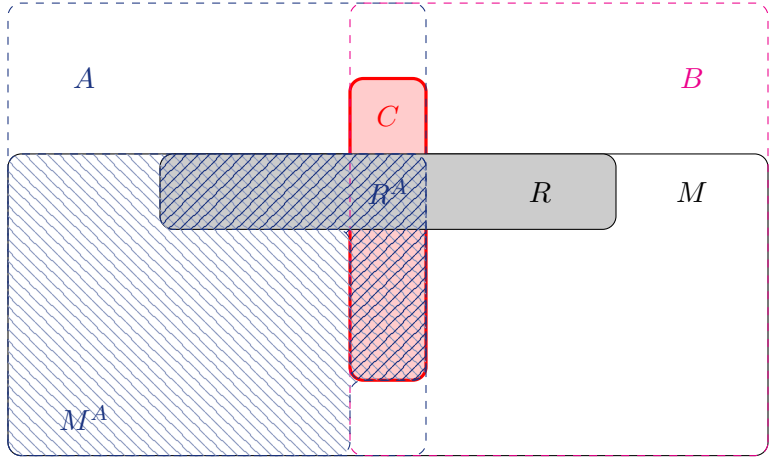

Let
\[ \overline{C} \coloneqq (A \cap B) \setminus (C \cup R). \]
For each $\Gamma \in \{A, B\}$, recurse
on 
\begin{align*}
 R^\Gamma &\coloneqq (R \cap \Gamma) \cup (C \cap M),\\
 \cR^\Gamma &\coloneqq (\cR \cap (\Gamma \setminus C)) \cup (\sC \cap M),\\
 \oldR^\Gamma &\coloneqq \oldR \cap (\Gamma \setminus C),\\
 M^\Gamma &\coloneqq (M \cap \Gamma) \setminus \overline{C},
\end{align*}
obtaining $\RetSet^\Gamma$ and $(T^\Gamma, \beta^\Gamma)$. (See also Figure~\ref{fig:sets-recursion}.)
Denote the child calls by $\mathfrak{C}^\Gamma$ for $\Gamma \in \{A,B\}$.
We set $\RetSet \coloneqq \RetSet^{A} \cup \RetSet^{B}$
and we construct the tree decomposition $(T,\beta)$ 
from the disjoint union of $(T^A,\beta^A)$ and $(T^B,\beta^B)$ by 
creating a root bag 
 \[ R \cup (C \cap M) \]
and attaching the roots of $(T^\Gamma, \beta^\Gamma)$ for $\Gamma \in \{A,B\}$
as children.
It is straightforward to verify that indeed $(T,\beta)$ is a tree decomposition of $G[\RetSet]$.
(Note that $\overline{C}$ is disjoint from $M^A \cup M^B$,  hence $\overline{C} \cap \RetSet = \emptyset$.)
This finishes the description of the recursive algorithm.

\paragraph{Analysis: Basic observations.}
We now proceed to the analysis.
We start with a few immediate observations implied by the definitions of
$R^\Gamma$, $\cR^\Gamma$, $\oldR^\Gamma$, and $M^\Gamma$. Fix $\Gamma \in \{A,B\}$.
First, note that $\oldR^\Gamma \subseteq \oldR$. 
Second, observe that $R^\Gamma \setminus R \subseteq C$
and $\cR^\Gamma \setminus \cR \subseteq \sC$,
and thus in particular
$|R^\Gamma \setminus R| \leq |C|$ and $|\cR^\Gamma \setminus \cR| \leq |\sC|$.
Third,
note that $M^\Gamma \setminus R^\Gamma \subseteq M \setminus R$
and $(M^A \setminus R^A) \cap (M^B \setminus R^B) = \emptyset$.
This implies $\zZ_\mathrm{inside}(\mathfrak{C}^\Gamma) \subseteq \zZ_\mathrm{inside}(\mathfrak{C})$.

Let 
 \[ \zZ_\mathrm{pinned}(\mathfrak{C}) \coloneqq \{K \in \zZ_\mathrm{inside}(\mathfrak{C})~|~K \cap A \cap B \neq \emptyset\}. \]
We make the following immediate observation.
\begin{claim}\label{cl:wZsplit}
\[ \zZ_\mathrm{inside}(\mathfrak{C}) = \zZ_\mathrm{inside}(\mathfrak{C}^A) \uplus \zZ_\mathrm{inside}(\mathfrak{C}^B) \uplus \zZ_\mathrm{pinned}(\mathfrak{C}) \]
is a partition. 
\end{claim}

We now analyse the output $(A,B,C,\sC)$ regardless of the mode. 
\begin{claim}\label{cl:simple-ABC-props}
  The following inequalities hold:
\begin{align}
 |C|  &\leq c_F(\Delta+1) + \Oh_{d,p}(k \log k) + \Oh_{d,p}\left(\frac{|R| \log k}{\sqrt{k}}\right), \label{eq:Cbound} \\
 |\sC| &\leq \Oh_{d,p}\left(\left(\sqrt{k}+ \log^{[d \geq 1]}(k+\Delta)\right) \log k\right), \label{eq:sCbound} \\
 C &\supseteq N_{G}^d[R] \cap A \cap B, \label{eq:NdR-survives} \\
 \forall_{v \in V(G)}\quad \left|N_G^d[v] \cap (C \setminus \sC)\right| &= \Oh_{d,p}(\log^2 k).\label{eq:Nvbound}
\end{align}
\end{claim}
\begin{claimproof}
  If the tuple $(A,B,C,\sC)$ comes from an application of Theorem~\ref{thm:improve},
  the claim follows directly from the guarantees of Theorem~\ref{thm:improve}. Otherwise,
  we have $(A,B,C,\sC) = (A^\circ, B^\circ, A^\circ \cap B^\circ, \emptyset)$;
  then \eqref{eq:Cbound} and~\eqref{eq:Nvbound} follow from the guarantees of \cref{cor:XF}
  while \eqref{eq:sCbound} and~\eqref{eq:NdR-survives} are vacuously true.
\end{claimproof}
We also note that in the regular mode we always have 
$|A \cap W| \geq \theta$ and $|B \cap W| \geq \theta$, as guaranteed by \cref{thm:improve}
for explicit $W$ and $\theta$.
In the pattern mode, as $W$ and $\theta$ are unknown to the algorithm, these inequalities are true
if the application of \cref{thm:improve} is successful with regards to $W$ and $\theta$.

If \cref{thm:improve} is invoked, we apply its promise to the set $\wZ$ defined as follows.
\[ \wZ := \begin{cases} \zZ_\mathrm{inside} & \mathrm{in\ the\ regular\ mode,} \\
  \zZ & \mathrm{in\ the\ pattern\ mode.} \end{cases} \]
That is, with probability at least
\[ \left(\Oh_{d,p}(k\Delta)\right)^{-\Oh_p(\log k)}, \]
we have 
\begin{equation}\label{eq:Zsurvives} A \cap B \cap \bigcup \wZ \subseteq C \end{equation}
and
\begin{equation}
|C \cap \bigcup \wZ| = \Oh_{d,p}\left(\left(\sqrt{k} + \frac{|R|}{\sqrt{k}}\right) \log^{1+[d \geq 1]} k\right).\label{eq:CCbound}
\end{equation}

Furthermore, within the same probability bound, 
\begin{itemize}
\item in the pattern mode, we also have $|A \cap W| \geq \theta$ and $|B \cap W| \geq \theta$
  (assuming that $|A^\circ \cap W| \geq \theta$ and $|B^\circ \cap W| \geq \theta$, which happens
  with probability at least $(2|R|)^{-1}$ independently of the application
  of \cref{thm:improve});
\item in the regular mode, if $X$ is indeed rich,
then additionally we have
\begin{equation}\label{eq:Cheavy}
	 |\zZ_{\mathrm{pinned}}| \geq \begin{cases} \frac{\sqrt{k}}{c_F(2d+1)} & \mathrm{if\ }d \geq 1,\\ \sqrt{k} & \mathrm{if\ }d=0.\end{cases} 
	\end{equation}
\end{itemize}
Finally, observe that~\eqref{eq:Zsurvives} and~\eqref{eq:CCbound} are trivially
true in the regular mode
if either $\zZ_\mathrm{inside} = \emptyset$
or if we are in the poor case and we correctly guessed it.

\paragraph{Analysis: Controlling sizes of sets and the recursion depth.}
We now analyse how the regular mode keeps sizes of various sets in check
and how it bounds the recursion depth.

\begin{claim}\label{cl:Wdrops}
  In the regular mode mode, 
    for every $\Gamma \in \{A,B\}$,
    it holds that (depending on the remainder mod 5 of the depth of recursion).
    \begin{description}[itemsep=0px]
    \item[Remainder 0:] $|M^\Gamma \setminus R^\Gamma| \leq \frac{6}{7} |M \setminus R|$;
    \item[Remainder 1:] $|R^\Gamma \setminus C| \leq \max(3, \frac{6}{7}|R|)$;
    \item[Remainder 2:] $|\cR^\Gamma \setminus C| \leq \max(3, \frac{6}{7}|\cR|)$.
    \item[Remainder 3:] $|\oldR^\Gamma| \leq \max(3, \frac{6}{7}|\oldR|)$.
    \end{description}
  Consequently, the depth of the recursion is bounded by
  \[ 5 \log_{7/6} n \leq 23 \log n. \]
\end{claim}
\begin{claimproof}
  In the ``Remainder 0'' case,
  as we are not in the leaf case, $|M \setminus R| > 4\sqrt{k} \geq 4$.
  In the ``Remainder 1--3'' cases, the subcases $|R| \leq 3$, $|\cR| \leq 3$, and $|\oldR| \leq 3$ respectively
  are immediate as $R^\Gamma \setminus C \subseteq R$, $\cR^\Gamma \setminus C \subseteq \cR$,
  and $\oldR^\Gamma \subseteq \oldR$,
  so we can assume that in all cases $|W| \geq 4$.
  Hence, $\theta = \lfloor |W|/4 \rfloor \geq |W|/7$. 
  Recall $|A \cap W|, |B \cap W| \geq \theta$. 
  This implies that there are at least $\theta$ vertices of $W$
  outside 
  $M^\Gamma \setminus R^\Gamma$, $R^\Gamma \setminus C$, $\cR^\Gamma \setminus C$, or $\oldR^\Gamma$,
  respectively, for $\Gamma \in \{A,B\}$. 
  The first part of the claim follows.

  The second part of the claim is immediate from the first part for ``Remainder 0''
  and the fact that every fifth call (starting from the root one, as the root is at depth $0$) is in the regular mode
  with the ``Remainder 0'' case.
\end{claimproof}
The second part of Claim~\ref{cl:Wdrops} in particular implies
that the entire algorithm runs in randomized polynomial time.

Let $c_{d,p}$ be the larger of the two constants hidden in the $\Oh_{d,p}$ notation
in~\eqref{eq:Cbound}. 
We assume $\kconst_{d,p}$ is large enough so that $k > \kconst_{d,p}$ implies
\[ c_{d,p} \cdot \frac{\log k}{\sqrt{k}} < 0.01 \]
For an integer $x \geq 0$, let
 \[f(x) = \left\lfloor c_F(\Delta + 1) + c_{d,p} k \log k + x \cdot \left(1 + c_{d,p} \frac{\log k}{\sqrt{k}}\right) \right\rfloor. \]
Since $R^\Gamma \subseteq R \cup C$, we have $|R^\Gamma| \leq f(|R|)$ thanks to~\eqref{eq:Cbound}.
Note that $k > \kconst_{d,p}$ implies
 \[ f(x) < c_F(\Delta+1) + c_{d,p} k \log k + 1.01x. \]
This implies the following bound on the size of $R$.
\begin{claim}\label{cl:RboundLambda}
  In every call $\mathfrak{C}$, 
  \begin{equation}\label{eq:RboundLambda}
|R(\mathfrak{C})| \leq 100 c_F(\Delta + 1) + 100 c_{d,p} k \log k =: \Lambda_R.
\end{equation}
\end{claim}
\begin{claimproof}
If $\mathfrak{C}'$ is an ancestor of $\mathfrak{C}$ at distance $a$, 
then $|R(\mathfrak{C}')| \leq f^{(a)}(|R(\mathfrak{C}|))$, where $f^{(a)}$ is the composition
of $a$ applications of the function $f$. 
Claim~\ref{cl:Wdrops} for Remainder 1 and~\eqref{eq:Cbound} implies that if $\mathfrak{C}$ is a child of $\mathfrak{C}'$
and $\mathfrak{C}'$ is in the regular mode with $W = R$ (i.e., remainder 1 mod 5),
then 
 \[ |R(\mathfrak{C})| \leq \max\left(3,\frac{6}{7}|R(\mathfrak{C}')|\right) + c_F(\Delta+1) + c_{d,p} k \log k + 0.01|R(\mathfrak{C}')|. \]
The claim follows by a straightforward induction on the depth of the call.
\end{claimproof}
Using~\eqref{eq:Cbound}, we have 
\begin{equation}\label{eq:CboundLambda}
|C| \leq 2c_F(\Delta+1) + 2c_{d,p} k \log k =: \Lambda_C.
\end{equation}

Since every created bag is the union of the set $R$ and a subset of the set $C$,
\eqref{eq:RboundLambda} and~\eqref{eq:CboundLambda} imply
that the constructed tree decomposition has bags of size at most 
\[ \Lambda_R + \Lambda_C = 102 c_F (\Delta + 1) + 102 c_{d,p} k \log k = \Oh_{d,p}(\Delta + k \log k). \]

From~\eqref{eq:sCbound} and \cref{cl:Wdrops} we infer that
\begin{equation}\label{eq:cR-bound}
  |\cR| = \Oh_{d,p}\left(\left(\sqrt{k}+ \log^{[d \geq 1]}(k+\Delta)\right) \log k\right).
\end{equation}

We also observe that~\eqref{eq:RboundLambda} allows us to simplify~\eqref{eq:CCbound} into
\begin{equation}
|C \cap \bigcup \wZ| = \Oh_{d,p}\left(\left(\sqrt{k} \log k + \frac{\Delta}{\sqrt{k}}\right) \log^{1+[d \geq 1]} k\right).\label{eq:CCbound2}
\end{equation}

We now use $\oldR$ to argue about property~\eqref{eq:Nv-property}.
\begin{claim}\label{cl:Nv-property}
  In the recursion tree, on every upward paths of length at least
  \[ 5 \log_{7/6} \Lambda_R = \Oh_{d,p}(\log(\Delta + k)),\]
  the algorithm at least once resets $\oldR \coloneqq R$.

  Consequently, for every call, the set $R$ is contained in a union of
  sets $C$ from the $\Oh_{d,p}(\log(\Delta+k))$ closest ancestor calls
  and at most $\max(3,p)$ other vertices.
\end{claim}
\begin{claimproof}
  The first claim is immediate from the bound~\eqref{eq:RboundLambda}
  and the ``Remainder 3'' case of \cref{cl:Wdrops}.
  
  For the second claim, consider a call $\mathfrak{C}_0$ and
  define calls $\mathfrak{C}_i$ for $i \geq 1$ as follows: 
  $\mathfrak{C}_{i}$ is the closest proper ancestor of $\mathfrak{C}_{i-1}$
  where the reset $\oldR \coloneqq R$ happened, or the root call if no such call exists.
  The first claim implies that there are $\Oh_{d,p}(\log(\Delta+k))$ calls between
  $\mathfrak{C}_{i-1}$ and $\mathfrak{C}_i$ for each $i \geq 1$.
  Hence, for $i \geq 0$ the set $R(\mathfrak{C}_i)\setminus \cR(\mathfrak{C}_i)$ is contained
  in the union of $\Oh_{d,p}(\log(\Delta + k))$
  sets $C$ computed in the calls between $\mathfrak{C}_{i}$ and $\mathfrak{C}_{i+1}$. 
  
  We focus on the calls $\mathfrak{C}_0$ and $\mathfrak{C}_1$.
  We have
  \begin{align*}
  R(\mathfrak{C}_0) &= (R(\mathfrak{C}_0) \setminus \cR(\mathfrak{C}_0)) \cup \cR(\mathfrak{C}_0)\\
  &\subseteq (R(\mathfrak{C}_0) \setminus \cR(\mathfrak{C}_0)) \cup R(\mathfrak{C}_1)\\
  &=(R(\mathfrak{C}_0) \setminus \cR(\mathfrak{C}_0)) \cup 
  (R(\mathfrak{C}_1) \setminus \cR(\mathfrak{C}_1)) \cup \cR(\mathfrak{C}_1).
  \end{align*}
  The first two sets in the union above are each contained in the union 
  of $\Oh_{d,p}(\log(k+\Delta))$ sets $C$ computed in the calls between $\mathfrak{C}_2$
  and $\mathfrak{C}_0$, while the last set is of size at most $\max(3,p)$.
\end{claimproof}
The property~\eqref{eq:Nv-property} follows now directly
from \cref{cl:Nv-property} and the bound~\eqref{eq:Nvbound}.

We now analyse the promised property regarding $N_G^d[R]$.
\begin{claim}\label{cl:NdR-ok}
We have $M \cap N_G^d[R] \subseteq \RetSet$.
\end{claim}
\begin{claimproof}
  We prove the claim in a bottom-up induction, so we assume
  $M^\Gamma \cap N_G^d[R^\Gamma] \subseteq \RetSet^\Gamma \subseteq \RetSet$ for $\Gamma \in \{A,B\}$.
  In the application of Theorem~\ref{thm:improve} (that happens either in adhesion mode or in pattern mode)
  has $Z_0 := R$ passed as an argument.
  Consequently, $N_G^d[R] \cap \bar{C} = \emptyset$; note that this is also true 
  if Theorem~\ref{thm:improve} is not used (i.e., regular mode, ``poor'' guess) as then $\bar{C} = \emptyset$. 
  Hence, $M \cap N_G^d[R] \cap A \cap B \subseteq \RetSet$ and the claim follows.
\end{claimproof}

\paragraph{Analysis: Successful run.}
We say that a call $\mathfrak{C}$ is
\begin{description}
\item[major] if $|\zZ_\mathrm{inside}(\mathfrak{C})| > \sqrt{k}$, 
\item[minor] if $1 \leq |\zZ_\mathrm{inside}(\mathfrak{C})| \leq \sqrt{k}$,
\item[futile] if $\zZ_\mathrm{inside}(\mathfrak{C}) = \emptyset$. 
\end{description}

We say that a call $\mathfrak{C}$ is \emph{locally successful} if
\begin{itemize}[itemsep=0px]
\item $\mathfrak{C}$ is futile, or
\item $\mathfrak{C}$ is minor, and: 
\begin{itemize}[itemsep=0px]
  \item $\mathfrak{C}$ is in the regular mode and it guessed ``poor'', or
  \item $\mathfrak{C}$ is in the pattern mode and it decided to do nothing, 
\end{itemize}
or
\item $\mathfrak{C}$ is major, and:
\begin{itemize}[itemsep=0px]
  \item $\mathfrak{C}$ is in the regular mode, it is poor and we have correctly guessed it, or
  \item $\mathfrak{C}$ is in the regular mode, it is rich, we have correctly guessed it, 
      and the subsequent call to Theorem~\ref{thm:improve} returned a tuple that satisfies its promise
      with respect to $\wZ = \zZ_\mathrm{inside}(\mathfrak{C})$, or
  \item $\mathfrak{C}$ is in the pattern mode, 
  it correctly guessed the tuple $((A^\circ, B^\circ), F) \in \mathcal{X}$ that separates $W$ in a balanced way,
  and the subsequent call to Theorem~\ref{thm:improve} returned a tuple that satisfies its promise
  with respect to $\wZ = \zZ$ and $W(\mathfrak{C}),\theta(\mathfrak{C})$.
\end{itemize}
\end{itemize}
For a locally successful call, we can bound the size of $\zZ_\mathrm{pinned}$ as follows.
\begin{claim}\label{cl:Zpinned-bound}
  Let $\mathfrak{C}$ be a nonleaf locally successful call.
  Then either we do nothing or
    \[ \left|\zZ_{\mathrm{pinned}}(\mathfrak{C})\right| = \Oh_{d,p}\left(\left(\sqrt{k} \log k + \frac{\Delta}{\sqrt{k}}\right) \log^{1+[d \geq 1]} k \right).\]
\end{claim}
\begin{claimproof}
  If the call is futile, then $\zZ_\mathrm{inside} = \emptyset$, thus $\zZ_{\mathrm{pinned}} = \emptyset$ and the claim is trivial.
  
  If the call is minor, then, as the call is locally successful, 
  we either do nothing or we are in the regular mode and we guessed ``poor''.
  The claim follows from the definition of a minor call, as $|\zZ_{\mathrm{pinned}}| \leq |\zZ_\mathrm{inside}| \leq \sqrt{k}$.

  If the call is major, we are either the regular mode or the pattern mode. 
  In the regular mode, we correctly guess whether $A^\circ \cap B^\circ$ is rich or poor. 
  In the ``rich'' case, we invoke Theorem~\ref{thm:improve} and expect success with regards to $\wZ = \zZ_\mathrm{inside}$
  and thus the claim follows from~\eqref{eq:CCbound2}.
  In the ``poor'' case, we have $(A,B) = (A^\circ,B^\circ)$ and the claim follows from the definition of the poor case.
  Finally, in the pattern mode, we invoke Theorem~\ref{thm:improve} and expect success with regards to $\wZ = \zZ$
  and thus the claim follows from~\eqref{eq:CCbound2}.
\end{claimproof}

Let $\Lambda_\mathrm{pinned}$ be the maximum of $3+p$, $\sqrt{k}$, the bound of~\eqref{eq:CCbound2},
and the bound on $\zZ_\mathrm{pinned}$ of Claim~\ref{cl:Zpinned-bound}.

We say that the run is \emph{successful} for $\zZ$ if every call is locally successful and, 
furthermore, for every major call $\mathfrak{C}$ in the pattern mode, 
we do nothing if and only if
\[ \left|\zZ_\mathrm{border}(\mathfrak{C})\right| \leq 30 \Lambda_{\mathrm{pinned}}. \]

We emphasize here the following important trick. In the definition of a successful run, we do not require
anything from a futile call. In this call, we have $M \cap \bigcup \zZ \subseteq N_G^d[R]$ and 
all we need is ensured by the properties regarding $N_G^d[R]$ (that are satisfied with probability 1).

\begin{claim}\label{cl:Zsurvives}
  In a successful run, if $G'$ is the induced subgraph returned by the root call,
  then $\bigcup \zZ \subseteq V(G')$.
\end{claim}
\begin{claimproof}
  We prove by bottom-up induction that for a call with input $M$, $R$, $\cR$, and $\oldR$,
  that returned $\RetSet$ and $(T,\beta)$, 
  if $\bigcup \zZ$ is disjoint with $N(M) \setminus R$, then 
  $M \cap \bigcup \zZ \subseteq \RetSet$.
  This claim at the root call will give the claim. 

  The claim is immediate for leaf calls where $\RetSet = M$. 
  If the call is futile, then $M \cap \bigcup \zZ \subseteq N_G^d[R]$ and $N_G^d[R] \cap M \subseteq \RetSet$
  by Claim~\ref{cl:NdR-ok}.
  
  Otherwise, the call is minor or major. In particular, the success of the 
  potential run of Theorem~\ref{thm:improve} implies $\bar{C} \cap M \cap \bigcup \zZ = \emptyset$. 
  Consequently, the child calls satisfy $\bigcup \zZ \cap (N(M^\Gamma) \setminus R^\Gamma) = \emptyset$. 
  The claim follows from the inductive hypothesis as $M^A \cup M^B = M$ and $\RetSet^A \cup \RetSet^B = \RetSet$.
\end{claimproof}

\paragraph{Analysis: Bound on the intersection with $\zZ$.}
We are now ready to prove the promised bound on the intersection of a single bag with $\bigcup \zZ$.
\begin{claim}\label{cl:ZR-bound}
  Assume the run is successful. Then, for every call $\mathfrak{C}$ it holds that 
  \[ \left|\zZ_\mathrm{border}(\mathfrak{C})\right| \leq 
     37 \Lambda_\mathrm{pinned}.
   \]
   Furthermore, if $\mathfrak{C}$ is in the pattern mode
   and the definition of a successful run mandates an action
   (i.e., $\mathfrak{C}$ is major and $|\zZ_\mathrm{border}(\mathfrak{C})| > 30\Lambda_\mathrm{pinned}$),
   then denoting by $\mathfrak{C}_1$ and $\mathfrak{C}_2$ the two children calls of $\mathfrak{C}$,
   we have for $i=1,2$
   \[ |\zZ_\mathrm{border}(\mathfrak{C}_i) \setminus \zZ_\mathrm{all}(\mathfrak{C}_{3-i})| \geq 6\Lambda_\mathrm{pinned}. \]
\end{claim}
\begin{claimproof}
  First, note that the claim 
  is trivially true at the root call
  as there we have $R = Y_0$ and $|Y_0| \leq p$, so $|\zZ_\mathrm{border}| \leq p < \Lambda_\mathrm{pinned}$.
  Furthermore, we are in the regular mode at the root call, so the second claim is vacuously true.

  We now prove the claim by induction on the depth of the call. 
  We will slightly strengthen the inductive hypothesis: for each
  call $\mathfrak{C}$ that is either the root call or a child of a major call, we show 
  an improved bound of $|\zZ_\mathrm{border}(\mathfrak{C})| \leq (31+\rho) \Lambda_\mathrm{pinned}$ where $\rho$ is the remainder
  of the depth of the call $\mathfrak{C}$ mod 5. 
  (We will prove the promised statement for all other calls $\mathfrak{C}$ later.)
  Note that if a call is major, then all its ancestors
  are major, too. 
  We already verified the claim for the root call.
  
  Consider then a call $\mathfrak{C}$
  with child calls $\mathfrak{C}_1,\mathfrak{C}_2$ such that $\mathfrak{C}$ is major or the root call. 
  Let $\rho$ be the remainder mod 5 of the depth of $\mathfrak{C}$.
  We assume $|\zZ_\mathrm{border}(\mathfrak{C})| \leq (31+\rho)\Lambda_\mathrm{pinned}$;
  we aim to prove the analogous bound for $\mathfrak{C}_i$, $i=1,2$ and the second part
  of the claim for $\mathfrak{C}$ if applicable.
  Note that this is immediate if $\rho \neq 4$ or
  $|\zZ_\mathrm{border}(\mathfrak{C})| \leq 30\Lambda_\mathrm{pinned}$
  thanks to \cref{cl:Zpinned-bound}, as for $i=1,2$ we have
  $\zZ_\mathrm{border}(\mathfrak{C}_i) \subseteq \zZ_\mathrm{border}(\mathfrak{C}) + \zZ_\mathrm{pinned}(\mathfrak{C})$.

  Assume then $\rho = 4$ (so $\mathfrak{C}$ is major and in the pattern mode)
  and $|\zZ_\mathrm{border}(\mathfrak{C})| > 30\Lambda_\mathrm{pinned}$.
  By the inductive hypothesis, 
  $|\zZ_\mathrm{border}(\mathfrak{C})| \leq 35 \Lambda_{\mathrm{pinned}}$.
  The definition of a successful run mandates an action at $\mathfrak{C}$.
  In the call $\mathfrak{C}$, we define $W(\mathfrak{C}) = \{w_K~|~K \in \zZ_\mathrm{border}(\mathfrak{C})\}$.
  Since the clusters of $\zZ$ are pairwise disjoint, the vertices $w_K$ are pairwise
  distinct and $|W(\mathfrak{C})| = |\zZ_\mathrm{border}(\mathfrak{C})|$.
  Hence, $\theta(\mathfrak{C}) \geq 7\Lambda_\mathrm{pinned}$. 
  By the promise of a successful application of Theorem~\ref{thm:improve},
  \[ |A(\mathfrak{C}) \cap W(\mathfrak{C})|, |B(\mathfrak{C}) \cap W(\mathfrak{C})| \geq \theta(\mathfrak{C}). \]
  Note that for $i=1,2$, every $K \in \zZ_\mathrm{border}(\mathfrak{C}_i)$
  is either intersected by $A(\mathfrak{C}) \cap B(\mathfrak{C})$ or $K \notin \zZ_\mathrm{all}(\mathfrak{C}_{3-i})$. 
  Hence, together with~\eqref{eq:CCbound2}, we have:
  \[ |\zZ_\mathrm{border}(\mathfrak{C}_i) \setminus \zZ_\mathrm{all}(\mathfrak{C}_{3-i})|
   \geq  \theta(\mathfrak{C}) - \Lambda_\mathrm{pinned} \geq 6\Lambda_\mathrm{pinned}. \]
  Hence, we have
  \[ |\zZ_\mathrm{border}(\mathfrak{C}_i)| \leq 35 \Lambda_\mathrm{border} - 6\Lambda_\mathrm{pinned} + \Lambda_\mathrm{border} = 30 \Lambda_\mathrm{pinned}. \]
  This completes the inductive proof for the calls whose parent is major
  and the root call.

  Consider now a call $\mathfrak{C}$ that is not the root and whose parent is not major.
  Let $\mathfrak{C}'$ be the lowest ancestor of $\mathfrak{C}$
  that is major or is the root call. Assume that
  $\mathfrak{C}'$ produced a separation $(A',B')$. 
  Let $\mathfrak{C}''$ be the child call of $\mathfrak{C}'$ that is an ancestor of $\mathfrak{C}$
  (possibly $\mathfrak{C}'' = \mathfrak{C}$).
  Then, every element of $\zZ_\mathrm{border}(\mathfrak{C})$ is one of the following three types:
  \begin{itemize}[itemsep=0px]
    \item an element of $\zZ_\mathrm{border}(\mathfrak{C}')$; there are at most $35 \Lambda_\mathrm{pinned}$ such elements from the inductive assumption;
    \item intersected by $A' \cap B'$; there are at most $\Lambda_{\mathrm{pinned}}$ such elements;
    \item in $\zZ_\mathrm{inside}(\mathfrak{C}'')$; there are at most $\sqrt{k} \leq \Lambda_\mathrm{pinned}$ such elements, as $\mathfrak{C}''$ is not major.
  \end{itemize}
  This completes the proof of the claim.
\end{claimproof}

\begin{claim}\label{cl:success-to-small}
  If the run is successful, then every bag of the output tree decomposition 
  contains at most
  \[ \Oh_{d,p}\left(\left(\sqrt{k} \log k + \frac{\Delta}{\sqrt{k}}\right) \log^{1+3[d \geq 1]} k \cdot \log^{[d \geq 1]}(k + \Delta) \right).\]
  vertices of $\bigcup \zZ$.
\end{claim}
\begin{claimproof}
  Since every bag of the output decomposition is a union of the set $R$
  and either part of the separator $A \cap B$ (non-leaf node) or at most $4\sqrt{k} \leq 4\Lambda_\mathrm{pinned}$ vertices (leaf node),
  every bag of the output decomposition intersects at most $41 \Lambda_\mathrm{pinned}$
  elements of $\zZ$ due to Claims~\ref{cl:Zpinned-bound} and~\ref{cl:ZR-bound}.
  The same bound applies to the number of vertices of $\bigcup \zZ$ if $d=0$
  while the bound for $d \geq 1$ follows from property~\eqref{eq:Nv-property}
  and the bound~\eqref{eq:cR-bound}.
\end{claimproof}

\paragraph{Analysis: Probability of being successful.}
It remains to bound the probability that the run is successful. 
There are two critical observations made in the two subsequent claims: that the
number of calls where we require to take action in the pattern mode, 
or we require regular mode to make the ``rich'' choice, 
is bounded by $\Oh_{d,p}(|\zZ|/\sqrt{k})$.

\begin{claim}\label{cl:num-rich}
  In a successful run, there are at most $\Oh_{d,p}(|\zZ|/\sqrt{k})$ calls
  where we are in the regular mode and $A^\circ \cap B^\circ$ is rich. 
  The bound becomes just $|\zZ|/\sqrt{k}$ for $d=0$.
\end{claim}
\begin{proof}
First, observe that a cluster $K \in \zZ$ belongs to sets $\zZ_\mathrm{inside}$ in one downward
path (possibly empty if $K \cap Y_0 \neq \emptyset$) from the root of the recursion tree,
until a leaf call or a call where $K$ is intersected by the computed set $C$. 
Denote this last call by $\mathfrak{C}(K)$. 
If $\mathfrak{C}(K)$ is nonleaf, in the regular mode, and rich, then by~\eqref{eq:Cheavy} 
this call equals $\mathfrak{C}(K)$ for $\Omega_{d,p}(\sqrt{k})$ other sets $K \in \zZ$. 
The claim follows.

The bound becomes $|\zZ|/\sqrt{k}$ for $d=0$, as then $\Omega_{d,p}(\sqrt{k})$ in~\eqref{eq:Cheavy} 
is in fact just $\sqrt{k}$.
\end{proof}
\begin{claim}\label{cl:num-panic}
  In a successful run, there are $\Oh(|\zZ| / \sqrt{k})$ calls 
  in the pattern mode
  there where the definition of a successful run mandates an action.
  (i.e., $|\zZ_\mathrm{border}| > 30 \Lambda_{pinned}$).
\end{claim}
\begin{claimproof}
  A call is \emph{interesting} if it is major, in the pattern mode,
  and the definition of a successful run mandates an action.

  The recursion tree is a binary tree --- every non-leaf node has exactly two recursive child calls. 
  For a non-leaf call $\mathfrak{C}$ with children $\mathfrak{C}_1$ and $\mathfrak{C}_2$, 
  we designate the \emph{heavy} child as the one with larger $\zZ_\mathrm{inside}(\mathfrak{C}_i)$, breaking ties arbitrarily;
  the second child is called the \emph{light} child. 
  After removing the edges between a node and its light child, the recursion tree decomposes into \emph{heavy paths}.
  The \emph{leader} of a heavy path is its topmost node.

  Consider now a heavy path $P$ with its leader $\mathfrak{C}_P$. 
  If $\mathfrak{C}_P$ is not major, then no call on $P$ is interesting, so assume $\mathfrak{C}_P$ is major.
  Observe that if $\mathfrak{C}$ is an interesting call on $P$,
  and $\mathfrak{C}_h$ and $\mathfrak{C}_l$ are the heavy and the light child of $\mathfrak{C}$, respectively,
  then \cref{cl:ZR-bound} implies
   \[ \left| \zZ_\mathrm{border}(\mathfrak{C}_l) \setminus \zZ_{\mathrm{all}}(\mathfrak{C}_h) \right| \geq 6\Lambda_\mathrm{pinned}. \]
   Consequently, 
   \[ |\zZ_\mathrm{all}(\mathfrak{C}_h)| \leq |\zZ_\mathrm{all}(\mathfrak{C})| - 6\Lambda_{\mathrm{pinned}}. \]
   This implies that the number interesting calls on $P$
   is bounded by
   \[ \frac{|\zZ_\mathrm{all}(\mathfrak{C}_P)|}{6\Lambda_{\mathrm{pinned}}} \leq
   \frac{|\zZ_\mathrm{free}(\mathfrak{C}_P)| + |\zZ_{\mathrm{border}}(\mathfrak{C}_P)|}{6\Lambda_{\mathrm{pinned}}} \leq
   \frac{|\zZ_\mathrm{free}(\mathfrak{C}_P)|}{6\Lambda_{\mathrm{pinned}}} + \frac{37}{6} . \]
   Here, the last inequality follows from Claim~\ref{cl:ZR-bound}.

   Define the \emph{level} of a heavy path $P$ to be $\lfloor \log_2 |\zZ_\mathrm{free}(\mathfrak{C}_P)| \rfloor$.
   Observe that, if $\mathfrak{C}_P$ is not the root and $P'$ is the heavy path that contains the parent of 
   $\mathfrak{C}_P$, then the level of $P'$ is strictly larger than the level of $P$.
   We infer that for a fixed level $i$, the calls $\mathfrak{C}_P$
   over paths $P$ of level $i$ are pairwise not in ancestor/descendant relation in the recursion tree, and hence
   the sets $\zZ_{\mathrm{free}}(\mathfrak{C}_P)$ are pairwise disjoint.
   Consequently, we can bound the total number of interesting calls by summing over levels
   of the paths they lie on as follows:
   \begin{align*} 
    &\sum_{i=\lfloor \log_2 \sqrt{k} \rfloor}^{\lfloor \log_2 |\zZ| \rfloor} \frac{|\zZ|}{2^i} \cdot \left(\frac{2^{i+1}}{6\Lambda_{\mathrm{pinned}}} + \frac{37}{6}\right) \\
    &\qquad \leq \sum_{i=\lfloor \log_2 \sqrt{k} \rfloor}^{\lfloor \log_2 k \rfloor} \frac{2|\zZ|}{6\Lambda_{\mathrm{pinned}}} + \frac{37|\zZ|}{6\cdot 2^i} \\
    &\qquad = \Oh\left(\frac{|\zZ|}{\sqrt{k}}\right). 
   \end{align*}
   This finishes the proof.
\end{claimproof}

Another ingredient of the analysis is the following simple bound on the number of minor and major calls. 
\begin{claim}\label{cl:num-poor}
  In a successful run, the number of major and minor calls is at most
  \[ |\zZ| \cdot 23 \log |V(G)|. \]
\end{claim}
\begin{claimproof}
  Let $\mathfrak{C}$ be a minor or major call.
  That is, $\zZ_{\mathrm{free}}(\mathfrak{C}) \neq \emptyset$.
  Charge $\mathfrak{C}$ to one element of $\zZ_{\mathrm{free}}(\mathfrak{C})$.

  One element of $\zZ$ can be part of $\zZ_{\mathrm{free}}$
  only one an upward path in the recursion tree. 
  Since the depth of the recursion is bounded by $23 \log |V(G)|$, 
  the number of times every $K \in \zZ$ is charged is bounded by $23 \log |V(G)|$. The claim follows.
\end{claimproof}
Claims~\ref{cl:num-rich}, \ref{cl:num-panic}, and~\ref{cl:num-poor} imply the following lower bound
on the probability that the run is successful.
\begin{align*}
& \overbrace{\left( k \cdot \left( \Oh_{d,p}(k\Delta)^{\Oh_{d,p}(\log k)} \right) \right)^{-\Oh_{d,p}\left(\frac{|\zZ|}{\sqrt{k}}\right)} }^{\mathrm{regular\ and\ rich\ calls}} 
\cdot \overbrace{\left( k \cdot 2\Lambda_R \cdot \left( \Oh_{d,p}(k\Delta)^{\Oh_{d,p}(\log k)} \right) \right)^{-\Oh_{d,p}\left(\frac{|\zZ|}{\sqrt{k}}\right)} }^{\mathrm{pattern\ and\ mandate\ an\ action\ calls}}\\
&\qquad \overbrace{\cdot \left(1-\frac{1}{k}\right)^{|\zZ| \cdot 23\log |V(G)|}}^{\mathrm{regular\ and\ poor,\ pattern\ and\ minor\ calls}}
\geq (k\Delta)^{-\Oh_{d,p}\left(\frac{|\zZ| \log k}{\sqrt{k}}\right)} \cdot |V(G)|^{-\Oh\left(\frac{|\zZ|}{k}\right)}. 
 \end{align*} 
In the above:
\begin{itemize}[itemsep=0px]
  \item In the first term, the $k$ factor corresponds to the coin flip of choosing the ``rich'' case.
  \item In the second term, the $k$ factor corresponds to the coin flip of performing an action,
  $2\Lambda_R$ is a lower bound for the $(2|R|)^{-1}$ probability lower bound for choosing $((A^\circ, B^\circ), W) \in \mathcal{X}$
  that satisfy $|A^\circ \cap W|, |B^\circ \cap W| \geq \theta$.
  \item The the third term, the $1-\frac{1}{k}$ factor corresponds to the coin flip of choosing
  either the poor case in the regular mode or decide not to do anything in the pattern mode.
\end{itemize}
This finishes the proof of \cref{thm:main2}.
\end{proof}

\subsection{Lifting to $H$-minor-free graphs for $d=0$}\label{sec:lift}
In this section we lift the result \cref{thm:main2} to the generality of graph excluding a fixed minor, thereby proving \cref{thm:main}. In essence, this will involve applying \cref{thm:main2} in every torso of the tree decomposition provided by \cref{thm:excl-minor}, and gluing together the outcomes of those applications. 

Let $(T,\beta)$ be a rooted tree decomposition of a graph $G$. For a non-root $t$, let $\pi(t)$ be the parent of $t$.
Recall the notation $\sigma(t) = \beta(t) \cap \beta(\pi(t))$ with $\sigma(r) = \emptyset$ for the root $r$.

The following simple lemma facilitates a gluing procedure over a tree decomposition.

\begin{lemma}\label{lem:glue-RS}
    Assume we are given a graph $G$, an integer $k \geq 2$, 
    a rooted tree decomposition $(T,\beta)$ of $G$ of maximum adhesion
    size at most $q$, and for every $t \in V(T)$, 
    an induced subgraph $G_t$ of the torso of $t$ and a tree decomposition $(T_t,\beta_t)$ of $G_t$. 
    Let
    \[ \ell \coloneqq \max_{t \in V(T)} \max_{s \in V(T_t)} |\beta(s)|. \]
    Then one can in polynomial time compute an induced subgraph $G'$ of $G$ and a tree decomposition
    $(T',\beta')$ of $G'$ of maximum bag size at most $\ell + q$ 
    with the following guarantee: for every $Z \subseteq V(G)$, if for every $t \in V(T)$ it holds that
    $Z \cap (\beta(t) \setminus \sigma(t)) \subseteq V(G_t)$, then $Z \subseteq V(G')$. 
    Furthermore, if for every $t \in V(T)$, every bag of $(T_t,\beta_t)$ contains at most $\ell_Z$ vertices
    of $Z$, then every bag of $(T',\beta')$ contains at most $\ell_Z+q$ vertices of $Z$.
\end{lemma}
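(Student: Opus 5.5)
The plan is to glue the local decompositions along $T$ and add each adhesion $\sigma(t)$ to every bag of $(T_t,\beta_t)$. Concretely, define
\[ V' \coloneqq \bigcup_{t \in V(T)} \bigl(V(G_t) \setminus \sigma(t)\bigr) \]
and $G' \coloneqq G[V']$. Since the sets $\beta(t)\setminus\sigma(t)$ partition $V(G)$ (each vertex lies in $\beta(t)\setminus\sigma(t)$ exactly for the topmost node $t$ whose bag contains it), each vertex of $G'$ has a unique \emph{home} node $t$. A vertex of $G_t$ that lies in $\sigma(t)$ but is discarded at its own home node is simply not kept. The guarantee $Z \subseteq V(G')$ then follows immediately from the hypothesis $Z \cap (\beta(t)\setminus\sigma(t)) \subseteq V(G_t)$ for every $t$.

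For the decomposition $(T',\beta')$, I take the disjoint union of the trees $T_t$. For each $t$, every bag $s \in V(T_t)$ gets
\[ \beta'(s) \coloneqq \bigl(\beta_t(s) \cup \sigma(t)\bigr) \cap V'. \]
For a non-root node $t$, I connect the root of $T_t$ to some node $s'$ of $T_{\pi(t)}$ chosen as follows. The set $\sigma(t)$ is a clique in the torso of $\pi(t)$. Hence $\sigma(t) \cap V(G_{\pi(t)})$ is a clique in the induced subgraph $G_{\pi(t)}$, so some bag $s'$ of $(T_{\pi(t)},\beta_{\pi(t)})$ contains it. The remaining vertices of $\sigma(t)$ are either discarded or lie in $\sigma(\pi(t))$, which is present in every bag of $T_{\pi(t)}$. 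Bag sizes are at most $\ell+q$, and the count of vertices of $Z$ is at most $\ell_Z+q$.

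The main step is verifying that $(T',\beta')$ is a tree decomposition. For edges, take $uv\in E(G')$. Some node of $T$ contains both $u$ and $v$, and the lower-homed endpoint's home node $t$ contains the other endpoint as well, since the higher one is present at an ancestor and at $t$, hence in $\sigma$ along the path. So both $u,v \in \beta(t)$ and $uv$ is an edge of the torso; if both survive in $G_t$, some bag of $T_t$ covers the edge. If one of them, say $v$, is in $\sigma(t)$, then $v$ is in every bag of $T_t$, so any bag containing $u$ suffices. For connectivity of the occurrences of a vertex $v$ with home node $t_0$: the set of nodes $t$ with $v\in\beta(t)$ is a subtree rooted at $t_0$, and $v\in\sigma(t)$ at all non-root nodes of it. In each such $T_t$ with $t \neq t_0$, $v$ lies in all bags; in $T_{t_0}$ it lies in a connected subtree. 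The attachment edges link these pieces, because the attachment node in the parent's tree contains $v$ either via $\sigma$ or, at $t_0$, by the choice of $s'$ (which contains $\sigma(t)\cap V(G_{t_0})\ni v$).

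I expect the main obstacle to be the delicate case of a vertex $v\in\sigma(t)$ that is absent from $G_{\pi(t)}$ (discarded at its home). Since it is then also absent from $V'$, intersecting bags with $V'$ removes it consistently, so the case should be harmless.
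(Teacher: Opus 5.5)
Your proposal is correct and is essentially the paper's proof: you use the same $G'=G[\bigcup_t V(G_t)\setminus\sigma(t)]$, and your bags $(\beta_t(s)\cup\sigma(t))\cap V'$ coincide with the paper's top-down bags $(\beta_t(s)\setminus\sigma(t))\cup(\sigma(t)\cap V(G'_{\pi(t)}))$, since $\sigma(t)\cap V(G'_{\pi(t)})=\sigma(t)\cap V'$. You also attach each $T_t$ to a bag of $T_{\pi(t)}$ by the same clique argument; your explicit check of edge coverage and vertex connectivity, including adhesion vertices discarded at their home node, just spells out what the paper leaves implicit.
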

\begin{proof}
    Let
    \[ G' = G\left[\bigcup_{t \in V(T)} V(G_t) \setminus \sigma(t)\right]. \]

    Fix $Z \subseteq V(G)$ and let $Z_t = Z \cap (\beta(t) \setminus \sigma(t))$
    for every $t \in V(T)$. Clearly, if $Z_t \subseteq V(G_t)$, then 
    $Z_t \subseteq V(G')$. Note that $(Z_t)_{t \in V(T)}$ is a partition of $Z$
    as $(\beta(t) \setminus \sigma(t))_{t \in V(T)}$ is a partition of $V(G)$. 
    Hence, $Z \subseteq V(G')$. 

    For every $t \in V(T)$, we define an induced subgraph $G_t'$ of the torso of $t$
    in a top-to-bottom manner as follows. 
    For the root, set $G_t' = G_t$. 
    Otherwise let $G_t'$ be the induced subgraph of the torso of $t$ induced by the union of
    $V(G_t) \setminus \sigma(t)$
    and $\sigma(t) \cap V(G_{\pi(t)}')$.
    Modify $(T_t,\beta_t)$ to a tree decomposition $(T_t',\beta_t')$ of $G_t'$ as follows:
    for $t$ being the root of $T$ set $(T_t',\beta_t') = (T_t,\beta_t)$, and otherwise
    remove vertices of $\sigma(t) \cap V(G_t)$ from bags of $(T_t,\beta_t)$
    and insert vertices of $\sigma(t) \cap V(G_{\pi(t)}')$ into every bag. 
    This increases the size of every bag by at most $q$ vertices
    and every bag of $(T_t',\beta_t')$ contains at most $\ell_Z+q$ vertices of $Z$. 
    
    Combine tree decompositions $(T_t',\beta_t')$ into a tree decomposition $(T',\beta')$ of $G'$
    as follows: for every non-root $t$, pick a node of $T_{\pi(t)}'$ whose bag contains
    all vertices of $\sigma(t) \cap V(G_{\pi(t)}')$ (such a node exists since the adhesions are cliques in the torso)
    and connect an arbitrary note of $T_t'$ to this chosen node. 
    The tuple $(G',(T',\beta'))$ is the desired output.
\end{proof}

We now use \cref{lem:glue-RS} to lift \cref{thm:main2}
to apex-minor-free graphs of small diameter.

\begin{lemma}\label{lem:main-apex-diam}
  For every apex graph $H$, there exists a constant $d_H > 0$ and randomized polynomial-time algorithm that, 
  given an $H$-minor-free graph $G$ of diameter $\Delta \geq 1$ and an integer $k \geq 2$,
  outputs an induced subgraph $G'$ of $G$ and 
  a tree decomposition $(T,\beta)$ of $G'$ of maximum bag size at most
  \[ d_H(k \log k+\Delta) \]  
  such that for every $Z \subseteq V(G)$ of size at most $k$,
  with probability at least 
  \[ \left(\left(k\Delta\right)^{\frac{d_H |Z| \log k}{\sqrt{k}} } |V(G)|^{\frac{d_H |Z|}{k}}\right)^{-1} \]
  we have $Z \subseteq V(G')$ and every bag of $(T,\beta)$ contains
  at most $d_H  \left(\sqrt{k} \log k + \frac{\Delta}{\sqrt{k}}\right) \log k$ vertices
  of $Z$. 
\end{lemma}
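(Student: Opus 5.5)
We plan to combine \cref{thm:excl-minor}, \cref{thm:main2} with $d=0$, and \cref{lem:glue-RS}. Since $H$ is an apex graph, \cref{thm:excl-minor} yields a rooted tree decomposition $(T,\beta)$ of $G$ with adhesions of size at most $p_H$ in which every torso is $p_H$-nearly-embeddable without apices. The obstacle is that a torso need not have small diameter: torso edges only join vertices of an adhesion. We first check that diameter is preserved up to a constant factor. Take $u,v\in\beta(t)$ and a shortest $u$--$v$ path $P$ in $G$ of length at most $\Delta$. Every maximal subpath of $P$ that leaves $\beta(t)$ runs inside a subtree hanging off $t$ (or above $t$). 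It therefore enters and exits through the same adhesion $\sigma$, which is a clique in the torso, so that subpath can be replaced by a single torso edge. Hence the torso has diameter at most $\Delta$. Here we assume connectivity, and we may take $G$ connected because its diameter is finite. We pass connected torsos to \cref{thm:main2}; a torso with at most one vertex is handled trivially.

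For each node $t$ we then apply \cref{thm:main2} with $d=0$ and $p=p_H$ to the torso of $t$, with $Y_0\coloneqq\emptyset$, and use independent randomness across nodes. This gives an induced subgraph $G_t$ of the torso and a decomposition $(T_t,\beta_t)$ with bags of size $\Oh_H(k\log k+\Delta)$. For the gluing, we fix $Z$ of size at most $k$ and set $Z_t\coloneqq Z\cap(\beta(t)\setminus\sigma(t))$; these sets partition $Z$. The statement of \cref{thm:main2} refers to families of $0$-clusters (singletons) with $\sqrt{k}\le|\zZ|\le k$. When $1\le|Z_t|<\sqrt{k}$ we apply it to a padded family $\zZ_t\supseteq\{\{z\}:z\in Z_t\}$ of size $\lceil\sqrt{k}\rceil$. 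Its success probability is still at least $((k\Delta)^{\Oh(\log k)}|V(G)|^{\Oh(1/\sqrt{k})})^{-1}$, which is not sufficient if many nodes are involved.

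This accounting is the main obstacle, and we would address it as follows. Nodes with $Z_t=\emptyset$ impose no requirement. There are at most $|Z|$ nodes with $Z_t\neq\emptyset$. If we use the original family whenever $|Z_t|\ge\sqrt{k}$ and the padded one otherwise, the exponents sum to at most $\Oh(\sum_t\max(|Z_t|,\sqrt{k})\log k/\sqrt{k})=\Oh(|Z|\log k)$, which is too weak when $|Z|$ is small relative to $k$. To avoid padding we would instead read the proof of \cref{thm:main2} directly. The bounds of Claims~\ref{cl:num-rich} and~\ref{cl:num-panic} are $\Oh(|\zZ|/\sqrt{k})$ plus one for the top-level path. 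The count in Claim~\ref{cl:num-poor} is $\Oh(|\zZ|\log n)$. These appear to hold for every $|\zZ|\ge 1$, provided we pay an extra $\Oh(1)$ number of ``rich/action'' events per torso, each costing $(k\Delta)^{\Oh(\log k)}$. Summed over at most $|Z|$ nonempty torsos, this gives $(k\Delta)^{\Oh(|Z|\log k/\sqrt{k}+\#\{t:Z_t\ne\emptyset\}\log k)}$.

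To remove the per-torso term, we would observe that a torso with $|Z_t|<\sqrt{k}$ has only minor calls, and minor calls never require the rich or action branch. In such a torso, success needs only the $(1-1/k)$ coin flips, so the cost is $|V(G)|^{-\Oh(|Z_t|/k)}$. Multiplying over all nodes then gives the claimed bound $\big((k\Delta)^{d_H|Z|\log k/\sqrt{k}}|V(G)|^{d_H|Z|/k}\big)^{-1}$.

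We also need the guarantee of \cref{lem:glue-RS} that $Z_t\subseteq V(G_t)$ for every $t$, which follows from success in each torso. Finally, \cref{lem:glue-RS} produces $G'$ and $(T',\beta')$ with bag size at most $\Oh_H(k\log k+\Delta)+p_H$ and with at most $\Oh_H((\sqrt{k}\log k+\Delta/\sqrt{k})\log k)+p_H$ vertices of $Z$ per bag. Choosing $d_H$ large enough completes the argument.
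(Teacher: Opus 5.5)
Your proof is correct. Its skeleton matches the paper's: Theorem~\ref{thm:excl-minor}, per-torso processing, gluing via Lemma~\ref{lem:glue-RS}, and a product of per-node success probabilities over the partition $(Z_t)_t$.

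The one genuine difference is how you handle nodes with $1\le |Z_t|<\sqrt{k}$. The paper never passes such nodes to Theorem~\ref{thm:main2}. It flips a biased coin per node, declaring it ``plentiful'' with probability $1/k$. A ``scarce'' node keeps its whole torso together with the $\Oh(\Delta)$-width decomposition of Corollary~\ref{cor:local-tw-nearly}, so every bag trivially contains at most $|Z_t|\le\sqrt{k}$ pattern vertices. Theorem~\ref{thm:main2} is then used as a black box, within its stated range, only at nodes with $|Z_t|>\sqrt{k}$. The cost is a factor $1/k\ge k^{-|Z_t|/\sqrt{k}}$ at high-load nodes and $1-1/k\ge k^{-2|Z_t|/\sqrt{k}}$ at low-load nodes.

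You instead always run Theorem~\ref{thm:main2} and argue from inside its proof. When $|\zZ|<\sqrt{k}$, every call is minor or futile. A successful run then needs only the $(1-1/k)$ coins, of which there are at most $23|\zZ|\log|V(G)|$ by Claim~\ref{cl:num-poor}. No pattern vertex is lost, because $\overline{C}=\emptyset$ in poor or idle calls, and in futile calls $M\cap Z_t\subseteq R$. This is sound and gives the same bound.

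The trade-off is modularity. Your route effectively extends Theorem~\ref{thm:main2} beyond its hypothesis $|\zZ|\ge\sqrt{k}$, so it relies on that proof's internals. The paper lifts the same rich/poor biased-coin trick one level up, keeps Theorem~\ref{thm:main2} a black box, and reuses that template in Lemma~\ref{lem:main-d-diam}.

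In your favour, you explicitly check that each torso is connected and has diameter at most $\Delta$: any detour through a component of $G-\beta(t)$ enters and leaves through a single adhesion clique. The paper uses this fact tacitly, both when applying Theorem~\ref{thm:main2} and when applying Corollary~\ref{cor:local-tw-nearly} to torsos.
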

\begin{proof}
For the sake of analysis, fix $Z \subseteq V(G)$ as in the statement.

Apply \cref{thm:excl-minor} to $G$, obtaining a tree decomposition
$(T,\beta)$ of maximum adhesion size $p_H$ such that the torso of every
bag is $p_H$-nearly-embeddable without apices, where the constant $p_H$ depends on $H$ only.

Root $T$ at an arbitrary bag. For $t \in V(T)$, let $Z_t = Z \cap (\beta(t) \setminus \sigma(t))$.

For every $t \in V(T)$, flip a coin: with probability $1/k$ say that $t$ is \emph{plentiful}
and with the remaining probability $1-1/k$ say that $t$ is \emph{scarce}.

If $t$ is scarce, we set $G_t$ to be the (entire) torso of $t$
and $(T_t,\beta_t)$ to be the tree decomposition of width $\Oh_H(\Delta)$ of $G_t$
given by \cref{cor:local-tw-nearly}.
If $t$ is plentiful, apply \cref{thm:main2} to the torso of $t$ and $k$,
obtaning an induced subgraph $G_t$ and a tree decomposition $(T_t,\beta_t)$ of $G_t$.
Recall that the torso of $t$ is $p_H$-nearly-embeddable without apices; let $c_H$
be the resulting constant from \cref{thm:main2}.

Apply Lemma~\ref{lem:glue-RS} to $G$, $(T,\beta)$ and $(G_t,(T_t,\beta_t))$ for $t \in V(T)$,
obtaining an induced subgraph $G'$ and its tree decomposition $(T',\beta')$.
Since every bag of every decomposition $(T_t,\beta_t)$ is of size $\Oh_H(k \log k + \Delta)$
and every adhesion of $(T,\beta)$ is of size $\Oh_H(1)$, every bag of $(T',\beta')$
is of size $\Oh_H(k \log k + \Delta)$. 
Furthermore, Lemma~\ref{lem:glue-RS} asserts
if for every $t \in V(T)$, every bag of $(T_t,\beta_t)$ contains $\Oh_H( \left(\sqrt{k} \log k + \frac{\Delta}{\sqrt{k}}\right) \log k)$
vertices of $Z$, then the same bound (but with a larger constant hidden in the $\Oh_H$-notation)
holds for $(T',\beta')$.

We now lower bound the probability that for every $t \in V(T)$, we have $Z_t \subseteq V(G_t)$
and every bag of $(T_t,\beta_t)$ contains $\Oh_H( \left(\sqrt{k} \log k + \frac{\Delta}{\sqrt{k}}\right) \log k)$ vertices of $Z_t$.
To this end, we observe the following.
\begin{claim}\label{cl:main-apex-bag-guarantee}
    There exists a constant $c_H' > 0$ depending on $H$ only so that the following holds.
    For every $t \in V(T)$ and for every $Z' \subseteq \beta(t)$ of size at most $k$,
    the probability that $Z' \subseteq V(G_t)$ and every bag of $(T_t,\beta_t)$ contains
    at most $c_H'  \left(\sqrt{k} \log k + \frac{\Delta}{\sqrt{k}}\right) \log k $ vertices of $Z'$ is at least
    \[ \left((k\Delta)^{\frac{c_H' |Z'| \log k}{\sqrt{k}} } |V(G)|^{\frac{c_H'|Z'|}{k}}\right)^{-1}.\]
\end{claim}
\begin{claimproof}
    For $Z' = \emptyset$, the promised guarantee holds with probability $1$
    (regardless of whether we choose ``scarce'' or ``plentiful'')
    and the expression evaluates to $1$ for any $c_H' > 0$, so the claim holds. 
    Assume now $Z' \neq \emptyset$. 

    If $1 \leq |Z'| \leq \sqrt{k}$, we want to guess ``scarce'' for $t$. This happens with probability
    \[ 1-\frac{1}{k} \geq 2^{-2/k} \geq 2^{-2\frac{|Z'|}{\sqrt{k}}} \geq k^{-2\frac{|Z'|}{\sqrt{k}}}. \]
    (We used $k \geq 2$ above.)
    Then, we trivially have $Z' \subseteq V(G_t)$ due to $V(G_t) = \beta(t)$, and 
    also trivially every bag of $(T_t,\beta_t)$ contains at most $\sqrt{k}$ vertices of $Z'$, because
     $|Z'| \leq \sqrt{k}$. 

    If $|Z'| > \sqrt{k}$, we want to guess ``plentiful'' for $t$. 
    This happens with probability
    \[ \frac{1}{k} = k^{-1} \geq k^{-\frac{|Z'|}{\sqrt{k}}}. \]
    Then, note that if the guarantee of \cref{thm:main2} holds
    for $Z'$, then the guarantee of the claim is satisfied as long as $c_H' \geq c_H$ (recall that $c_H$
    is the constant steming from Theorem~\ref{thm:main2}).
    Together with the previous guess, this holds with probability at least
   \[ \left((k\Delta)^{\frac{(c_H+2)|Z| \log k}{\sqrt{k}} } |V(G)|^{\frac{c_H|Z|}{k}}\right)^{-1} \] 
    Hence, the claim holds with $c_H' = c_H+2$. 
\end{claimproof}

Recall that $(Z_t)_{t \in V(T)}$ is a partition of $Z$.
Since the random choices for every $t \in V(T)$ are independent of each other, the probability
that for every $t \in V(T)$ we have $Z_t \subseteq V(G_t)$ and every bag of $(T_t,\beta_t)$
contains at most $c_H'  \left(\sqrt{k} \log k + \frac{\Delta}{\sqrt{k}}\right) \log k$ vertices of $Z_t$ is at least:
\[ \prod_{t \in V(T)} \left((k\Delta)^{\frac{c_H' |Z_t| \log k}{\sqrt{k}} } |V(G)|^{\frac{c_H'|Z_t|}{k}}\right)^{-1}
 = \left((k\Delta)^{\frac{c_H' |Z| \log k}{\sqrt{k}} } |V(G)|^{\frac{c_H'|Z|}{k}}\right)^{-1}. \]
This finishes the proof.
\end{proof}

We now lift the diameter assumption of \cref{lem:main-apex-diam}
with a classic application of Baker's layering technique.
\begin{lemma}\label{lem:main-apex}
  For every apex graph $H$, there exists a constant $d_H' > 0$ and randomized polynomial-time algorithm that, 
  given an $H$-minor-free graph $G$ and an integer $k \geq 2$,
  outputs an induced subgraph $G'$ of $G$ and 
  a tree decomposition $(T,\beta)$ of $G'$ of maximum bag size at most
   $d_H' k \log k $
  such that for every $Z \subseteq V(G)$ of size at most $k$,
  with probability at least 
  \[ \left(k^{\frac{d_H' |Z| \log k}{\sqrt{k}} } |V(G)|^{\frac{d_H' |Z|}{k}}\right)^{-1} \]
  we have $Z \subseteq V(G')$ and every bag of $(T,\beta)$ contains at most $d_H' \sqrt{k} \log^2k$ vertices
  of $Z$. 
\end{lemma}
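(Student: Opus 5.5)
We plan to use Baker's layering so that we can apply Lemma~\ref{lem:main-apex-diam} to pieces of diameter $\Oh(k)$. We may assume $G$ is connected; otherwise we treat components separately and glue the decompositions by disjoint union, noting that the probabilities multiply because the exponents are additive in $|Z|$. Fix a root vertex $r$ and let $L_i$ be the set of vertices at distance exactly $i$ from $r$. Set $\tau \coloneqq k+1$ and choose an offset $a \in \{0,\ldots,\tau-1\}$ uniformly at random. Call a layer $L_i$ with $i \equiv a \pmod \tau$ a \emph{deleted layer}. Since the $\tau$ residue classes partition $V(G)$ and $|Z| \le k < \tau$, some offset makes every deleted layer disjoint from $Z$. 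This happens with probability at least $1/(k+1)$, which is absorbed by $k^{d_H'|Z|\log k/\sqrt{k}}$ whenever $Z \neq \emptyset$. The case $Z=\emptyset$ is trivial.

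After deleting these layers, every component $D$ of the remainder lies inside a band of fewer than $\tau$ consecutive layers. However, $G[D]$ need not have small diameter. We fix this in the standard way: let $j$ be the index of the band's top layer. We contract the union of all layers $L_0,\ldots,L_{j-1}$ into a single vertex $r_D$. This works because that union induces a connected subgraph containing $r$, and every vertex of $D$ has a BFS path back to $r$. The resulting graph $G_D$ consists of $D$, $r_D$, and possibly some irrelevant vertices, which we delete. $G_D$ is a minor of $G$, hence $H$-minor-free, and has diameter at most $2\tau = \Oh(k)$. We apply Lemma~\ref{lem:main-apex-diam} to each $G_D$ independently with the same $k$ and $\Delta = \Oh(k)$. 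We obtain $G'_D$ and $(T_D,\beta_D)$, then remove $r_D$ from $G'_D$ and from all bags. The removal of $r_D$ is harmless because $G_D - r_D = G[D]$ and we only delete a vertex. The output is $G' = \bigcup_D G'_D$ with the disjoint union of the tree decompositions, joined into one tree arbitrarily. This is valid because distinct components $D$ are non-adjacent in $G$ minus the deleted layers, and those layers are excluded from $G'$.

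For the bounds, the bag size is $d_H(k\log k + \Oh(k)) = \Oh_H(k\log k)$. The number of $Z$-vertices per bag is $d_H(\sqrt{k}\log k + \Oh(k)/\sqrt{k})\log k = \Oh_H(\sqrt{k}\log^2 k)$. Conditioned on a good offset, the sets $Z_D = Z \cap D$ partition $Z$, and the random choices for different $D$ are independent. So the success probability is the product of the bounds from Lemma~\ref{lem:main-apex-diam}. Because those bounds are exponential in $|Z_D|$, the product equals $\big((k\cdot\Oh(k))^{d_H|Z|\log k/\sqrt{k}}\, |V(G)|^{d_H|Z|/k}\big)^{-1}$, using $|V(G_D)| \le |V(G)|$. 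Multiplying by $1/(k+1)$ and adjusting constants yields the bound claimed by the lemma.

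The one step needing care is the diameter reduction: we must confirm that contracting the layers above a band preserves $H$-minor-freeness, which holds because it is a minor, and gives diameter $\Oh(\tau)$. We must also confirm that applying Lemma~\ref{lem:main-apex-diam} with the pattern $Z_D \subseteq D$, which avoids $r_D$, is legitimate. Everything else is bookkeeping of exponents, where additivity in $|Z|$ is exactly what makes the independent per-piece product work.
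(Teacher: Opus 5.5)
Your construction is essentially the paper's: BFS layers, a random residue class modulo $k+1$ whose layers must avoid $Z$, contraction of everything above a band into one vertex to get diameter $\Oh(k)$, Lemma~\ref{lem:main-apex-diam} on each piece, deletion of the contracted vertex, and a product of the independent per-piece guarantees. The paper roots the BFS at a fresh pendant vertex $v_C$, after making $H$ $2$-connected, so that the root is never in $Z$. Your existing root $r$ also works, since the first band needs no contraction.

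The gap is in the probability of the offset. You bound it below by $1/(k+1)$ and claim this is absorbed into $k^{-d_H'|Z|\log k/\sqrt{k}}$ whenever $Z\neq\emptyset$. That is false for small $|Z|$. Take $|Z|=1$ and $|V(G)|\le k$. The required bound is then at least $k^{-d_H'(\log k/\sqrt{k}+1/k)}$, which tends to $1$ as $k\to\infty$ for any fixed $d_H'$, while $1/(k+1)\to 0$.

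This matters beyond this lemma. The guarantee is multiplied over many pieces, so its logarithm must be linear in $|Z|$ with slope $\Oh(\log^2 k/\sqrt{k})$, with no fixed loss per piece. In your own reduction to connected $G$ (independent offsets per component), and again in the proof of Theorem~\ref{thm:main} (a product over up to $k$ torsos), a factor $1/(k+1)$ per piece meeting $Z$ would compound to $(k+1)^{-k}$.

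The fix keeps your procedure and just counts correctly. Only residue classes containing a vertex of $Z$ are bad, and there are at most $|Z|$ of them. So a uniformly random offset is good with probability at least $1-\frac{|Z|}{k+1}$. The map $m\mapsto \ln(1-\frac{m}{k+1})+\frac{2m\ln k}{k}$ is concave, vanishes at $m=0$, and is nonnegative at $m=k$ because $k^2\ge k+1$. Hence $1-\frac{|Z|}{k+1}\ge k^{-2|Z|/k}\ge k^{-2|Z|\log k/\sqrt{k}}$, which has the required form; this is exactly the paper's estimate. For disconnected $G$, either use one global offset as the paper does, or note that the per-component bounds $k^{-2|Z_C|/k}$ multiply correctly. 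With this correction the rest of your argument goes through.
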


\begin{proof}
Fix an excluded apex graph $H$. 
By possibly adding some edges to $H$, we assume that $H$ is $2$-connected;
note that this can be done while keeping $H$ an apex graph.
Let $G$ and $k$ be given on input, where $G$ is $H$-minor-free. 
For the sake of analysis, fix $Z \subseteq V(G)$ of size at most $k$. 

For every connected component $C$ of $G$, create a new vertex $v_C$ adjacent
to one arbitrarily chosen vertex in $C$ and perform
breadth-first search from $v_C$ in $C$. 
Since $H$ is $2$-connected, the addition of vertices $v_C$ does not break the assumption that $G$
is $H$-minor-free.
For $j \geq 0$, let $L_j$ be the set of vertices
within distance exactly $j$ from their corresponding vertex $v_C$; $(L_j)_{j \geq 0}$ 
is a partition of $V(G)$. We define $L_j = \emptyset$ for $j < 0$. 
Denote $L_{\leq i} \coloneqq \bigcup_{j \leq i} L_j$. 

Uniformly at random guess a remainder $a \in \{0,1,\ldots,k\}$ such that $\bigcup_{i \geq 0} L_{a+(k+1)i}$
does not contain a vertex of $Z$.
This is successful with probability at least
\begin{equation}\label{eq:est-baker}
 1-\frac{|Z|}{k+1} \geq 2^{-2 \frac{|Z|}{k} \log k} \geq 2^{-2 \frac{|Z|}{\sqrt{k}} \log k}. 
\end{equation}
 (In the first inequality we used $k \geq 2$ and $|Z| \leq k$.)

For a connected component $C$ of $G$ and $i \geq 0$
we define the graph $G_{C,i}$ as $G[C \cap L_{\leq a-1 + (k+1)i}]$ with, for $i > 0$, the graph
$G[C \cap L_{\leq a + (k+1)(i-1)}]$ contracted onto $v_C$;
the vertex resulting from this contraction will be denoted by $v_{C,i}$, and we set $v_{C,0} = v_C$ in $G_{C,0}$. 
Note that 
\[ V(G_{C,i}) = \{v_{C,i}\} \cup \left( \bigcup_{j=a + 1 + (k+1)(i-1)}^{a-1 + (k+1)i} C \cap L_j \right). \]
Furthermore, $V(G_{C,i})$ has radius at most $k$, and thus diameter at most $2k$. 
If the guess of the remainder $a$ is correct,
the set $Z$ is contained in the union of vertex sets $V(G_{C,i} \setminus v_{C,i})$.%
\footnote{The only reason we added vertices $v_C$ instead of just taking arbitrary vertex in $C$ as $v_C$
is to know that $v_C \notin Z$, which makes the presentation cleaner.}
Let $\mathcal{D}$ be the family of pairs $(C,i)$ such that $V(G_{C,i}) \setminus \{v_{C,i}\} \neq \emptyset$.

For every $(C,i) \in \mathcal{D}$, we apply 
\cref{lem:main-apex-diam} to $G_{C,i}$,
obtaining an induced subgraph $G_{C,i}'$ and its tree decomposition $(T_{C,i}',\beta_{C,i}')$. 
We remove $v_{C,i}$ from each $G_{C,i}'$ and its corresponding tree decomposition, 
and connect the tree decompositions
arbitrarily into a single tree decomposition $(T',\beta')$
of  the disjoint union of graphs $(G_{C,i}' \setminus \{v_{C,i}\})_{(C,i)\in \mathcal{D}}$, which is an induced subgraph of $G$.

For every $(C,i) \in \mathcal{D}$ the vertex $v_{C,i}$ is not in $Z$,
and by \cref{lem:main-apex-diam} we have $Z \cap V(G_{C,i}) \subseteq V(G_{C,i}')$
and every bag of $(T_{C,i}',\beta_{C,i}')$ contains at most 
$d_H\sqrt{k} \log^2 k$
 vertices of $Z \cap V(G_{C,i})$ with probability at least 
\[ \left(\left(k \cdot 2k\right)^{\frac{d_H |Z \cap V(G_{C,i})| \log k}{\sqrt{k}} } |V(G)|^{\frac{d_H |Z \cap V(G_{C,i})|}{k}}\right)^{-1}. \]
By multiplying the above for every $(C,i) \in \mathcal{D}$, we obtain that $Z \subseteq V(G')$
and every bag of $(T',\beta')$ contains at most $d_H \sqrt{k} \log^2 k$ vertices of $Z$ with probability
at least 
\[ \left(k^{\frac{3d_H|Z|\log k}{\sqrt{k}}} |V(G)|^{\frac{d_H |Z|}{k}}\right)^{-1}. \]
Multiplying the above by the estimate~\eqref{eq:est-baker} for the correct choice of $a$,
we obtain the desired claim for $d_H' = 3d_H + 2$.
\end{proof}

We are now ready to prove \cref{thm:main}.
\begin{proof}[Proof of \cref{thm:main}.]
    Fix a graph $H$.
    Let $G$ and $k \geq 2$ be given on input.
    Apply \cref{thm:excl-minor} to $G$;
    let $p_H$ be the constant and $(T,\beta)$ the resulting decomposition. 
    
    Root $T$ in an arbitrary node.
    Fix $Z \subseteq V(G)$ of size at most $k$ and let $Z_t = Z \cap (\beta(t) \setminus \sigma(t))$ for
    every $t \in V(T)$.
    
    Let $H'$ be the apex graph given by \cref{thm:nearly-exclude}
    for graphs that are $p_H$-nearly-embeddable without apices. 

    For every $t \in V(T)$, compute an induced subgraph $G_t$ and its tree decomposition $(T_t,\beta_t)$
    as follows.
    Let $G_t^1$ be the torso of $t$, and let $A_t$ be the set of at most $p_H$ vertices
    of $G_t^1$ such that $G_t^2 \coloneqq G_t^1 \setminus A_t$ is $p_H$-nearly-embeddable without apices.
    Apply \cref{lem:main-apex} to the $H'$-minor-free graph $G_t^2$,
    obtaining a constant $f_H \coloneqq d'_{H'}$ and a tuple $(G_t',(T_t',\beta_t'))$. 
    Modify $(G_t',(T_t',\beta_t'))$ to $(G_t,(T_t,\beta_t))$ by adding 
    all vertices of $A_t$ to $G_t$ and to every bag of the decomposition.

    By \cref{lem:main-apex}, 
    the probability that $Z_t \setminus A_t \subseteq V(G_t')$ and every bag of $(T_t',\beta_t')$
    contains at most $f_H \sqrt{k} \log^2 k$ vertices of $Z_t \setminus A_t$ is at least
      \[ \left(k^{\frac{f_H |Z_t \setminus A_t| \log k}{\sqrt{k}} } |V(G)|^{\frac{f_H|Z_t \setminus A_t|}{k}}\right)^{-1}. \]
    Hence, this is also a lower bound on the probability that $Z_t \subseteq V(G_t)$
    as $A_t \subseteq V(G_t)$. Furthermore, since $|A_t| \leq p_H$, with the above probability every bag
    of $(T_t,\beta_t)$ contains at most $f_H \sqrt{k} \log^2k + p_H \leq (f_H+p_H)\sqrt{k}\log^2k$ vertices
    of $Z_t$ (recall $k \geq 2$). 

    Multiplying over all $t \in V(T)$, as $(Z_t)_{t \in V(T)}$ is a partition of $Z$, the above event
    happens for every $t \in V(T)$ with probability at least
    \[ \left(k^{\frac{f_H |Z| \log k}{\sqrt{k}} } |V(G)|^{\frac{f_H|Z|}{k}}\right)^{-1} \geq 
    \left(k^{f_H \sqrt{k} \log k } |V(G)|^{f_H}\right)^{-1}. \]
    If this happens, the tuples $(G_t,(T_t,\beta_t))$ satisfy the prerequisites
    of \cref{lem:glue-RS}
    \cref{lem:glue-RS} gives us a tuple $(G',(T',\beta'))$ 
    such that $Z \subseteq V(G')$ and every bag of $(T',\beta')$ contains 
    at most $(f_H+p_H)\sqrt{k}\log^2k + p_H \leq (f_H+2p_H)\sqrt{k}\log^2k$ vertices of $Z$. 
    Furthermore, every bag of $(T',\beta')$ is of size $\Oh_H(k \log k)$. 
    This finishes the proof.
\end{proof}

\subsection{Lifting to $K_{3,h}$-minor-free graphs for $d\ge 1$}\label{ss:algo-d}

In this section we prove Theorem~\ref{thm:main-d}. 
The proof follows the general outline of the one from the previous section: we first
perform a relatively standard application of the Baker's technique do bound the diameter
linearly in $k$, compute the decomposition of Theorem~\ref{thm:excl-minor},
and use Theorem~\ref{thm:main2} to the torsos of bags.

However, there are two complications that we need to handle. First, a torso
of a bag may no longer be $K_{3,h}$-minor-free, so we need to be a bit more careful
and apply Theorem~\ref{thm:main2} to a subgraph of the torso that is a minor of the original graph. 
Second, with $d \geq 1$, $d$-clusters can span between many bags of the tree decomposition
of Theorem~\ref{thm:main2}. We solve this issue by making the topmost bag intersecting
a cluster ``responsible'' for the cluster, and all other bags handle the cluster
by setting $Y_0$ to be their top adhesion for the application of Theorem~\ref{thm:main2}.

We first wrap Theorem~\ref{thm:main2} into a decomposition of Theorem~\ref{thm:excl-minor}.
This is an analog of Lemma~\ref{lem:main-apex-diam}.
\begin{lemma}\label{lem:main-d-diam}
    For every integers $d \geq 1$ and $h \geq 3$ there exists a constant $c_{d,H} > 0$
    and a randomized polynomial-time algorithm that, given an $K_{3,h}$-minor-free graph $G$ of diameter
    $\Delta \geq 1$ and an integer $k \geq 2$, 
    outputs an induced subgraph $G'$ of $G$ and a tree decomposition $(T,\beta)$ of $G'$
    of maximum bag size at most 
    \[ c_{d,H}(k \log k +\Delta) \]
    such that for every family $\zZ$ of $d$-clusters with $|\zZ| \leq k$,
    with probability at least
    \[ \left(\left(k\Delta\right)^{\frac{c_{d,H} |\zZ| \log k}{\sqrt{k}}} |V(G)|^{\frac{c_{d,H}|\zZ|}{k}}\right)^{-1}\]
    we have $\bigcup \zZ \subseteq V(G')$ and every bag of $(T,\beta)$ contains at most $c_{d,H}  \left(\sqrt{k} \log k + \frac{\Delta}{\sqrt{k}}\right) \log^6k$ vertices of $\bigcup \zZ$.
\end{lemma}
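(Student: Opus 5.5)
We follow the template of Lemma~\ref{lem:main-apex-diam}, with the two modifications announced above. Since $K_{3,h}$ is an apex graph, we first apply Theorem~\ref{thm:excl-minor} to $G$. This gives a rooted tree decomposition $(T,\beta)$ with adhesions of size at most $p_H$ whose torsos are $p_H$-nearly-embeddable without apices. The torso of $t$ need not be $K_{3,h}$-minor-free, so for each $t$ we do not work with the torso directly. Instead we construct a graph $G_t^\ast$ on $\beta(t)$ that contains $G[\beta(t)]$ and is a minor of $G$. It is obtained by contracting, for each child $s$ of $t$ (and for the parent side), the connected components of $G$ restricted to the part of $G$ hanging below the adhesion $\sigma(s)$ onto vertices of $\sigma(s)$. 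We only realize those adhesion edges that are witnessed by such connected pieces; a standard choice of $(T,\beta)$ with connected ``below'' parts makes this possible. Then $G_t^\ast$ is $K_{3,h}$-minor-free, and as a subgraph of the torso it is still $p_H$-nearly-embeddable without apices. Distances in $G_t^\ast$ between vertices of $\beta(t)$ are at most distances in $G$, so its diameter (on the relevant component) is $\Oh(\Delta)$, and a $d$-cluster of $G$ restricted to $\beta(t)$ lies within a $d$-cluster-like set of $G_t^\ast$.

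Next we assign clusters to bags. For $K\in\zZ$, let $t_K$ be the topmost node whose bag meets $K$; this node is \emph{responsible} for $K$. Let $\zZ_t=\{K : t_K=t\}$. Every other bag $s$ meeting $K$ lies below $t_K$, and every vertex of $K\cap\beta(s)$ is within distance $d$ (in the contracted graph) of $\sigma(s)$, because $K$ is connected and reaches above $\sigma(s)$. Hence $K\cap\beta(s)\subseteq N^d_{G_s^\ast}[\sigma(s)]$. For each $t$ we flip a coin: with probability $1/k$ it is plentiful, otherwise scarce. In both cases we call Theorem~\ref{thm:main2} on $G_t^\ast$ with $Y_0:=\sigma(t)$ (size at most $p_H\le p$). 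In the scarce case we use parameter $k$ and require only the $N^d[Y_0]$ guarantee, which holds with probability $1$; we also handle the at most $\sqrt{k}$ responsible clusters, e.g.\ by adding their $\Oh_d(\sqrt k)$-size traces via $Y_0$-like bookkeeping. In the plentiful case we use the full $\zZ_t$ guarantee. Each $K\in\zZ_t$ restricted to $\beta(t)$ splits into at most a bounded number of $d$-clusters of $G_t^\ast$. Lemma~\ref{lem:clusters-disjoint} then yields a disjoint family of $2d$-clusters of size $\le|\zZ_t|$, so Theorem~\ref{thm:main2} is applied with $2d$ in place of $d$. If $|\zZ_t|<\sqrt k$ in the plentiful case, we pad the family artificially to meet the lower bound $\sqrt k\le|\zZ|$.

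Finally we glue with Lemma~\ref{lem:glue-RS}. Its hypothesis $Z\cap(\beta(t)\setminus\sigma(t))\subseteq V(G_t)$ follows from two facts: responsible clusters survive by the $\zZ_t$ guarantee, and the remaining vertices are in $N^d[\sigma(t)]$, which Theorem~\ref{thm:main2} always keeps. For the per-bag bound, a bag of $(T_t,\beta_t)$ meets $\bigcup\zZ$ in three parts. The first is $\bigcup\zZ_t$, bounded by Theorem~\ref{thm:main2}. The second is $N^d[\sigma(t)]$, bounded by $\Oh((\sqrt k+\log(k+\Delta))\log(k+\Delta))$. The third consists of clusters responsible at ancestors, which are inside $N^d[\sigma(t)]$ and already counted. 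Using $\Delta\le$ poly and the stated $\log^6 k$ slack, these bounds are absorbed. The probability bound is obtained by multiplying over $t$, since $(\zZ_t)_t$ partitions $\zZ$, exactly as in Claim~\ref{cl:main-apex-bag-guarantee}.

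The main obstacle is the first step. We must choose $G_t^\ast$ so that it is simultaneously a minor of $G$ (hence $K_{3,h}$-minor-free), nearly-embeddable, and distance-faithful enough that $G$-distance-$d$ neighborhoods of $\zZ$ inside $\beta(t)$ are controlled by $G_t^\ast$-neighborhoods. Contracting subtrees can shorten distances, which is harmless for containment but must be checked not to break the cluster diameter bounds. We would first try contracting each child side onto a single adhesion vertex and verify that the adhesion cliques needed for gluing are then unnecessary in the pieces of the decomposition that Lemma~\ref{lem:glue-RS} uses.
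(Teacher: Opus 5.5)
Your outline follows the paper's architecture: Theorem~\ref{thm:excl-minor}, a graph $G_t^\ast$ on $\beta(t)$ obtained by contracting the components of $G-\beta(t)$ (a minor of $G$ and a subgraph of the torso), the topmost responsible node $t_K$, plentiful/scarce coins, and $Y_0=\sigma(t)$. However, two steps fail as written.

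\textbf{The scarce case.} You cannot use Theorem~\ref{thm:main2} there.
\begin{itemize}
\item Its guarantee for unknown clusters requires $|\zZ|\ge\sqrt{k}$. Padding a low-load family to $\sqrt{k}$ clusters only guarantees success probability about $((k\Delta)^{\Oh(\log k)}|V(G)|^{\Oh(1/\sqrt{k})})^{-1}$ per node. Over up to $|\zZ|$ low-load nodes this is far below the target.
\item Its only unconditional promise concerns $N^d[Y_0]$. The algorithm cannot feed the unknown responsible clusters into $Y_0$. Nor are their traces of size $\Oh_d(\sqrt{k})$, since a $d$-cluster has unbounded size for $d\ge 2$.
\item At the root $\sigma(t)=\emptyset$, so your scarce case may simply drop every responsible cluster.
\end{itemize}
The paper instead keeps the whole bag: $G_t$ is the torso and $(T_t,\beta_t)$ is the decomposition of Corollary~\ref{cor:XF}. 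Its fourth property says each bag meets each radius-$d$ ball in $\Oh(d)$ vertices, and torso distances inside $\beta(t)$ do not exceed $G$-distances. This deterministically bounds, per bag, the contribution of the at most $\sqrt{k}$ responsible clusters by $\Oh_{d,h}(\sqrt{k})$ and that of $N_G^d[\sigma(t)]$ by $\Oh_{d,h}(1)$.

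\textbf{The gluing.} Lemma~\ref{lem:glue-RS} needs decompositions of induced subgraphs of the \emph{torso}, where every adhesion is a clique and hence lies in a common bag. A decomposition of $G_t\subseteq G_t^\ast$ need not have any bag containing $\sigma(s)\cap V(G_t)$. You name this as the main obstacle but leave it open, and it is where the real work lies. The paper resolves it as follows.
\begin{itemize}
\item Each component $D$ is contracted onto a \emph{single} vertex $v_D\in N_G(D)$, chosen maximal in a degeneracy order of the torso. Hence each vertex serves as $v_D$ for at most $2^{\delta_h}$ distinct sets $N_G(D)$.
\item If $v_{D_{\downarrow s}}$ survives in $G_t$, then $\sigma(s)\cap V(G_t)$ is added to every bag containing it. This is valid because $v_{D_{\downarrow s}}$ is adjacent in $G_t^\ast$ to all of $\sigma(s)$, and by degeneracy bag sizes and pattern counts grow only by an $\Oh_h(1)$ factor.
\item Otherwise $\sigma(s)\cap V(G_t)$ is deleted altogether (the set $X_t$).
\item To ensure these deletions never hit $\bigcup\zZ$, Theorem~\ref{thm:main2} is run with distance parameter $4(d+4)$. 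This yields that $X_{t'}$ avoids $N_G^{d+2}[\sigma(t)]$ for every descendant $t'$ of $t$.
\item Each responsible cluster is replaced by its image in $G_t^\ast$ (with $K\cap D$ replaced by $v_D$). Note that $K\cap\beta(t)$ itself need not split into boundedly many clusters, contrary to your claim. The image is enlarged by every child adhesion it touches, and only then made disjoint via Lemma~\ref{lem:clusters-disjoint}.
\end{itemize}
Without these ingredients, your argument establishes neither $\bigcup\zZ\subseteq V(G')$ nor the per-bag bound.
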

\begin{proof}
For the sake of analysis, fix $\zZ$ as in the statement.

Apply Theorem~\ref{thm:excl-minor} to the $K_{3,h}$-minor-free graph $G$,
obtaining a tree decomposition $(T,\beta)$ of maximum
adhesion size $p_h$ such that the torso of every bag is $p_h$-nearly-embeddable without apices, 
where the constant $p_h$ depends on $h$ only. (Note that $K_{3,h}$ is an apex graph.)

Root $T$ at an arbitrary bag. 
Without loss of generality, we can assume that $(T,\beta)$ satisfies the following connectivity property:
for every $t \in V(T)$, if $T_{\downarrow t}$ is the subtree
of $T$ rooted at $t$, then
$D_{\downarrow t} \coloneqq \bigcup_{t' \in V(T_{\downarrow t})} \beta(t') \setminus \sigma(t)$
is connected in $G$ and $N_G(D_{\downarrow t}) = \sigma(t)$. 

For every $t \in V(T)$, proceed as follows. 
First, flip a coin: with probability $1/k$ say that $t$ is \emph{plentiful}
and with the remaining probability $1-1/k$ say that $t$ is \emph{scarce}.

If $t$ is scarce, we set $G_t$ to be the entire torso of $t$ and $(T_t,\beta_t)$ to be the tree decomposition
of $G_t$ obtained from Corollary~\ref{cor:XF}. 

It $t$ is plentiful, we proceed as follows.
Let $G_t^\bullet$ be the torso of $t$. Since $G_t^\bullet$ is $p_h$-nearly-embeddable without apices,
by Theorem~\ref{thm:nearly-exclude} it excludes some apex graph $H$ as a minor; 
in particular, $G_t^\bullet$ is $\delta_h$-degenerate for some $\delta_h$ depending only on $h$. 
Let $\preceq$ be a degeneracy ordering of $G_t^\bullet$: every $v \in \beta(t)$ has at most 
$\delta_h$ neighbors in $G_t^\bullet$ that are $\preceq$-smaller than $v$. 

Let $\mathcal{D}_t$ be the set of connected components of $G\setminus \beta(t)$. 
Note that $\mathcal{D}_t$ contains all components $D_{\downarrow t'}$ for children $t'$ of $t$,
as well as some (possibly many) components contained in the bags of the connected component of $T \setminus \{t\}$ that
contains the parent of $t$ (unless $t$ is the root of $T$). 

Let $G_t^\ast$ be constructed as follows: we start from $G$ and, for every $D \in \mathcal{D}_t$
we contract $D$ onto the $\preceq$-maximum vertex $v_D$ in $N_G(D)$.
This is possible, as $G$ is connected. Clearly, $G_t^\ast$ is both a minor of $G$
and a subgraph of the torso of $t$ (as $D$ is contained in the bags of one connected component
of $T \setminus \{t\}$ and $N_G(D)$ is contained in the adhesion of the edge between $t$ and this component).
Consequently, $G_t^\ast$ is both $K_{3,h}$-minor-free and $p_h$-nearly-embeddable without apices. 
The diameter of $G_t^\ast$ is at most $\Delta$, as it is constructed from $G$ by contractions only.
Furthermore, for every $v \in \beta(t)$ we have the following bound:
\begin{equation}\label{eq:vDbound}
    \left| \{ N_G(D)~|~D \in \mathcal{D}_t \wedge v_D = v\}   \right| \leq 2^{\delta_h}.
\end{equation}
Indeed, if $v = v_D$, then $v$ is the $\preceq$-maximum vertex of $N_G(D)$, so $N_G(D) \setminus \{v\}$
is a subset of the neighbors of $v$ in $G_t^\bullet$ that are $\preceq$-smaller than $v$, and there
are only at most $\delta_h$ such neighbors. This proves~\eqref{eq:vDbound}.

We apply Theorem~\ref{thm:main2} to $G_t^\ast$ with constants $4(d+4)$, $p \coloneqq \max(h,p_h)$
and $Y_0 \coloneqq \sigma(t)$. We denote the result by $G_t$ and $(T_t,\beta_t)$.

For every $t \in V(T)$, initialize $X_t \coloneqq \emptyset$.
For every edge $st \in E(T)$ with $s$ being a child of $t$, 
\begin{itemize}
\item If $v_{D_{\downarrow s}} \in V(G_t)$, then 
   modify $\beta_t$ by adding $\sigma(s) \cap V(G_t)$ to every bag
   that contains $v_{D_{\downarrow s}}$. 
\item Otherwise, add $\sigma(s) \cap V(G_t)$ to $X_t$.
\end{itemize}
The first option above increases the maximum bag size of $(T_t,\beta_t)$ by at most
$p_h 2^{\delta_h}$ factor thanks to~\eqref{eq:vDbound}, while creates a node $t_{s \nearrow t}$ with a bag of $\beta_t$
that contains $\sigma(s) \cap V(G_t)$. In the second case, we denote by $t_{s \nearrow t}$ and arbitrary
node of $(T_t,\beta_t)$.

We define 
\[ G' \coloneqq G\left[\left(\bigcup_{t \in V(T)} \left(V(G_t) \setminus \sigma(t)\right)\right) \setminus \left(\bigcup_{t \in V(T)} X_t \right) \right]. \]
For $t \in V(T)$, let $\beta_t'$ be equal to $\beta_t$, restricted to $V(G') \cap V(G_t)$. 
We construct a tree decomposition $(T',\beta')$ of $G'$ from a disjoint union
of all decompositions $(T_t,\beta_t')$ by adding, for every $st \in E(T)$ with $s$ being a child of $t$,
an edge between the root of $T_s$ and the node $t_{s \nearrow t}$ of $T_t$.

The fact that $(T',\beta')$ is a tree decomposition of $G'$ follows from the definition of $t_{s \nearrow t}$:
either $\beta_t(t_{s \nearrow t})$ contains all vertices of $\sigma(s) \cap V(G_t)$,
and hence $\beta_t'(t_{s \nearrow t})$ contains all vertices of $\sigma(s)$ left in $G'$,
or $\sigma(s) \cap V(G_t) \subseteq X_t$ and no vertex of $\beta(st)$ is in $G'$. 
Also, the width of $(T',\beta')$ is bounded by a multiplicative constant (depending on $h$)
times the maximum width of the decompositions $(T_t,\beta_t)$. 
It remains to analyse the probability of the promises concerning $\zZ$.

For $K \in \zZ$, let $t_K \in V(T)$ be the topmost node such that $\beta(t) \cap K \neq \emptyset$.
For $t \in V(T)$, let
\[ \mathrm{load}(t) = \left|\{ K \in \zZ~|~t_K = t\} \right|. \]
We construct a family $\zZ_t$ of $4(d+4)$-clusters in $G_t^\ast$ as follows.
First, start with $\zZ_t^1 \coloneqq \{K \in \zZ~|~t_K = t\}$. 
Second, for every $K \in \zZ_t^1$, let $K^2$ be a $d$-cluster in $G_t^\ast$
constructed from $K$ by, whenever $K \cap D \neq \emptyset$ for some $D \in \mathcal{D}_t$,
replacing $K \cap D$ with $v_D$. Let $\zZ_t^2 = \{K^2~|~K \in \zZ_t^1\}$.
Third, 
for every $K^2 \in \zZ_t^2$, create a $(d+4)$-cluster $K^3$ in $G_t^\ast$
as follows: start with $K^3 \coloneqq K^2$ and
for every child $s$ of $t$ in $T$,
if $K^2 \cap N_G(D_{\downarrow s}) \neq \emptyset$, then add all vertices of $N_G(D_{\downarrow s})$ to $K^3$.
This increases the diameter by at most $4$, hence $\zZ_t^3 = \{K^3~|~K^2 \in \zZ_t^2\}$ is a family
of $(d+4)$-clusters in $G_t^\ast$. Finally, apply Lemma~\ref{lem:clusters-disjoint}
to $\zZ_t^3$, obtaining a family $\zZ_t^4$ of pairwise disjoint $4(d+4)$-clusters in $G_t^\ast$.
We set $\zZ_t \coloneqq \zZ_t^4$.
Note that 
\begin{equation}\label{eq:zZtbound}
 |\zZ_t| = |\zZ_t^4| \leq |\zZ_t^3| = |\zZ_t^2| = |\zZ_t^1| = \mathrm{load}(t). 
\end{equation}

We say that $t \in V(T)$ is
\begin{description}[itemsep=0pt]
\item[of high load] if $|\zZ_t|  > \sqrt{k}$;
\item[of low load] if $1 \leq |\zZ_t| \leq \sqrt{k}$;
\item[of zero load] if $\zZ_t = \emptyset$.
\end{description}
Thanks to~\eqref{eq:zZtbound}, 
there are at most $|\zZ|/\sqrt{k}$ nodes of high load and at most $|\zZ|$ nodes of low load.
We want to guess \emph{plentiful} for all nodes of high load and \emph{scarce} for all nodes
of low load (and we do not care about the guess for nodes of zero load). This happens with
probability at least:
\[ \left(\frac{1}{k}\right)^{-\frac{|\zZ|}{\sqrt{k}}} \cdot \left(1-\frac{1}{k}\right)^{-|\zZ|} 
\geq 2^{-\Oh\left(\frac{|\zZ| \log k}{\sqrt{k}}\right)}. \]
Furthermore, for every node $t$ of high load, we want the application of Theorem~\ref{thm:main2}
to satisfy the promise with the set $\zZ_t$. Multiplying over all such nodes,
this happens with probability at least 
\begin{align*}
& \prod_{t\ \mathrm{of\ high\ load}} \left((k\Delta)^{\Oh_{d,h}\left(\frac{|\zZ_t| \log k}{\sqrt{k}}\right)} |V(G)|^{\Oh_{d,h}\left(\frac{|\zZ_t|}{k}\right)}\right)^{-1}\\
& \qquad \geq 
\prod_{t\ \mathrm{of\ high\ load}} \left((k\Delta)^{\Oh_{d,h}\left(\frac{\mathrm{load}(t) \log k}{\sqrt{k}}\right)} |V(G)|^{\Oh_{d,h}\left(\frac{\mathrm{load}(t)}{k}\right)}\right)^{-1}\\
& \qquad \geq 
\left((k\Delta)^{\Oh_{d,h}\left(\frac{|\zZ| \log k}{\sqrt{k}}\right)} |V(G)|^{\Oh_{d,h}\left(\frac{|\zZ|}{k}\right)}\right)^{-1}
\end{align*} 
It remains to argue that this event guarantees that $G'$ and $(T',\beta')$ have the desired properties.

\begin{claim}\label{cl:what-Xt-kills}
 For every $t,t' \in V(T)$ such that $t'$ is a descendant of $t$, 
 $X_{t'}$ is disjoint with $N_G^{d+2}[\sigma(t)]$. 
\end{claim}
\begin{claimproof}
    Note that $X_{t'} \subseteq \beta(t')$. 
    Since $N_G^{d+2}[\sigma(t)] \cap \beta(t') \subseteq N_G^{d+2}[\sigma(t')]$, it suffices to prove the claim for $t'=t$.

    Note that regardless of whether we proclaim $t$ to be plentiful or scarce, 
    we have $N_{G_t^\ast}^{4(d+4)}[\sigma(t)] \subseteq V(G_t)$. 
    Hence, for every child $s$ of $t$, if $N_G(D_{\downarrow s}) \cap N_G^{d+2}[\sigma(t)] \neq \emptyset$, 
    then $N_G(D_{\downarrow s}) \subseteq N_{G_t^0}^{d+4}[\sigma(t)] \subseteq V(G_t)$. 
    Hence, $v_{D_{\downarrow s}} \in V(G_t)$, that is, we do not insert the vertices of $\sigma(s)$
    into $X_t$ while processing the edge $st$.
    Since the choice of $st$ was arbitrary with $N_G(D_{\downarrow s}) \cap N_G^{d+2}[\sigma(t)] \neq \emptyset$,
    this finishes the proof.
\end{claimproof}
Consider $K \in \zZ$.
If the node $t_K$ is of low load, then $G_t = \beta(t_K)$. 
Otherwise, by the construction of $K^3 \in \zZ_{t_K}^3$, for every child $s$ of $t_K$,
if $N_G(D_{\downarrow s}) \cap K \neq \emptyset$, then $N_G(D_{\downarrow s}) \subseteq K^3 \subseteq V(G_{t_K})$. 
This the choice of $s$ is abitrary, we have $X_{t_K} \cap K = \emptyset$. With Claim~\ref{cl:what-Xt-kills}
and the fact that for every proper descendant $t$ of $t_K$ we have $\beta(t) \cap K \subseteq N_{G_t^\ast}^d[\sigma(t)]$,
it implies that $K \subseteq V(G')$.

Consider now a bag of $(T',\beta')$. By construction, it is equal to $\beta_t'(x)$ for some
$t \in V(T)$ and $x \in V(T_t)$. 
We observe that 
\[ \beta_t'(x) \cap \bigcup (\zZ \setminus \zZ_t^1) \subseteq N_G^d[\sigma(t)] \cap \beta(t). \]
Hence, a bound of $\Oh_{d,h}(\left(\sqrt{k} \log k + \frac{\Delta}{\sqrt{k}}\right) \log^5k)$ on the size of the left hand size above follows from 
either properties of the decomposition of Corollary~\ref{cor:XF} if $t$ is scarce
or the promise of Theorem~\ref{thm:main2} if $t$ is plentiful.
(Note that here this holds also if $t$ is of zero load.)

Hence, it remains to bound the size of $\beta_t'(x) \cap \bigcup \zZ_t^1$.
\begin{claim}\label{cl:zZt1bound}
    \[ |\beta_t'(x) \cap \bigcup \zZ_t^1| = \Oh_{d,h}\left(\left(\sqrt{k} \log k + \frac{\Delta}{\sqrt{k}}\right) \log^5k\right). \]
\end{claim}
\begin{claimproof}
If $t$ is of zero load, then $\zZ_t^1 = \emptyset$ and there
is nothing to prove. 
Otherwise, consider $K \in \zZ_t^1$ and let $v \in K \cap \beta_t'(x)$ for some $x \in V(T_t)$. 
We have either $v \in K \cap \beta_t(x)$ or there is a child $s$ of $t$
with $v \in N_G(D_{\downarrow s})$ and $v_{D_{\downarrow s}} \in \beta_t(x)$. 
Then, $v_{D_{\downarrow s}} \in K^3 \cap \beta_t(x)$ and every 
vertex $u \in \beta(t)$ can play the role of $v_{D_{\downarrow s}}$ as above for at
most $p_h \cdot 2^{\delta_h} = \Oh_h(1)$ vertices $v$ due to~\eqref{eq:vDbound}.

Hence, it suffices to bound the size of $\beta'_t(x) \cap \bigcup \zZ_t^3$.
If $t$ is of low load, we guess that it is scarce, and $(T_t,\beta_t)$ is obtained from
Corollary~\ref{cor:XF}. 
This implies that every $K^3 \in \zZ_t^3$ can contribute only $\Oh_{d,h}(1)$ vertices
to $K^3 \cap \beta_t(x)$. 
As $|\zZ_t^3| = |\zZ_t^1| \leq \sqrt{k}$, we have $|\beta_t'(x) \cap \bigcup \zZ_t^1| = \Oh_{d,h}(\sqrt{k})$. 
Finally, if $t$ is of high load, then 
the size of $\beta_t(x) \cap \bigcup \zZ_t^3 = \beta_t(x) \cap \bigcup \zZ_t^4$ is bounded
by $\Oh_{d,h}\left(\left(\sqrt{k} \log k + \frac{\Delta}{\sqrt{k}}\right) \log^5k\right)$ due to the promise of Theorem~\ref{thm:main2}.
This finishes the proof of the claim.
\end{claimproof}
This finishes the proof of Lemma~\ref{lem:main-d-diam}.
\end{proof}

The next lemma is an analog of Lemma~\ref{lem:main-apex} and immediately concludes the proof
of Theorem~\ref{thm:main-d}.
\begin{lemma}\label{lem:main-d}
  For every $d \geq 1$ and $h \geq 3$, there exists a constant $d_H' > 0$ and randomized polynomial-time algorithm that, 
  given an $K_{3,h}$-minor-free graph $G$ and an integer $k \geq 2$,
  outputs an induced subgraph $G'$ of $G$ and 
  a tree decomposition $(T,\beta)$ of $G'$ of maximum bag size at most
   $d_H' k \log k $
  such that for family $\zZ$ of $d$-clusters with $|\mathcal{Z}| \leq k$,
  with probability at least 
  \[ \left(k^{\frac{d_H' |\zZ| \log k}{\sqrt{k}} } |V(G)|^{\frac{d_H' |\zZ|}{k}}\right)^{-1} \]
  we have $\bigcup \zZ \subseteq V(G')$ and every bag of $(T,\beta)$ contains at most $d_H' \sqrt{k} \log^6k$ vertices
  of $\zZ$. 
\end{lemma}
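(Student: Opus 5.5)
The plan mirrors the proof of Lemma~\ref{lem:main-apex}: a Baker-type layering reduces the diameter to $\Oh(k)$, and then Lemma~\ref{lem:main-d-diam} is applied to each slab. Since $K_{3,h}$ is an apex graph and adding edges to it can only weaken the hypothesis, we may assume the excluded graph is $2$-connected. For every connected component $C$ of $G$ we attach a new pendant vertex $v_C$ and run BFS from it, obtaining layers $(L_j)_{j\ge 0}$. The difference from the $d=0$ case is that a cluster $K$ now spans up to $d+1$ consecutive layers. We therefore guess a remainder $a$ modulo $m\coloneqq (d+1)(k+1)$ and delete a thick strip $L_{a+mi},\ldots,L_{a+mi+d}$ for every $i$. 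A fixed cluster meets at most $2d+1$ consecutive layers, so it intersects the deleted strips for at most $\Oh(d)$ values of $a$. Hence some choice of $a$ with no deleted strip meeting $\bigcup\zZ$ is picked with probability at least $1-\Oh(d)|\zZ|/m\ge 2^{-\Oh_d(|\zZ|\log k/k)}$.

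Each slab $C\cap L_j$ for $a+m(i-1)+d<j<a+mi$ is turned into a graph $G_{C,i}$ by contracting everything below it onto a single vertex $v_{C,i}$. This graph is a minor of $G$ together with the pendant vertices, so it is $K_{3,h}$-minor-free, and it has radius $\le m$, hence diameter $\Oh_d(k)$. We apply Lemma~\ref{lem:main-d-diam} to every $G_{C,i}$ with $\Delta=\Oh_d(k)$ and the subfamily of $\zZ$ lying in that slab. Each such cluster stays a $d$-cluster in $G_{C,i}$: a shortest path inside $G[K]$ lies in layers spanned by $K$, which are not deleted, and $K$ avoids $v_{C,i}$. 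We then delete $v_{C,i}$ and connect all resulting decompositions arbitrarily. The resulting graph $G'$ is an induced subgraph of $G$, since distinct slabs are vertex-disjoint and nonadjacent after the strips are removed.

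Bag sizes are $\Oh(k\log k+\Delta)=\Oh_{d,h}(k\log k)$. Since $\Delta/\sqrt k=\Oh(\sqrt k)$, the per-bag bound from Lemma~\ref{lem:main-d-diam} becomes $\Oh_{d,h}(\sqrt k\log^6 k)$ after adjusting the constant. The success probabilities multiply over the slabs, because the slabs partition $\zZ$ and the random choices are independent. This gives $\left((k\cdot\Oh_d(k))^{c|\zZ|\log k/\sqrt k}|V(G)|^{c|\zZ|/k}\right)^{-1}$, which is of the required form $k^{-\Oh(|\zZ|\log k/\sqrt k)}|V(G)|^{-\Oh(|\zZ|/k)}$ after folding in the guess of $a$. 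Theorem~\ref{thm:main-d} then follows: the balls $N^d_G[z]$ for $z\in Z$ are $2d$-clusters, and Lemma~\ref{lem:clusters-disjoint} gives at most $k$ disjoint $4d$-clusters with the same union. We therefore apply this lemma with $4d$ in place of $d$.

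The main obstacle I expect is making sure the strip removal is harmless for clusters in the neighbourhood sense. A vertex of $N^d[Z]$ whose $d$-ball crosses into a removed strip must still belong to a cluster lying entirely in one slab. This is why I guess $a$ so that no strip meets $\bigcup\zZ$ at all, rather than only avoiding $Z$, and use strips of width $d+1$ so that no cluster spans two slabs. I would also double-check that contracting the lower part onto $v_{C,i}$ does not merge a cluster with that vertex. That is immediate once the strip just below the slab avoids $\bigcup\zZ$.
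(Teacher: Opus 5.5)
Your plan is the paper's: a pendant vertex $v_C$, BFS layering with period $\Theta_d(k)$, contracting everything below a slab onto $v_{C,i}$, applying Lemma~\ref{lem:main-d-diam} with $\Delta=\Oh_d(k)$, and multiplying the per-slab probabilities. The one step that fails as written is the guess of $a$.

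A $d$-cluster $K$ satisfies $\dist_G(u,w)\le d$ for all $u,w\in K$. So it spans at most $d+1$ consecutive layers, not $2d+1$. A strip of width $d+1$ meets such a span for up to $2d+1$ residues of $a$. Hence up to $(2d+1)|\zZ|$ of your $m=(d+1)(k+1)$ residues can be bad, and this exceeds $m$ once $|\zZ|$ is close to $k$. Concretely, take $d=1$, $k=3$, $m=8$, and let $G$ be a path hanging from $v_C$. The three edge-clusters in layers $\{1,2\}$, $\{4,5\}$, $\{7,8\}$ rule out all eight residues. So the success probability is $0$, and your bound $1-\Oh(d)|\zZ|/m$ is negative. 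With your own overestimate of $2d+1$ layers it gets worse: $3d+1$ bad residues per cluster.

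The repair is easy, and there are two ways to do it.
\begin{itemize}
\item Enlarge the period above the number of bad residues, e.g.\ $m=(2d+1)k+1$. Then $1-(2d+1)|\zZ|/m\ge m^{-2|\zZ|/k}$: the left side is linear and the right side convex in $|\zZ|/k\in[0,1]$, and the inequality holds at both endpoints.
\item As the paper does, delete only a single layer $L_{a+(k'+1)i}$ per period, with $k'=(d+1)k$. Thick strips are unnecessary. $G[K]$ is connected and edges only join consecutive layers, so a cluster avoiding the deleted layers already lies in one slab. Each cluster then kills at most $d+1$ residues, so at most $(d+1)k=k'<k'+1$ residues are bad in total.
\end{itemize}

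Two smaller remarks. First, $K_{3,h}$ with $h\ge 3$ already has minimum degree $3$ (and is $2$-connected), so attaching pendant vertices preserves $K_{3,h}$-minor-freeness directly. You must not replace $K_{3,h}$ by a supergraph, because Lemma~\ref{lem:main-d-diam} needs $K_{3,h}$-minor-freeness itself. Second, the worry about $N^d[Z]$ in your last paragraph does not arise inside this lemma, which is purely about clusters. It only enters in the reduction to Theorem~\ref{thm:main-d} via Lemma~\ref{lem:clusters-disjoint}.
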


\begin{proof}
For every connected component $C$ of $G$, create a new vertex $v_C$ adjacent
to one arbitrarily chosen vertex in $C$ (this does not break being $K_{3,h}$-minor-free) and perform
breadth-first search from $v_C$ in $C$. 
For $j \geq 0$, let $L_j$ be the set of vertices
within distance exactly $j$ from their corresponding vertex $v_C$; $(L_j)_{j \geq 0}$ 
is a partition of $V(G)$. We define $L_j = \emptyset$ for $j < 0$. 
Denote $L_{\leq i} \coloneqq \bigcup_{j \leq i} L_j$. 

Let $k' \coloneqq (d+1)k$. 
Uniformly at random guess a remainder $a \in \{0,1,\ldots,k'\}$ such that $\bigcup_{i \geq 0} L_{a+(k'+1)i}$
does not contain a vertex of $\bigcup \zZ$.
This is successful with probability at least
\begin{equation}\label{eq:est-baker-d}
 1-\frac{|\zZ|}{k'+1} \geq 2^{-2 \frac{|\zZ|}{k'} \log k'} \geq 2^{-\Oh_d\left(\frac{|\zZ|}{\sqrt{k}} \log k\right)}. 
\end{equation}
 (In the first inequality we used $k \geq 2$ and $|\zZ| \leq k$.)

For a connected component $C$ of $G$ and $i \geq 0$
we define the graph $G_{C,i}$ as $G[C \cap L_{\leq a-1 + (k'+1)i}]$ with, for $i > 0$, the graph
$G[C \cap L_{\leq a + (k'+1)(i-1)}]$ contracted onto $v_C$;
the vertex resulting from this contraction will be denoted by $v_{C,i}$, and we set $v_{C,0} = v_C$ in $G_{C,0}$. 
Note that 
\[ V(G_{C,i}) = \{v_{C,i}\} \cup \left( \bigcup_{j=a + 1 + (k'+1)(i-1)}^{a-1 + (k'+1)i} C \cap L_j \right). \]
Furthermore, $V(G_{C,i})$ has radius at most $k'$, and thus diameter at most $2k' = 2(d+1)k$. 
If the guess of the remainder $a$ is correct,
every cluster of $\zZ$ is contained in the union of vertex sets $V(G_{C,i} \setminus v_{C,i})$.%
\footnote{The only reason we added vertices $v_C$ instead of just taking arbitrary vertex in $C$ as $v_C$
is to know that $v_C \notin \bigcup \zZ$, which makes the presentation cleaner.}
Let $\zZ_{C,i} = \{K \in \zZ~|~K \subseteq V(G_{C,i})$. 
Let $\mathcal{D}$ be the family of pairs $(C,i)$ such that $V(G_{C,i}) \setminus \{v_{C,i}\} \neq \emptyset$.

For every $(C,i) \in \mathcal{D}$, we apply 
\cref{lem:main-d-diam} to $G_{C,i}$,
obtaining an induced subgraph $G_{C,i}'$ and its tree decomposition $(T_{C,i}',\beta_{C,i}')$. 
We remove $v_{C,i}$ from each $G_{C,i}'$ and its corresponding tree decomposition, 
and connect the tree decompositions
arbitrarily into a single tree decomposition $(T',\beta')$
of  the disjoint union of graphs $(G_{C,i}' \setminus \{v_{C,i}\})_{(C,i)\in \mathcal{D}}$, which is an induced subgraph of $G$.

For every $(C,i) \in \mathcal{D}$ the vertex $v_{C,i}$ is not in $\bigcup \zZ$,
and by \cref{lem:main-d-diam} we have $\bigcup \zZ_{C,i} \subseteq V(G_{C,i}')$
and every bag of $(T_{C,i}',\beta_{C,i}')$ contains at most 
$\Oh_{d,h}(\sqrt{k} \log^6 k)$
 vertices of $\bigcup \zZ_{C,i}$ with probability at least 
\[ \left(\left(k \cdot 2k'\right)^{\Oh_{d,h}\left(\frac{|\zZ_{C,i}| \log k}{\sqrt{k}}\right) } |V(G)|^{\Oh_{d,h}\left(\frac{|\zZ_{C,i}|}{k}\right)}\right)^{-1}. \]
By multiplying the above for every $(C,i) \in \mathcal{D}$, we obtain that $\bigcup \zZ \subseteq V(G')$
and every bag of $(T',\beta')$ contains $\Oh_{d,h}(\sqrt{k} \log^6 k)$ vertices of $\bigcup \zZ$ with probability
at least 
\[ \left(k^{\Oh_{d,h}\left(\frac{|\zZ|\log k}{\sqrt{k}}\right)} |V(G)|^{\Oh_{d,h}\left(\frac{|\zZ|}{k}\right)}\right)^{-1}. \]
Multiplying the above by the estimate~\eqref{eq:est-baker-d} for the correct choice of $a$,
we conclude the proof.
\end{proof}

\bibliographystyle{plain}
\bibliography{references}

\appendix
\section{Proof of Theorem~\ref{thm:dual0}}\label{app:dual0}
\begin{proof}[Proof of Theorem~\ref{thm:dual0}.]
  We phrase the quest of finding the sequence 
  $(P_1,\ldots,P_q)$ as a minimum-cost maximum flow problem;
  the cuts $C_j$ will be read from the optimum solution of the dual
  linear program. 

  To this end, construct a flow network $G'$ with vertex weights
  as follows. For every $v \in V(G) \setminus \{s,t\}$, introduce
  two copies of $v$: $v_0$ of cost $0$ and capacity $1$
  and $v_1$ of cost $1$ and capacity $+\infty$. 
  The vertex $s$ becomes a source of capacity $q$ and cost $0$
  and the vertex $t$ becomes a sink of capacity $q$ and cost $0$.
  The edges of $G'$ are defined naturally: every edge of $G$
  becomes at most four edges of $G'$, between every pairs 
  of copies of corresponding vertices. 
  
  We ask for a minimum-cost flow of value $q$ in $G'$. 
  As every capacity is either integral or $+\infty$,
  we can find a minimum-cost flow that decomposes into
  $q$ flow paths $P_1',\ldots,P_q'$, each carrying a unit flow. 
  Each flow path $P_i'$ projects onto a walk from $s$ to $t$ in $G$;
  we shorten this walk to a path and denote this path as $P_i$. 

  To define the chain $(C_1,\ldots,C_p)$, we phrase
  the flow problem as a linear program and look at its dual. 
  This is the primal program:
  \begin{align*}
  \mathrm{min} \qquad & \sum_{v \in V(G) \setminus \{s,t\}} \sum_{a \in N_{G'}(v_1)} f(v_1,a) &\\
  \mathrm{s.t.} \qquad & \sum_{b \in N_{G'}(a)} f(a,b) - f(b,a) = 0 & \forall a \in V(G') \setminus \{s,t\}\\
  & \sum_{a \in N_{G'}(s)} f(s,a) - f(a,s) = q &\\
  & \sum_{a \in N_{G'}(t)} f(t,a) - f(a,t) = -q &\\
  & \sum_{a \in N_{G'}(v_0)} f(v_0,a) \leq 1 & \forall v \in V(G) \setminus \{s,t\} \\
  & f(a,b) \geq 0 & \forall ab \in E(G').
  \end{align*}
  And this is the dual:
  \begin{align*}
  \mathrm{max} \qquad & q(y_t-y_s) - \sum_{v \in V(G) \setminus \{s,t\}} z_v & \\
  \mathrm{s.t.} \qquad & y_{v_0} \leq y_a + z_v & \forall v \in V(G) \setminus \{s,t\}, a \in N_{G'}(v_0) \\
  & y_{v_1} \leq y_a + 1 & \forall v \in V(G) \setminus \{s,t\}, a \in N_{G'}(v_1) \\
  & y_s \leq y_a & \forall a \in N_{G'}(s) \\
  & y_t \leq y_a & \forall a \in N_{G'}(t) \\
  & z_v \geq 0 & \forall v \in V(G) \setminus \{s,t\}\\
  \end{align*}
  The dual program can be interpreted as a distance problem, 
  where we maximize the distance from $s$ to $t$ with weight $q$,
  each vertex $v \in V(G) \setminus \{s,t\}$ incurs a travel cost
  of $\mathrm{min}(1,z_v)$, and we are penalized for every $z_v$.
  Since the dual program is invariant under adding an offset
  to all variables $y_a$, we can add a constraint $y_s=0$.

  Since the primal program is a flow problem, which is encoded
  as a totally unimodular matrix, we can find an optimum solution to the dual
  program that is integral. Observe that this optimum solution
  satisfies $z_v \in \{0,1\}$ for every $v \in V(G) \setminus \{s,t\}$,
  as it is not useful to set any value higher than $1$ for a variable~$z_v$. 

  We set $p = y_t$ and for $1 \leq j \leq p$ we take
  \[ C_j = \{v \in V(G) \setminus \{s,t\}~|~y_{v_0} = j \textrm{ and }z_v = 1\}. \]

  We first argue that $(C_1,\ldots,C_p)$ is a chain of $(s,t)$-separators.
  We denote $s_0 := s$ and $t_0 := t$. 
  Let $P$ be a path in $G$ from some $u$ to some $w$; and suppose $j_u := y_{u_0}$, 
  $j_w := y_{w_0}$
  are such that $0 \leq j_u < j_w \leq p$. 
  Let $P'$ be the corresponding path in $G'$ from $u_0$ to $w_0$
  that traverses $v_0$ whenever $P$ traverses $v$. 
  Let $j_u < j \leq j_w$.
  Since for every $ab \in E(G')$ we have $|y_a - y_b| \leq 1$,
  there exists a vertex $v_0$ on $P'$ with $y_{v_0}=j$; 
  let $v_0$ be the first such vertex. 
  Since $j > j_u \geq 0$, $v_0 \neq u_0$ and the predecessor $b$
  of $v_0$ on $P'$ satisfies $y_b=j-1$. This implies 
  $v \in V(G) \setminus \{s,t\}$ and $z_v=1$.
  Hence, $v \in C_j$, that is, $C_j$ intersects~$P$. 
  This proves that $C_j$ is an $(s,t)$-separator for every $1 \leq j \leq p$
  and that
  every $1 \leq j < j' \leq p$, $C_j$ interesects every path
  from $s$ to $C_{j'}$ and $C_{j'}$ interesects every path from $C_j$ to $t$.

  By complementary slackness condition, for every $1 \leq i \leq q$,
  every edge of $P_i'$ is tight with regards to its corresponding
  inequality in the dual. This implies that $P_i'$ is a shortest
  path from $s$ to $t$ with regards to weights $z_v$ on $v_0$
  and $1$ on $v_1$ for $v \in V(G) \setminus \{s,t\}$. 
  Consequently, $P_i'$ never visits both $v_0$ and $v_1$
  (i.e., $P_i$ is a projection of $P_i'$ without any shortcuts)
  and
  $|V(P_i) \cap C_j| \leq 1$ for every $1 \leq j \leq p$.
  Since $C_j$ is an $(s,t)$-separator, we actually
  have $|V(P_i) \cap C_j| = 1$. 
  This proves the first property of a $(p,q)$-structure.

  Consider now $v \in V(P_i) \cap V(P_{i'})$ for some
  $i \neq i'$. 
  Since $v_0$ has capacity $1$, either $P_i'$ or $P_{i'}'$
  passes through $v_1$; without loss of generality,
  assume $P_i'$ passes through $v_1$. 
  By complementary slackness condition, 
  the corresponding inequality
  $y_{v_1} \leq y_a + 1$ is tight for the predecessor $a$
  of $v_1$ on $P_i'$. Since $P_i'$ is a shortest path from 
  $s$ to $t$, passing through $v_0$ instead of $v_1$ does
  not shorten the distance, hence $z_v = 1$. 
  Furthermore, $0 < y_{v_0} = y_{v_1} \leq p$. 
  This implies $v \in C_{y_{v_0}}$, proving the second property 
  of a $(p,q)$-structure.

  Again by complementary slackness condition, 
  if $z_v=1$ for some $v \in V(G) \setminus \{s,t\}$, 
  then the corresponding inequality expressing the capacity
  of $v_0$ is tight. Hence, one of the flow paths $P_i'$
  passes through $v_0$. Since $P_i$ is the projection of $P_i'$
  without any shortcuts, $P_i$ passes through $v$. 
  This proves the third property of a $(p,q)$-structure
  and finishes the proof of Theorem~\ref{thm:dual0}.
\end{proof}

\end{document}